\documentclass[3p]{elsarticle}

\usepackage[fontsize=8pt]{scrextend}

\usepackage[english]{babel}

\usepackage[linesnumbered,ruled,vlined]{algorithm2e}


\usepackage{amsmath,amsfonts}
\usepackage{natbib}
\usepackage{amsthm}
\usepackage{amssymb}
\usepackage{graphicx}
\usepackage{lipsum}
\usepackage{relsize}
\usepackage{mathtools}
\usepackage{url}
\usepackage{braket}
\usepackage{tabularx}
\usepackage{tikz}
\usepackage{adjustbox}
\usetikzlibrary{3d,calc}
\usepackage{tikz-3dplot}
\usetikzlibrary{arrows,calc,backgrounds}

\usepackage{epstopdf}

\newcommand{\degree}{\text{deg}}

\newcommand{\RR}{\mathbb{R}}
\newcommand{\cut}{\text{cut}}
\newcommand{\mc}{\text{Max-Cut}}
\newcommand{\MC}{\texttt{MAX-CUT}}


\newtheorem{theorem}{Theorem}
\newtheorem{prop}[theorem]{Proposition}
\newtheorem{lemma}[theorem]{Lemma}
\newtheorem{obs}[theorem]{Observation}
\newtheorem{corollary}[theorem]{Corollary}
\newtheorem{conjecture}[theorem]{Conjecture}

\title{Theoretical Approximation Ratios for Warm-Started QAOA on 3-Regular Max-Cut Instances at Depth $p=1$ }

\author[1]{Reuben Tate\corref{cor1}}
\ead{rtate@lanl.gov}
\author[1]{Stephan Eidenbenz}
\ead{eidenben@lanl.gov}
\cortext[cor1]{Corresponding author}
\affiliation[1]{organization={CCS-3: Information Sciences, Los Alamos National Laboratory},
city={Los Alamos, NM},
country={United States}}


\usepackage{amsopn}


\newcommand{\swap}{\ifmmode \text{{\fontspec{Symbola}\symbol{"1F503}}} \else {\fontspec{Symbola}\symbol{"1F503}} \fi}


\newcommand\restr[2]{{
  \left.\kern-\nulldelimiterspace 
  #1 
  \vphantom{\big|} 
  \right|_{#2} 
  }}

\ifpdf
  \DeclareGraphicsExtensions{.eps,.pdf,.png,.jpg}
\else
  \DeclareGraphicsExtensions{.eps}
\fi

\newenvironment{nalign}{
    \begin{equation}
    \begin{aligned}
}{
    \end{aligned}
    \end{equation}
    \ignorespacesafterend
}

\begin{document}
\begin{abstract}
    We generalize Farhi et al.'s 0.6924-approximation result technique of the Max-Cut Quantum Approximate Optimization Algorithm (QAOA) on 3-regular graphs to obtain provable lower bounds on the approximation ratio for warm-started QAOA. Given an initialization angle $\theta$, we consider warm-starts where the initial state is a product state where each qubit position is angle $\theta$ away from either the north or south pole of the Bloch sphere; of the two possible qubit positions the position of each qubit is decided by some classically obtained cut encoded as a bitstring $b$. 
    
    We illustrate through plots how the properties of $b$ and the initialization angle $\theta$ influence the bound on the  approximation ratios of warm-started QAOA. We consider various classical algorithms (and the cuts they produce which we use to generate the warm-start). Our results strongly suggest that  there does not exist any choice of initialization angle that yields a (worst-case) approximation ratio that simultaneously beats standard QAOA and the classical algorithm used to create the warm-start. 
    Additionally, we show that at $\theta=60^\circ$, warm-started QAOA is able to (effectively) recover the cut used to generate the warm-start, thus suggesting that in practice, this value could be a promising starting angle to explore alternate solutions in a heuristic fashion. 

\end{abstract}
\maketitle

\section{Introduction}
Over the past decade, quantum devices have significantly improved in their capabilities; with such improvements, there has also been increased interest, both practical and theoretical, in using such devices to solve challenging problems in combinatorial optimization. In 2014, for gate-based quantum devices, Farhi et al. \cite{farhi2014quantum} introduced the Quantum Approximate Optimization Algorithm (QAOA) which is a general framework for obtaining approximately optimal solutions to combinatorial optimization problems. The QAOA algorithm involves a parametrized quantum circuit which, with the certain choice of parameters, is effectively a Trotterization of the Quantum Adiabatic Algorithm (QAA); however, optimizing these parameters can potentially lead to even better solutions (in expectation). With this connection between QAOA and QAA, a higher circuit depth $p$ of QAOA with certain parameters corresponds to a finer Trotterization of QAA.

In the classical optimization literature, the term \emph{approximation algorithm} is typically used to denote an algorithm that is able to guarantee a solution whose objective value is at least a certain fraction of the objective of the optimal solution. Farhi had primarily viewed his algorithm as an approximation algorithm, proving that for the \MC{} problem on 3-regular graphs, QAOA returns a cut with at least $0.6924$ as many edges compared to the optimal cut \cite{farhi2014quantum}. Since then, several have proposed modifications to QAOA (both to the circuit ansatz itself \cite{FGGN17,HWORVB17,WRDR20,ZTBMBE20,BKKT19,bartschi2020grover,jiang2017near} and the parameter optimization procedures \cite{GLLAS21,SS21,SMKS22,SLLOH22,zhou2020quantum}). While these QAOA variants have often been shown to perform empirically well (compared to the standard QAOA algorithm) for certain test beds of problems and problem instances, it is often the case that little is known in regard to theoretical guarantees, i.e., these QAOA variants are effectively treated as a heuristic. Even for the standard QAOA, outside of certain negative-results \cite{FGG20,BKKT19,bravyi2022hybrid} and low-depth regimes \cite{wurtz2021maxcut,FGGN17}, the field's understanding of QAOA is still extremely limited from a theoretical perspective.

In prior works \cite{Tate2023warmstartedqaoa,tate2023bridging,egger2021warm,cain2022qaoa,FGGN17}, some have explored the notion of \emph{warm-starting} QAOA, i.e., constructing a different initial state for the QAOA circuit in a way determined by a previously found solution. Like other QAOA variants, it was found that warm-started QAOA tends to empirically yield better solutions compared to standard QAOA; however little was known regarding the theoretical performance of these methods. This work aims to further the field's theoretical understanding of the effect that warm-starting has on the QAOA algorithm. Such a theoretical analysis is difficult for the general types of warm-starts considered in \cite{tate2023bridging,Tate2023warmstartedqaoa}; however, by making certain assumptions regarding the structure of the warm-start, the analysis becomes amenable and new theoretical results regarding warm-stated QAOA emerge as a result.

More specifically, we consider the \MC{} problem on 3-regular graphs; for \MC, the goal is to find a partition of the vertices into two groups that maximizes the number of edges between the partitions. Given a \MC{} solution encoded by a bitstring $b$ of a particular graph $G=(V,E)$ and some initialization angle $\theta \in [0,\pi]$, the warm-start state $\ket{b_\theta}$ is product state that is constructed by placing each qubit on the Bloch sphere at an angle $\theta$ away from either the north or south pole of the Bloch sphere (with the placement being determined by the bitstring $b$); these warm-starts are defined in a more formal manner in Section \ref{sec:initialStates}.


We obtain our results by significantly generalizing Farhi's proof technique for single-round QAOA. Our main result, stated formally below and formally in Theorem \ref{thm:lowerBoundVaryingKappa}, is a proof of a guaranteed approximation ratio that warm-start QAOA achieves in a single round for every angle $\theta \in [0, \pi/2]$. 

\begin{theorem}[Informal Statement of Theorem \ref{thm:lowerBoundVaryingKappa}] 
    As a function of the initialization angle $\theta$ of each initial qubit from either north or south pole of the Bloch sphere and the total fraction of edges $\kappa$ that the determining bitstring $b$ cuts (compared to the total number of edges in the graph), one can obtain a lower bound on the approximation ratio of warm-started \mc{} QAOA on 3-regular graphs by solving a continuous optimization problem. 
\end{theorem}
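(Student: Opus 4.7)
The plan is to generalize Farhi et al.'s original light-cone argument edge by edge, now tracking both the local graph topology and the local bitstring assignment that define the warm-start. First I would decompose the expected cost of single-round warm-started QAOA as
\begin{equation}
\langle C \rangle_{\beta,\gamma,\theta,b} = \sum_{(u,v) \in E} \langle C_{uv} \rangle_{\beta,\gamma,\theta,b},
\end{equation}
where $C_{uv} = (1 - Z_u Z_v)/2$. At circuit depth $p=1$, the reverse light cone of each $C_{uv}$ has radius one under the standard mixer, so its expectation depends only on the subgraph induced by $u,v$ together with their immediate neighbors, and on the bit-values $b_w$ at those vertices (which, together with $\theta$, fix the initial Bloch-sphere positions inside the light cone).

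Next I would enumerate the finite catalog of \emph{warm-started edge types} that can arise in a $3$-regular graph. Each type $\tau$ records (i) the topology of the local subgraph, which in the $3$-regular, $p=1$ setting is determined by how many of the four other neighbors of $u,v$ coincide or are joined by an edge, and (ii) the bit assignment at every vertex in the neighborhood. The global $\mathbb{Z}_2$ bit-flip symmetry and the $u \leftrightarrow v$ swap symmetry collapse many types. For each $\tau$ I would derive a closed-form expression $f_\tau(\theta,\beta,\gamma)$ for $\langle C_{uv}\rangle$ on an edge of type $\tau$ by conjugating $C_{uv}$ through the phase separator $e^{-i\gamma C}$ and the mixer $e^{-i\beta B}$, restricting to the qubits in the light cone, and evaluating the resulting product of Pauli operators against the warm-start tensor product $\ket{b_\theta}$. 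The $\theta$-dependence enters through the $\cos\theta,\sin\theta$ factors produced by the non-standard initial qubit positions.

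Letting $w_\tau \ge 0$ denote the fraction of edges of type $\tau$, the expected cost per edge becomes $\sum_\tau w_\tau f_\tau(\theta,\beta,\gamma)$. The $w_\tau$ are constrained by normalization $\sum_\tau w_\tau = 1$, by the warm-start cut-fraction condition $\sum_\tau w_\tau \chi_\tau = \kappa$ where $\chi_\tau \in \{0,1\}$ indicates whether $b$ cuts edges of type $\tau$, and by the $3$-regularity incidence conditions that enforce a consistent type distribution around every vertex. Minimizing $\sum_\tau w_\tau f_\tau(\theta,\beta,\gamma)$ over this feasible set and then maximizing over $(\beta,\gamma)$ produces the desired continuous min-max optimization whose optimum is a lower bound on $\langle C\rangle/|E|$, and hence, since $C_{\max}\le|E|$ on a $3$-regular graph, on the approximation ratio.

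The main obstacle will be twofold. Enumerating all warm-started edge types and deriving each $f_\tau$ is routine but bookkeeping-heavy, and must be carried out carefully so that no contribution is double-counted under the symmetries above and no subgraph arrangement admissible in a $3$-regular graph is omitted. The more substantive difficulty is making the feasible region for $\{w_\tau\}$ precise: the global cut constraint is immediate, but the local consistency constraints imposed by $3$-regularity must be captured tightly enough that the relaxation remains a sound lower bound, yet loosely enough to admit a tractable continuous program whose solution can be plotted as a function of $(\theta,\kappa)$ as the theorem envisions.
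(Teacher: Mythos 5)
Your proposal is essentially the paper's own argument: decompose $\langle C\rangle$ over depth-1 light cones, enumerate \emph{colored} edge-neighborhood types carrying both local topology and the bits of $b$, write the per-edge cost as $\sum_\tau w_\tau f_\tau(\theta,\gamma,\beta)$, constrain the type fractions, and take a max-min over $(\gamma,\beta)$ and the feasible polytope. The one place you leave open — the "3-regularity incidence conditions" — is resolved in the paper by assuming $b$ is 1-BLS locally optimal (which prunes the type catalog to 15) and by disjointness arguments on isolated triangles, crossed squares, and the $[g_{6,1}]/[g_{6,6}]$ neighborhoods; note also that soundness only requires the constraints to be \emph{necessary} conditions (looseness weakens but never invalidates the bound), and that the cut-fraction condition should be the inequality $\sum_\tau w_\tau\chi_\tau\ge\kappa$ rather than an equality when the classical algorithm only guarantees at least a $\kappa$ fraction.
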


For three different classical \mc{} algorithms, we numerically calculate such bounds on the approximation ratio (which are plotted in Figure \ref{fig:plotAR_varyingConstraints}). On certain subsets of 3-regular graphs, these algorithms return cuts with certain guaranteed properties.  In particular, we consider, case (i) with $b$ a one-exchange locally optimum solution (which is at least a fraction of $2/3$ within the global optimum on 3-regular graphs), case(ii) where $b$ is a cut that is guaranteed to contain a fraction of $4/5$ of all edges, and case (iii) a guaranteed fraction of $17/21$ of all edges. 

For all cases, $\pi/2$ we recover -- as expected -- Farhi's original approximation ratio of $0.6924$  at $\theta = \pi/2$. 
For case (i) with locally optimal starting states, we show that the approximation ratio guarantees actually decrease slightly with increasing $\theta$ before eventually settling at $0.6924$ after some oscillations. For the two cases (ii) and (iii) with high-approximation ratios at $\theta=0$, we observe a decrease up until about $\theta = \pi/4$, followed by a marked increase back to the original starting ratio at $\theta = \pi/3$ before settling again at Farhi's ratio at $\theta = \pi/2$. We also formally explain the the peak at $\theta = \pi/2$.
The non-monotonic behavior is notable and surprising. Our results suggest that promising QAOA experiments could be done by setting $\theta=\pi/3$ as this allows the algorithm to broadly explore the search space (unlike at $\theta = 0$).

Provable approximation ratio results for QAOA are still very rare. Our work gives a large number of such results along the values of parameter $\theta$, thus giving new insights into the theory of QAOAs. While our results do not show a worst-case advantage of warm-start QAOA over either the traditional QAOA approach (with approximation ratio of $0.6924$) or the quality of a good starting state (such as in for cases (ii) and (iii)), the practical advantage of warm-start QAOA in NISQ and numerical experiments should be explored further.

In Section \ref{sec:notationBackground}, we give a background on both the QAOA algorithm, various results in combinatorial optimization, providing the necessary notation that will be needed for the rest of the work. In Section \ref{sec:initialStates}, we discuss the specific warm-starts used in this work in addition to the corresponding mixing Hamiltonians that ``align" with such warm-starts. In Section \ref{sec:approximationRatiosDepth1WarmStartQAOA}, we generalize Farhi et al.'s approach to obtain approximation ratios for depth-1 warm-started QAOA on 3-regular graphs. Finally, we provide a discussion and conclude in Section \ref{sec:discussionConclusion}.

\section{Notation and Background}
\label{sec:notationBackground}
Unless otherwise stated, we assume all graphs are simple, undirected, unweighted, loopless graphs. Formally, a graph $G$ is a tuple $(V,E)$ where $V$ is a finite set and the set of edges $E$ is a set of two-element subsets of $V$. We also use the notation $V(G)$ and $E(G)$ to denote the vertices and edges of a graph $G$. We will typically use $n_G := |V(G)|$ and $m_G := |E(G)|$ to represent the number of vertices and edges in a graph respectively; we drop the subscripts when the graph $G$ is clear from context. Unless otherwise stated, we assume that $V(G) = [n_G] := \{1,2,\dots,n_G\}$. A $k$-regular graph is a graph where every vertex has degree $k$; we will sometimes refer to a 3-regular graph as a \emph{cubic} graph. A graph is considered \emph{sub-cubic} if the degree of every vertex is at most 3. We use $K_n$ to denote the complete graph on $n$ vertices. A graph is considered \emph{triangle-free} if it contains no cycles of length 3. If $G=(V,E)$ is a graph and $E' \subseteq E$, then we say that $H$ is an \emph{edge-induced subgraph} induced by $E'$ if the edge-set of $H$ is $E'$ and $H$ has no isolated (degree-0) vertices, i.e., $V(H) = \bigcup E'.$

For later convenience, we define a \emph{marked graph} as a tuple $(G,e)$ where $G$ is a graph and $e \in E(G)$ is an edge in the graph; we refer to $e$ as the \emph{marked edge} of a marked graph $(G,e)$. We also define a colored graph as a tuple $(G, f)$ where $G$ is a graph and $f: V(G) \to C$ is a \emph{vertex-coloring}, i.e., a function from the vertices of $G$ to a finite color set $C$. In the case that $|C|=2$ (with $C = \{c_1, c_2\}$) and $V \subseteq [k]$ with $V = \{v_1, v_2, \dots, v_n\}$ with $v_1 < v_2 < \cdots < v_n$, we will typically represent $f_C$ as a bitstring $b$ where the $j$th entry $b_j$ is $0$ if $f(v_j) = c_1$ and $b_j$ is 1 if $f(v_j) = c_2$. We say that a graph $G$ is $k$-colorable if there exists a \emph{proper} vertex-coloring $f$ with $k$ colors where adjacent vertices always have different colors, i.e., $f(u) \neq f(v)$ for all $\{u,v\} \in E(G)$. Lastly, a \emph{colored marked graph} is a $3$-tuple $(G,f,e)$ where $G$ is a graph, $f$ is a vertex coloring on $G$, and $e \in E(G)$.

Given a graph $G=(V,E)$, any subset of vertices $S \subseteq V$ induces a \emph{cut} $(S, V \setminus S)$ where $S$ and $V \setminus S$ are a partition of the vertices $V$ of $G$. We use $\cut_G(S)$ to denote\footnote{For a proposition $P$, the expression $\mathbf{1}[P]$ evaluates to 1 if $P$ is true and evaluates to 0 otherwise.} the number of edges whose endpoints are in different partitions, i.e., $\cut_G(S) = \sum_{e \in E(G)} \mathbf{1}[|e \cap S| = 1]$; we refer to $\cut_G(S)$ as the \emph{cut value} of $S$. We drop the subscript $G$ when the graph $G$ is clear from context. The \MC{} problem is to find a subset of vertices $S$ so that the number of cut edges $\cut(S)$ is maximized, i.e., $\mc(G) := \max_{S \subseteq V} \cut(S)$. For any cut $(S, V \setminus S)$ with $V \subseteq [k]$ with $V = \{v_1, v_2, \dots, v_n\}$ with $v_1 < v_2 < \cdots v_n$, there is an associated bitstring $b$ where $b_j = 0$ if $v_j \in S$ and $b_j = 1$ otherwise. As an abuse of notation, we will sometimes refer to the bitstring $b$ itself as a cut and use $\cut(b)$ to denote the cut value of the cut associated with $b$. We say that a cut $(S, V \setminus S)$ is an \emph{optimal cut} if $\cut(S) = \mc(G)$.

Consider an arbitrary combinational optimization maximization problem on $n$-length bitstrings determined by a cost function $c:\{0,1\}^n \to \RR$. For any $k=1,\dots,n$ and bitstring $b \in \{0,1\}^n$, let $N_k(b)$ be the set of bitstrings whose Hamming distance from $b$ is at most $k$. Consider algorithm $k$-BLS (Algorithm \ref{alg:localSearch}) parametrized by $k$: this algorithm iteratively improves a solution by flipping up to $k$ bits as a time. For any given combinatorial optimization problem, we say that a bitstring $b$ is (locally) $k$-BLS optimal if for all $b' \in N_k(b)$, we have that $c(b') \leq c(b)$; in other words, $b$ is a potential output for the $k$-BLS algorithm.

\begin{algorithm}
\label{alg:localSearch}
    \caption{\footnotesize $k$-Bitflip Local Search ($k$-BLS)}
    \KwData{Integer $n$, function $c: \{0,1\}^n \rightarrow \mathbb{R}$}
    \KwResult{Locally optimal bitstring $b$}
    
    \BlankLine
    \textbf{Initialization:} Choose a bitstring $b \in \{0,1\}^n$ uniformly at random\;
    
    \While{there exists $b' \in N_k(b)$ with $c(b') > c(b)$}{
        Set $b := b'$\;
    }
    
    \Return{$b$}\;
    
\end{algorithm}

\subsection{Approximation Ratio}
\label{sec:approximationRatio}
For \MC{} on unit-weight graphs, we define the approximation ratio in the standard way. Let $\mathcal{A}$ be a (potentially randomized) \MC{} algorithm, let $G$ be a unit-weight graph, let $\mathcal{A}(G)$ be the expected cut value obtained by running $\mathcal{A}$ on $G$, then we define the \emph{instance-specific} approximation ratio $\alpha_{\mathcal{A}}(G)$ as
$$\alpha_{\mathcal{A}}(G) = \frac{\mathcal{A}(G)}{\mc(G)}.$$

For variational algorithms like QAOA, we have the freedom to either consider $\mathcal{A}$ to be the QAOA circuit at \emph{specific} choices of variational parameters, or we can consider the optimization of such parameters as part of the algorithm $\mathcal{A}$ itself. For simplicity, for QAOA and its variants, we consider $\mathcal{A}$ to be an algorithm that uses the instance-dependent optimal choice of variational parameters. In Section \ref{sec:optimizationDetails}, for both standard and warm-started QAOA, we more closely inspect the effect of including or excluding the optimization of various parameters as part of the algorithm description. Later in this work, more notation will be introduced which will allow us to give a more explicit definition of $\mathcal{A}(G)$ (and hence $\alpha_\mathcal{A}(G)$).

For a family of graphs $\mathcal{G}$, we say that algorithm $\mathcal{A}$ has a (theoretical) approximation ratio (over $\mathcal{G}$) of $\alpha_\mathcal{A}(\mathcal{G})$ if
$$\alpha_\mathcal{A}(\mathcal{G}) = \min_{G \in \mathcal{G}} \alpha_\mathcal{A}(G);$$
when the class of graphs $\mathcal{G}$ is clear from context, we will just write $\alpha_\mathcal{A}$. Throughout this work, we typically consider $\mathcal{G}$ to be one of the following: (1) all unit-weight graphs, (2) all unit-weight 3-regular graphs, or (3) all unit-weight triangle-free 3-regular graphs.

\subsection{Classical Approximation Results for Max-Cut}
\label{sec:classicalMaxCutResults}
Next, we review some of the known classical results for \MC{} that will be relevant for this work, beginning with Lemma \ref{thm:edge_ratio_maxcut_relation}.
\begin{lemma}
\label{thm:edge_ratio_maxcut_relation}
    Let $\mathcal{A}$ be a \MC{} algorithm such that, for any graph with $m$ edges, the solution returned cuts at least $\kappa \cdot m$ edges where $\kappa \in [0,1]$. Then $\mathcal{A}$ is an $\kappa$-approximation algorithm for \MC.
\end{lemma}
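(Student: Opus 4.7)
The plan is to observe that the claim follows almost immediately from two elementary inequalities combined with the definition of $\alpha_{\mathcal{A}}$ given in Section \ref{sec:approximationRatio}. The only content of the lemma is to convert a guarantee phrased in terms of \emph{total edges} into one phrased in terms of \emph{optimal cut value}, and this conversion is trivial because cutting more edges than exist in the graph is impossible.

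First I would fix an arbitrary graph $G$ with $m = m_G$ edges in whatever family $\mathcal{G}$ is under consideration. By hypothesis, the expected cut value satisfies $\mathcal{A}(G) \geq \kappa \cdot m$. On the other hand, every cut $(S, V \setminus S)$ is a subset of the edges of $G$, so $\cut_G(S) \leq m$ for every $S \subseteq V(G)$, and in particular $\mc(G) = \max_{S \subseteq V} \cut_G(S) \leq m$. Dividing the two bounds then yields
\[
\alpha_{\mathcal{A}}(G) \;=\; \frac{\mathcal{A}(G)}{\mc(G)} \;\geq\; \frac{\kappa \cdot m}{m} \;=\; \kappa.
\]
Taking the minimum over $G \in \mathcal{G}$ gives $\alpha_{\mathcal{A}}(\mathcal{G}) \geq \kappa$, which is exactly the statement that $\mathcal{A}$ is a $\kappa$-approximation algorithm.

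There is no real obstacle here; the only minor care needed is that $\mathcal{A}$ may be randomized, so $\mathcal{A}(G)$ denotes the expected number of cut edges. The hypothesis $\mathcal{A}(G) \geq \kappa m$ should be read in expectation as well, and since $\mc(G) \leq m$ holds deterministically, the ratio bound goes through unchanged. It is also worth noting implicitly that the bound is tight: the trivial class of graphs where $\mc(G) = m$ (namely bipartite graphs) shows that one cannot in general replace $m$ in the denominator by anything larger, so no better approximation ratio can be squeezed out of the hypothesis alone.
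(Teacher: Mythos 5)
Your argument is correct and is exactly the paper's own proof: the whole content is the observation that $\mc(G) \leq m$, so $\alpha_{\mathcal{A}}(G) \geq \kappa m / \mc(G) \geq \kappa$. You have simply spelled out the one-line argument in more detail, including the harmless remark about expectation for randomized $\mathcal{A}$.
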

\begin{proof}
    This follows immediately from the fact that the \MC{} of a unit-weight graph is bounded above by the total number of edges in the graph.
\end{proof}
Note that the converse of the above lemma is not necessarily true: an $\alpha$-approximation algorithm for \MC{} does not necessarily cut at least an $\alpha$ fraction of the edges of the graph.

For \MC{} on graphs with non-negative weights, it is well-known that 1-BLS gives a $1/2$ approximation ratio\footnote{For \MC{} with unit-weight graphs on $m$ edges, 1-BLS requires at most $m$ iterations. For weighted graphs, this might not be possible; however, a $\frac{1}{2(1+\epsilon/4)}$ approximation is possible in $O(\frac{1}{\epsilon}n \log n)$ iterations \cite{Chekuri}.}. In the case of the 1-BLS algorithm on $3$-regular graphs, this approximation ratio can be improved to $2/3$; while we believe that this result is already known, we were unable to locate a source for the result and thus, for the sake of completion, we provide a proof as seen in Proposition \ref{thm:localMaxCut3Reg} below.

\begin{prop}
\label{thm:localMaxCut3Reg}
For \MC{} on 3-regular graphs, 1-BLS gives a cut containing at least $2/3$ of the edges in the graph (and hence yields a $2/3$-approximation by Lemma \ref{thm:edge_ratio_maxcut_relation}).
\end{prop}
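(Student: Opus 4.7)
The plan is a straightforward double-counting argument exploiting local optimality. Let $b$ be the bitstring output by 1-BLS, corresponding to the cut $(S, V \setminus S)$. For each vertex $v$, let $c_v$ denote the number of cut edges incident to $v$ and let $u_v = 3 - c_v$ denote the number of uncut incident edges (using 3-regularity). Flipping the bit associated with $v$ changes the objective value by exactly $u_v - c_v = 3 - 2 c_v$, since each previously cut edge at $v$ becomes uncut and vice versa.

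Next, I would invoke the local optimality of $b$ with respect to single-bit flips (that is, $b$ is $1$-BLS optimal in the sense defined just above Algorithm \ref{alg:localSearch}): for every vertex $v$, flipping $v$ cannot strictly increase the cut, so $3 - 2 c_v \leq 0$, giving $c_v \geq 3/2$, and since $c_v$ is an integer, $c_v \geq 2$.

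Finally, I would sum $c_v$ over all vertices. Each cut edge is counted exactly twice in this sum (once at each endpoint), so
\[
2 \cdot \cut(b) \;=\; \sum_{v \in V} c_v \;\geq\; 2n.
\]
Hence $\cut(b) \geq n$. Since a 3-regular graph on $n$ vertices has $m = 3n/2$ edges, this rearranges to $\cut(b) \geq (2/3)\, m$, as required. Applying Lemma \ref{thm:edge_ratio_maxcut_relation} then immediately yields the $2/3$-approximation guarantee.

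There is no serious obstacle: the argument is a textbook-style vertex-degree counting combined with the single-bit local optimality condition. The only small point to be careful about is that we only need the weaker condition that no improving single-bit flip exists (rather than termination of the algorithm itself), which is exactly the $1$-BLS optimality condition defined in the excerpt, so the bound applies uniformly to every possible output of the algorithm.
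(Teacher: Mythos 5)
Your proof is correct and follows essentially the same route as the paper: local optimality forces each vertex to have at least two cut edges incident to it, and summing over vertices (each cut edge counted twice) gives $\cut(b) \geq n = \frac{2}{3}m$. The only cosmetic difference is that you pass through $m = 3n/2$ while the paper writes the bound directly as $\frac{1}{3}\sum_{v}\degree(v) = \frac{2}{3}|E|$.
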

\begin{proof}    Let $G = (V,E)$ be a 3-regular graph and let $(S, V \setminus S)$ be the cut returned by 1-BLS. For each vertex $v \in V$, let $\delta(v)$ be the number of neighbors of $v$ that are on the other side of the cut. Since $\degree(v) = 3$, then it must be that $\delta(v) \geq 2$, otherwise, the bit corresponding to $v$ could be flipped to obtain a new cut with more cut edges. We can also write $\delta(v) \geq \frac{2}{3}\degree(v)$.
    
    Observe that the number of cut edges is given by $\sum_{v \in V} \delta(v)/2$. Thus, the number of edges that are cut is at least
    $$\sum_{v \in V} \delta (v)/2 \geq \sum_{v \in V} \left(\frac{2}{3}\degree(v)\right)/2 = \frac{1}{3}\sum_{v \in V} \degree(v) = \frac{2}{3}|E|.$$
\end{proof}


Outside of an optimization perspective, many structural graph theorists are interested in studying the quantity $\mc(G)$ itself; more specifically, they are primarily concerned with bounding the quantity
$$b(G) := \frac{\mc(G)}{m_G},$$
called the \emph{bipartite density} for various classes of graphs. The term is called as such since the maximum cut value of a graph is equivalent to the number of edges of the largest bipartite subgraph of a graph. In some of the structural graph theory literature, in addition to providing a bound of the form $b(G) \geq \xi$, such work also provides (or implies) a polynomial time algorithm for finding a bipartite graph (and a corresponding cut) that contains a $\xi$ fraction of the edges, i.e., because of Lemma \ref{thm:edge_ratio_maxcut_relation}, these are $\xi$-approximation algorithms for \MC{}.

Erd\"os showed that if $G$ is $2r$-colorable, then $b(G) \geq \frac{r}{2r-1}$ \cite{erdHos1975problems}. Sub-cubic graphs are known to be 4-colorable (as a result of Brook's theorem \cite{brooks1941colouring}); thus, for sub-cubic graphs, we have that $b(G) \geq 2/3$. This bound is tight for $K_4$, i.e., $b(K_4) = 2/3$. 
Locke \cite{locke1982maximum} and Staton \cite{staton1980edge} show that for all cubic graphs other than $K_4$, that $b(G) \geq 7/9$ and that there are infinitely many graphs where the bound achieves equality, i.e., $b(G) = 7/9$.
Hopkins and Staton \cite{hopkins1982extremal} showed that for triangle-free cubic graphs, $b(G) \geq 4/5$; later, Bondy and Locke \cite{bondy1986largest} extended this bound to triangle-free sub-cubic graphs and came up with an algorithm that produces a bipartite subgraph with at least $4/5$ of the edges. Zhu \cite{zhu2009bipartite} showed that aside from 7 exceptions, all 2-connected triangle-free subcubic graphs satisfy $b(G) \geq 17/21 \approx 0.80952$; moreover, it is shown that the proof gives rise to a linear-time algorithm which yields a cut with at least a $17/21$ fraction of the edges. For convenience, we will denote the set of 7 exceptions as $\mathbf{Bad}$.  


The graphs in $\mathbf{Bad}$ are exactly the set of triangle-free subcubic graphs for which $b(G) = 4/5$. The exceptions in $\mathbf{Bad}$ were found by Bondy and Locke \cite{bondy1986largest} and moreover, they showed that the 2 exceptions (amongst the 7) which are cubic are the \emph{only} triangle-free cubic graphs for which $b(G) = 4/5$. Bondy and Locke conjectured that the graphs in $\mathbf{Bad}$ were the only triangle-free sub-cubic graphs for which $b(G) = 4/5$; this conjectured was later proved by Xu and Yu \cite{xu2008triangle}. 

In the realm of optimization, many \mc{} algorithms yield approximation ratios better than the algorithms described above, e.g., the Goemans-Williamson algorithm \cite{goemans1995improved} is a $0.878$-approximation algorithm for general graphs (with positive weights). For cubic graphs, Halperin et al. \cite{halperin2004max} proposed an SDP-based 0.9326-approximation algorithm and a combinatorial $22/27\approx 0.8148$-approximation algorithm. For subcubic graphs, Bazgan and Tuza \cite{bazgan2008combinatorial} provide a combinatorial algorithm with a $5/6\approx 0.8333$-approximation ratio. We will later see in this work that a warm-start obtained from a cut with a large fraction of the total number of edges will be of importance; however, to the authors' knowledge, the algorithms just described \cite{goemans1995improved,halperin2004max,bazgan2008combinatorial} do not have such guarantees on the total fraction of edges cut beyond the trivial guarantee in Proposition \ref{thm:approxRatioToFractionEdgesCut}.

\begin{prop}
\label{thm:approxRatioToFractionEdgesCut}
    Let $\mathcal{A}$ be an $\alpha$-approximation \mc{} algorithm on cubic graphs. Then, with the exception of graphs isomorphic to $K_4$, $\mathcal{A}$ will return cuts that, in expectation, cut a $(7/9)\alpha$-fraction of the edges.
\end{prop}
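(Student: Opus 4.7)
The plan is to combine two ingredients that are already in hand: the definition of an $\alpha$-approximation algorithm, and the Locke--Staton structural bound on bipartite density for cubic graphs other than $K_4$.

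First, I would unpack the definition. Since $\mathcal{A}$ is an $\alpha$-approximation algorithm for \mc{} on cubic graphs, for every cubic graph $G$ we have by definition
\begin{equation*}
    \mathcal{A}(G) \;\geq\; \alpha \cdot \mc(G),
\end{equation*}
where $\mathcal{A}(G)$ denotes the expected cut value returned by $\mathcal{A}$. This is the only property of $\mathcal{A}$ that will be used.

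Next, I would invoke the Locke--Staton result \cite{locke1982maximum,staton1980edge} cited earlier in the excerpt: for every cubic graph $G$ not isomorphic to $K_4$, the bipartite density satisfies $b(G) = \mc(G)/m_G \geq 7/9$, i.e., $\mc(G) \geq (7/9) m_G$. Chaining this with the inequality above gives
\begin{equation*}
    \mathcal{A}(G) \;\geq\; \alpha \cdot \mc(G) \;\geq\; \alpha \cdot \tfrac{7}{9} m_G \;=\; \tfrac{7\alpha}{9} m_G,
\end{equation*}
and dividing through by $m_G$ yields the claimed $(7/9)\alpha$ fraction of edges cut in expectation.

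There is essentially no technical obstacle here — the result is a one-line corollary of Locke--Staton combined with the definition of an approximation ratio, and the exclusion of $K_4$ is exactly what the Locke--Staton bound requires. The only thing to be mildly careful about is that $\mathcal{A}(G)$ denotes an \emph{expected} value when $\mathcal{A}$ is randomized, so that linearity of expectation (rather than any worst-case guarantee on the random output) is what allows the second inequality to be combined with the first; but this is built into the definition of $\alpha_\mathcal{A}(G)$ given in Section \ref{sec:approximationRatio}, so no additional argument is needed.
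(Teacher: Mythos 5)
Your proof is correct and follows essentially the same route as the paper's: apply the definition of an $\alpha$-approximation to get $\mathcal{A}(G) \geq \alpha\cdot\mc(G)$, then invoke the Locke--Staton bound $b(G)\geq 7/9$ for cubic graphs other than $K_4$ and divide by $m_G$. (If anything, your use of $\geq$ in the first step is more careful than the paper's, which writes an equality there.)
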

\begin{proof}
    Let $\mathcal{A}(G)$ denote the expected cut value obtained by running $\mathcal{A}$ on a cubic graph $G$. Then,
    $$\mathcal{A}(G) = \alpha \cdot \mc(G) = \alpha \cdot (b(G)m_G) \geq (7/9)\alpha m_G,$$
    and thus,
    $$\mathcal{A}(G)/m_G \geq (7/9)\alpha.$$
    Here, we used that for cubic graphs that are non-isomorphic to $K_4$, that $b(G) \geq 7/9$ \cite{locke1982maximum,staton1980edge}.
\end{proof}

We say that an algorithm is of type $\mathcal{A}_\kappa$ if it takes as input, a 3-regular graph $G$ (possibly with additional restrictions) and returns a cut $b$ that contains at least a $\kappa$-fraction of the edges, i.e., $\cut(b)/m_G \geq \kappa$ where $b$ is also locally-optimal with respect to 1-BLS. Similarly, we say that an algorithm is of type $\mathcal{A}'_\kappa$ if it is of type $\mathcal{A}_\kappa$ but the guarantee is with respect to 3-regular triangle-free graphs (possibly with additional restrictions). By Lemma \ref{thm:edge_ratio_maxcut_relation}, algorithms of type $\mathcal{A}_\kappa$ (and $\mathcal{A}'_\kappa$) yield an approximation ratio of at least $\kappa$. With this notation, the 1-BLS algorithm is of type $\mathcal{A}_{2/3}$. Post processing the output of the algorithms by \cite{bondy1986largest} and \cite{zhu2009bipartite} with the 1-BLS algorithm yields algorithms of type $\mathcal{A}'_{4/5}$ and $\mathcal{A}'_{17/21}$ respectively. 

\subsection{QAOA}
\label{sec:QAOA}
We next review the QAOA algorithm and set up the needed notation that will be used throughout this work. We use $X,Y,Z$ to denote the standard Pauli matrices:
$$X = \begin{bmatrix}0 & 1 \\ 1 & 0\end{bmatrix}, \quad Y = \begin{bmatrix} 0 & -i \\ i & 0\end{bmatrix}, \quad Z = \begin{bmatrix} 1 & 0 \\ 0 & -1\end{bmatrix}.$$
For a multi-qubit system, we use $X_j, Y_j, Z_j$ to denote the operation of applying $X,Y,$ or $Z$ to the $j$th qubit respectively. We use $I$ and $\mathbf{0}$ to denote the identity and all-zeros matrix respectively; the dimensions of such matrices will be clear from context. For any square matrix $M$ and scalar $t$, we define the corresponding matrix,
$$U(M, c) := e^{-itM};$$
it is known that when $M$ is a Hermitian, then $U(M,t)$ must be unitary. 
For a combinatorial optimization problem determined by a cost function $c: \{0,1\}^n \to \mathbb{R}$ on $n$-length bitstrings, we define the corresponding cost Hamiltonian as the matrix $C$ such that $C\ket{b} = c(b)\ket{b}$.

Given a cost function $c$ on $n$-length bitstrings (with corresponding cost Hamiltonian $C$), a Hermitian matrix $B$ (called the mixing Hamiltonian) of appropriate size, and initial quantum state $\ket{s_0}$ in a $2^n$-dimensional Hilbert space, a circuit depth $p$, and variational parameters $\gamma = (\gamma_1, \dots, \gamma_p), \beta = (\beta_1, \dots, \beta_p)$, we define the following variational waveform of depth-$p$ QAOA as follows:
\begin{equation}\label{eqn:waveform}\ket{\psi_p(\gamma,\beta)} := U(B, \beta_p)U(C, \gamma_p) \cdots U(B, \beta_1)U(C, \gamma_1)\ket{s_0};\end{equation}

notationally, we drop $p$ or $(\gamma,\beta)$ from $\ket{\psi_p(\gamma,\beta)}$ whenever they are clear from context.

We use $F_p(\gamma,\beta)$ to denote the expected cost value obtained by applying the cost function $c$ to a measurement of $\ket{\psi_p(\gamma,\beta)}$, i.e.,
\begin{equation}\label{eqn:expectedCost} F^{(p)}(\gamma,\beta) =\bra{\psi_p(\gamma,\beta)}C\ket{\psi_p(\gamma,\beta)}.\end{equation}

For maximization problems, we define
\begin{equation}\label{eqn:expectedCostAtOptimalParams}M_p = \max_{\gamma,\beta} F^{(p)}(\gamma,\beta),\end{equation}
to be the expected cost value of QAOA at an optimal choice of the variational parameters $\gamma$ and $\beta$; $M_p$ is similarly defined for minimization problems. 

For many optimization problems, the cost Hamiltonian $C$ can be compactly expressed as a sum of terms with each term acting on a small number of qubits. For the \MC{} problem, given a graph $G = (V,E)$, the cost Hamiltonian can be written as,
$$C_G = \sum_{e \in E} C_e,$$
where,
\begin{equation}\label{eqn: costSingleEdge}C_e = \frac{1}{2}(I - Z_iZ_j).\end{equation}

For unconstrained optimization problems, the mixing Hamiltonian $B$ is usually taken to be the transverse field mixer, which, for an $n$-qubit system, is defined as
$$B_\text{TF} = \sum_{j=1}^n X_j.$$
Additionally, the starting state is usually taken to be an equal superposition of all $2^n$ bitstrings of length $n$:
$$\ket{s_0} = \ket{+}^{\otimes n} = \frac{1}{\sqrt{2^n}}\sum_{b \in \{0,1\}^n} \ket{b}.$$

In the context of graph-theoretic optimization problems, if $H$ is a subgraph of $G$, it is convenient to define the portion of the transverse field mixer that only acts on the vertices of $H$:
$$B_\text{TF,H} = \sum_{v \in V(H)}X_v.$$

In the case where $\ket{s_0}$ is the most-excited state of $H_B$, there exists a choice of angles $\gamma$ and $\beta$ for which the QAOA circuit can be viewed as a Trotterization of the Quantum Adiabatic Algorithm which is known to, under mild assumptions\footnote{Farhi et al. \cite{farhi2014quantum} make a connection between QAOA and Quantum Adiabatic Computing and sketch a proof for why the standard QAOA (with tranverse-field mixer and equal-superposition initial state) converges as a result of the quantum adiabatic theorem. Recent work by Binkowski et al. \cite{binkowski2023elementary} formalizes the proof of convergence and provides details regarding the conditions that the mixing Hamiltonian needs to satisfy for convergence to occur. All the mixers discussed in this work are known to satisfy such conditions. }, converge to the optimal solution given enough time; in other words,
$$\lim_{p \to \infty} M_p = \mc(G).$$

Much of the notation used in this work is in terms of the circuit depth $p$. We will frequently drop $p$ from the notation whenever the context is clear. Much of this work primarily focuses on the case where $p=1$.

In regards to approximation ratio as discussed in Section \ref{sec:approximationRatio}, for standard QAOA on the \MC{} problem, we take $\mathcal{A}(G) = \text{QAOA}(G) =  M_p$ where the cost Hamiltonian used is the standard \MC{} cost Hamiltonian $C_G$ and the mixing Hamiltonian is the transverse field mixer $B_{\text{TF}}$.


\section{Initial Product States and Aligned Mixers}
\label{sec:initialStates}
For the standard QAOA algorithm \cite{farhi2014quantum} and many of its variants, the equal superposition $\ket{s_0} = \ket{+}^{\otimes n}$ is used as the initial state. For the \MC{} problem, quantum measurement of $\ket{+}^{\otimes n}$ produces a uniform distribution of all $2^n$ cuts in the graph; put another way, each vertex, independent of the other vertices, has probability $1/2$ of being on one side of the cut or the other.

However, one can consider modifying the QAOA algorithm by using a different initial state $\ket{s_0}$. Often, such initial states are constructed as a function of \emph{classically} obtained solutions. In the context of \MC{} on a graph $G=(V,E)$, we will describe one such initialization method that is determined by a cut $(S, V \setminus S)$ of $G$ and a parameter $\theta$ which we refer to as the \emph{initialization angle}; to this end, we first introduce some helpful notation.

\subsection{Construction of Warm-Started States}
\label{sec:constructionOfWarmStartedStates}
Let $\vec{n} = (x,y,z)$ be a unit vector written in Cartesian coordinates. We let $\ket{\vec{n}}$ denote the single-qubit quantum state whose qubit position on the Bloch sphere is $\vec{n}$. For $\vec{n}=(x,y,z)$, we define the following single-qubit operation:
$$B_{\vec{n}} = xX+yY+zZ,$$
and let $B_{\vec{n},j}$ denote the operation of applying the operation $B_{\vec{n}}$ on the $j$th qubit. The unitary $U(B_{\vec{n},j}, \beta)$ can be geometrically interpreted as a single-qubit rotation by angle $2\beta$ about the axis that points in the $\vec{n}$ direction \cite{blochSphereRotations}.

For an $n$-qubit product state $\ket{s} = \bigotimes_{j=1}^n \ket{\vec{n}_j}$, we define an $n$-qubit operation $B_{\ket{s}}$ in terms of the single-qubit operations above:
$$B_{\ket{s}} =  \sum_{j=1}^n B_{\vec{n_j}, j}.$$

When $\ket{s}$ is a product state, one can show that $\ket{s}$ can be prepared and that $B_{\ket{s}}$ can be implemented in most quantum devices with a constant-depth circuit using standard single-qubit rotation gates about the $x,y,$ and $z$ axes. Moreover, they \cite{Tate2023warmstartedqaoa} remark that $\ket{s}$ is a ground state of $B_{\ket{s}}$ for any product state $\ket{s}$. Thus, as remarked in \cite{Tate2023warmstartedqaoa}, under mild assumptions\footnote{To show convergence with increased circuit depth, it suffices to choose an initial product state $\ket{s_0}$ where none of the qubits are positioned at the poles of the Bloch sphere. In some cases, this restriction can be slightly relaxed, e.g., due to bitflip-symmetry of the cost function; for \MC{} it is not an issue if exactly one of the qubits are positioned at the poles of the Bloch sphere.} on an initial product state $\ket{s_0}$, running QAOA with the standard phase separator $C$ and mixer $B_{\ket{s_0}}$ will yield the optimal solution as the circuit depth goes to infinity (assuming that $\gamma$ and $\beta$ are chosen optimally) \cite{Tate2023warmstartedqaoa}. In general, whenever the initial state of QAOA is the ground state of the mixer, we say that the mixer is \emph{aligned} with the initial state, and refer to this category of QAOA variants as \emph{QAOA with aligned mixers}.

Extending Equations \ref{eqn:waveform} and \ref{eqn:expectedCost}, we next  define the output state and expected cost of QAOA with aligned mixers as a function of the initial state:
\begin{equation}\label{eqn:warmStartedWaveform}\ket{\psi_p(\gamma,\beta,\ket{s})} = U(B_{\ket{s}}, \beta_p)U(C, \gamma_p) \cdots U(B_{\ket{s}}, \beta_1)U(C, \gamma_1)\ket{s},\end{equation}
and
$$F^{(p)}(\gamma,\beta, \ket{s}) =\bra{\psi_p(\gamma,\beta,\ket{s})}C\ket{\psi_p(\gamma,\beta,\ket{s})}.$$

Similar to Equation \ref{eqn:expectedCostAtOptimalParams}, we also define the expected cost with optimal choice of variational parameters as a function of $\ket{s}$:
$$M_p(\ket{s}) = \max_{\gamma,\beta} F^{(p)}(\gamma,\beta, \ket{s}).$$

Prior approaches for warm-started QAOA considered initial states of the form $\ket{s_0} = \bigotimes_{j=1}^n \ket{\vec{n}_j}$ where $\vec{n}_1,\dots, \vec{n}_n$ are obtained by some classical procedure. In the work by Egger et al. \cite{egger2021warm}, for problems whose corresponding QUBO (Quadratic Unconstrained Binary Optimization) formulation satisfies certain properties, they solve a relaxation of the QUBO and map the solutions to states in,
$$\textbf{Arc} = \{\cos(\theta/2)\ket{0}+\sin(\theta/2)\ket{1} : \theta \in (0, \pi)\},$$
i.e., points on the Bloch sphere that intersect with the $xz$-plane with non-negative $x$ coordinate; the blue arc in Figure \ref{fig:singleCutInitialization} corresponds to the possible qubit positions. For the \MC{} problem, Tate et al. \cite{Tate2023warmstartedqaoa, tate2023bridging} consider higher-dimensional relaxations of the Max-Cut problem (i.e. the Burer-Monteiro relaxation and the relaxation used in the Goemans-Williamson algorithm) which, after a possible projection, yield points all over the surface of the Bloch sphere.\footnote{However, we remark that Tate et al. \cite{Tate2023warmstartedqaoa} show that for every initial product state $\ket{s_0} = \bigotimes_{j=1}^n \ket{\vec{n}_j}$, there exists a different initial product state $\ket{s_0'} = \bigotimes_{j=1}^n \ket{\vec{n}_j'}$ with $\vec{n}_j' \in \textbf{Arc}$ for all $j$, such that, up to a global phase, QAOA with aligned mixers and initial state $\ket{s_0}$ returns the same state as QAOA with aligned mixers and initial state $\ket{s_0'}$. This property holds any optimization problem and corresponding cost Hamiltonian $C$.}

While arbitrary product states provide a rich space of warm-starts to consider for QAOA which often empirically yield high-quality solutions \cite{Tate2023warmstartedqaoa}, the theoretical analysis of warm-started QAOA with such states is highly non-trivial. To simplify matters, this work considers a much more restricted set of warm-starts which we now describe. 

Given $\theta \in \RR$, we first define two single-qubit quantum states:
$$\ket{0_\theta} = \cos(\theta/2)\ket{0} + \sin(\theta/2)\ket{1},$$
$$\ket{1_\theta} = \sin(\theta/2)\ket{0}+\cos(\theta/2)\ket{1}.$$
Note that when $\theta=0$, then $\ket{0_\theta} = \ket{0}$ and $\ket{1_\theta} = \ket{1}$. It is helpful to note the following relation between these two states:
\begin{equation}\label{eqn:zeroOneRelationship}\ket{1_\theta} = \ket{0_{\pi - \theta}}.\end{equation}

In the case that $\theta \in [0,\pi]$, the states $\ket{0_\theta}$ and $\ket{1_\theta}$ can be geometrically seen as points along the set $\textbf{Arc}$ that are angle $\theta$ away from the north and south poles of the Bloch sphere respectively. In the context of \mc{}, we will usually take $\theta \in [0, \pi/2]$ so that the closest poles to $\ket{0_\theta}$ and $\ket{1_\theta}$ are the north and south poles (at $\ket{0}$ and $\ket{1}$) respectively; we will later justify this restriction in Theorem \ref{thm:bitflipSameExpectedCut}. This geometric interpretation can be seen in Figure \ref{fig:singleCutInitialization}.

\begin{figure}
\centering
    \pgfmathsetmacro{\r}{2.6} %

\tdplotsetmaincoords{80}{120}
\begin{tikzpicture}[
tdplot_main_coords,
font=\footnotesize,
Helpcircle/.style={gray!70!black,
},
]

\pgfmathsetmacro{\h}{0.9*\r} %

\pgfmathsetmacro{\t}{60}
\coordinate[label=left:{$\ket{0_\theta}$}] (X1) at ({\r*sin(\t)},0,{\r*cos(\t)});
\coordinate[label=left:{$\ket{1_\theta}$}] (X2) at ({\r*sin(\t)},0,{-\r*cos(\t)}); 

\coordinate (M) at (0,0,0);
\coordinate[label=$\ket{0}$] (Top) at (0,0,\r);
\coordinate[label=below:{$\ket{1}$}] (Bot) at (0,0,-\r);

\tdplotdrawarc{(M)}{\r}{-65}{110}{anchor=north}{}
\tdplotdrawarc[dashed]{(M)}{\r}{110}{295}{anchor=north}{}


\tdplotsetrotatedcoords{90}{90}{0}%
\tdplotdrawarc[tdplot_rotated_coords, blue]{(M)}{\r}{180}{360}{anchor=north}{}
\tdplotdrawarc[tdplot_rotated_coords, dashed]{(M)}{\r}{0}{180}{anchor=north}{}
\tdplotdrawarc[tdplot_rotated_coords]{(M)}{0.5*\r}{180}{180+\t}{anchor=north}{$\theta$}
\tdplotdrawarc[tdplot_rotated_coords]{(M)}{0.5*\r}{0}{-\t}{anchor=north}{$\theta$}
\draw[] (M) -- (X1);
\draw[] (M) -- (X2);
\draw[] (Top) -- (Bot);

\begin{scope}[tdplot_screen_coords, on background layer]
\fill[ball color= gray!20, opacity = 0.1] (M) circle (\r); 
\end{scope}

\foreach \P in {X1,X2}{
\shade[ball color=blue] (\P) circle (3pt);
}

\begin{scope}[-latex, shift={(M)}, xshift=1.5*\r cm, yshift=0.1*\r cm]
\foreach \P/\s/\Pos in {(1,0,0)/x/right, (0,1,0)/y/below, (0,0,1)/z/right} 
\draw[] (0,0,0) -- \P node[\Pos, pos=0.9,inner sep=2pt]{$\s$};
\end{scope}

\end{tikzpicture}
\caption{\footnotesize\label{fig:singleCutInitialization} A geometric depiction of the states $\ket{0_\theta}$ and $\ket{1_\theta}$ on the Bloch sphere. The blue half-circle, $\textbf{Arc}$, in the $xz$-plane denotes all the possible positions for $\ket{0_\theta}$ and $\ket{1_\theta}$ as $\theta$ varies from $0$ to $\pi$. }
\end{figure}
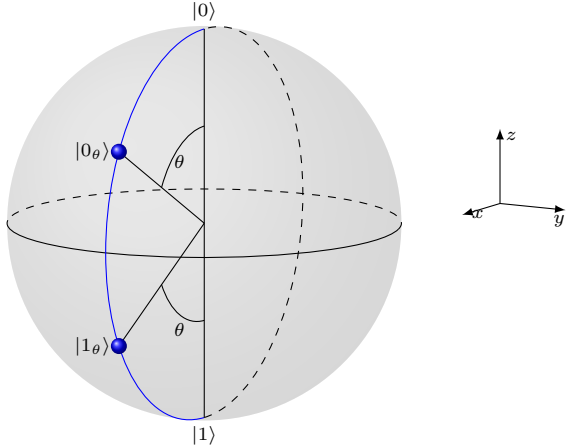

Given a bitstring $b \in \{0,1\}^n$, we define the initial state $\ket{b_\theta}$ as follows:

\begin{equation}\label{eqn:bitstringWarmstartState}\ket{b_\theta} = \ket{(b_1)_\theta} \otimes \ket{(b_2)_\theta} \otimes \cdots \otimes \ket{(b_n)_\theta},\end{equation}
i.e., the initial state is a product state where each qubit has initial position on the Bloch sphere of $\ket{0_\theta}$ or $\ket{1_\theta}$. Note that when $\theta=0$, we have that $\ket{b_\theta} = \ket{b}$. Additionally, when $\theta = \pi/2$, then $\ket{0_\theta} = \ket{1_\theta} = \ket{+}$ and hence $\ket{b_\theta} = \ket{+}^{\otimes n}$, the equal superposition used in the standard QAOA algorithm. Moreover, as a result of Equation \ref{eqn:zeroOneRelationship}, we also have that if $\bar{b}$ is the bitwise negation of $b$, then $\ket{b_\theta} = \ket{\bar{b}_{\pi-\theta}}$. For general $\theta \in [0, \pi]$, the probability of observing $\ket{b}$ as a result of measuring $\ket{b_\theta}$ is equal to $\cos^{2n}(\theta/2)$.

For the \MC{} problem, previous works have considered QAOA with initial states of the form $\ket{b_\theta}$ where $b$ typically corresponds to a ``good" cut. Egger et al. \cite{egger2021warm} consider a QAOA-variant (different than their QUBO-relaxation variant) where an initial state of the form $\ket{b_\theta}$ is used with $\theta = \pi/3$, but with an \emph{unaligned} mixer that has the property that there exists parameters $\gamma$ and $\beta$ such that depth-1 QAOA recovers the cut corresponding to $b$ (i.e. it returns exactly $\ket{b}$). Cain et al. \cite{cain2022qaoa} considered QAOA with initial states of the form $\ket{b}$ (i.e. $\theta = 0$) with the transverse field mixer; it was found that when the cut associated with $b$ was a ``good" cut, then this variant of QAOA yields little to no improvement regardless of the circuit depth used.

\subsection{Approximation Ratio for Warm-Started QAOA}
\label{sec:warmstarted_QAOA_AR}
In addition to previous discussion (Section \ref{sec:approximationRatio}) regarding whether or not the optimization of the variational parameters should be included as part of algorithm and its corresponding approximation ratio, there is some additional nuance in regards to defining the approximation ratio for warm-started QAOA. First, similar to $\gamma$ and $\beta$, we could also consider finding the instance-specific optimal choice of initialization angle $\theta$ as part of the warm-start QAOA algorithm. Instead, for this work, in order to see the effects of different choices of $\theta$, we consider various choices of fixed $\theta$ that are independent of the graph instance. Given a graph $G$ and a bitstring $b \in \{0,1\}^{n_G}$, we first define the instance-specific approximation ratio for depth-$p$ warm-stated QAOA on $G$ with initial state $\ket{b_\theta}$:
$$\alpha_{\text{WS-QAOA}_\theta^{(p)}}(G,b) = \frac{M_p(\ket{b_\theta})}{\mc(G)}.$$

Often, instead of working with a specific bitstring $b$ to generate the initial state $\ket{b_\theta}$, we consider a bitstring that is generated as a result of some classical algorithm $\mathcal{A}$; we will refer to the overall algorithm (with initialization angle $\theta$) with circuit depth $p$ as $(\mathcal{A}+\text{QAOA})_\theta^{(p)}.$

Note that for $(\mathcal{A}+\text{QAOA})_\theta^{(p)}$, if $\mathcal{A}$ itself is a randomized algorithm, then the overall algorithm has two sources of randomness: (1) randomness of the initial quantum state $\ket{b_\theta}$ as a result of the randomization of $\mathcal{A}$, and (2) the randomness inherent in quantum measurement. Letting $\Pr_\mathcal{A}(G,b)$ be the probability that the classical algorithm $\mathcal{A}$ with input $G$ returns bitstring $b$, defining the approximation ratio over all sources of expectation yields:
$$\alpha_{{(\mathcal{A}+\text{QAOA}})_\theta^{(p)}}(G) = \sum_{b \in \{0,1\}^n} \textstyle\Pr_\mathcal{A}(G,b) \cdot \alpha_{\text{WS-QAOA}_\theta^{(p)}}(G,b).$$

Reasoning over all possible cuts $b$ in the expression above does somewhat complicate the analysis. Instead, we can obtain a lower bound $\alpha'$ on this approximation ratio by instead reasoning over the cut $b$ (produced by $\mathcal{A}$) that is the worst for warm-started QAOA:
$$\alpha_{{(\mathcal{A}+\text{QAOA}})_\theta^{(p)}}(G) \geq \alpha'_{{(\mathcal{A}+\text{QAOA}})_\theta^{(p)}}(G) =  \min_{\substack{b \in \{0,1\}^n: \\ \Pr_\mathcal{A}(G,b) > 0} }\alpha_{\text{WS-QAOA}_\theta^{(p)}}(G,b).$$

In the case that $\mathcal{A}$ is a local-search algorithm such as Algorithm \ref{alg:localSearch}, then the condition {$\Pr_\mathcal{A}(G,b) \allowbreak > 0$} implies that the bitstring $b$ is locally-optimal with respect to algorithm $\mathcal{A}$ on graph $G$; the converse is also true assuming that $\mathcal{A}$ is \emph{initialized} with $b$ with non-zero probability (as is the case with Algorithm \ref{alg:localSearch}). Using the above lower bound, we can thus bound the theoretical approximation ratio as:
$$\alpha_{{(\mathcal{A}+\text{QAOA}})_\theta^{(p)}} \geq \alpha'_{{(\mathcal{A}+\text{QAOA}})_\theta^{(p)}} = \min_G \min_{\substack{b \in \{0,1\}^n: \\ \Pr_\mathcal{A}(G,b) > 0} }\alpha_{\text{WS-QAOA}_\theta^{(p)}}(G,b),$$
i.e. we minimize over all choices of graph $G$ and potential bitstrings $b$ that are produced by $\mathcal{A}$. 

Much of the notation above contains the superscript $p$ to denote the circuit depth; for convenience, we will often omit this superscript in this work whenever $p=1$.

\subsection{Basic Properties of Warm-Started QAOA}
For \MC, we (intuitively) expect that warm-started QAOA will yield better cuts if the bitstring $b$ used in the initialization $\ket{b_\theta}$ is better. Proposition \ref{thm:depth0ApproxRatio} and Corollary \ref{thm:depth0ApproxRatio2} show that this is the case for depth-0 QAOA (i.e. simplying measuring the initial state); we later show a similar result for depth-1 QAOA as well (Section \ref{sec:suitableConstraints}).


\begin{prop}
\label{thm:depth0ApproxRatio}
    Let $G = (V,E)$ be a graph and let $b$ be a bitstring corresponding to a cut that cuts exactly $\kappa$ fraction of the edges and let $\theta \in [0,\pi/2]$. Then the algorithm that simply measures $\ket{b_\theta}$ has an approximation ratio of at least $\frac{1}{4}\left( (2\kappa - 1)\cos(2\theta)+2\kappa+1\right).$
\end{prop}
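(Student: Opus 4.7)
The plan is to directly compute $M_0(\ket{b_\theta}) = \bra{b_\theta}C_G\ket{b_\theta}$ edge-by-edge, exploit the product structure of $\ket{b_\theta}$, and then use the trivial bound $\mc(G) \leq m$ to convert the resulting expression into an approximation ratio.

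First I would write $C_G = \sum_{e \in E} C_e$ with $C_e = \tfrac{1}{2}(I - Z_i Z_j)$ as in Equation \ref{eqn: costSingleEdge}, so that $\bra{b_\theta}C_G\ket{b_\theta} = \sum_{e \in E} \bra{b_\theta} C_e \ket{b_\theta}$ reduces to evaluating a single-edge contribution. For each vertex $j$, a one-qubit calculation using the definitions of $\ket{0_\theta}$ and $\ket{1_\theta}$ in Equation \ref{eqn:bitstringWarmstartState} gives $\bra{(b_j)_\theta} Z \ket{(b_j)_\theta} = (-1)^{b_j}\cos\theta$. Because $\ket{b_\theta}$ is a tensor product and $Z_i Z_j$ acts nontrivially only on qubits $i$ and $j$, this immediately yields $\bra{b_\theta} Z_i Z_j \ket{b_\theta} = (-1)^{b_i + b_j}\cos^2\theta$, and hence
\begin{equation*}
\bra{b_\theta} C_e \ket{b_\theta} = \tfrac{1}{2}\bigl(1 - (-1)^{b_i+b_j}\cos^2\theta\bigr).
\end{equation*}

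Next I would separate the sum over edges according to whether each edge is cut by $b$. There are $\kappa m$ cut edges, each contributing $\tfrac{1}{2}(1+\cos^2\theta)$, and $(1-\kappa)m$ uncut edges, each contributing $\tfrac{1}{2}\sin^2\theta$. Adding these and applying the double-angle identities $\cos^2\theta = \tfrac{1}{2}(1+\cos 2\theta)$ and $\sin^2\theta = \tfrac{1}{2}(1-\cos 2\theta)$ gives, after rearrangement,
\begin{equation*}
\bra{b_\theta} C_G \ket{b_\theta} = \frac{m}{4}\bigl[(2\kappa-1)\cos(2\theta) + 2\kappa + 1\bigr].
\end{equation*}

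To finish, I would divide by $\mc(G)$ to obtain the approximation ratio and use $\mc(G) \leq m$ to drop the factor $m/\mc(G) \geq 1$. The only subtlety is that this last step requires the bracketed expression to be non-negative so that dropping $m/\mc(G)$ really produces a lower bound; a quick case analysis on the sign of $2\kappa - 1$ with $\cos(2\theta) \in [-1,1]$ for $\theta \in [0,\pi/2]$ shows that the bracket is bounded below by $\min(4\kappa, 2) \geq 0$, which is the one place the hypothesis $\theta \in [0,\pi/2]$ (together with $\kappa \in [0,1]$) gets used. This is essentially the only potential snag — the rest is a direct Pauli-expectation computation — and the conclusion $\alpha \geq \tfrac{1}{4}\bigl((2\kappa-1)\cos(2\theta) + 2\kappa + 1\bigr)$ follows immediately.
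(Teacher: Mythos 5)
Your proposal is correct and follows essentially the same route as the paper: compute the expected cut value of $\ket{b_\theta}$ by splitting edges into cut and uncut classes, simplify with double-angle identities to $\frac{m}{4}\left((2\kappa-1)\cos(2\theta)+2\kappa+1\right)$, and divide by $\mc(G)\leq m$. The only differences are that you derive the per-edge contributions explicitly via Pauli-$Z$ expectations (the paper cites an external reference for the equivalent $\sin^4(\theta/2)+\cos^4(\theta/2)$ and $2\sin^2(\theta/2)\cos^2(\theta/2)$ terms) and you add a non-negativity check on the bracket that the paper leaves implicit; both are fine.
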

\begin{proof}
Suppose that the bitstring $b$ corresponds to a cut that cuts at least an $\kappa$ fraction of the total number of edges (of which there are $m$ in total). Then the expected cut value is given by,

$$\kappa m(\sin^4(\theta/2)+\cos^4(\theta/2)) + (1-\kappa)m(2\sin^2(\theta/2)\cos^2(\theta/2)),$$

a detailed derivation of this calculation can be found in \cite{tate2023bridging}. Via some basic trigonometric identities and algebraic manipulation, it can be shown that the above expression is equivalent to,

$$\frac{m}{4}\left( (2\kappa - 1)\cos(2\theta)+2\kappa+1\right).$$

From this, quantum sampling of the state $\ket{b_\theta}$ yields an approximation ratio that is at least
$$\frac{\frac{m}{4}\left( (2\kappa - 1)\cos(2\theta)+2\kappa+1\right)}{\mc(G)} \geq \frac{\frac{m}{4}\left( (2\kappa - 1)\cos(2\theta)+2\kappa+1\right)}{m} = \frac{1}{4}\left( (2\kappa - 1)\cos(2\theta)+2\kappa+1\right).$$
\end{proof}

The bound on the approximation ratio above is in the case that the cut determined by $b$ cuts \emph{exactly} an $\kappa$ fraction of the edges; however, Corollary \ref{thm:depth0ApproxRatio2} below shows that this is also true if we know that $b$ cuts \emph{at least} an $\kappa$ fraction of the edges.

\begin{corollary}
\label{thm:depth0ApproxRatio2}
    Let $G = (V,E)$ be a graph and let $b$ be a bitstring corresponding to a cut that cuts at least an $\kappa$ fraction of the edges and let $\theta \in [0,\pi/2]$. Then the algorithm that simply measures $\ket{b_\theta}$ has an approximation ratio of at least $\frac{1}{4}\left( (2\kappa - 1)\cos(2\theta)+2\kappa+1\right).$
\end{corollary}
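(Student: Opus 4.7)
The plan is to reduce Corollary \ref{thm:depth0ApproxRatio2} to Proposition \ref{thm:depth0ApproxRatio} by a monotonicity argument. Suppose $b$ cuts at least a $\kappa$ fraction of the edges of $G$; then the exact fraction $\kappa'$ that it cuts satisfies $\kappa' \in [\kappa,1]$. Applying Proposition \ref{thm:depth0ApproxRatio} at $\kappa'$ gives that measuring $\ket{b_\theta}$ achieves approximation ratio at least
$$f(\kappa') := \tfrac{1}{4}\bigl((2\kappa'-1)\cos(2\theta) + 2\kappa' + 1\bigr).$$
It therefore suffices to show $f(\kappa') \geq f(\kappa)$, i.e., that $f$ is non-decreasing on $[0,1]$ for every fixed $\theta \in [0,\pi/2]$.

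The monotonicity is routine: differentiating gives $f'(x) = \tfrac{1}{2}(1+\cos(2\theta))$, which is non-negative for all $\theta$ (and strictly positive unless $\theta = \pi/2$, where the bound is $1/2$ independent of $\kappa$, consistent with the $\ket{+}^{\otimes n}$ case). Equivalently, one may rewrite $f(x) = \tfrac{1}{2}x(1+\cos(2\theta)) + \tfrac{1}{4}(1-\cos(2\theta))$, from which monotonicity in $x$ is manifest. Chaining this with the bound from Proposition \ref{thm:depth0ApproxRatio} applied at $\kappa'$ yields the claim.

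The only subtlety worth flagging (which is already implicit in the proof of Proposition \ref{thm:depth0ApproxRatio}) is that the step $\frac{m}{4}f(\kappa')/\mc(G) \geq f(\kappa')$ uses $\mc(G) \leq m$ together with the non-negativity of $f(\kappa')$; the rewritten form above makes this non-negativity evident for all $\kappa' \in [0,1]$ and $\theta \in [0,\pi/2]$. There is no genuine obstacle here — the entire argument is one derivative and one application of the previous proposition — so the proof should be only a couple of lines.
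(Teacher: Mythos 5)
Your proof is correct and takes essentially the same route as the paper: apply Proposition \ref{thm:depth0ApproxRatio} at the true fraction $\kappa' \geq \kappa$ and conclude via monotonicity of the bound in $\kappa$, established by computing the derivative $\tfrac{1}{2}(1+\cos(2\theta)) \geq 0$. The extra remarks on the rewritten form and non-negativity are harmless additions but not needed.
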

\begin{proof}
    Let $\kappa' \geq \kappa$ be the actual fraction of edges that are cut by the cut corresponding to $b$. Then, by the calculations above, the algorithm that measures $\ket{b_\theta}$ has an approximation ratio of at least,
    $$\frac{1}{4}\left( (2\kappa' - 1)\cos(2\theta)+2\kappa'+1\right) \geq \frac{1}{4}\left( (2\kappa - 1)\cos(2\theta)+2\kappa+1\right),$$
    where the inequality holds as $\frac{1}{4}\left( (2\kappa - 1)\cos(2\theta)+2\kappa+1\right)$ is a non-decreasing function in $\kappa$ (for fixed $\theta$). To see this, observe that $$\frac{\partial}{\partial \kappa}\left(\frac{1}{4}\left( (2\kappa - 1)\cos(2\theta)+2\kappa+1\right)\right) = \frac{1}{2}\left(\cos(2\theta)+1\right) \geq \frac{1}{2}\left(-1+1\right) = 0.$$
\end{proof}

If we wish to choose $\theta \in [0, \pi/2]$, so that the approximation ratio bound above is maximized, then the choice of $\theta$ will depend on $\kappa$. If $\kappa > \frac{1}{2}$, then $\theta = 0$ is the optimal choice, otherwise, if $\kappa < \frac{1}{2}$, then $\theta = \pi/2$ is the optimal choice. For $\kappa = \frac{1}{2}$, we have that $\frac{1}{4}\left( (2\kappa - 1)\cos(2\theta)+2\kappa+1\right)$ is a constant (equal to 0.5) independent of the choice of $\theta$.

Before jumping into the analysis of positive-depth warm-started QAOA, we make one more key observation: if $\bar{b}$ is the bitwise-negation of $b$, then, with aligned mixers, the expected cut value of QAOA initialized with $\ket{b_\theta}$ and $\ket{\bar{b}_\theta}$ is the same.
\begin{theorem}
\label{thm:bitflipSameExpectedCut}
    Let $G=(V,E)$ be a graph. Let $b \in \{0,1\}^n$ be a bitstring, let $\bar{b}$ be its bitwise-negation, let $\theta \in \RR$, let $\gamma$ and $\beta$ be any choice of variational parameters, and let $p$ be any non-negative circuit depth. Then, for the \MC{} problem on $G$, and for both choices of initialization ($\ket{b_\theta}$ and $\ket{\bar{b}_\theta}$), we have that the expected cut size obtained by warm-started QAOA is the same, i.e., 
    $$F^{(p)}(\gamma,\beta, \ket{b_\theta}) = F^{(p)}(\gamma, \beta, \ket{\bar{b}_\theta}).$$

\end{theorem}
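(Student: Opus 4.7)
The plan is to exploit the global bit-flip symmetry of Max-Cut. Concretely, I will show that conjugation by the global operator $\bar{X} := X^{\otimes n}$ carries every piece of the QAOA circuit initialized at $\ket{b_\theta}$ to the corresponding piece of the circuit initialized at $\ket{\bar{b}_\theta}$, so that the two output states differ only by the action of $\bar{X}$, which in turn commutes with $C$ and therefore leaves the expected cost unchanged.

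First I would verify three ingredients. (i) The cost Hamiltonian $C_G$ commutes with $\bar X$: each term $C_e = \tfrac{1}{2}(I - Z_iZ_j)$ is manifestly invariant since $\bar{X} Z_i Z_j \bar{X} = (-Z_i)(-Z_j) = Z_iZ_j$. (ii) The initial state satisfies $\bar{X}\ket{b_\theta} = \ket{\bar{b}_\theta}$: it suffices to check on a single qubit, where $X\ket{0_\theta} = X(\cos(\theta/2)\ket{0}+\sin(\theta/2)\ket{1}) = \sin(\theta/2)\ket{0}+\cos(\theta/2)\ket{1} = \ket{1_\theta}$, and the product structure of $\ket{b_\theta}$ gives the rest. (iii) The aligned mixer transforms correctly under $\bar X$-conjugation. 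For $b_j=0$ the qubit sits at Bloch vector $(\sin\theta,0,\cos\theta)$ so the single-qubit mixer term is $\sin\theta\,X_j + \cos\theta\,Z_j$, while for $b_j=1$ the vector is $(\sin\theta,0,-\cos\theta)$ giving $\sin\theta\,X_j - \cos\theta\,Z_j$. Since $XZX = -Z$ and $XXX = X$, conjugation by $\bar X$ flips the sign of the $Z$-component on every qubit, which is exactly the modification that sends $B_{\ket{b_\theta}}$ to $B_{\ket{\bar{b}_\theta}}$. In symbols, $\bar{X} B_{\ket{b_\theta}} \bar{X} = B_{\ket{\bar{b}_\theta}}$, and therefore $\bar{X} U(B_{\ket{b_\theta}},\beta) \bar{X} = U(B_{\ket{\bar{b}_\theta}},\beta)$ for any $\beta$.

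With these pieces in hand, I would unfold the depth-$p$ waveform for the $\ket{\bar{b}_\theta}$ initialization using Equation~\ref{eqn:warmStartedWaveform}, insert $\bar{X}^2 = I$ between every adjacent pair of unitaries, and use (i) to slide each $\bar X$ through the $U(C,\gamma_k)$ factors and (iii) to absorb the remaining pairs into the mixer unitaries. The cascade of cancellations, together with (ii) applied to the initial ket, yields the clean identity
\begin{equation*}
\ket{\psi_p(\gamma,\beta,\ket{\bar{b}_\theta})} \;=\; \bar{X}\,\ket{\psi_p(\gamma,\beta,\ket{b_\theta})}.
\end{equation*}
Taking the expectation of $C$ on both sides and invoking (i) once more, $\bar X C \bar X = C$, gives $F^{(p)}(\gamma,\beta,\ket{\bar{b}_\theta}) = F^{(p)}(\gamma,\beta,\ket{b_\theta})$, as desired.

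The only step that requires genuine care, as opposed to symbol-pushing, is (iii): one has to spell out that the aligned mixer built from the warm-start $\ket{b_\theta}$ lives in the $xz$-plane (no $Y$ components), so that $\bar X$-conjugation acts by a pure sign flip on the part proportional to $Z_j$ and leaves the $X_j$ part invariant, exactly matching the sign pattern of the Bloch vectors for $\ket{\bar{b}_\theta}$. Once (iii) is pinned down, steps (i) and (ii) are essentially immediate, and the rest is a telescoping cancellation along the alternating product in Equation~\ref{eqn:warmStartedWaveform}.
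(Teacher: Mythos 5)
Your proposal is correct and follows essentially the same route as the paper's proof in Appendix~\ref{sec:bitFlipSymmetry}: the paper also establishes $\mathbf{X}\ket{\bar b_\theta}=\ket{b_\theta}$, $\mathbf{X}C=C\mathbf{X}$, and $\mathbf{X}B_{\ket{b_\theta}}=B_{\ket{\bar b_\theta}}\mathbf{X}$ (using exactly the observation that the qubit positions have $y_j=0$ so conjugation only flips the $Z$-component), lifts these to the unitaries, and telescopes through the circuit (by induction on $p$) to get $\ket{\psi_p(\gamma,\beta,\ket{b_\theta})}=\mathbf{X}\ket{\psi_p(\gamma,\beta,\ket{\bar b_\theta})}$ before taking expectations. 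No gaps; your step (iii) is the same key computation as the paper's Lemma~\ref{thm:bitflipRelationOnMixer}.
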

The key idea for the proof of Theorem \ref{thm:bitflipSameExpectedCut} is that classically, the \MC{} problem has bitflip-symmetry, i.e., for any bitstring $b$ representing a cut, both $b$ and $\bar{b}$ cut the same set of edges. Moreover, the initial state $\ket{b_\theta}$ and corresponding mixer $B_{\ket{b_\theta}}$ are constructed in a way that allow this symmetry to be exploited. The proof of Theorem \ref{thm:bitflipSameExpectedCut} follows in a straightforward manner and the main results of our work do not hinge on this result and thus, we do not include the proof here; the interested reader can instead find a proof in the Supplementary Materials.

Theorem \ref{thm:bitflipSameExpectedCut} can be used to show that for \mc{} warm-started QAOA, it suffices to only consider $\theta \in [0,\pi/2]$. First, it was shown by Tate et al. \cite{Tate2023warmstartedqaoa} that for warm-started QAOA whose initial state is a product state, it suffices to only consider qubit positions along $\mathbf{Arc}$ and hence, we know we can restrict $\theta$ to the interval $[0, \pi]$. Moreover, if we pick $\pi/2 < \theta \leq \pi$, then we have that,
$$F^{(p)}(\gamma,\beta, \ket{b_\theta}) = F^{(p)}(\gamma, \beta, \ket{\bar{b}_\theta}) = F^{(p)}(\gamma, \beta, \ket{b_{\pi-\theta})},$$
where the last equality holds as $\ket{\bar{b}_\theta} = \ket{b_{\pi-\theta}}$ and hence QAOA on bitstring $b$ and angle $\pi/2 < \theta \leq \pi$ yields the same result as QAOA on the same bistring and angle $0 < \theta' \leq \pi/2$ with $\theta' = \pi - \theta$. Thus, for the \mc{} problem, it suffices to only consider only $\theta \in [0, \pi/2]$.

\section{Approximation Ratios for Depth-1 Warm-Started QAOA on 3-Regular Graphs}
\label{sec:approximationRatiosDepth1WarmStartQAOA}
As shown in Farhi et al.'s seminal QAOA paper, QAOA is often referred to as a local quantum algorithm, e.g., in the context of \MC, the result of QAOA is determined by the local structure around each edge in the graph. We formalize this notion for the \MC{} problem below in Section \ref{sec:edgeNeighborhoods} and review the technique for obtaining the $0.6924$ approximation ratio for standard QAOA on 3-regular graphs found Farhi et al. \cite{farhi2014quantum} in Section \ref{sec:graphStructures}. In this work, we extend the notion of local structure to account for information encoded in the warm-start (Section \ref{sec:coloredEdgeNeighborhoods}) and proceed to extend the technique of Farhi et al. to obtain approximation ratios for warm-started QAOA initialized with $\ket{b_\theta}$ as a function of the initialization angle $\theta$ (Sections \ref{sec:coloredGraphStructures} and \ref{sec:suitableConstraints}).

\subsection{Edge-Neighborhoods for Max-Cut}
\label{sec:edgeNeighborhoods}
In this subsection, we review how the expected cost of standard \MC{} QAOA can be calculated by only considering the local structure around each edge in a graph, which we refer to as the edge-neighborhood. The results in this subsection are effectively identical to those found in Farhi et al.'s original work; we review these results here as they will later be extended for the warm-start case in Section \ref{sec:coloredEdgeNeighborhoods}. We begin with some needed definitions and notation regarding these edge-neighborhoods.

Fix the circuit depth $p$. For each edge $e$ in a graph $G=(V,E)$, we define a (marked) subgraph $(G^{(p)}_e, e)$, which we call the \emph{(depth-$p$) edge-neighborhood centered about $e$}, recursively; we also refer to $e$ as the \emph{central edge} of $(G^{(p)}_e, e)$. We will often simply write $G^{(p)}_e$ to mean $(G^{(p)}_e, e)$ since the marked (central) edge is clear from context. When $p=0$, $G^{(0)}_e$ is simply defined as  the graph consisting of only the edge $e$ (and its vertices). The graph $G^{(p)}_e$ is defined as the edge-induced subgraph of $G$ whose edges are all the edges of $G^{(p-1)}_e$ as well as any edges in $G$ that are incident to at least one edge in $G^{(p-1)}_e$. This paper primarily focuses on the depth $p=1$ case, so we will often drop the superscript of $G^{(p)}_e$ and simply write $G_e$ when this is the case.


For two edges $e,f$ in $G$, we say that $G_e^{(p)}$ and $G_f^{(p)}$ are of the same \emph{type} and write $G_e^{(p)} \sim G_f^{(p)}$ if $G_e^{(p)}$ is isomorphic to $G_f^{(p)}$ (via some graph isomorphism $\varphi$ from $V(G_e^{(p)})$ to $V(G_f^{(p)})$) and if $\varphi(e) = f$. Note that $\sim$ is an equivalence relation. For any graph $G$, let $\mathcal{G}(G)$ be a set of equivalence classes for $\sim$ that exist in $G$. For any $e \in E$, we refer to the equivalence class $[G_e^{(p)}]$ as the edge-neighborhood \emph{type} centered around $e$. For each $[g] \in \mathcal{G}$, let $d_g(G)$, the degeneracy of $g$, be the number of times that $[g]$ appears as an edge-neighborhood type in $G$, i.e., 
$$d_g(G) = |\{e \in E(G): g \sim G_e^{(p)}\}|;$$
when the graph $G$ is clear from context, we will simply write $d_g$ instead of $d_g(G)$. Figure \ref{fig:graphDecompExample} provides an example on a 5-node graph that illustrates how the degeneracies of edge-neighborhood types are counted.

\begin{figure}
    \centering

    \hspace{1.5cm}\begin{tabular}{lc}
  $G:$ &
  \begin{minipage}{0.6\textwidth}\includegraphics[scale=0.5]
    {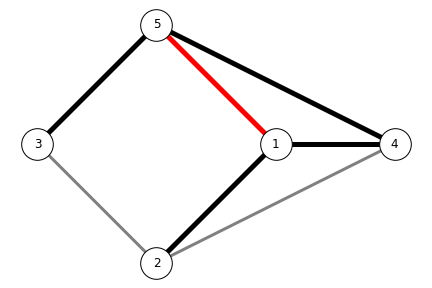}\end{minipage}\\
\end{tabular}

    \vspace{1cm}
    
    \begin{tabular}{c|c|c}
    $H_1:$ & $H_2:$ & $H_3:$\\
  \raisebox{0.8\height}{\includegraphics[scale=0.2] {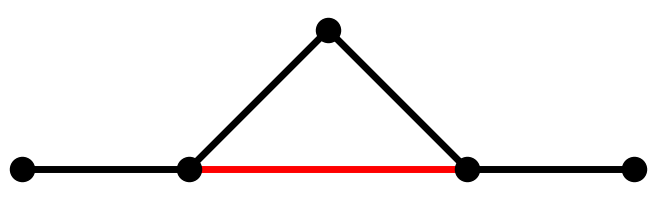}} &
  \includegraphics[scale=0.2]
    {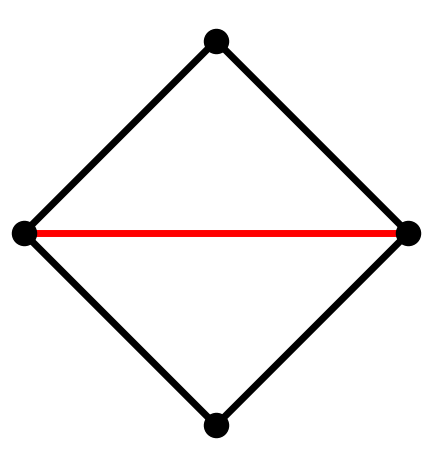} &
  \includegraphics[scale=0.2]
    {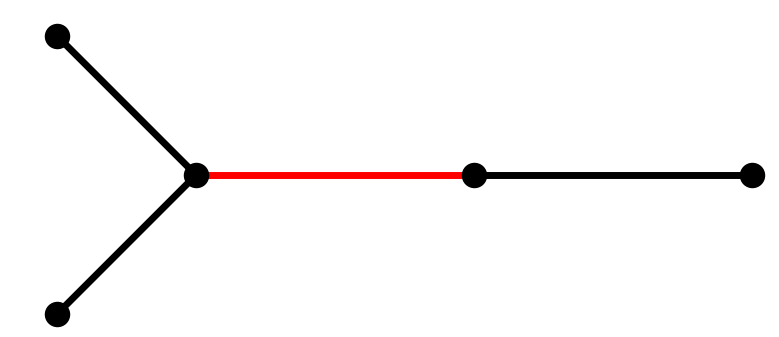}\\
    $d_{H_1} = 4:$ & $d_{H_2} = 1:$ & $d_{H_3} = 2:$ \\
    $\{1,2\}, \{1,5\}, \{2,4\}, \{4,5\} $ & $\{1,4\}$& $\{2,3\}, \{3,5\}$
\end{tabular}
\caption{\footnotesize\label{fig:graphDecompExample} The graph $G$ has 3 different edge neighborhood types, which we refer to as $H_1, H_2, H_3$ in this example. For each edge neighborhood type, the central edge is depicted in red. The degeneracies for each edge neighborhood type is also shown along with the edges associated with the degeneracy; note that the sum of the degeneracies adds up to the number of edges in $G$. In $G$, for the edge $e = \{1,5\}$, the depth-1 edge neighborhood $G_e$ is shown with solid black edges with the central edge $e$ in red; note that $G_e$ is of the same type as $H_1$. 
}
\end{figure}

Let $[g] \in \mathcal{G}$ be an edge-neighborhood type with central edge $e$, then we define:

$$f^{(p)}_g(\gamma,\beta) = \bra{\psi^{(p)}_g(\gamma,\beta)}C_e\ket{\psi^{(p)}_g(\gamma,\beta)},$$
where $\ket{\psi_g^{(p)}(\gamma,\beta)}$ corresponds to the quantum state resulting from running the standard depth-$p$ \MC{} QAOA circuit on the subgraph $g$ (see Equation \ref{eqn:waveform}); also note that the observable $C_e$ above corresponds to an indicator for whether or not the edge $e$ is cut (see Equation \ref{eqn: costSingleEdge}). In the case that $g = G_e$ is an edge-neighborhood of $G$ with central-edge $e$, Farhi et al. \cite{farhi2014quantum} showed that the probability of the edge $e$ being cut as a result of the standard QAOA on $G$ is exactly equal to $f_g^{(p)}(\gamma, \beta)$ and hence we refer to $f_g^{(p)}(\gamma, \beta)$ as the \emph{central-edge cut probability of $g$}. Due to linearity, an immediate consequence of this is that the expected cut value of QAOA, $F^{(p)}_G$, can be expressed in terms of the edge-neighborhood types found in $G$ and their degenerencies, i.e., 

\begin{equation}\label{eqn:subgraphDecomposition} F^{(p)}_G(\gamma,\beta) = \sum_{[g] \in \mathcal{G}(G)} d_g f_g^{(p)}(\gamma,\beta).\end{equation}


The fact that $f_g^{(p)}$ corresponds to edge-cut probabilities can be proven by keeping track of the qubits of the final QAOA state that are correlated with one another via the terms in the cost Hamiltonian and applying a commutativity argument; we refer the reader to Farhi et al.'s original QAOA paper \cite{farhi2014quantum} and the work of Love and Wurtz \cite{wurtz2021maxcut} for a more detailed proof of this and Equation \ref{eqn:subgraphDecomposition}.

Given Equation \ref{eqn:subgraphDecomposition}, observe that if the number of edge-neighborhood types for a graph $G$ is relatively small and if the (representatives of the) edge-neighborhood types have few vertices, then $F_G^{(p)}$ can be calculated on a classical device (via simulation) relatively quickly, even if the overall graph $G$ has a large number of vertices.

We make this more concrete by considering depth-1 \MC{} QAOA on 3-regular graphs. In this setting, there are 3 different edge-neighborhood types that can appear; we denote these types by $[g_4], [g_5],$ and $[g_6]$ whose graphs have $4,5,$ and $6$ vertices respectively as seen in Figure \ref{fig:cubicSubgraphTypes}. On a classical device, for any given cubic graph $G$ on $n$ vertices and parameters $\gamma$ and $\beta$, one can quickly calculate $f_{g_4}(\gamma,\beta), f_{g_5}(\gamma,\beta),$ and $f_{g_6}(\gamma,\beta)$ and moreover the degeneracies $d_{g_4}, d_{g_5}, d_{g_6}$ can be classically calculated in $\mathcal{O}(\text{poly}(n))$, and thus, by Equation \ref{eqn:subgraphDecomposition}, the expected cut value obtained via QAOA at specific parameters can be quickly determined. This technique can be used to more quickly find optimal $(\gamma,\beta)$ angles. While this technique can be used to quickly find expectation values classically, one still needs an $n$-qubit quantum device to actually observe the bitstrings that yield such expectations. 

\begin{figure}
    \centering
   \begin{tabular}{c|c|c}
    $g_5:$ & $g_4:$ & $g_6:$\\
 \raisebox{0.9cm}{\includegraphics[scale=0.2]{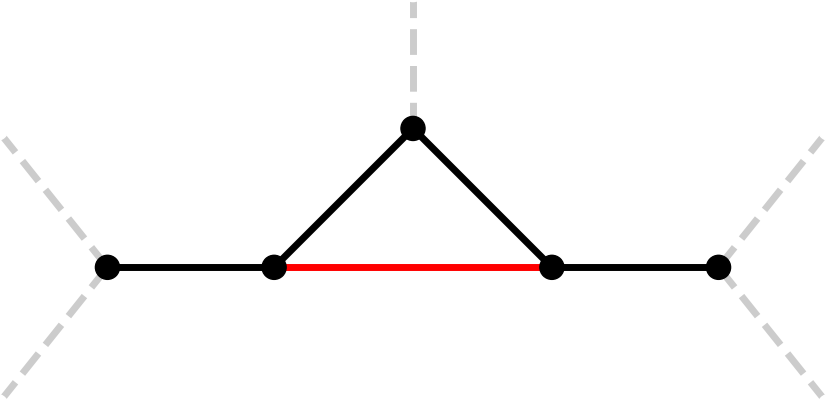}} &
  \includegraphics[scale=0.2]
    {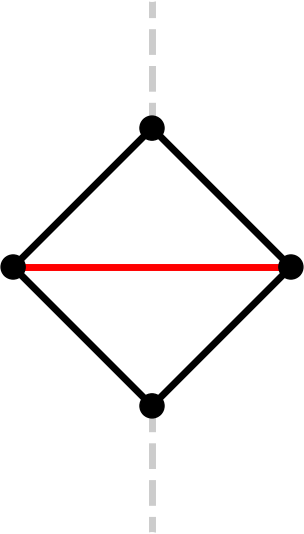} &
  \raisebox{0.5cm}{\includegraphics[scale=0.2]
    {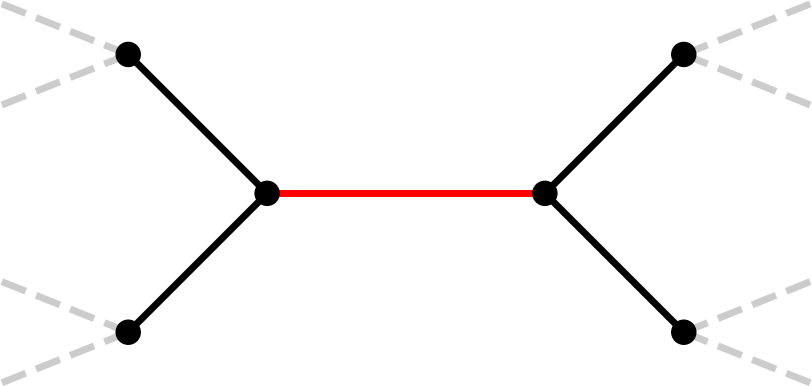}}
\end{tabular}
\caption{\footnotesize\label{fig:cubicSubgraphTypes} The three depth-1 edge-neighborhood types, labeled $g_4,g_5,g_6$, that can be found in cubic graphs. For each edge neighborhood type, the red edge denotes the central edge and the solid black lines denote the actual edges in the edge neighborhood; the gray dashed edges represent edges outside the edge-neighborhood that reside in the remainder of the cubic graph. Note that the other endpoints of the gray edges (not shown) are not necessarily distinct. }
\end{figure}

\subsection{Colored Edge-Neighborhoods for Max-Cut}
\label{sec:coloredEdgeNeighborhoods}
If we use a warm-start $\ket{b_\theta}$ such as that described in Section \ref{sec:initialStates}, then simply counting the edge-neighborhood types as described earlier does not suffice since the information regarding the initial qubit positions is lost. Instead, we consider \emph{colored} edge-neighborhood types in order to encode such information about the initial qubit positions. In this section, we consider 2-colorings\footnote{Our notion of graph coloring should not be confused with what is sometimes called a \emph{proper} graph coloring in the graph theory literature; i.e., in our colorings, vertices that are adjacent are permitted to have the same color. The coloring of vertices in this work is primarily used as a visual aid.} of graphs with each color corresponding to one of the 2 different qubit positions on the Bloch sphere in the kinds of warm-starts $\ket{b_\theta}$ described in Section \ref{sec:initialStates}. If one considers $k$ possible qubit positions on the Bloch sphere, then one may generalize the approach below to $k$-colored graphs; however, the analysis becomes exponentially more difficult as $k$ increases, hence, we limit ourselves to $k=2$ in this work. In this section, we show that by coloring the edge-neighborhoods, we can show a result that is similar to Equation \ref{eqn:subgraphDecomposition} for warm-started QAOA, i.e., that the expected cut value can be determined by the local information around each edge.

For the rest of this section, we consider the initialization angle $\theta$ of the warm-start state $\ket{b_\theta}$ to be fixed; the analysis for each value of $\theta$ will be nearly identical. Now, consider running \MC{} QAOA with graph $G$ with $n$ vertices and a warm-start $\ket{b_\theta}$ obtained via bitstring $b \in \{0,1\}^n$. The bitstring induces a colored graph $(G,b)$, this is visually seen in Figure \ref{fig:labeledGraphDecompExample} where we use the convention that 0 denotes the color yellow and 1 denotes the color green. For each $e \in E(G)$, we define $G^{(p)}_e(b)$ the \emph{colored} edge-neighborhoods about $e$ as colored mark graphs (Section \ref{sec:notationBackground}): the underlying graph and the central edge is defined in the same way compared to the (uncolored) edge-neighborhood $G^{(p)}_e$ and the vertices of $G^{(p)}_e(b)$ are colored so that they are consistent with the coloring of the overall graph coloring determined by $b$. We say that two colored edge-neighborhoods $(G,e,b)$ and $(H,f,b')$ are of the same type if there exists a graph isomorphism $\varphi: V(G) \to V(H)$ with the following properties:
\begin{enumerate}
    \item We have that $\varphi(e) = \varphi(f)$, i.e., the central edges are mapped to one another.
    \item For all $u \in V(G)$, if $\varphi(u) = v$, then $b_u = b'_v$, i.e., $\varphi$ is color-preserving with respect to the graph colorings $b$ and $b'$.
\end{enumerate}

Figure \ref{fig:labeledGraphDecompExample} illustrates different colored edge-neighborhoods types for a given graph. We use $\mathcal{G}(G,b)$ to represent the equivalence classes corresponding to the colored edge-neighborhood types that exist in the colored graph $(G,b)$. Similar to what was seen in Section \ref{sec:edgeNeighborhoods}, for a colored-edge neighborhood $g$ whose bitstring coloring is given by $b_g$, we can define the \emph{central-edge cut probability of $g$} as follows:
$$f^{(p)}_g(\gamma,\beta,\theta) = \bra{\psi^{(p)}_g(\gamma,\beta,\ket{(b_g)_\theta})}C_e\ket{\psi^{(p)}_g(\gamma,\beta,\ket{(b_g)_\theta})},$$

where $\ket{\psi^{(p)}_g(\gamma,\beta,\ket{(b_g)_\theta})}$ is the output of warm-started QAOA run on the underlying graph of $g$ with warmstart $(b_g)_\theta$ as defined in Equations \ref{eqn:warmStartedWaveform} and \ref{eqn:bitstringWarmstartState}.

The degeneracy of a colored edge-neighborhood type is also defined similarly as before and moreover, the overall expected cut value of QAOA can also be expressed in terms of the degeneracies of the colored edge-neighborhood types as seen in the following theorem.

\begin{theorem}
\label{thm:coloredSubgraphDecomposition}
Let $G$ be a graph, $b \in \{0,1\}^{n_G}$ be a bitstring, $p \in \mathbb{N}$ be a circuit depth, and let $\theta \in [0, \pi]$ be an initialization angle, then the expected cut value of warm-started QAOA on $G$ with warm-start $\ket{b_\theta}$ can be expressed as follows:
\begin{equation}\label{eqn:coloredSubgraphDecomposition} F^{(p)}_G(\gamma,\beta, \ket{b_\theta}) = \sum_{[g] \in \mathcal{G}(G,b)} d_g f_g^{(p)}(\gamma,\beta, \theta).\end{equation}
\end{theorem}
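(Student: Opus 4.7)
The plan is to adapt Farhi et al.'s lightcone / commutativity argument, which underlies Equation \ref{eqn:subgraphDecomposition}, to the warm-started setting with aligned mixers. By linearity together with the decomposition $C_G = \sum_{e \in E(G)} C_e$,
\begin{equation*}
F^{(p)}_G(\gamma,\beta,\ket{b_\theta}) = \sum_{e \in E(G)} \bra{\psi_p(\gamma,\beta,\ket{b_\theta})} C_e \ket{\psi_p(\gamma,\beta,\ket{b_\theta})},
\end{equation*}
so it suffices to show that each per-edge term depends only on the colored edge-neighborhood type $[G_e^{(p)}(b)]$ to which $e$ belongs.

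The key structural observation, which is exactly what makes the lightcone argument work, is that the warm-started circuit factorizes into commuting local pieces. The phase separator splits as $U(C,\gamma_i) = \prod_{e' \in E(G)} U(C_{e'},\gamma_i)$ into mutually commuting two-qubit diagonal terms, the aligned mixer splits as $U(B_{\ket{b_\theta}},\beta_i) = \prod_{v=1}^{n} U(B_{\vec{n}_v,v},\beta_i)$ into mutually commuting single-qubit rotations (since $B_{\ket{b_\theta}} = \sum_v B_{\vec{n}_v,v}$ is a sum over distinct qubits), and the initial state $\ket{b_\theta} = \bigotimes_v \ket{(b_v)_\theta}$ is itself a product state. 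These are precisely the structural features of standard QAOA exploited by Farhi et al., with the transverse-field mixer and $\ket{+}^{\otimes n}$ replaced by the aligned mixer and product warm-start.

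Given this factorization, a Heisenberg-picture lightcone argument shows that $\bra{\psi_p} C_e \ket{\psi_p}$ equals the analogous expectation value computed on the reduced circuit that retains only the phase separator terms $U(C_{e'},\gamma_i)$ with $e' \in E(G_e^{(p)})$, the mixer terms $U(B_{\vec{n}_v,v},\beta_i)$ with $v \in V(G_e^{(p)})$, and the tensor factors $\bigotimes_{v \in V(G_e^{(p)})} \ket{(b_v)_\theta}$ of the initial state. Operators outside this reduced circuit either act on qubits disjoint from the support of the back-propagated observable (where they commute past $C_e$ and cancel against their adjoints from the bra, leaving $\braket{(b_v)_\theta | (b_v)_\theta} = 1$ on those qubits) or correspond to phase separator terms too far from $e$ to enter the causal cone within $p$ alternating layers. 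The reduced expectation therefore depends only on the colored-marked-graph data $(G_e^{(p)}, e, \restr{b}{V(G_e^{(p)})})$, and any color-preserving isomorphism between two such triples induces a qubit relabeling that conjugates one reduced circuit into the other: $C_{\{u,v\}}$ is symmetric in its two endpoints, and $B_{\vec{n}_v,v}$ depends on $v$ only through its bit value $b_v$ (the same compatibility between coloring and mixer used in the proof of Theorem \ref{thm:bitflipSameExpectedCut}). Hence edges sharing a colored edge-neighborhood type contribute the common value $f_g^{(p)}(\gamma,\beta,\theta)$, and grouping by type produces the sum with degeneracies $d_g$.

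I expect the main obstacle to be the formal bookkeeping of the lightcone reduction at arbitrary depth $p$: one must verify that every gate outside $G_e^{(p)}$ either commutes past the back-propagated observable and cancels against its adjoint from the bra, or acts only on initial-state tensor factors that inner-product against themselves to $1$. Once that reduction is set up cleanly, the identification of like-coloured neighborhoods via the color-preserving isomorphism is a routine relabeling argument, made possible by the deliberate choice of the mixer $B_{\ket{b_\theta}}$ so that each qubit's mixing operator is determined solely by its bit value in $b$.
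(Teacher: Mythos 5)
Your proposal is correct and takes essentially the same route as the paper, which simply observes that the proof of Equation \ref{eqn:subgraphDecomposition} carries over because the product warm-start and the single-qubit aligned mixer do not change which terms commute past the observable $C_e$. You have merely written out in full the lightcone/cancellation argument that the paper defers to Farhi et al., including the correct key point that $B_{\vec{n}_v,v}$ depends on $v$ only through the bit $b_v$, so the colored edge-neighborhood type determines the reduced expectation.
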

\begin{proof}
    The proof is nearly identical to the proof of Equation \ref{eqn:subgraphDecomposition} as the warm-start state does not change which terms can commute past the observable $C_e$ for each edge $e$ in the graph.
\end{proof}

\begin{figure}
    \centering

    \hspace{1.5cm}\begin{tabular}{lc}
  $G:$ &
  \begin{minipage}{0.8\textwidth}\includegraphics[scale=0.5]
    {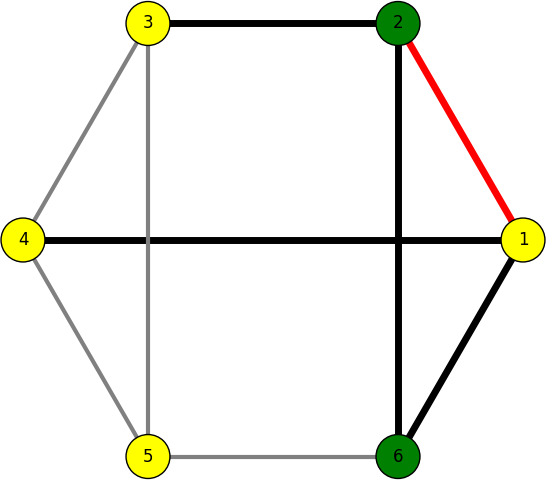}\end{minipage}\\
\end{tabular}

    \vspace{1cm}

    \begin{tabular}{ccc}
    $H_1:$ & $H_2:$ & $H_3:$\\
  \includegraphics[scale=0.19] {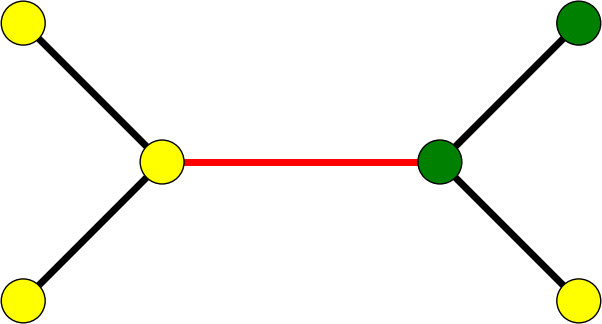} &
  \includegraphics[scale=0.19]
    {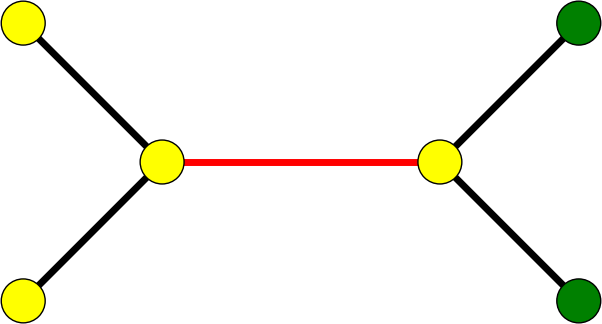} & \raisebox{0.9cm}{\includegraphics[scale=0.19] {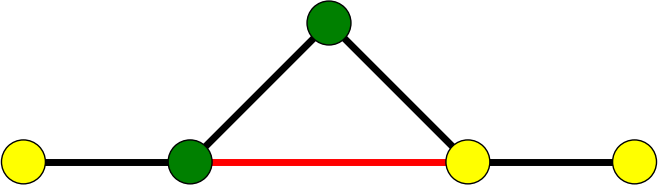}}\\
    $d_{H_1} = 2:$ & $d_{H_2} = 1:$ & $d_{H_3} = 2: $\\
    $\{2,3\}, \{5,6\}$ & $\{1,4\}$ & $\{1,2\}, \{1,6\}$
\end{tabular}

    \vspace{1cm}
    
    \begin{tabular}{ccc}
    $H_4:$ & $H_5:$ & $H_6:$\\
  \includegraphics[scale=0.19]
    {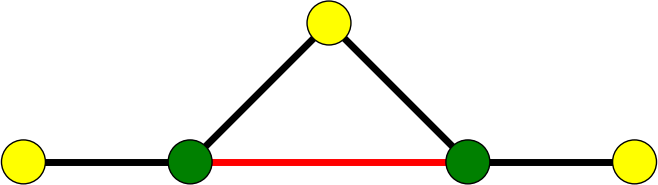} &
  \includegraphics[scale=0.19]
    {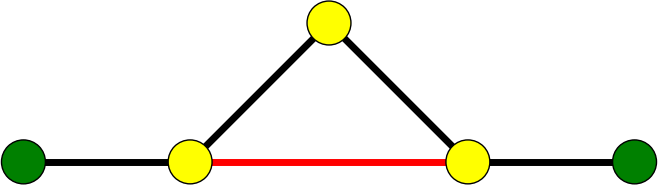} & \includegraphics[scale=0.19]
    {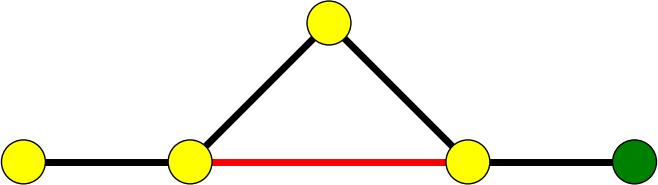} \\
    $d_{H_4} = 1:$ & $d_{H_5} = 1:$ & $d_{H_6} = 2:$\\
$\{2,6\}$ & $\{3,5\}$ & $\{3,4\}, \{4,5\}$
\end{tabular}
\caption{\footnotesize\label{fig:labeledGraphDecompExample} An example graph $G$ is depicted with a 2-coloring with corresponding bitstring $b=010001$ (the yellow and green vertex colors correspond to 0 and 1 respectively). It has 6 different colored edge neighborhood types, which we refer to as $H_1,\dots, H_6$ in this example. For each colored edge neighborhood type, the central edge is depicted in red. The degeneracies for each edge neighborhood type is also shown along with the edges associated with the degeneracy; note that the sum of the degeneracies adds up to the number of edges in $G$. In $G$, for the edge $e = \{1,2\}$, the depth-1 colored edge neighborhood $G_e$ is shown with solid black edges with the central edge $e$ in red; note that $G_e$ is of the same type as $H_3$. 
}
\end{figure}

In the case of 3-regular graphs, if we consider all the ways to 2-color the vertices of the non-colored edge neighborhoods $g_4,  g_5, g_6$ seen in Figure \ref{fig:cubicSubgraphTypes}, we would have a total of $2^4+2^5+2^6 = 112$ different colored subgraphs. Due to the symmetries in the underlying graphs of $g_4,g_5,g_6$, the actual number of types of colored edge neighborhoods (for 3-regular graphs) is smaller than 112; however, we can further simplify the number of cases to consider by making certain assumptions about the cut induced by the bitstring $b$ that is used for the warm-start $\ket{b_\theta}$. Intuitively, we expect that constructing warm-starts from ``good" cuts will cause QAOA to produce higher-quality cuts in expectation. In particular, for the rest of this paper, we restrict our attention to warm-starts $\ket{b_\theta}$ where $b$ corresponds to a cut in the graph that is locally optimal with respect to 1-BLS (Algorithm \ref{alg:localSearch}). Interestingly, by placing such a restriction on the cuts, this forbids certain types of colored edge neighborhoods that would have been considered otherwise; this is because such forbidden colored edge neighborhoods could be recolored in a way so that the cut resulting from the new coloring is guaranteed to increase the cut value for the rest of the graph, which is a contradiction on the local optimality of $b$. In Figure \ref{fig:ImprovementExample}, we show such an example of a forbidden colored edge-neighborhood in addition to the bit that needs to be flipped in the bitstring coloring in order to improve the overall cut value for the overall graph.

For 3-regular graphs, we have systematically found all possible non-forbidden depth-1 colored-edge neighborhoods in the case that the colored graph $(G,b)$ is such that $b$ is 1-BLS optimal as seen in the following theorem.
\begin{theorem}
    Let $(G,b)$ be a colored graph where $G$ is 3-regular and the bitstring $b$ has the property of being 1-BLS optimal with respect to the \MC{} problem on $G$. Then for any edge $e \in E(G)$, the depth-1 colored-edge neighborhood $G_e(b)$ must be of the same type as one of the 15 different colored-edge neighborhoods found in Figure \ref{fig:optimalLabeledSubgraphs}.
\end{theorem}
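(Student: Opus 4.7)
The plan is a finite case analysis that leverages the fact that, since $G$ is $3$-regular, all three neighbors of each endpoint of the central edge $e = \{u,v\}$ lie inside $G^{(1)}_e$, so the 1-BLS optimality of $b$ imposes two local inequalities that depend only on the colored neighborhood. Recall from Figure \ref{fig:cubicSubgraphTypes} that the underlying graph of $G^{(1)}_e$ is isomorphic to exactly one of $g_4, g_5, g_6$ according as $u$ and $v$ share $2$, $1$, or $0$ outer neighbors. For each of these three underlying graphs I would enumerate all $2$-colorings modulo marked-graph automorphism and retain those consistent with the two inequalities, then verify the totals match Figure \ref{fig:optimalLabeledSubgraphs}.

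First I would reduce the global 1-BLS condition to two central-endpoint inequalities. Flipping $b_u$ changes $\cut(b)$ by (number of neighbors of $u$ with color $b_u$) minus (number with the opposite color), so 1-BLS optimality at $u$ is equivalent to the statement that at least two of the three neighbors of $u$ lie on the opposite side of the cut; this is entirely determined by the colors of the three vertices adjacent to $u$ inside $G^{(1)}_e$, and the analogous statement holds at $v$. At any other vertex $w$ of $G^{(1)}_e$, at least one of $w$'s neighbors in $G$ generally lies outside $G^{(1)}_e$ and its color is unknown, so the local data alone does not pin down the sign of $\Delta\cut$ under a $w$-flip. The only potentially awkward case is a shared outer vertex in $g_4$ or $g_5$, which has two known neighbors inside the neighborhood; one checks directly that any coloring with both known neighbors equal to $b_w$ is already ruled out by the $u$- or $v$-inequality, so no independent constraints at outer vertices are generated. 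Hence the admissible colorings are exactly those satisfying the two central-endpoint inequalities.

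Second, I would enumerate case by case. Label the outer vertices as $a, b$ in $g_4$; $a$ (shared), $b$, $c$ in $g_5$; and $a, b$ (adjacent to $u$), $c, d$ (adjacent to $v$) in $g_6$. The marked-graph automorphism groups are: order $4$ for $g_4$, generated by $u \leftrightarrow v$ and $a \leftrightarrow b$; order $2$ for $g_5$, generated by the combined swap $(u \leftrightarrow v,\ b \leftrightarrow c)$; and order $8$ for $g_6$, generated by $a \leftrightarrow b$, $c \leftrightarrow d$, and the combined swap $(u \leftrightarrow v,\ \{a,b\} \leftrightarrow \{c,d\})$. Splitting on the value of $(b_u, b_v)$ and reading off the forced or allowed outer colorings from the two inequalities yields $3$ orbits for $g_4$, $6$ orbits for $g_5$, and $6$ orbits for $g_6$, for a total of $15$, matching Figure \ref{fig:optimalLabeledSubgraphs}.

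The main obstacle is combinatorial bookkeeping rather than any conceptual leap. The most error-prone step is the $g_6$ analysis, where the order-$8$ automorphism group nontrivially mixes the outer swaps with the $u \leftrightarrow v$ swap; the cleanest way to avoid over- or under-counting is to fix a canonical representative of $(b_u, b_v)$ within its $u \leftrightarrow v$ orbit and then quotient only by the stabilizer of that choice. It is equally important not to identify colorings that differ by a global bitflip, so that, for instance, $(0,0,1,1)$ and $(1,1,0,0)$ in $g_4$ remain distinct types: although Theorem \ref{thm:bitflipSameExpectedCut} shows these give identical QAOA expectations, the global bitflip is not a marked-graph automorphism and therefore does not merge colored neighborhood types, which is precisely what is required to land at the target count of $15$.
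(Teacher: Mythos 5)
Your proposal is correct and follows essentially the same route as the paper: the paper likewise reduces to the underlying types $g_4,g_5,g_6$ and then exhaustively eliminates colorings that violate 1-BLS optimality, checking only the flip conditions at the degree-3 (core) vertices of the neighborhood, which for depth 1 are exactly the two endpoints of the central edge (Algorithm \ref{alg:coloredEdgeNeighborhoods} in Appendix \ref{sec:subgraphConstructionDetails}). Your explicit orbit counts of $3+6+6=15$ and your observation that global bitflips are not marked-graph isomorphisms are consistent with Figure \ref{fig:optimalLabeledSubgraphs}.
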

\begin{proof}
    By definition, the underlying graph itself for both the uncolored and colored edge-neighborhoods must be the same, and hence, as was already established in Farhi et al. \cite{farhi2014quantum}, the underlying graphs for the colored edge-neighborhoods must be of the same type as $g_4, g_5, g_6$ seen in \ref{fig:cubicSubgraphTypes}. One can then exhaustively consider all possible ways to color $g_4, g_5,$ and $g_6$ and eliminate any colorings that are inconsistent with $b$ being 1-BLS optimal; we describe this process in more detail in Section \ref{sec:subgraphConstructionDetails}. The result of this process is exactly the colored edge-neighborhoods found in Figure \ref{fig:optimalLabeledSubgraphs}.
\end{proof}

We remark that for every colored edge-neighborhood type in Figure \ref{fig:optimalLabeledSubgraphs}, there is a colored edge-neighborhood type that is exactly the same with the colors flipped, e.g., $[g_{4,1}]$ and $[g_{4,3}]$; another example is $[g_{4,2}]$ with itself. Amongst each pair of such edge-neighborhood types, QAOA will yield the same landscape of expected cut values for every choice of $\gamma$ and $\beta$ due to Theorem \ref{thm:bitflipSameExpectedCut}; while this can possibly help with the analysis and/or the computation of $F_G^{(p)}(\gamma, \beta, \ket{b}_\theta)$, it is important to distinguish between the edge-neighborhood types (and their degeneracies) within each pair as we will later see in Section \ref{sec:suitableConstraints}. 

\begin{figure}
    \centering


    \begin{tabular}{c|ccc}
    & & Colored Edge Neighborhood & Cut Value\\ \hline
    & $G_e(b):$ & \raisebox{-1.3cm}{\includegraphics[scale=0.2]{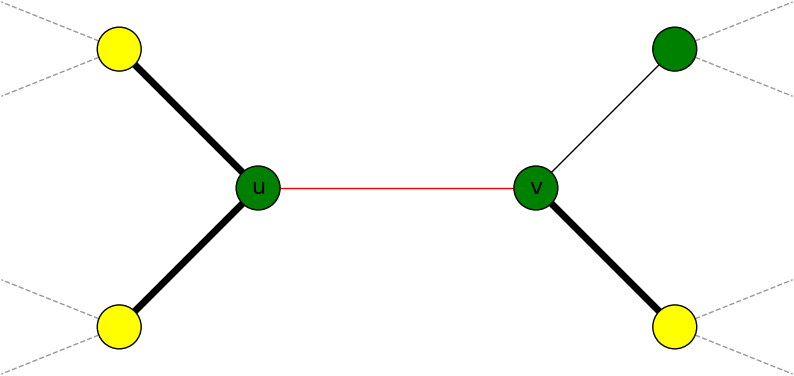}} & $\cut(b)=m'+3$ \\ \hline
    $\xrightarrow[\text{flip $v$}]{}$ & $G_e(b'):$ &  \raisebox{-1.3cm}{\includegraphics[scale=0.20]{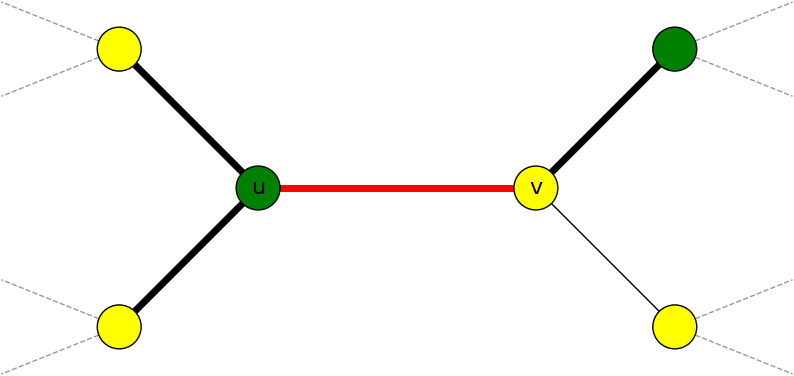}} & $\cut(b')=m'+4$
    \end{tabular}
    
    \caption{\footnotesize\label{fig:ImprovementExample} Consider a 3-regular graph $G$ with a coloring/cut determined by a bitstring $b$ and let $e = (u,v) \in E(G)$ and suppose the colored edge neighborhood $G_e(b)$ looks as depicted at the top of the figure. We show that in such a case, regardless of the colors of the remaining nodes in $G$, that it must be that the overall cut $b$ is not locally optimal with respect to single-bitflips (and hence any $g \sim G_e(b)$ is a forbidden type of colored edge neighborhood under the restriction that the colored edge neighborhood is generated from a locally optimal cut). In the colored edge neighborhood, the bold lines correspond to edges that would necessarily be cut in accordance with $b$. Given $b$, let $m'$ denote the number of edges that are cut outside the underlying graph of $G_e(b)$. Observe that we can recolor the vertices that are incident to the central edge without changing the value of $m'$. In particular, let $b'$ be $b$ but with the bit corresponding to vertex $v$ being flipped. Observe that causes an improvement not only in the ``local cut value" but the cut value of the entire graph.}
    
\end{figure}
    
\begin{figure}
    \centering
    \includegraphics[scale=0.08]{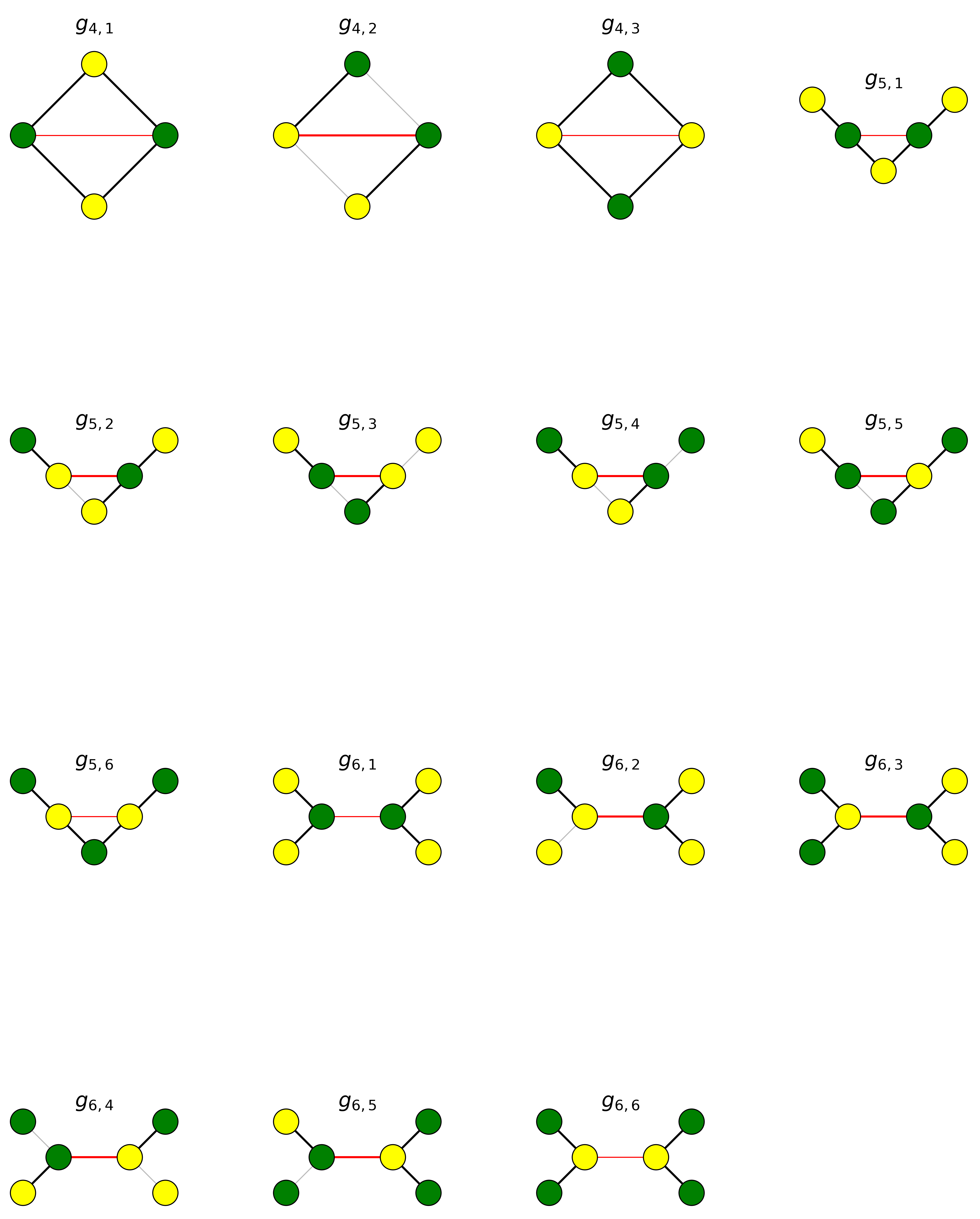}
    \caption{\footnotesize\label{fig:optimalLabeledSubgraphs} Listed are the 15 types of colored edge neighborhoods that can possibly exist for 3-regular graphs under the assumption that the cut in the graph induced by the bitstring/coloring is 1-BLS optimal. The yellow and green vertex colors correspond to 0 and 1 respectively and the red line denotes the central edge of each colored edge neighborhood. The bold edges correspond to edges that would be cut in the cut corresponding to the bitstring that determines the edge-neighborhood coloring (i.e. there is a bold edge if the colors of two adjacent vertices are different).}
\end{figure}

\subsection{Graph Structures for Depth-1 Max-Cut QAOA}
\label{sec:graphStructures}
If a triangle exists in a graph, then it follows that any cut of that graph will cut at most two of the edges of the triangle. In Farhi's original QAOA paper \cite{farhi2014quantum}, this insight motivates considering certain graph structures that contain triangles. These graph structures, which \cite{farhi2014quantum} refers to as ``isolated triangles" and ``crossed squares" are \emph{induced} subgraphs that can possibly exist in a 3-regular graph; a more detailed description and a diagram of these structures can be found in Figure \ref{fig:graphStructures}. Within both of these graph structures, there is a subset of edges which we will refer to as the \emph{core} edges; these are denoted by solid lines in Figure \ref{fig:graphStructures}. We also define the core vertices of each graph structure as the vertices which are incident to at least one core edge. Farhi et al. \cite{farhi2014quantum} showed that by counting these graph structures, one can show that depth-1 QAOA has an approximation ratio of 0.6924 on 3-regular graphs. We briefly review Farhi et al.'s proof of the 0.6924-approximation as we will later adapt the proof technique used to arrive at an approximation ratio for warm-started QAOA in Sections \ref{sec:coloredGraphStructures} and \ref{sec:suitableConstraints}.

\begin{figure}
    \centering
    \begin{tabular}{ccc}
    \textbf{S}:\hspace{0.5cm} \raisebox{-0.5\height}{\rotatebox{90}{\includegraphics[scale=0.2]{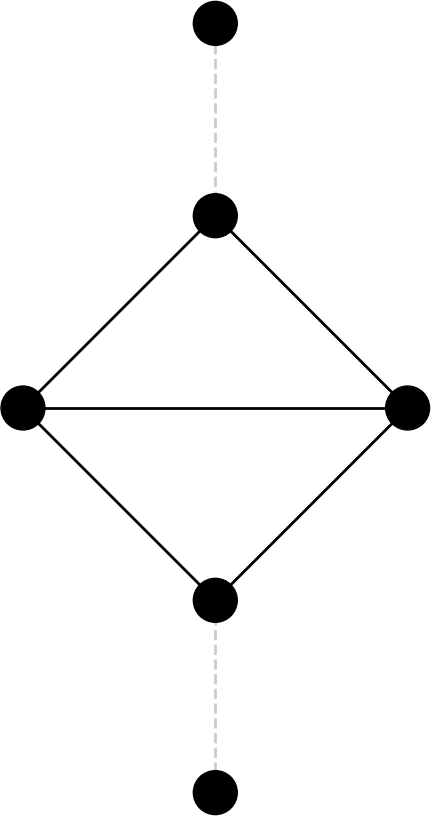}}}
    & \hspace{2cm} & \textbf{T}: \raisebox{-0.5\height}{\includegraphics[scale=0.2]{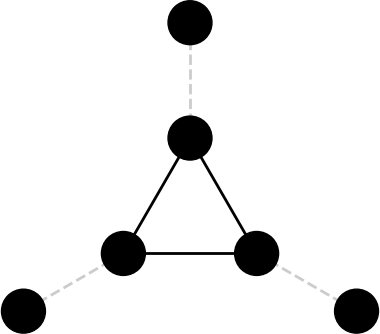}}
    \end{tabular}
    \caption{\label{fig:graphStructures} \footnotesize Above are the two graph structures (the crossed squares \textbf{S} and the isolated triangles \textbf{T}) of interest for depth-1 QAOA on 3-regular graphs. We say that a graph contains one of these structures if it contains an \emph{induced} subgraph that is isomorphic to the graph structure. As discussed in the main text, the solid edges and the vertices incident to them are referred to as the core edges and core vertices respectively.}
\end{figure}

The core edges in these graph structures can help one to count the degeneracies $d_{g_4}, d_{g_5}, d_{g_6}$ of the depth-1 edge neighborhood types for 3-regular graphs. Looking at $g_4$ in Figure \ref{fig:cubicSubgraphTypes}, recall that the endpoints of the dashed edges do not necessarily go to distinct vertices. Depending on how these dashed edges connect to possibly the same vertex, by inspection, it can be seen that it must be the case that if $G_e \sim g_4$ for some edge $e$ in a 3-regular graph $G$, then $e$ must also lie amongst one of the core edges of some isolated triangle or some crossed square in $G$. Moreover, since the core edges of all isolated triangles and crossed squares are disjoint from one another, then $e$ is contained in the core edge set of \emph{exactly} one of these graph structures. Similarly, it can be shown that if $G_e \sim g_5$, then $e$ must appear in exactly one of the crossed squares of $G$. We will use $T$ and $S$ to denote the number of isolated triangles and crossed squares in $G$ respectively.

Next, observe that in an isolated triangle, the edge neighborhood type of all 3 of the core edges correspond to $g_5$. For the crossed square, 4 of the core edges have edge neighborhood type of $g_5$ and 1 of the core edges has edge neighborhood type of $g_4$. Combining this with the previous observations, it can be seen that:
\begin{equation}
\label{eqn:degenAsFuncOfGraphStructs}
d_{g_4} = S, \quad d_{g_5} = 4S+3T.
\end{equation}

For convenience, let $R = d_{g_6}$, then, recalling Equation \ref{eqn:subgraphDecomposition}, one can calculate\footnote{In Farhi et al.'s original analysis, they use that $R = d_{g_6} = |E(G)| - d_{g_4} - d_{g_5} = \frac{3n}{2} - S - (4S+3T) = \frac{3n}{2} - 5S - 3T$ to eliminate $R$ in their analysis. We use the formulation with $R$ explicit in order to smoothen the transition to the formulation of the warm-started case.} the expected cut-value of depth-1 QAOA on a 3-regular graph $G$ as
\begin{equation}\label{eqn:expectedCutAsFunctionOfGraphStructures} F_G(\gamma,\beta) = F_{R,S,T}(\gamma,\beta) := S f_{g_4}(\gamma,\beta) + (4S+3T) f_{g_5}(\gamma,\beta) + R f_{g_6}(\gamma, \beta).\end{equation}

For a 3-regular graph $G$, the expected cut value of QAOA with optimized variational parameters can be expressed solely as a function of $R,S,T$:
\begin{equation}
\label{eqn:mInTermsOfRST}
    M(R,S,T) = \max_{\gamma,\beta} F_{R,S,T}(\gamma,\beta).
\end{equation}

Observe that the core edge set of every isolated triangle and crossed square contains at least one triangle and moreover, such triangles between different graph structures must be vertex-disjoint. Thus, for a 3-regular graph $G$ with $m$ edges, we have that for any cut, at least $S+T$ edges in $G$ must remain uncut, and hence $\mc(G) \leq m - S - T$. Thus, for a graph $G$ characterized by $m,R,S,T$, we have the instance-specific approximation ratio of $G$ is,
$$\alpha_\text{QAOA}(G) = \frac{M(R,S,T)}{\mc(G)} \geq \frac{M(R,S,T)}{m-S-T}.$$

For convenience, we scale out $R,S,T$ by $m$ to obtain $r := R/m, s := S/m, t: T/m$. In Farhi et al.'s original QAOA paper, the variables are instead scaled out $n$; for this work, we choose to scale out by $m$ for reasons that will become clear in Section \ref{sec:coloredGraphStructures} when we consider these graph structures in the context of warm-starts. With this rescaling, we have that,
$$\alpha_\text{QAOA}(G) \geq \frac{M(r,s,t)}{1-s-t}.$$

For any 3-regular graph, there are certain constraints on $r,s,t$ that always hold. Clearly $r,s,t \geq 0$. In addition, since the core vertices of the every isolated triangle and crossed square are disjoint from one another, then $4S+3T \leq n = \frac{2}{3}m$ which implies that $4s+3t \leq \frac{2}{3}.$ Additionally, $r,s,t$ should be consistent with the total number of edges in the graph\footnote{As mentioned in the previous footnote, Farhi et al. \cite{farhi2014quantum} do not have a variable $r$ or $R$ in their analysis as it can be eliminated via a substitution of the other variables. We instead present the minimization in a way where this potential substitution is instead encoded in the constraints as $5s+3t+r=1$.}, i.e., $d_{g_4}+d_{g_5}+d_{g_6} = S+(4S+3T)+R = m$ implies that $5s+3t+r=1$.

Thus, the approximation ratio of depth-1 standard QAOA  on 3-regular graphs can be bounded below as:
$$\alpha_{\text{QAOA}} \geq \min_{(r,s,t) \in \mathcal{P}} \frac{M(r,s,t)}{1-s-t},$$

where the constraint polytope $\mathcal{P}$ is defined as,
$$\mathcal{P} = \{(r,s,t) \in \mathbb{R}^3: r,s,t \geq 0, 5s+3t+r=1, 4s+3t \leq 2/3\}.$$

In the above minimization, Farhi et al. \cite{farhi2014quantum} found (numerically) that the minimum value is achieved at $s=t=0$ and $r=1$ and that the minimum value is $0.6924$, thus $\alpha_\text{QAOA} \geq 0.6924$. Moreover, as seen in Proposition \ref{thm:QAOATight}, this bound is in fact tight.

\begin{prop}
\label{thm:QAOATight}
    There exists infinitely many 3-regular graphs $G$, such that $\alpha_\text{QAOA}(G) = 0.6924$.
\end{prop}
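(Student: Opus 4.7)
The plan is to exhibit an explicit infinite family of 3-regular graphs on which both the numerator and the denominator of the instance-specific ratio $\alpha_\text{QAOA}(G) = M(R,S,T)/\mc(G)$ saturate the bounds used in deriving the $0.6924$ lower bound. Recall that the derivation of $\alpha_\text{QAOA}\geq 0.6924$ proceeds through two inequalities: (a) $M(R,S,T)/\mc(G) \geq M(r,s,t)/(1-s-t)$, which uses $\mc(G)\leq m-S-T$, and (b) the continuous minimization over $\mathcal{P}$, numerically attained at the vertex $(r,s,t)=(1,0,0)$. Hence any graph satisfying $(r,s,t)=(1,0,0)$ and $\mc(G)=m$ (so that both inequalities are tight) must achieve $\alpha_\text{QAOA}(G)=0.6924$ exactly.

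First I would translate the condition $(r,s,t)=(1,0,0)$, i.e.\ $S=T=0$ and $d_{g_6}=m$, into a graph-theoretic property. Inspecting the three edge-neighborhood types in Figure \ref{fig:cubicSubgraphTypes}, the type $g_6$ corresponds exactly to the case in which all four non-central neighbors of the endpoints of $e$ are distinct, while $g_4$ and $g_5$ require at least one coincidence of vertices near $e$ (a triangle or a $4$-cycle through $e$). Consequently, for every 3-regular graph $G$ of girth at least $5$, every edge $e$ satisfies $G_e \sim g_6$, which forces $d_{g_4}=d_{g_5}=0$ and hence $S=T=0$. Second, I would require $G$ to be bipartite, which gives $\mc(G)=m$ and therefore saturates the crude upper bound $\mc(G)\leq m - S - T$ when $S=T=0$. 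Combining the two reductions, it suffices to produce infinitely many 3-regular bipartite graphs of girth at least $5$ (equivalently, since bipartite excludes all odd cycles, of girth at least $6$).

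For the existence of such a family, I would invoke a standard construction from structural/extremal graph theory. Concretely, one can take the bipartite double covers of the 3-regular Ramanujan graphs of Lubotzky--Phillips--Sarnak, which are 3-regular, bipartite, and have girth tending to infinity with the number of vertices; alternatively one can simply use the probabilistic argument that a uniformly random 3-regular bipartite graph on $2n$ vertices has a positive probability (bounded away from $0$ as $n\to\infty$) of containing no $4$-cycle, yielding infinitely many distinct examples. Either way, substituting any such $G$ into the computation above gives $F_G(\gamma^*,\beta^*) = M(1,0,0)\cdot m = 0.6924\, m$ at the optimal angles, and dividing by $\mc(G)=m$ yields $\alpha_\text{QAOA}(G)=0.6924$.

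The main obstacle is not the numerical/algebraic content, which is essentially forced by the bound derivation in the preceding pages, but the exhibition of the infinite family: one must be careful that the chosen construction simultaneously enforces 3-regularity, bipartiteness, and girth at least $5$. I expect this to be handled either by citing a known infinite family of high-girth 3-regular bipartite graphs, or by a short probabilistic existence argument; everything else in the proof is a matter of reading off the point of minimum in $\mathcal{P}$ identified in Farhi et al.'s analysis and observing that both inequalities in the chain $\alpha_\text{QAOA}(G)\geq M(r,s,t)/(1-s-t)\geq 0.6924$ are equalities on this family.
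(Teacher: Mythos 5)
Your proof is correct and follows the same overall logic as the paper's: identify the two inequalities in the chain $\alpha_\text{QAOA}(G) \geq M(r,s,t)/(1-s-t) \geq 0.6924$, characterize when both are tight (namely $(r,s,t)=(1,0,0)$ and $\mc(G)=m$), and exhibit infinitely many graphs meeting those conditions. Where you diverge is in the family-exhibition step, and there you make the problem harder than it is. You assert that $g_4$ and $g_5$ arise from "a triangle or a $4$-cycle through $e$," but a $4$-cycle through $e=\{u,v\}$ contributes nothing to the depth-$1$ edge neighborhood: the chord edge joining a neighbor of $u$ to a neighbor of $v$ is not incident to $e$, so it does not appear in $G^{(1)}_e$, and the only way the outer endpoints can coincide (in a simple graph) is via a common neighbor of $u$ and $v$, i.e., a triangle. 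Hence triangle-freeness alone forces $G_e \sim g_6$ for every edge, and bipartiteness already gives both triangle-freeness and $\mc(G)=m$. This is why the paper can take \emph{any} infinite explicit family of $3$-regular bipartite graphs (the two-ring construction of Figure \ref{fig:cubicBipartiteGraph}, which in fact contains many $4$-cycles), whereas your insistence on girth at least $5$ pushes you to invoke Ramanujan graphs or a probabilistic existence argument. Your route is still valid — girth $\geq 5$ bipartite cubic graphs exist in abundance and satisfy all the needed conditions — but the extra machinery buys nothing here.
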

\begin{proof}
    By carefully going through the analysis above, it becomes clear that equality is achieved if there exists a graph $G$ for which $s=t=0, r=1$ and where $\mc(G) = m - S - T = m$. This is clearly achieved for any 3-regular bipartite graph since such a graph has no triangles and all edges are cut by the max-cut (effectively) by definition. There exists an infinite number of $3$-regular bipartite graphs, which we show the construction for in Figure \ref{fig:cubicBipartiteGraph}.
\end{proof}

We now take a closer look at the optimization problem that bounds the approximation ratio. Observe:
$$\alpha_{\text{QAOA}} = \min_{(r,s,t) \in \mathcal{P}} \frac{M(r,s,t)}{1-s-t} = \min_{(r,s,t) \in \mathcal{P}}\max_{\gamma,\beta} \frac{F_{r,s,t}(\gamma,\beta)}{1-s-t} \geq \max_{\gamma,\beta}\min_{(r,s,t) \in \mathcal{P}} \frac{F_{r,s,t}(\gamma,\beta)}{1-s-t},$$
in other words, because of the max-min inequality (details in Section \ref{sec:optimizationDetails}), it appears that swapping the order the maximization and minimization produces worse bounds for the approximation ratio. Interestingly, for QAOA, numerical evidence suggests that such swapping has no effect, i.e., 
$$\min_{(r,s,t) \in \mathcal{P}}\max_{\gamma,\beta} \frac{F_{r,s,t}(\gamma,\beta)}{1-s-t} = \max_{\gamma,\beta}\min_{(r,s,t) \in \mathcal{P}} \frac{F_{r,s,t}(\gamma,\beta)}{1-s-t},$$
and thus there is no obvious downside to performing the optimization $\max_{\gamma,\beta}\min_{(r,s,t) \in \mathcal{P}}\frac{F_{r,s,t}(\gamma,\beta)}{1-s-t}$ instead. In fact, this version of the optimization has certain advantages:
\begin{enumerate}
    \item It is ``easier" to numerically solve as the inner optimization can be solved exactly using linear programming techniques,
    \item It is possible to obtain numerical objective values that are themselves lower bounds of the true objective value, meaning we cannot accidentally overestimate the approximation ratio,
    \item The optimal $\gamma^*,\beta^*$ from the optimization can be used for \emph{any} instance, i.e., for any instance, one can be run QAOA with parameters $(\gamma,\beta) = (\gamma^*, \beta^*)$ and obtain a $0.6924$-approximation ratio without having to do any further optimization of $(\gamma,\beta)$.
\end{enumerate}

For the optimization problem $\max_{\gamma,\beta}\min_{(r,s,t) \in \mathcal{P}}$, the optimal values of $\gamma^*,\beta^*$ are as follows:
$$\gamma^* = 0.616$$
$$\beta^* = 0.393,$$
which is consistent with the values used in the work of Wurtz and Love \cite{wurtz2021maxcut}.

\sloppy
Details regarding the numerical optimization of $\min_{(r,s,t) \in \mathcal{P}} \max_{\gamma,\beta}\frac{F_{r,s,t}(\gamma,\beta)}{1-s-t}$ and $\max_{\gamma,\beta}\min_{(r,s,t) \in \mathcal{P}} \frac{F_{r,s,t}(\gamma,\beta)}{1-s-t}$ and the consequences of swapping the maximization and minimization (including the list of advantages above) can be found in Section \ref{sec:optimizationDetails}.

\begin{figure}
    \centering
    \raisebox{-0.5\height}{\includegraphics[scale=.3]{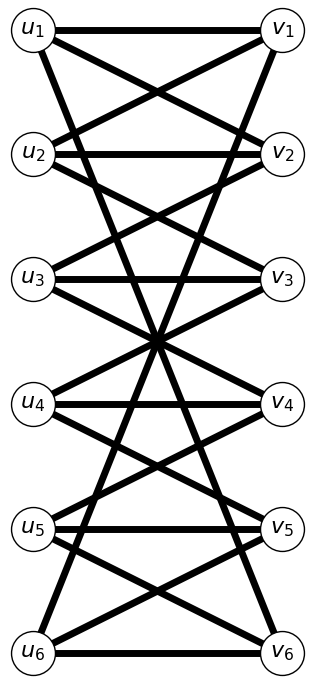}} \hspace{1.5cm} $=$ \hspace{1.5cm} \raisebox{-0.5\height}{\includegraphics[scale=.3]{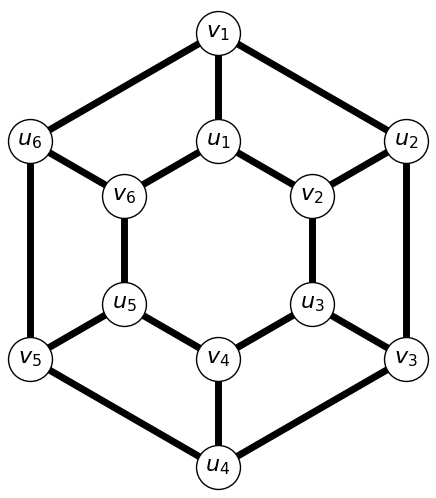}}
    \caption{\footnotesize\label{fig:cubicBipartiteGraph} On the left, is an example of a 12-node 3-regular bipartite graph. In general, for any integer $k \geq 3$, we can construct a $3$-regular bipartite graph $G=(V,E)$ on $n=2k$ nodes as follows. Label the vertices as $V = \{u_1,u_2,\dots, u_{k},v_1, v_2, \dots, v_k\}.$. The edge $e$ is included in $E$ if and only if $e$ is of the form $\{u_i, v_j\}$ where $|i - j|\leq 1$ or if $e \in \{ \{u_1, v_k\}, \{u_k, v_1\}\}.$ Additionally, in the case that $n$ is divisible by 4, we can redraw such a graph in a way that will be convenient later in this work when we consider warm-started QAOA. In such a case, observe that $G$ has two large cycles with vertex sequences $(u_1, v_2, u_3, v_4, \dots,  u_{k-1}, v_{k}, u_1)$ and $(v_1, u_2, v_3, u_4, \dots, v_{k-1}, u_{k}, v_1)$ with the remaining edges (outside the cycles) being of the form $\{u_i, v_i\}$ for some $i\in [k]$. As an example, this insight can be used to redraw the 12-node 3-regular bipartite graph as seen on the right side of the figure.
    }
\end{figure}

\newpage

\subsection{Colored Graph Structures for Depth-1 Max-Cut QAOA}
\label{sec:coloredGraphStructures}
We now consider the graph structures (e.g. isolated triangles and crossed squares) from the previous subsection in the context of colored graphs where the coloring encodes a warm-start $\ket{b_\theta}$ for some bitstring $b$. We can now look at \emph{colored} versions of these graph structures, i.e., colored isolated triangles and colored crossed squares. If the warm-start $\ket{b_\theta}$ used has no restriction on the cut corresponding to the bitstring $b$ that is used, then there are numerous ways to color such graph structures. However, if we place certain restrictions on $b$, e.g., requiring that $b$ correspond to a locally optimal 1-BLS cut, then the possible colorings for these graph structures is limited, similar to how the possible colored edge neighborhoods were limited as seen in Section \ref{sec:coloredEdgeNeighborhoods}. Details regarding how these colored graph structures were found can be found in Section \ref{sec:subgraphConstructionDetails}.

In Figure \ref{fig:coloredGraphStructures}, we list all 11 possible colored graph structures under the restriction that the initial warm-start state $\ket{b_\theta}$ has $b$ being locally optimal with respect to 1-BLS. The colored isolated triangles are labeled as $\mathbf{T_0, T_1, T_2, T_3}$ and the colored crossed squares are labeled as $\mathbf{S_0, S_0', S_{1,1}, S_{1,2}, S_{1,3}, S_2, S_2'}$. The first subscript denotes the number of non-core nodes in the graph structure that are colored green, i.e., the bitstring describing the coloring/cut is 1 for such nodes; this is not always enough to uniquely determine the colored graph structure so additional primes or subscripts are used in the notation when necessary. If we instead require that $b$ be at least locally optimal with 4-BLS, then the graph structures $S'_0$ and $S'_2$ do not need to be considered. For convenience, we will unbold the name of each colored graph structure in order to refer to the number of such structures that exist within the graph $G$, e.g., $S_0$ is defined as the number of times $\mathbf{S_0}$ appears in $G$. Additionally, we group the counts of these colored graph structures and the degeneracies of the $g_{6,i}$'s as follows:

$$\mathbf{R} = (R_1, \dots, R_6),$$
$$\mathbf{S} = (S_0, S_0', S_{1,1}, S_{1,2}, S_{1,3}, S_2, S_2'),$$
$$\mathbf{T} = (T_0, T_1, T_2, T_3),$$
and we use the unbolded $R,S,T$ to denote the sum of the entries in $\mathbf{R}, \mathbf{S}, \mathbf{T}$ respectively.

For each core edge in each colored graph structure in Figure \ref{fig:coloredGraphStructures}, we determine its (depth-1) colored edge neighborhood type; this information is included in the figure. Similar to Equation \ref{eqn:degenAsFuncOfGraphStructs}, we can determine the degeneracies of certain colored subgraphs in terms of the counts of certain colored graph structures. For example, using the information in Figure \ref{fig:coloredGraphStructures}, we find that,
$$d_{g_{5,5}} = 2T_1+4S_2+2S_{1,1}.$$

In general, the degeneracies of $g_{4,1},\dots, g_{4,3}, g_{5,1}, \dots, g_{5,6}$ can be calculated as a function of the counts of these colored graph structures. For convenience, we define $R_i = d_{g_{6,i}}$ for the remaining colored edge neighborhoods on six nodes. Using the subgraph decomposition formula in Equation \ref{eqn:coloredSubgraphDecomposition}, we can write the expected cut value of warm-started QAOA as a function of colored graph structure counts and these $R_i$ variables:
\begin{nalign}
\label{eqn:expectedCutValueInTermsOfColoredGraphStructures}
    F_G(\gamma, \beta, \ket{b_\theta})  = & F_{\mathbf{R},\mathbf{S},\mathbf{T}}(\gamma,\beta,\theta)\\
    := & (S_0'+S_2+S_{1,1})f_{g_{4,1}}(\gamma,\beta,\theta)\\
    + & (S_{1,2})f_{g_{4,2}}(\gamma,\beta,\theta)\\
    + & (S_0+S_{1,3}+S_2')f_{g_{4,3}}(\gamma,\beta,\theta)\\
    + & (T_0+T_1+S_{1,2})f_{g_{5,1}}(\gamma,\beta,\theta)\\
    + & (2T_2+4S_0+2S_{1,3})f_{g_{5,2}}(\gamma,\beta,\theta)\\
    + & (2T_0+4S_0'+2S_{1,1}+S_{1,2})f_{g_{5,3}}(\gamma,\beta,\theta)\\
    + & (2T_3+S_{1,2}+2S_{1,3}+4S_2')f_{g_{5,4}}(\gamma,\beta,\theta)\\
    + & (2T_1+2S_{1,1}+4S_2)f_{g_{5,5}}(\gamma,\beta,\theta)\\
    + & (T_2+T_3+S_{1,2})f_{g_{5,6}}(\gamma,\beta,\theta)\\
    + & R_1 f_{g_{6,1}}(\gamma,\beta,\theta)\\
    + & R_2 f_{g_{6,2}}(\gamma,\beta,\theta)\\
    + & R_3 f_{g_{6,3}}(\gamma,\beta,\theta)\\
    + & R_4 f_{g_{6,4}}(\gamma,\beta,\theta)\\
    + & R_5 f_{g_{6,5}}(\gamma,\beta,\theta)\\
    + & R_6 f_{g_{6,6}}(\gamma,\beta,\theta).
\end{nalign}

\begin{figure}
    \centering
    \centerline{ 
    \begin{tabular}{ccc}
    \begin{tabular}{c}
    $\mathbf{S_0}$\\\\
    Neighborhoods: \\
    $g_{4,3}, 4g_{5,2}$
    \end{tabular}
    \raisebox{-0.5\height}{{\includegraphics[scale=0.2]{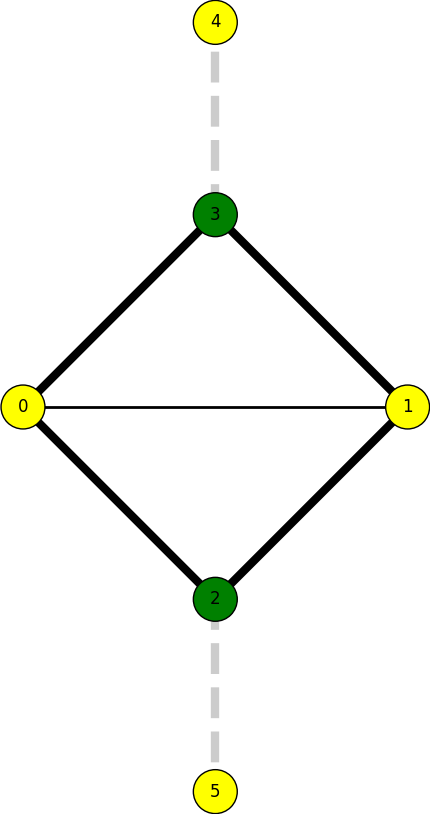}}} & 
    
    \begin{tabular}{c}
    $\mathbf{S_0'}$\\\\
    Neighborhoods: \\
    $g_{4,1}, 4g_{5,3}$
    \end{tabular}\raisebox{-0.5\height}{{\includegraphics[scale=0.2]{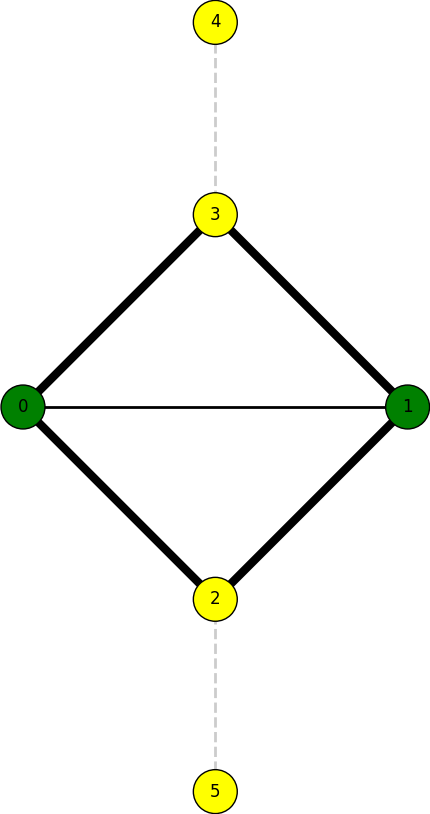}}} & 
    
    \begin{tabular}{c}
    $\mathbf{S_{1,1}}$\\\\
    Neighborhoods: \\
    $g_{4,1}, 2g_{5,3}, 2g_{5,5}$
    \end{tabular}\raisebox{-0.5\height}{{\includegraphics[scale=0.2]{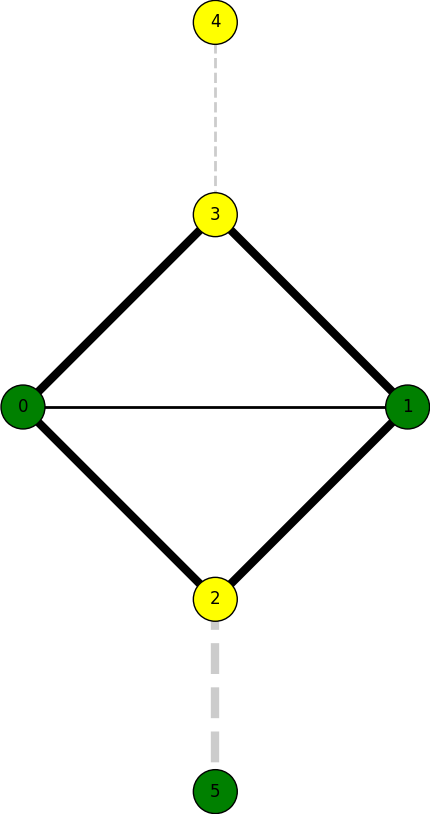}}} \\\\

    \begin{tabular}{c}
    $\mathbf{S_{1,2}}$\\\\
    Neighborhoods: \\
    $g_{4,2},g_{5,1}, g_{5,3},$\\
    $g_{5,4}, g_{5,6}$
    \end{tabular}\raisebox{-0.5\height}{{\includegraphics[scale=0.2]{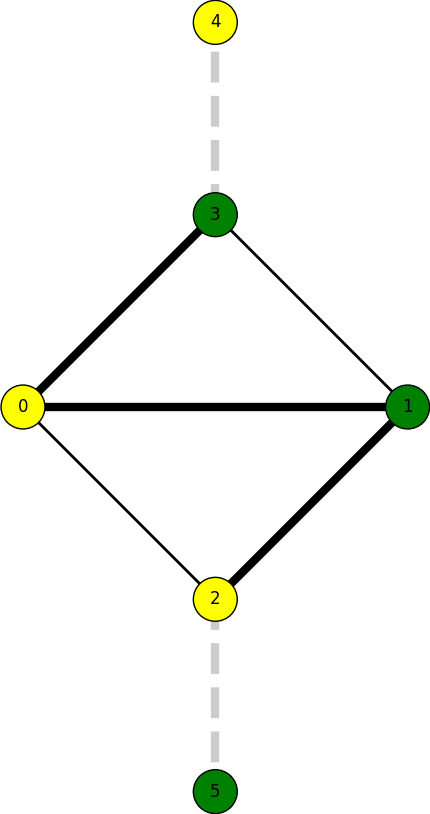}}} &

    \begin{tabular}{c}
    $\mathbf{S_{1,3}}$\\\\
    Neighborhoods: \\
    $g_{4,3}, 2g_{5,2}, 2g_{5,4}$
    \end{tabular}\raisebox{-0.5\height}{{\includegraphics[scale=0.2]{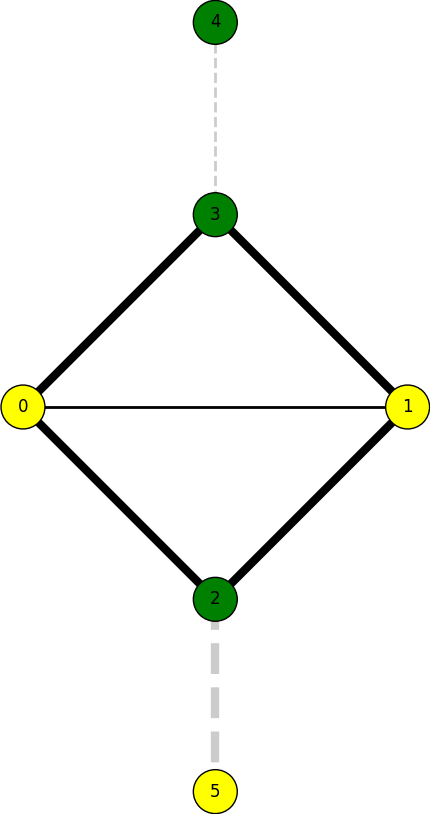}}} &

    \begin{tabular}{c}
    $\mathbf{S_2}$\\\\
    Neighborhoods: \\
    $g_{4,1}, 4g_{5,5}$
    \end{tabular}\raisebox{-0.5\height}{{\includegraphics[scale=0.2]{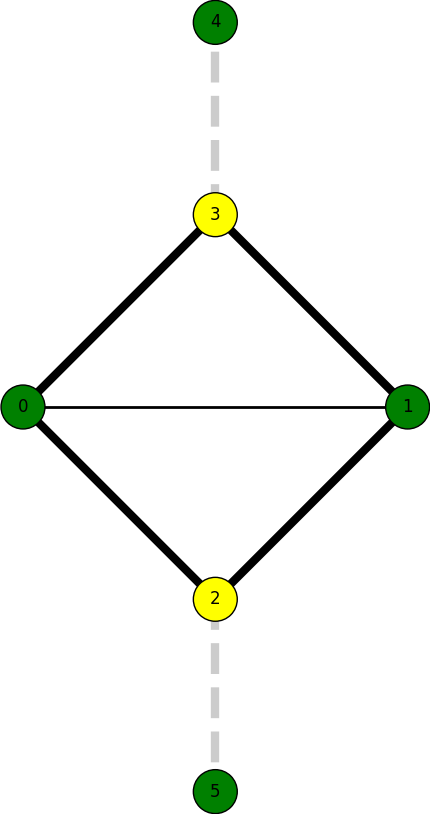}}} \\

    \begin{tabular}{c}
    $\mathbf{S_2'}$\\\\
    Neighborhoods: \\
    $g_{4,3}, 4g_{5,4}$
    \end{tabular}\raisebox{-0.5\height}{{\includegraphics[scale=0.2]{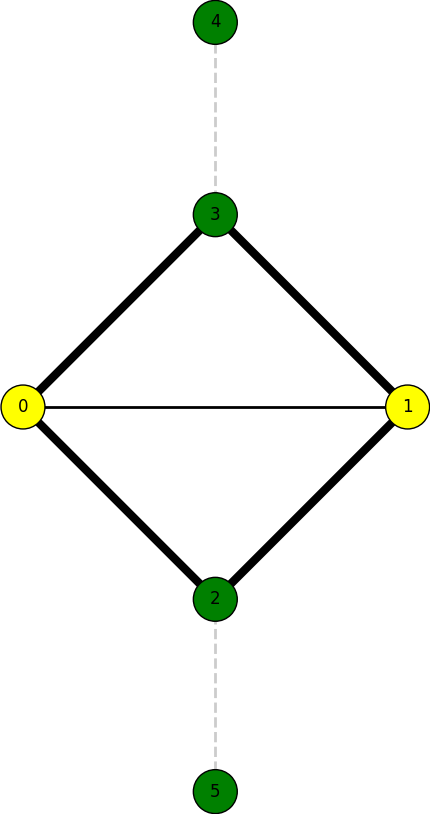}}} &

    \begin{tabular}{c}
    $\mathbf{T_0}$\\\\
    Neighborhoods: \\
    $g_{5,1}, 2g_{5,3}$
    \end{tabular}\raisebox{-0.5\height}{{\includegraphics[scale=0.2]{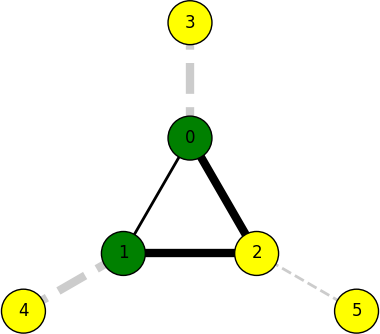}}} &

    \begin{tabular}{c}
    $\mathbf{T_1}$\\\\
    Neighborhoods: \\
    $g_{5,1}, 2g_{5,5}$
    \end{tabular}\raisebox{-0.5\height}{{\includegraphics[scale=0.2]{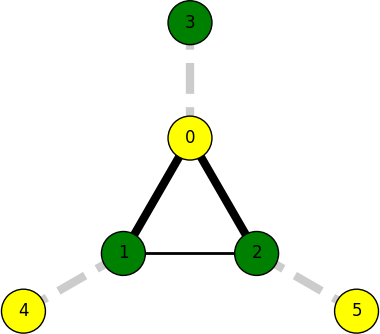}}} \\

    \begin{tabular}{c}
    $\mathbf{T_2}$\\\\
    Neighborhoods: \\
    $g_{5,6}, 2g_{5,2}$
    \end{tabular}\raisebox{-0.5\height}{{\includegraphics[scale=0.2]{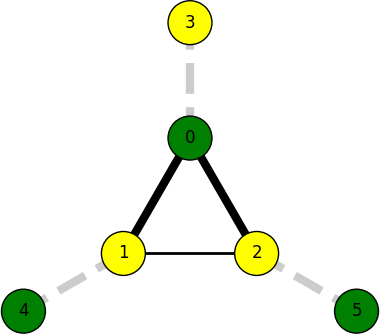}}} &

    \begin{tabular}{c}
    $\mathbf{T_3}$\\\\
    Neighborhoods: \\
    $g_{5,6}, 2g_{5,4}$
    \end{tabular}\raisebox{-0.5\height}{{\includegraphics[scale=0.2]{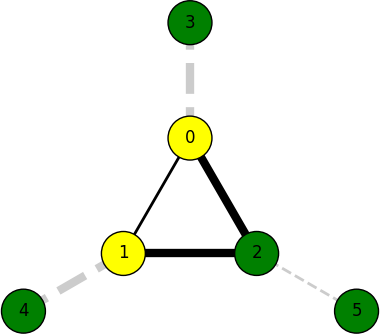}}} &

    \end{tabular}
    }

    \caption{\footnotesize\label{fig:coloredGraphStructures} An exhaustive list of possible ways to color an isolated triangle or crossed square under the assumption that the graph is colored using a bitstring that is 1-BLS optimal for Max-Cut. The thicker edges in each graph correspond to edges that would be cut in accordance with the bitstring coloring. The core edges of each graph structure are colored black with the remaining edges colored gray. Amongst the core edges of each graph strucutre, we list the count of each depth-1 colored edge-neighborhood type about those edges (see Figure \ref{fig:optimalLabeledSubgraphs}). }
\end{figure}

Similar to Equation \ref{eqn:mInTermsOfRST}, for each fixed value of $\theta$, we can write the expected cut value of warm-started QAOA (initialized with $\ket{b_\theta}$), with optimal variational parameters as a function of the counts of colored graph structures and the degeneracies of the $g_{6,i}$'s:
$$M_\theta(\mathbf{R},\mathbf{S},\mathbf{T}) = \max_{\gamma,\beta} F_{\mathbf{R},\mathbf{S},\mathbf{T}}(\gamma,\beta,\theta).$$

We can do as in Section \ref{sec:graphStructures} and scale each variable by $m$ to get $s_0 := S_0/m, s_0' := S_0'/m, s_{1,1} := S_{1,1}/m$, etc and also define the vectors $\mathbf{r} = \mathbf{R}/m,\mathbf{s} = \mathbf{S}/m, \mathbf{t} = \mathbf{T}/m$,  and the sums $r = R/m, s = S/m, t = T/m$.
Using the same arguments as in Section \ref{sec:graphStructures}, for any fixed $\theta$ and 3-regular graph $G$, we can bound the approximation ratio of depth-1 warm-started QAOA with initial state of the form $\ket{b_\theta}$ where $b$ is 1-BLS locally optimal as follows:
\begin{equation}\label{eqn:arLowerBoundMinimization}\alpha_{{(\text{1-BLS}+\text{QAOA}})_\theta} \geq \alpha'_{{(\text{1-BLS}+\text{QAOA}})_\theta} \geq  \min_{(\mathbf{r},\mathbf{s},\mathbf{t}) \in \mathcal{Q}} \frac{M_\theta(\mathbf{r},\mathbf{s},\mathbf{t})}{1-s - t}.\end{equation}

where the constraint polytope $\mathcal{Q}$ is defined as

$$\mathcal{Q} = \{(\mathbf{r}, \mathbf{s}, \mathbf{t}) \in \mathbb{R}^{17}: \mathbf{r}, \mathbf{s}, \mathbf{t} \geq \mathbf{0}, 5s+3t+r=1, 4s+3t \leq 2/3\}.$$

Once again, just like in Section \ref{sec:graphStructures}, we have a nested optimization with an outer minimization over the graph structures and an inner maximization over the variational parameters. We once again consider swapping the order of the nested optimization, obtaining that,
$$\min_{(\mathbf{r},\mathbf{s},\mathbf{t}) \in \mathcal{Q}} \frac{M_\theta(\mathbf{r},\mathbf{s},\mathbf{t})}{1-s - t} \geq \mathbf{LB}(\theta):= \max_{\gamma,\beta}\min_{(\mathbf{r}, \mathbf{s},\mathbf{t}) \in \mathcal{Q}}\frac{F_{\mathbf{r},\mathbf{s},\mathbf{t}}(\gamma,\beta, \theta)}{1-s-t}.$$

In section \ref{sec:graphStructures}, it was mentioned that swapping the order of the maximization and minimization has no effect for standard QAOA. Similarly, numerical evidence suggests swapping has little effect for warm-started QAOA. Swapping the order of the nested optimization yields the same advantages as discussed in \ref{sec:graphStructures}; for this reason, we numerically optimize $\mathbf{LB}(\theta)$ in order to obtain a lower bound of warm-started QAOA's approximation ratio.

In addition, it will be useful to define \begin{equation}\label{eqn:depthZeroARLowerBound}\mathbf{LB}_\kappa^{(0)}(\theta) = \frac{1}{4}((2\kappa-1)\cos(2\theta)+2\kappa+1),\end{equation} which denotes a lower bound on the approximation ratio achieved by simply measuring a warm-start of the form $\ket{b_\theta}$ where $b$ is obtained by an algorithm that returns a cut containing at least an $\kappa$ fraction of the number of edges in the graph (Corollary \ref{thm:depth0ApproxRatio2}).

If we solve $\mathbf{LB}(\theta)$ for various values of fixed $\theta$, we obtain a plot of (lower bounds on) the approximation ratio as a function of $\theta$ as seen in Figure \ref{fig:lowerBoundWithoutAdditionalConstraints}. The details of how this minimization was numerically solved can be found in Section \ref{sec:optimizationDetails}. As $\theta \to 90^\circ$, we see that the bound on the approximation ratio approaches $0.6924$; this is expected since at $\theta = 90^\circ$, warm-started QAOA is equivalent to the standard QAOA. Since 1-BLS is a $2/3$-approximation algorithm for 3-regular graphs, one might expect that as $\theta \to 0^\circ$, that the bound on the approximation ratio approaches $2/3$; however, it instead approaches 0. This is because the bound on the approximation ratio is not tight enough; there are additional constraints we can add to the minimization (that hold for all 3-regular graphs and warm-starts constructed with 1-BLS cuts) that can be added to obtain a tighter bound. The addition of such constraints is explored in the next section.

\begin{figure}
    \centering \label{fig:lowerBoundWithoutAdditionalConstraints}
    \includegraphics[scale=0.75]{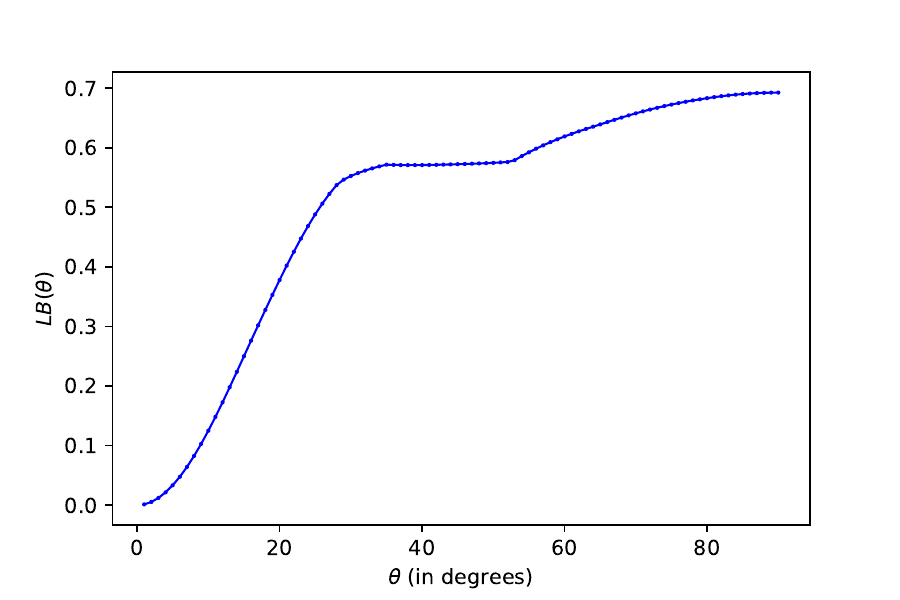}
    \caption{\footnotesize A plot of $\mathbf{LB}(\theta)$, a very loose lower bound on the approximation ratio of depth-1 \mc{} warm-stated QAOA initialized using cuts that are 1-BLS optimal, with varying $\theta$.}
\end{figure}

\subsection{Suitable Constraints for Warm-Start Optimization}
\label{sec:suitableConstraints}

Numerical experiments suggest that for small $\theta$, that the optimal solution to the minimization defined by $\textbf{LB}(\theta)$ has an optimal solution where all variables are equal to 0 with the exception of $r_1 = d_{g_{6,1}}$ and $r_6 = d_{g_{6,6}}$. At $\theta = 0$, warm-started QAOA with initial state $\ket{b_\theta}$ will only return the bitstring b with probability 1. In such a case, an edge is cut if and only if the central edge of its corresponding colored edge neighborhood are colored differently. Looking at $g_{6,1}$ and $g_{6,6}$ in Figure \ref{fig:optimalLabeledSubgraphs}, the central edge is \emph{never} cut, and thus, if the only colored edge neighborhoods types are $[g_{6,1}]$ and $[g_{6,6}]$, then the number of edges cut by $b$ is zero; this is a contradiction if we assume that $b$ is 1-BLS. In such a case, this means that the optimal solution returned by the minimization corresponds to an impossible distribution of colored edge neighborhoods types.

Moreover, at small $\theta$, we see that both $f_{g_{6,1}}(\gamma,\beta,\theta)$ and $f_{g_{6,1}}(\gamma,\beta,\theta)$, the probability of the central edge of $g_{6,1}$ and $g_{6,6}$ (respectively) being cut at depth-1 warm-started QAOA, is very small for any choice of $\gamma$ and $\beta$; see Figure \ref{fig:treeSubgraphLandscapes}. In general, as seen in Proposition \ref{thm:flatteningLandscape}, as $\theta$ approaches zero, the landscape of central-edge cut probabilities will either flatten to 0 or 1, depending on how the vertices of the central edge are colored.

\begin{prop}
\label{thm:flatteningLandscape}
Let $(G,b)$ be any colored graph ($G$ is not necessarily 3-regular and $b$ is not necessarily 1-BLS). Pick any colored-edge neighborhood $G_e$ for some $e = (u,v) \in E(G)$. Then, for all $\gamma,\beta \in \mathbb{R}$ and for all circuit depths $p \geq 0$, we have that,
$$\lim_{\theta \to 0} f^{(p)}_{G_e}(\gamma,\beta,\theta) = \begin{cases}
    0, & b_u = b_v\\
    1, & b_u \neq b_v
\end{cases}.$$
\end{prop}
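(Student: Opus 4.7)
The plan is to evaluate $f^{(p)}_{G_e}(\gamma,\beta,\theta)$ directly at $\theta=0$ and then invoke continuity to handle the limit. The key observation is that at $\theta=0$, warm-started QAOA (with aligned mixer) becomes effectively a classical circuit that merely adds phases to $\ket{b}$, so measurement reproduces $b$ deterministically; the central edge is then cut with probability exactly $\mathbf{1}[b_u \neq b_v]$.

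Concretely, first I would unpack the definitions at $\theta=0$. Since $\ket{0_0}=\ket{0}$ sits at the north pole $\vec{n}=(0,0,1)$ and $\ket{1_0}=\ket{1}$ sits at the south pole $\vec{n}=(0,0,-1)$, the formula $B_{\vec{n}}=xX+yY+zZ$ reduces to $B_{\vec{n}_j}=(-1)^{b_j}Z_j$, so the aligned mixer becomes
\[
B_{\ket{b_0}} \;=\; \sum_{j} (-1)^{b_j} Z_j,
\]
which is diagonal in the computational basis. The cost Hamiltonian $C$ (and in particular each edge term $C_e=\tfrac{1}{2}(I-Z_iZ_j)$) is likewise diagonal. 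Hence every operator $U(B_{\ket{b_0}},\beta_k)$ and $U(C,\gamma_k)$ in the waveform of Equation \ref{eqn:warmStartedWaveform} is a diagonal unitary, and $\ket{b}$ is a simultaneous eigenstate of all of them. Applying the depth-$p$ circuit to $\ket{s_0}=\ket{b_0}=\ket{b}$ therefore yields $e^{i\phi}\ket{b}$ for some global phase $\phi$ depending on $\gamma,\beta$.

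Next, I would compute the expectation of $C_e$ in this state. Since $C_e\ket{b}=\tfrac{1}{2}(1-(-1)^{b_u+b_v})\ket{b}$, we get
\[
f^{(p)}_{G_e}(\gamma,\beta,0) \;=\; \bra{b} C_e \ket{b} \;=\; \mathbf{1}[b_u \neq b_v],
\]
which equals $0$ when $b_u=b_v$ and $1$ when $b_u\neq b_v$, matching the claimed limit exactly at the boundary.

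Finally, to promote this pointwise value to the limit statement, I would appeal to continuity. The initial state $\ket{b_\theta}$ has amplitudes that are polynomials in $\cos(\theta/2)$ and $\sin(\theta/2)$, and the mixer $B_{\ket{b_\theta}}$ is likewise a continuous matrix-valued function of $\theta$ (each $B_{\vec{n}_j}$ has entries given by coordinates on the unit circle in the $xz$-plane). Matrix exponentiation is continuous, matrix products are continuous, and so $\ket{\psi^{(p)}_{G_e}(\gamma,\beta,\ket{b_\theta})}$ depends continuously on $\theta$; therefore so does the expectation $f^{(p)}_{G_e}(\gamma,\beta,\theta)=\bra{\psi}C_e\ket{\psi}$. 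Hence the limit as $\theta\to 0$ equals the value at $\theta=0$, completing the proof. There is no real obstacle here — the only subtlety worth stating explicitly is that the conclusion relies on the mixer being \emph{aligned} with the initial state, since it is exactly this alignment that makes $B_{\ket{b_0}}$ diagonal and $\ket{b}$ a stationary state of the entire circuit at $\theta=0$.
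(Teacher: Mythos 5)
Your proposal is correct and follows essentially the same route as the paper: reduce the limit to the value at $\theta=0$ by continuity, observe that at $\theta=0$ both the cost Hamiltonian and the aligned mixer are diagonal in the computational basis (your explicit computation $B_{\vec{n}_j}=(-1)^{b_j}Z_j$ makes this concrete), and conclude that the circuit leaves $\ket{b}$ fixed up to phase so the expectation is $\bra{b}C_e\ket{b}=\mathbf{1}[b_u\neq b_v]$. The paper phrases the middle step as the unitaries commuting past $C_e$ and cancelling rather than as $\ket{b}$ being a simultaneous eigenstate, but these are the same observation.
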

\begin{proof}
    Suppose $b_u = b_v$. Then,
    $$\lim_{\theta \to 0} f^{(p)}_{G_e}(\gamma,\beta,\theta) = f^{(p)}_{G_e}(\gamma,\beta,0) = \bra{b_\theta}C_e\ket{b_\theta} = \bra{b}C_e\ket{b} = 0.$$
    The first equality above is due to continuity of $f$ as a function of $\theta$. The second equality is due to the fact that at $\theta = 0$, the cost Hamiltonian and mixing Hamiltonian only contain Pauli-$Z$ terms, and thus, both operators commute past the observable $C_e$ and cancel out. The third equality holds as $\ket{b_\theta} = \ket{b}$ at $\theta=0$. The last equality holds as $\bra{b}C_e\ket{b}$ is simply the value of the observable $C_e$ with bitstring $b$, i.e., the indicator of whether the edge $e$ is cut given $b$, which is 0 as $b_u = b_v$. The result in the case where $b_u \neq b_v$ holds similarly.
\end{proof}

Thus, if the optimal solution of the minimization only has support on $r_1 = d_{g_{6,1}}$ and $r_6 = d_{g_{6,6}}$, then, as a consequence of Proposition \ref{thm:flatteningLandscape} and Theorem \ref{thm:coloredSubgraphDecomposition}, the overall expected cut value $F_G(\gamma,\beta,\theta)$ will shrink as $\theta$ gets smaller, yielding a poor approximation ratio. Fortunately, for any 3-regular graph, there exists upper bounds on how frequently the colored edge neighborhoods types $[g_{6,1}]$ and $[g_{6,6}]$ can appear as seen Proposition \ref{thm:constraintIndividual}.

\begin{figure}
    \centering
    \begin{adjustbox}{max width=1.2\linewidth,center}
    \addtolength{\tabcolsep}{-0.4em}
    \begin{tabular}{cccc}
     \includegraphics[scale=0.28]{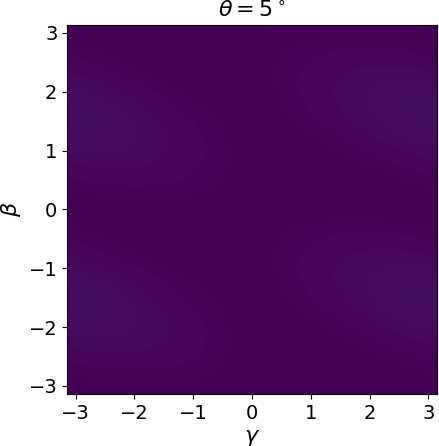} & \includegraphics[scale=0.28]{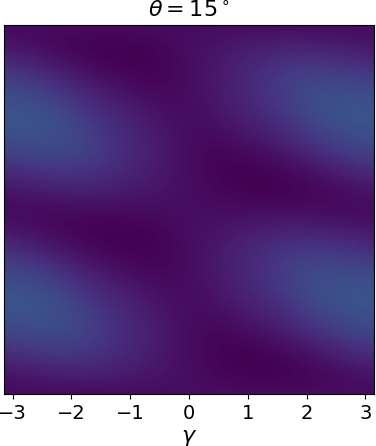} & \includegraphics[scale=0.28]{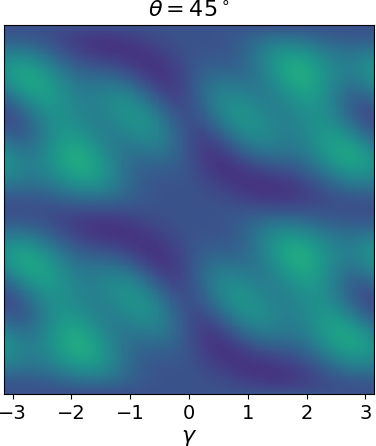}& \includegraphics[scale=0.28]{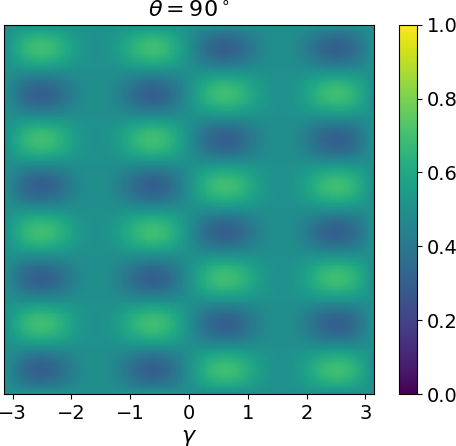} \\
     \includegraphics[scale=0.28]{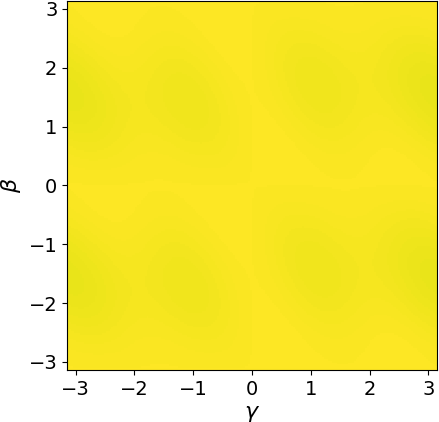} & \includegraphics[scale=0.28]{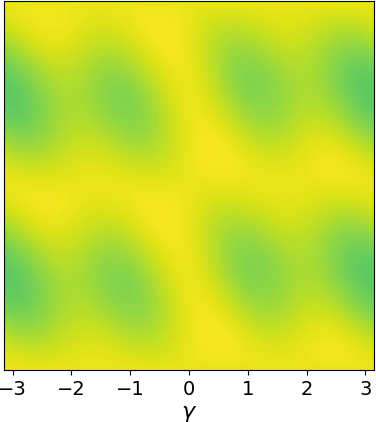} & \includegraphics[scale=0.28]{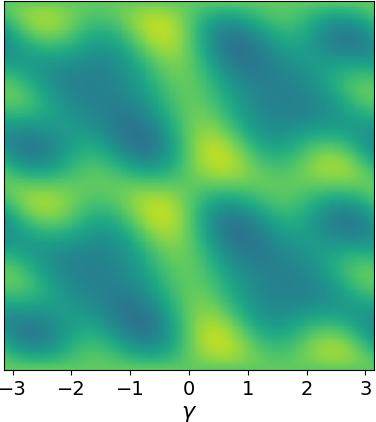}& \includegraphics[scale=0.28]{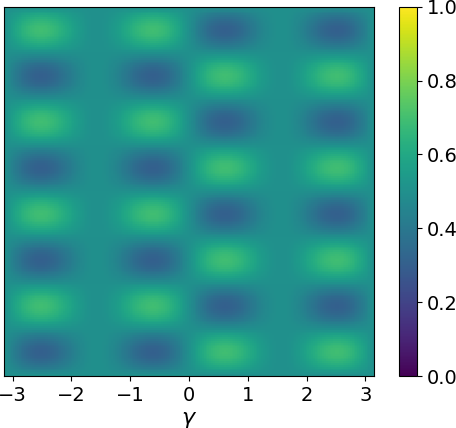}
    \end{tabular}
    \end{adjustbox}
    \caption{\label{fig:treeSubgraphLandscapes} \footnotesize The depth-1 parameter landscapes for warm-started QAOA at various fixed initialization angles $\theta$. Each point in the plot corresponds to a some choice $(\gamma,\beta)$ of variational parameters and the color indicates the probability of the central edge of $g_{6,1}$ (top row) or $g_{6,2}$ (bottom row) being cut at those parameters. The proof of Theorem \ref{thm:bitflipSameExpectedCut} can be slightly modified to show that the top and bottom row of these landscapes are also the landscapes for $g_{6,6}$ and $g_{6,5}$ respectively. At $\theta=90^\circ$, the highest central-edge cut probability achieved is 0.6924 which is also the approximation ratio achieved by depth-1 standard QAOA on 3-regular graphs.}
\end{figure}

\begin{prop}
\label{thm:constraintIndividual}
    For any 3-regular graph $G = (V,E)$ with a bitstring $b \in \{0,1\}^n$ corresponding to a cut, we have that $r_1 \leq \frac{1}{5}$ and $r_6 \leq \frac{1}{5}$.
\end{prop}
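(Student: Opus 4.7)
The plan is to exploit the 1-BLS local optimality of $b$ (which is implicit in the construction of Figure~\ref{fig:optimalLabeledSubgraphs}) and reduce the inequalities to a double-counting argument on the two color classes.

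First I would use that 1-BLS optimality on a 3-regular graph forces every vertex $v$ to have at least two neighbors in the opposite color class; otherwise flipping the bit at $v$ would strictly increase the cut. Consequently, each vertex has at most one same-color neighbor, so the subgraphs $G_0, G_1$ induced by the two color classes are each matchings.

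Next I would identify $g_{6,1}$ and $g_{6,6}$ as precisely the depth-1 colored edge-neighborhoods whose central edge is monochromatic --- color $0$ for $g_{6,1}$ and color $1$ for $g_{6,6}$. If the central edge $(u,v)$ has both endpoints of color $0$, then 1-BLS forces the two remaining neighbors of each of $u,v$ to be color $1$, so all four pendant edges of the neighborhood are cut; this is the unique monochromatic configuration on six vertices, and a quick pass through the 15 admissible types identifies it with $g_{6,1}$ (and its color-flipped version with $g_{6,6}$). Hence $R_1 = |E(G_0)|$ and $R_6 = |E(G_1)|$.

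Writing $n_i = |V_i|$ for $i=0,1$, the matching property yields $R_1 \leq n_0/2$. The 1-BLS bound gives that the cut value is at least $2n_0$ (each color-$0$ vertex is incident to at least two cut edges) and at most $3n_1$ (each color-$1$ vertex is incident to at most three edges in total), whence $2n_0 \leq 3(n-n_0)$ and so $n_0 \leq 3n/5$. Since $m = 3n/2$, this yields $R_1 \leq n_0/2 \leq 3n/10 = m/5$, giving $r_1 \leq 1/5$. The bound $r_6 \leq 1/5$ follows by the symmetric argument with the two colors exchanged. The only nontrivial step I expect to require care is the identification of $g_{6,1}, g_{6,6}$ as the unique monochromatic-central-edge types, but this is a direct finite inspection of Figure~\ref{fig:optimalLabeledSubgraphs}.
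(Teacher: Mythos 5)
Your argument is correct for 1-BLS-optimal $b$, but it takes a genuinely different route from the paper, and it proves a slightly narrower statement than the one written. The paper's proof is purely local and makes no use of local optimality: it observes by inspection that any two distinct edges whose colored neighborhoods are of type $[g_{6,1}]$ have edge-disjoint neighborhoods (if the two central edges shared a pendant edge, some central vertex would have two same-colored neighbors, contradicting the structure of $g_{6,1}$ itself), so the $R_1$ neighborhoods account for $5R_1$ distinct edges and $5R_1\le m$ immediately. Your proof instead is a global double count over the color classes: 1-BLS optimality makes each monochromatic subgraph a matching, the inequality $2n_0\le \cut(b)\le 3n_1$ gives $n_0\le 3n/5$, and $R_1\le |E(G_0)|\le n_0/2\le 3n/10=m/5$. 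Both chains of inequalities are valid; note only that your claim $R_1=|E(G_0)|$ should be $R_1\le|E(G_0)|$, since a monochromatic edge lying in an isolated triangle or crossed square has a $g_4$- or $g_5$-type neighborhood rather than $g_{6,1}$ — but you only use the upper bound, so nothing breaks.

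The one point to be careful about is hypotheses. The proposition as stated assumes only that $b$ is \emph{some} cut, with no local optimality, and the paper's disjointness argument covers that general case; your matching property and the bound $\cut(b)\ge 2n_0$ both fail for arbitrary $b$ (e.g.\ $b=0^n$). In the paper this constraint is only ever imposed inside $\mathcal{P}(\kappa)$ and $\mathcal{P}'(\kappa)$, where $b$ is always 1-BLS optimal, so your proof suffices for every downstream use; but as a proof of the stated proposition it needs either the added hypothesis or a separate (trivial) handling of the non-locally-optimal case. On the other hand, your global count buys something the paper's local argument does not: summing your two bounds through $R_1+R_6\le|E(G_0)|+|E(G_1)|\le n/2=m/3$ recovers the stronger joint constraint $r_1+r_6\le 1/3$ of Corollary~\ref{thm:sumOfBadSubgraphCountsLocalOpt} essentially for free.
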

\begin{proof}
    We first prove that $r_1 \leq 0.5$.

    Suppose $e$ and $f$ are two distinct edges in $G$ such that the corresponding colored edge neighborhoods $G_e$ and $G_f$ are of type $[g_{6,1}]$. By inspection, it can be seen that $G_e$ and $G_f$ have no edges in common.
    
    Now, let $E^*$ be all the edges in $G$ that are part of some colored edge neighborhood of type $[g_{6,1}]$. By our previous observation, since the colored edge neighborhoods do not overlap and since $g_{6,1}$ has 5 edges, then there are $|E^*| = 5 \cdot d_{g_{6,1}} = 5R_1$ such edges. Clearly, $|E^*| \leq |E|$  and thus we have that,
    $$r_1 = \frac{R_1}{|E|} = \frac{\frac{1}{5}|E^*|}{|E|} \leq \frac{\frac{1}{5}|E|}{|E|}  = \frac{1}{5}.$$

    A similar argument can be used to show that $r_6 \leq \frac{1}{5}$ as well.
\end{proof}

Adding the inequalities, we have that $r_1 \leq \frac{1}{5}$ and $r_6 \leq \frac{1}{5}$ imply that $r_1+r_6 \leq \frac{2}{5} = 0.4$, i.e. at most 40\% of the edges have a colored edge neighborhood type of either $[g_{6,1}]$ or $[g_{6,6}]$. Can a tighter constraint be made regarding the sum of $r_1$ and $r_6$? In the case where the coloring/cut for the graph $b$ is 1-BLS, the answer is yes as implied by Proposition \ref{thm:sumOfBadSubgraphCounts} and Corollary \ref{thm:sumOfBadSubgraphCountsLocalOpt} below. The key idea of the proof is the following: algorithms such as 1-BLS are guaranteed to cut a certain fraction of edges, meaning that it is impossible for there to be a large number of colored edge-neighborhoods where the vertices of the central edge are uncut (according to the coloring).

\begin{prop}
\label{thm:sumOfBadSubgraphCounts}
    Let $b$ be a cut obtained by an algorithm of type $\mathcal{A}_\kappa$ on a suitable cubic graph $G$. Then, $s+t+s_{1,2}+r_1+r_6 \leq 1-\kappa$.
\end{prop}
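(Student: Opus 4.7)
The plan is to count uncut edges. Since $b$ is produced by an algorithm of type $\mathcal{A}_\kappa$, at least $\kappa m$ edges are cut, so the total number of uncut edges $U$ satisfies $U \le (1-\kappa)m$. It therefore suffices to exhibit at least $R_1 + R_6 + T + S + S_{1,2}$ pairwise distinct uncut edges of $G$; dividing by $m$ then yields the stated inequality.

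I would charge uncut edges in three disjoint groups. First, inspection of Figure \ref{fig:optimalLabeledSubgraphs} shows that the central edge of any $[g_{6,1}]$- or $[g_{6,6}]$-type colored edge-neighborhood is monochromatic, hence uncut; these $R_1 + R_6$ central edges are pairwise distinct. Because a $g_6$-type neighborhood has six distinct vertices, such a central edge lies on no triangle of $G$, and therefore on no core of any isolated triangle or crossed square, which will be used for disjointness with the later groups. Second, every isolated triangle is a $3$-cycle and thus contains at least one uncut edge; since distinct graph structures have pairwise-disjoint core vertex sets (Section \ref{sec:graphStructures}), this yields $T$ further uncut edges, distinct from each other and from the previous group.

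Third, I treat the crossed squares, using that the core of a crossed square is $K_4$ minus an edge — two triangles sharing a common edge, which I call the \emph{central diagonal} (its depth-$1$ edge-neighborhood is of $g_4$-type). For $\mathbf{S_0}, \mathbf{S_0'}, \mathbf{S_{1,1}}, \mathbf{S_{1,3}}, \mathbf{S_2}, \mathbf{S_2'}$, the central-diagonal neighborhood is $[g_{4,1}]$ or $[g_{4,3}]$, both of which place the same color on the two endpoints of the diagonal (Figure \ref{fig:optimalLabeledSubgraphs}); the diagonal is then uncut, contributing $S - S_{1,2}$ core edges. For $\mathbf{S_{1,2}}$ the central diagonal has $[g_{4,2}]$-type, which places opposite colors on the endpoints, so the diagonal is cut; each of the two core triangles sharing the diagonal must then contribute a further uncut edge, and those two edges are distinct because they lie in different triangles. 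Hence $\mathbf{S_{1,2}}$ contributes $2 S_{1,2}$ uncut core edges, and the crossed squares contribute $(S - S_{1,2}) + 2 S_{1,2} = S + S_{1,2}$ in total. Distinctness across crossed squares follows from disjointness of core vertex sets, while distinctness from the first two groups follows because crossed-square core edges lie on triangles (so they cannot be $[g_{6,1}]$ or $[g_{6,6}]$ central edges) and they belong to the core of a crossed square rather than an isolated triangle. Summing, $U \ge R_1 + R_6 + T + S + S_{1,2}$, and dividing by $m$ yields $s + t + s_{1,2} + r_1 + r_6 \le 1 - \kappa$.

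The main obstacle is reading off the precise coloring facts from Figures \ref{fig:optimalLabeledSubgraphs} and \ref{fig:coloredGraphStructures} — namely that $g_{6,1}, g_{6,6}, g_{4,1}, g_{4,3}$ are monochromatic on their central edges while $g_{4,2}$ is dichromatic — and checking that the core of a crossed square really does consist of two triangles sharing the central diagonal, so that the two extra uncut edges forced in $\mathbf{S_{1,2}}$ are guaranteed to exist and to be distinct from one another.
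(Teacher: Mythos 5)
Your proof is correct, and it reaches the same bound by a route the paper only gestures at. The paper's proof classifies uncut edges exactly: an edge is uncut if and only if its colored edge-neighborhood is of type $[g_{4,1}], [g_{4,3}], [g_{5,1}], [g_{5,6}], [g_{6,1}]$, or $[g_{6,6}]$, so the number of uncut edges equals $d_{g_{4,1}}+d_{g_{4,3}}+d_{g_{5,1}}+d_{g_{5,6}}+d_{g_{6,1}}+d_{g_{6,6}}$, which via the degeneracy identities underlying Equation \ref{eqn:expectedCutValueInTermsOfColoredGraphStructures} collapses to $S+T+S_{1,2}+R_1+R_6$; the bound then follows from $U\le(1-\kappa)m$. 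You instead produce a \emph{lower} bound $U\ge R_1+R_6+T+S+S_{1,2}$ by charging uncut edges to structures directly -- monochromatic $g_6$-central edges, one forced uncut edge per triangle in each isolated-triangle core, and one or two per crossed square depending on whether the central diagonal is cut. This is essentially the alternative derivation the paper mentions in one sentence ("the above expression can be derived by counting the number of core edges that would be uncut according to the coloring of each individual graph structure"), but you carry the extra burden it entails: you must verify that the charged edges are pairwise distinct, which you do correctly via disjointness of core edge sets and the observation that a $g_6$-type central edge lies on no triangle. What your version buys is independence from the $g_5$-type bookkeeping and from the degeneracy formulas -- you only need to read off the $g_4$ and $g_6$ colorings from the figures plus the elementary fact that a $3$-cycle cannot be fully cut; what it costs is that it yields only an inequality on the number of uncut edges rather than the exact count (which suffices here), and it requires the explicit disjointness argument that the paper's exact-count approach sidesteps.
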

\begin{proof}
    By definition of $\mathcal{A}_k$, $b$ cuts a $\kappa$-fraction of the total number of edges in the graph. Suppose that $e \in E(G)$ is such that the colored edge neighborhood $G_e$ is of type $[g_{4,1}], [g_{4,3}], [g_{5,1}], [g_{5,6}], [g_{6,1}]$ or $[g_{6,6}]$. Then, by construction, we have that $e$ is an edge that is \emph{not} cut by $b$. Conversely, any other edge would be cut in $b$. Thus, there must be exactly $$d_{g_{4,1}} +  d_{g_{4,3}} + d_{g_{5,1}} + d_{g_{5,6}} + d_{g_{6,1}} + d_{g_{6,6}}$$ edges that are not cut by $b$. However, from Equation \ref{eqn:expectedCutValueInTermsOfColoredGraphStructures}, we have that,
    $$d_{g_{4,1}} = S'_0+S_2+S_{1,1},$$
    $$d_{g_{4,3}} = S_0 + S_{1,3} + S_2',$$
    $$d_{g_{5,1}} = T_0 + T_1 + S_{1,2},$$
    $$d_{g_{5,6}} = T_2 + T_3 + S_{1,2},$$
    $$d_{g_{6,1}} = R_1,$$
    $$d_{g_{6,6}} = R_6.$$

    Making the substitutions above for the degeneracies yields the following for the number of uncut edges:
    $$S_0+S_0'+S_{1,1}+2S_{1,2}+S_{1,3}+S_2+S_2'+T_0+T_1+T_2+T_3+R_1+R_6;$$

    alternatively, the above expression can be derived by counting the number of core edges that would be uncut according to the coloring of each individual graph structure.
    
    Note that the fraction of \emph{uncut} edges is bounded above by $1-\kappa$. Thus, letting, $E^*$ denote the set of edges that $b$ does \emph{not} cut in $G$, we have that,
    \begin{align*}
    & s+t+s_{1,2}+r_1+r_6 \\
    =& s_0+s_0'+s_{1,1}+2s_{1,2}+s_{1,3}+s_2+s_2'+t_0+t_1+t_2+t_3+r_1+r_6 \\
    =& \frac{S_0+S_0'+S_{1,1}+2S_{1,2}+S_{1,3}+S_2+S_2'+T_0+T_1+T_2+T_3+R_1+R_6}{|E|}\\
    =& \frac{|E^*|}{|E|} \\
    \leq & \frac{(1-\alpha)|E|}{|E|} \\
    =&   1-\kappa.
    \end{align*}
\end{proof}
\begin{corollary}
    \label{thm:sumOfBadSubgraphCountsLocalOpt}
    Let $G = (V,E)$ be any 3-regular graph and let $b$ be a cut that is locally optimal with respect to 1-BLS. Then, $s+t+s_{1,2}+r_1+r_6 \leq \frac{1}{3}$.
\end{corollary}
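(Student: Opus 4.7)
The plan is to derive the corollary as an immediate instantiation of Proposition \ref{thm:sumOfBadSubgraphCounts} once we identify the right value of $\kappa$ for the 1-BLS algorithm on 3-regular graphs.

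First, I would invoke Proposition \ref{thm:localMaxCut3Reg}, which guarantees that 1-BLS on a 3-regular graph produces a cut containing at least a $2/3$ fraction of all edges. By the definition of $\mathcal{A}_\kappa$ given in Section \ref{sec:classicalMaxCutResults}, this is precisely the statement that 1-BLS is an algorithm of type $\mathcal{A}_{2/3}$ on the class of 3-regular graphs (the local optimality condition is automatic for the output of 1-BLS). Next, I would apply Proposition \ref{thm:sumOfBadSubgraphCounts} with $\kappa = 2/3$ to conclude
\[
s+t+s_{1,2}+r_1+r_6 \;\leq\; 1 - \tfrac{2}{3} \;=\; \tfrac{1}{3}.
\]

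There is essentially no obstacle here: the work has already been done in Proposition \ref{thm:sumOfBadSubgraphCounts}, which translates a lower bound on the fraction of cut edges into an upper bound on the total weight of the ``bad'' colored edge-neighborhood types (those whose central edge is necessarily uncut). The only thing to check is that the corollary's hypothesis matches the hypothesis of the proposition, namely that every 1-BLS locally optimal cut on a 3-regular graph cuts at least $2/3$ of the edges -- and this is exactly what Proposition \ref{thm:localMaxCut3Reg} provides, since its proof only uses the local optimality condition (each vertex has at least $2$ of its $3$ neighbors across the cut), not the particular execution history of the algorithm.
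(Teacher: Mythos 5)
Your proposal is correct and matches the paper's own proof: both identify 1-BLS on cubic graphs as an algorithm of type $\mathcal{A}_{2/3}$ via Proposition \ref{thm:localMaxCut3Reg} and then instantiate Proposition \ref{thm:sumOfBadSubgraphCounts} with $\kappa = 2/3$. Your added remark that the locally-optimal cut $b$ is a valid potential output of 1-BLS is the same observation the paper makes to bridge the hypothesis.
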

\begin{proof}
    Since 1-BLS on cubic graphs is an algorithm of type $\mathcal{A}_{2/3}$ (Proposition \ref{thm:localMaxCut3Reg}) and $b$ can clearly be obtained from 1-BLS, then we can apply Proposition \ref{thm:sumOfBadSubgraphCounts} with $\alpha = \frac{2}{3}$ to obtain that,
    $$s+t+s_{1,2}+r_1+r_6 \leq 1-\alpha = \frac{1}{3}.$$
\end{proof}

We will later see that there exists infinitely many 3-regular graphs such that there exists a 1-BLS bitstring $b$ that yields $s+t+s_{1,2}+r_1+r_6 = \frac{1}{3}$. In order to make an even tighter constraint of the form $s+t+s_{1,2}+r_1+r_6 \leq k$ with $k < \frac{1}{3}$, additional assumptions need to be made regarding the bitstring $b$. Proposition \ref{thm:sumOfBadSubgraphCounts} suggests that if there is an algorithm that generates a cut that cuts a high fraction of edges (compared to the total number of edges), then such a constraint could be tightened. Indeed, as discussed in Section \ref{sec:classicalMaxCutResults}, there exists such algorithms such as the algorithms of \cite{bondy1986largest} and \cite{zhu2009bipartite} (with 1-BLS post-processing) of type $\mathcal{A}'_{4/5}$ and $\mathcal{A}'_{17/21}$ respectively. These algorithms only run on triangle-free graphs (with additional restrictions in the case of the algorithm by  \cite{zhu2009bipartite}) and hence the constraint $s+t+s_{1,2}+r_1+r_6 \leq 1-\kappa$ in Proposition \ref{thm:sumOfBadSubgraphCounts} can be simplified as seen in Corollary \ref{thm:sumOfBadSubgraphCountTriangleFree}.

\begin{corollary}\label{thm:sumOfBadSubgraphCountTriangleFree}
Let $b$ be a cut obtained by an algorithm of type $\mathcal{A}'_\kappa$ on a suitable triangle-free cubic graph $G$. Then, $r_1+r_6 \leq 1-\kappa$.
\end{corollary}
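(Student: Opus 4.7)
The plan is to obtain Corollary \ref{thm:sumOfBadSubgraphCountTriangleFree} as a direct specialization of Proposition \ref{thm:sumOfBadSubgraphCounts} to the triangle-free setting. By definition, any algorithm of type $\mathcal{A}'_\kappa$ is an algorithm of type $\mathcal{A}_\kappa$ whose guarantee is restricted to triangle-free cubic graphs, so Proposition \ref{thm:sumOfBadSubgraphCounts} applies to the cut $b$ and yields
\[
s + t + s_{1,2} + r_1 + r_6 \;\leq\; 1 - \kappa.
\]
It therefore suffices to show that, when $G$ is triangle-free, each of $s$, $t$, and $s_{1,2}$ must vanish.

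First I would address the isolated-triangle term: every isolated triangle $\mathbf{T_0}, \mathbf{T_1}, \mathbf{T_2}, \mathbf{T_3}$ contains a 3-cycle on its core vertices by construction, and hence cannot appear as an induced subgraph of a triangle-free $G$. Thus $T_0 = T_1 = T_2 = T_3 = 0$, giving $t = 0$. Next, as the paper explicitly remarks above Equation \ref{eqn:expectedCutAsFunctionOfGraphStructures}, the core edge set of every crossed square also contains at least one triangle, so none of $\mathbf{S_0}, \mathbf{S_0'}, \mathbf{S_{1,1}}, \mathbf{S_{1,2}}, \mathbf{S_{1,3}}, \mathbf{S_2}, \mathbf{S_2'}$ can appear in $G$ either. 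Consequently $s = 0$, and in particular $s_{1,2} = 0$.

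Substituting $s = t = s_{1,2} = 0$ into the displayed inequality from Proposition \ref{thm:sumOfBadSubgraphCounts} immediately yields $r_1 + r_6 \leq 1 - \kappa$, which is the desired conclusion. I do not foresee any real obstacle in this argument since it is purely a bookkeeping consequence of the triangle-free hypothesis combined with the already-established Proposition \ref{thm:sumOfBadSubgraphCounts}; the only point requiring minor care is keeping track of which of the many colored graph-structure variables get zeroed out, but every such variable is a count of a structure that carries a triangle, so all of them simultaneously drop out.
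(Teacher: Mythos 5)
Your proof is correct and follows essentially the same route as the paper: apply Proposition \ref{thm:sumOfBadSubgraphCounts} and observe that triangle-freeness forces $s=t=0$ (hence $s_{1,2}=0$), so the inequality collapses to $r_1+r_6 \leq 1-\kappa$. The paper's version is just terser, leaving implicit the observation that every isolated triangle and crossed square contains a triangle.
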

\begin{proof}
    Since $G$ is triangle-free, then $s=t=0$ and the constraint $s+t+s_{1,2}+r_1+r_6 \leq 1-\kappa$ simplifies to $r_1+r_6 \leq 1-\kappa$.
\end{proof}

 Moreover, for cuts obtained by algorithms of type $\mathcal{A}'_\kappa$ on triangle-free cubic graphs, the constraints $\mathbf{s}=\mathbf{0}$ and $\mathbf{t}=\mathbf{0}$ can also be added to the minimization while still also being a lower bound for the approximation ratio. With this in mind, we can obtain lower bounds on the approximation ratio of warm-started QAOA with warm-start $\ket{b_\theta}$ where $b$ is obtained from an algorithm of type $\mathcal{A}_\kappa$ and $\mathcal{A}'_\kappa$ respectively:

\begin{equation}
    \alpha_{\text{QAOA-WS}(\mathcal{A}_\kappa,\theta)} \geq \min_{(\mathbf{r}, \mathbf{s}, \mathbf{r}) \in \mathcal{P}(\kappa)}\frac{M_\theta(\mathbf{r},\mathbf{s},\mathbf{t})}{1-s - t},
\end{equation}

\begin{equation}
    \alpha_{\text{QAOA-WS}(\mathcal{A}'_\kappa,\theta)} \geq  \min_{\mathbf{t} \in \mathcal{P'}(\kappa)}M_\theta(\mathbf{r},\mathbf{0},\mathbf{0}),
\end{equation}

with the parameterized constraint polytopes $\mathcal{P}(\kappa)$ and $\mathcal{P}'(\kappa)$ defined as
\begin{align*}\mathcal{P}(\kappa) =  \{(\mathbf{r}, \mathbf{s}, \mathbf{t}) \in \mathbb{R}^{17}: & \mathbf{r}, \mathbf{s}, \mathbf{t} \geq \mathbf{0},\\ 
& 5s+3t+r=1, \\
& 4s+3t \leq 2/3, \\
& r_1,r_6 \leq 1/5, \\ & s+t+s_{1,2}+r_1+r_6 \leq 1-\kappa\},
\end{align*}
\begin{align*}\mathcal{P}'(\kappa) =  \{\mathbf{r} \in \mathbb{R}^{6}: & \mathbf{r} \geq  \mathbf{0},\\ 
& r=1, \\
& r_1,r_6 \leq 1/5, \\ & r_1+r_6 \leq 1-\kappa\},
\end{align*}

Note that for $\kappa \geq 4/5$, because of the constraint $r_1+r_6 \leq 1-\kappa \leq 1/5$ in $\mathcal{P}'(\kappa)$,  the constraints $r_1 \leq \frac{1}{5}$ and $r_6 \leq \frac{1}{5}$ become redundant (due to the non-negativity constraints) and can thus be removed without changing the constraint polytope.

Similar to what was done in Sections \ref{sec:graphStructures} and \ref{sec:coloredGraphStructures}, we can obtain an even lower bound by swapping the order of maximization and minimization of the optimization problems above; this is formalized in Theorem \ref{thm:lowerBoundVaryingKappa}. We denote these lower bounds as $\mathbf{LB}_\kappa(\theta)$ and $\mathbf{LB}'_\kappa(\theta)$ corresponding to the case where the warm-start is generated from $\mathcal{A}_\kappa$ and $\mathcal{A}'_\kappa$ respectively. This swap has numerous benefits already discussed in Section \ref{sec:graphStructures} (with further details in Section \ref{sec:optimizationDetails}) and preliminary numerical experiments suggest this swap has zero or little effect on the final objective value of the optimization.

\begin{theorem}
\label{thm:lowerBoundVaryingKappa}
    The following inequalities hold:
    \begin{equation}
    \alpha_{\text{QAOA-WS}(\mathcal{A}_\kappa,\theta)} \geq  \min_{(\mathbf{r}, \mathbf{s}, \mathbf{t}) \in \mathcal{P}(\kappa)}\frac{M_\theta(\mathbf{r},\mathbf{s},\mathbf{t})}{1-s - t} \geq \textbf{LB}_\kappa(\theta) := \max_{\gamma,\beta} \min_{(\mathbf{r}, \mathbf{s}, \mathbf{t}) \in \mathcal{P}(\kappa)} \frac{\frac{1}{m} F_{\mathbf{r},\mathbf{s},\mathbf{t}}(\gamma,\beta,\theta)}{1-s-t},
\end{equation}

\begin{equation}
    \alpha_{\text{QAOA-WS}(\mathcal{A}'_\kappa,\theta)} \geq  \min_{\mathbf{r} \in \mathcal{P}'(\kappa)}M_\theta(\mathbf{r},\mathbf{0},\mathbf{0}) \geq \textbf{LB}'_\kappa(\theta) := \max_{\gamma,\beta} \min_{\mathbf{r} \in \mathcal{P}'(\kappa)} \frac{1}{m} F_{\mathbf{r},\mathbf{0},\mathbf{0}}(\gamma,\beta,\theta).
\end{equation}
\end{theorem}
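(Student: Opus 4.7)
The plan is to prove each chain of inequalities in two steps. First, I lower-bound the worst-case approximation ratio by the min--max over the constraint polytope by showing that every admissible instance $(G,b)$ produces a feasible point of the polytope and that the instance-specific ratio dominates the ratio at that point; second, I apply the max--min inequality to exchange the order of optimization. I sketch the chain for $\mathcal{A}_\kappa$ in detail; the $\mathcal{A}'_\kappa$ chain is structurally identical and simpler, since in the triangle-free case $\mathbf{s}=\mathbf{t}=\mathbf{0}$ collapses the denominator to $1$ and reduces the feasibility check beyond non-negativity to $r=1$ (the sum of the $g_{6,i}$-degeneracies equals $m$) and $r_1+r_6 \leq 1-\kappa$ of Corollary \ref{thm:sumOfBadSubgraphCountTriangleFree}.

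Fix any 3-regular $G=(V,E)$ and any bitstring $b$ producible by $\mathcal{A}_\kappa$; then $b$ is 1-BLS optimal and cuts at least a $\kappa$-fraction of edges. Let $(\mathbf{R},\mathbf{S},\mathbf{T})$ be the counts of colored graph structures of Figure \ref{fig:coloredGraphStructures} in $(G,b)$ (with $R_i := d_{g_{6,i}}$), and set $(\mathbf{r},\mathbf{s},\mathbf{t})=(\mathbf{R},\mathbf{S},\mathbf{T})/m$. Feasibility in $\mathcal{P}(\kappa)$ is verified clause by clause: non-negativity is immediate; the identity $5s+3t+r=1$ comes from summing the degeneracies of all colored edge-neighborhood types, each counting each of the $m$ edges exactly once (read off the coefficient pattern of Equation \ref{eqn:expectedCutValueInTermsOfColoredGraphStructures}); $4s+3t \leq 2/3$ follows from vertex-disjointness of the core vertex sets of isolated triangles and crossed squares combined with $n = 2m/3$; $r_1,r_6 \leq 1/5$ is Proposition \ref{thm:constraintIndividual}; and $s+t+s_{1,2}+r_1+r_6 \leq 1-\kappa$ is Proposition \ref{thm:sumOfBadSubgraphCounts}. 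By Theorem \ref{thm:coloredSubgraphDecomposition} together with the linearity of $F$ in the structure counts, the warm-started expected cut value at depth 1 satisfies $M_1(\ket{b_\theta}) = m \cdot M_\theta(\mathbf{r},\mathbf{s},\mathbf{t})$. For the denominator, each isolated triangle and each crossed square contains a triangle; these triangles are pairwise vertex-disjoint across structures; and any cut leaves at least one edge of each such triangle uncut, so $\mc(G) \leq m - S - T$, i.e.\ $\mc(G)/m \leq 1 - s - t$. Dividing,
\[
\alpha_{\text{WS-QAOA}_\theta}(G,b) \;=\; \frac{M_1(\ket{b_\theta})}{\mc(G)} \;\geq\; \frac{M_\theta(\mathbf{r},\mathbf{s},\mathbf{t})}{1 - s - t}.
\]
Since the right-hand side depends on $(G,b)$ only through the feasible point $(\mathbf{r},\mathbf{s},\mathbf{t}) \in \mathcal{P}(\kappa)$, minimizing over $\mathcal{P}(\kappa)$ gives the first inequality of the chain.

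The second inequality is a direct invocation of the max--min inequality: substituting $M_\theta(\mathbf{r},\mathbf{s},\mathbf{t}) = \max_{\gamma,\beta} F_{\mathbf{r},\mathbf{s},\mathbf{t}}(\gamma,\beta,\theta)$ and using that the normalized objective is well-defined on $\mathcal{P}(\kappa)$ (where $s+t < 1$ strictly, by the constraint $4s+3t \leq 2/3$), the pointwise bound $\min_{(\mathbf{r},\mathbf{s},\mathbf{t})} \max_{\gamma,\beta} \geq \max_{\gamma,\beta} \min_{(\mathbf{r},\mathbf{s},\mathbf{t})}$ of a common objective gives exactly $\mathbf{LB}_\kappa(\theta)$. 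The main obstacle I anticipate is not in the inequalities themselves --- they are short once the framework is in place --- but in the combinatorial bookkeeping behind the feasibility check: confirming that Figure \ref{fig:coloredGraphStructures} really does enumerate all 1-BLS-admissible colored graph structures on 3-regular graphs, and that the coefficients in Equation \ref{eqn:expectedCutValueInTermsOfColoredGraphStructures} are the correct multiplicities of colored edge-neighborhoods within each colored structure. These are routine but tedious enumerations, deferred to Appendix \ref{sec:subgraphConstructionDetails}.
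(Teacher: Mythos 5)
Your proposal is correct and follows essentially the same route as the paper: the first inequality in each chain is obtained by mapping every admissible instance $(G,b)$ to a feasible point of $\mathcal{P}(\kappa)$ (resp. $\mathcal{P}'(\kappa)$) via the colored-structure counts, invoking Theorem \ref{thm:coloredSubgraphDecomposition}, Propositions \ref{thm:constraintIndividual} and \ref{thm:sumOfBadSubgraphCounts}, and the bound $\mc(G)\leq m-S-T$; the second is the max--min inequality, exactly as the paper does in Sections \ref{sec:coloredGraphStructures}--\ref{sec:suitableConstraints} and Appendix \ref{sec:optimizationDetails}. The combinatorial bookkeeping you defer (completeness of Figure \ref{fig:coloredGraphStructures} and the coefficients of Equation \ref{eqn:expectedCutValueInTermsOfColoredGraphStructures}) is likewise deferred by the paper to Appendix \ref{sec:subgraphConstructionDetails}.
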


For certain choices of $\kappa$ corresponding to known classical algorithms described in Section \ref{sec:classicalMaxCutResults}, these lower bounds are shown in Figure \ref{fig:plotAR_varyingConstraints} which plots the above lower bounds for values of $\theta = 1^\circ, 2^\circ, \dots, 90^\circ$; see Section \ref{sec:optimizationDetails} for additional details regarding how these lower bounds were numerically calculated. In Observation \ref{thm:plotObservation}, we list several observations that can be made from Figure \ref{fig:plotAR_varyingConstraints}.

\begin{obs}
\label{thm:plotObservation}
The following are observations regarding lower bounds on the approximation ratios.
\begin{enumerate}
    \item Since adding constraints can only ever increase the optimal objective value of a minimization problem, we have that for all choices of initilization angle $\theta$:
$$\textbf{LB}(\theta) \leq \textbf{LB}_{2/3}(\theta) \leq \textbf{LB}'_{4/5}(\theta) \leq \textbf{LB}'_{17/21}(\theta).$$
Similarly, consistent with the proof of Corollary \ref{thm:depth0ApproxRatio2}, we have
$$\textbf{LB}_0^{(0)}(\theta) \leq \textbf{LB}^{(0)}_{2/3}(\theta) \leq \textbf{LB}^{(0)}_{4/5}(\theta) \leq \textbf{LB}^{(0)}_{17/21}(\theta).$$
    \item For $\textbf{LB}_{2/3}(\theta), \textbf{LB}'_{4/5}(\theta)$, $\textbf{LB}'_{7/21}(\theta)$, and their depth-0 counterparts, at $\theta=0$ (where QAOA simply returns the cut used to construct the initial state), the lower bound achieved is identical to the approximation ratio achieved for 1-BLS, $\mathcal{A}_{4/5}$, and $\mathcal{A}_{17/21}$ respectively.
    \item Since warm-started QAOA with initialization angle $\theta=\pi/2$ is equivalent to the standard QAOA, all of the depth-0 and depth-1 lower bounds achieve a value of $0.5$ and $0.6924$ (respectively) at $\theta = \pi/2$, coinciding with the approximation ratio of standard QAOA on 3-regular graphs \cite{farhi2014quantum}; Figure \ref{fig:plotAR_zoom_90_degrees} gives a closer look at the behavior of these lower bounds near $\theta = \pi/2$.
    \item With the exception of $\theta=0$, Figure \ref{fig:plotAR_varyingConstraints} suggests that the depth-0 lower bounds are strictly worse than the corresponding depth-1 lower bounds, i.e.,
    $$\mathbf{LB}_0^{(0)}(\theta)  < \mathbf{LB}(\theta),$$
    $$\mathbf{LB}^{(0)}_{2/3}(\theta)  < \mathbf{LB}_{2/3}(\theta),$$
    $$\mathbf{LB}^{(0)}_{4/5}(\theta)  < \mathbf{LB}'_{4/5}(\theta),$$
    $$\mathbf{LB}^{(0)}_{\mathcal{A}_{17/21}}(\theta)  < \mathbf{LB}_{\mathcal{A}_{17/21}}(\theta).$$
    \item Since $4/5 = 0.8000$ and $17/21 \approx 0.8095$ are close in value, it is not too surprisingly to see that, for any $\theta$, that $\mathbf{LB}_{\mathcal{A}_{4/5}}(\theta)$ and $\mathbf{LB}_{\mathcal{A}_{17/21}}(\theta)$ are also close in value; the same can be said about $\mathbf{LB}^{(0)}_{\mathcal{A}_{4/5}}(\theta)$ and $\mathbf{LB}^{(0)}_{\mathcal{A}_{17/21}}(\theta)$. In Figure \ref{fig:plotAR_zoom_4/5_17/21}, we plot the differences $\mathbf{LB}_{\mathcal{A}_{17/21}}(\theta) - \mathbf{LB}_{\mathcal{A}_{4/5}}(\theta)$ and $\mathbf{LB}^{(0)}_{\mathcal{A}_{17/21}}(\theta) - \mathbf{LB}^{(0)}_{\mathcal{A}_{4/5}}(\theta)$.
    \item Around $\theta \approx 50^\circ$, there is a sharp corner in the plots of both $\mathbf{LB}_{\mathcal{A}_{4/5}}(\theta)$ and $\mathbf{LB}_{\mathcal{A}_{17/21}}(\theta)$ where the bound on the approximation ratio reaches a local minimum (as $\theta$ varies). A closer look at this corner is provided in Figure \ref{fig:plotAR_zoom_4/5_17/21}.
    \item Around $\theta\approx 60$, both $\mathbf{LB}_{\mathcal{A}_{4/5}}(\theta)$ and $\mathbf{LB}_{\mathcal{A}_{17/21}}(\theta)$ reach their largest value; this is more clearly seen in Figure \ref{fig:plotAR_zoom_4/5_17/21}. In particular, at this maximum, the lower bounds both are \emph{nearly} equal to the corresponding classical approximation ratios of $4/5$ and $17/21$ respectively.
    \item For 1-BLS, the lower bound $\mathbf{LB}_\text{1-BLS}(\theta)$ exhibits strange, non-monotonic behavior around both $\theta \approx 45^\circ$ and $\theta \approx 60^\circ$ which is more clearly seen in Figure \ref{fig:plotAR_zoom_1BLS}.
\end{enumerate} 
\end{obs}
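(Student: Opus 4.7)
The observation bundles eight claims of quite different character: items 1, 2, 3 admit short analytic arguments, while items 4--8 are readings of the numerical curves in Figures \ref{fig:plotAR_varyingConstraints}, \ref{fig:plotAR_zoom_90_degrees}, \ref{fig:plotAR_zoom_4/5_17/21}, and \ref{fig:plotAR_zoom_1BLS}. The plan is therefore to prove the first three items rigorously, then to justify items 4--8 by pointing to the numerical experiments (and by spot-checking their values at the distinguished angles $\theta = 0$, $\pi/3$, $\pi/2$).

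For item 1, I would simply exhibit polytope inclusions. The tightening constraint $s+t+s_{1,2}+r_1+r_6 \leq 1-\kappa$ immediately gives $\mathcal{P}(2/3) \supseteq \mathcal{P}(4/5) \supseteq \mathcal{P}(17/21)$. The embedding $\mathbf{r} \mapsto (\mathbf{r},\mathbf{0},\mathbf{0})$ sends $\mathcal{P}'(\kappa)$ into $\mathcal{P}(\kappa)$, and the objectives agree on that slice since $1-s-t=1$ there; checking the constraints of $\mathcal{P}(2/3)$ at any $(\mathbf{r},\mathbf{0},\mathbf{0})$ with $\mathbf{r}\in \mathcal{P}'(4/5)$ confirms $\mathcal{P}'(4/5)$ sits inside $\mathcal{P}(2/3)$, and the $r_1+r_6$ bound gives $\mathcal{P}'(17/21) \subseteq \mathcal{P}'(4/5)$. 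Minimising over smaller sets gives larger minima, and taking $\max_{\gamma,\beta}$ preserves the chain. The depth-$0$ chain is even more direct from Corollary \ref{thm:depth0ApproxRatio2}: the map $\kappa \mapsto \frac{1}{4}((2\kappa-1)\cos(2\theta)+2\kappa+1)$ has derivative $\frac{1}{2}(\cos(2\theta)+1)\geq 0$ for $\theta \in [0,\pi/2]$.

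For items 2 and 3, I would exploit the two degenerate warm-starts. At $\theta=0$ we have $\ket{b_\theta}=\ket{b}$, a computational basis state that is simultaneously an eigenvector of $C$ and of the aligned mixer $B_{\ket{b}}$ (a signed sum of $Z_j$'s); hence $\ket{\psi_p(\gamma,\beta,\ket{b})}=e^{i\phi}\ket{b}$ for every $p,\gamma,\beta$, so measurement returns $b$ deterministically and the ratio is exactly $\cut(b)/\mc(G)\geq \kappa$ by Lemma \ref{thm:edge_ratio_maxcut_relation}. At $\theta=\pi/2$, $\ket{0_\theta}=\ket{1_\theta}=\ket{+}$ so $\ket{b_\theta}=\ket{+}^{\otimes n}$ independent of $b$, the mixer $B_{\ket{b_\theta}}$ coincides with $B_{\text{TF}}$, and warm-started QAOA collapses to standard QAOA, inheriting the $0.6924$ bound at depth $1$ and $0.5$ at depth $0$; plugging $\theta=\pi/2$ into $\mathbf{LB}^{(0)}_\kappa$ confirms the depth-$0$ value analytically.

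Items 4--8 are empirical, and my plan is to justify them by running $\mathbf{LB}_\kappa(\theta)$ and $\mathbf{LB}'_\kappa(\theta)$ on a dense grid of $\theta$'s via the linear-programming reformulation of the inner minimisation (Appendix \ref{sec:optimizationDetails}) and reading off the stated features. The closeness asserted in item 5 additionally follows because $\mathcal{P}'(4/5)$ and $\mathcal{P}'(17/21)$ differ only in the numerical value ($1/5$ versus $4/21$) of the single $r_1+r_6$ bound, so the two minima can differ by at most a bounded multiple of that gap. The main obstacle is genuinely item 6--8: the kinks, local minima, and non-monotonic bumps reflect changes in the active constraint set of the inner LP as $\theta$ varies the trigonometric polynomials $f_g(\gamma,\beta,\theta)$, and pinpointing closed-form witnesses at $\theta\approx 50^\circ$ or the local maximum near $\theta=\pi/3$ would require solving a transcendental system in $(\gamma,\beta,\theta)$ along with identifying the switching active set. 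I would therefore state items 6--8 as observed numerical phenomena backed by the figures, reserving a fully analytic treatment of the $\theta=\pi/3$ peak (connected to the recovery phenomenon noted in the abstract) for separate, focused work.
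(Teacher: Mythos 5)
Your proposal is correct and follows essentially the same route as the paper: the paper offers no separate proof of this Observation, justifying item 1 by the same constraint-adding/polytope-inclusion argument (and the monotonicity in $\kappa$ from Corollary \ref{thm:depth0ApproxRatio2}), items 2--3 by the same degeneracies $\ket{b_0}=\ket{b}$ and $\ket{b_{\pi/2}}=\ket{+}^{\otimes n}$, and items 4--8 purely by the numerics tabulated in Appendix \ref{sec:AR_Tables}. One small refinement for item 2: your eigenvector argument shows the \emph{algorithm} achieves ratio $\cut(b)/\mc(G)\geq\kappa$ at $\theta=0$, but the claim is that the optimization-defined quantity $\mathbf{LB}_\kappa(0)$ \emph{equals} $\kappa$; for that you should evaluate the inner LP at $\theta=0$ using Proposition \ref{thm:flatteningLandscape} (each $f_g(\gamma,\beta,0)\in\{0,1\}$ according to whether the central edge is cut), whence the objective becomes $\bigl(1-(s+t+s_{1,2}+r_1+r_6)\bigr)/(1-s-t)\geq\kappa$ with equality attained at $s=t=0$, $r_1+r_6=1-\kappa$.
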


Interestingly, Part 7 of Observation \ref{thm:plotObservation} is not a coincidence. In general, for warm-started QAOA on any $k$-regular graph with $k$ odd, with the right choices of variational parameters, one can evolve the warm-start state with initialization angle $\theta$ to the same warm-start state but with initialization angle $3\theta$, which we will prove in Theorem \ref{thm:recoverCutAt60DegreesGeneralized} below. At $\theta = 60^\circ$, we have $3\theta = 180^\circ$, meaning that the original bitstring used in the warm-start initialization is \emph{effectively} recovered as a bitstring and its bitwise-negation have the same objective value in the context of \MC.

Before proving this observation, we first state a useful lemma which states that at $\gamma=\pi$, the cost Hamiltonian for \MC{} takes on a special form for regular graphs.
\begin{lemma}
    \label{thm:unitaryisPhaseFlip}
        Let $G$ be a $k$-regular graph with $n$ vertices and let $C$ be the \mc{} Hamiltonian for $G$ and let $\gamma = \pi$. Then, the cost unitary $U(C, \gamma)$ of QAOA is equal to the following:
        $$U(C, \gamma) = \begin{cases} Z^{\otimes n}, & \text{$k$ odd}\\ I, & \text{$k$ even}\end{cases}.$$
    \end{lemma}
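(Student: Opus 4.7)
The plan is to verify the identity directly on the computational basis, using the two parity lemmas just established as the key inputs. Since the Max-Cut cost Hamiltonian $C = \sum_{e \in E} \frac{1}{2}(I - Z_iZ_j)$ is diagonal in the computational basis with eigenvalues $C\ket{b} = \cut(b)\ket{b}$, the unitary $U(C,\pi) = e^{-i\pi C}$ acts as
\[
U(C,\pi)\ket{b} = e^{-i\pi \cut(b)}\ket{b} = (-1)^{\cut(b)}\ket{b}
\]
for every bitstring $b \in \{0,1\}^n$.

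In the case $k$ odd, I would invoke Lemma \ref{thm:kRegularParityOdd}, which says $\cut(b) \equiv H(b) \pmod 2$. Then $(-1)^{\cut(b)} = (-1)^{H(b)}$, and since $Z^{\otimes n}\ket{b} = (-1)^{H(b)}\ket{b}$, the two operators agree on a basis and hence coincide. In the case $k$ even, the preceding lemma gives $\cut(b) \equiv 0 \pmod 2$, so $(-1)^{\cut(b)} = 1$ for all $b$, yielding $U(C,\pi) = I$.

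As a sanity check (and an alternative route avoiding the parity lemmas), one can expand $U(C,\pi)$ directly: writing $C = \tfrac{m}{2}I - \tfrac{1}{2}\sum_e Z_iZ_j$ with all $Z_iZ_j$ mutually commuting, one gets
\[
U(C,\pi) = e^{-i\pi m/2}\prod_{e=\{i,j\}} e^{i\pi Z_iZ_j/2} = (-i)^m \cdot \prod_{e=\{i,j\}}(iZ_iZ_j) = \prod_{e \in E} Z_iZ_j = \prod_{v \in V} Z_v^{\deg(v)},
\]
which for a $k$-regular graph equals $Z^{\otimes n}$ when $k$ is odd and $I$ when $k$ is even. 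Either derivation is essentially immediate; there is no real obstacle, and the main care is just to keep track of the overall phase $e^{-i\pi m/2}$ cancelling against the $i^m$ from the per-edge expansion.
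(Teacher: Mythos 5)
Your primary argument is exactly the paper's proof: diagonalize on the computational basis to get $U(C,\pi)\ket{b}=(-1)^{\cut(b)}\ket{b}$, then invoke the two parity lemmas to match this against $Z^{\otimes n}\ket{b}=(-1)^{H(b)}\ket{b}$ (odd $k$) or $I$ (even $k$). That is correct and complete. Your alternative derivation is also correct and is genuinely different from what the paper does: expanding $e^{-i\pi C}$ as $e^{-i\pi m/2}\prod_{e}e^{i\pi Z_iZ_j/2}=(-i)^m\prod_e (iZ_iZ_j)=\prod_v Z_v^{\deg(v)}$ gives the statement directly for any graph (not just regular ones, where it reads off the parity of each vertex degree), and it bypasses the two combinatorial parity lemmas entirely; the phase bookkeeping $(-i)^m i^m=1$ checks out. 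The paper's route has the minor advantage of reusing Lemma \ref{thm:kRegularParityOdd}, which it states anyway, but your operator-product computation is the more self-contained and more general argument.
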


Lemma \ref{thm:unitaryisPhaseFlip} was already known, e.g. in \cite{zhou2020quantum}, this result is stated (but not proved) and is used to determine the periodicity of the variational parameters $\gamma$ and $\beta$ for standard QAOA on regular graphs; for the curious reader, we provide a complete proof in the Supplementary Materials. We now prove Theorem \ref{thm:recoverCutAt60DegreesGeneralized} and Corollary \ref{thm:recoverCutAt60Degrees}.

\begin{theorem}
\label{thm:recoverCutAt60DegreesGeneralized}
    Let $G$ be a (unit-weight) $k$-regular graph where $k$ is odd and let $b$ be a bitstring corresponding to a cut in $G$. Then there exists variational parameters $(\gamma,\beta)$ such that depth-1 warm-started QAOA with warm-start $\ket{b_\theta}$ outputs the state $\ket{\psi_p(\gamma,\beta, \ket{b_\theta})} = \ket{b_{3\theta}}$.
\end{theorem}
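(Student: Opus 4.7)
The plan is to exhibit explicit variational parameters, namely $\gamma = \pi$ and $\beta = \pi/2$, and to verify by direct computation that depth-$1$ warm-started QAOA with these parameters sends $\ket{b_\theta}$ to $\ket{b_{3\theta}}$ up to a global phase. Both steps of the circuit, the cost unitary and the aligned mixer unitary, factorize into per-qubit actions (the cost unitary only after using the regularity of $G$), so the argument reduces to a single-qubit verification that I repeat for each choice of $b_v \in \{0,1\}$.

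First, I would simplify $U(C,\pi)$. Since the edge-terms $C_e = \tfrac{1}{2}(I - Z_uZ_v)$ pairwise commute, the unitary decomposes as $U(C,\pi) = \prod_{e \in E} e^{-i\pi C_e}$; a short calculation gives $e^{-i\pi C_e} = Z_uZ_v$ up to a global phase, and regrouping the product by vertex yields $\prod_{\{u,v\} \in E} Z_uZ_v = \prod_{v \in V} Z_v^{\degree(v)}$. The hypothesis that $G$ is $k$-regular with $k$ odd then collapses this to $\prod_v Z_v$ (up to global phase). Applying this operator qubit-by-qubit to $\ket{b_\theta}$ uses $Z\ket{0_\theta} = \cos(\theta/2)\ket{0} - \sin(\theta/2)\ket{1}$ and $Z\ket{1_\theta} = \sin(\theta/2)\ket{0} - \cos(\theta/2)\ket{1}$; geometrically, each qubit is reflected across the $z$-axis of the Bloch sphere to the position $(-\sin\theta, 0, (-1)^{b_v}\cos\theta)$.

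Next, I would apply $U(B_{\ket{b_\theta}}, \pi/2)$. This unitary also factorizes as a product of single-qubit operators, and since each $B_{\vec{n}_v}$ satisfies $B_{\vec{n}_v}^2 = I$ we have $e^{-i(\pi/2) B_{\vec{n}_v,v}} = -i\, B_{\vec{n}_v,v}$. A direct $2 \times 2$ matrix computation using $B_{\vec{n}_v} = \sin\theta\, X + (-1)^{b_v}\cos\theta\, Z$ together with the angle-addition identities $\cos\theta \cos(\theta/2) - \sin\theta \sin(\theta/2) = \cos(3\theta/2)$ and $\sin\theta \cos(\theta/2) + \cos\theta \sin(\theta/2) = \sin(3\theta/2)$ then shows that the intermediate qubit state is carried to $\ket{(b_v)_{3\theta}}$ up to a unimodular scalar. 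Taking the tensor product over all $v$ yields the target state $\ket{b_{3\theta}}$ up to a single global phase.

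The main obstacle will be bookkeeping: the qubit-wise analysis splits into the cases $b_v = 0$ and $b_v = 1$, and one must track the $\pm 1$ and $\pm i$ factors at each step carefully enough to confirm they combine into a state-independent overall phase rather than relative phases between qubits that would destroy the tensor factorization. Once Theorem \ref{thm:recoverCutAt60DegreesGeneralized} is established, Theorem \ref{thm:recoverCutAt60Degrees} is immediate: at $\theta = \pi/3$ we have $3\theta = \pi$, so that $\ket{0_\pi} = \ket{1}$ and $\ket{1_\pi} = \ket{0}$, giving $\ket{b_{3\theta}} = \ket{\bar b}$, whose expected \mc{} value equals $\cut(b)$.
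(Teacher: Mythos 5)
Your proposal is correct and reaches the same explicit parameters $(\gamma,\beta)=(\pi,\pi/2)$ and the same per-qubit reduction as the paper, but both key steps are verified by a different route. For the cost unitary, the paper proves $U(C,\pi)=Z^{\otimes n}$ by diagonalizing in the computational basis and invoking a parity lemma ($\cut(b)\equiv H(b)\pmod 2$ for odd-regular graphs, proved by induction on the Hamming weight), whereas you factor $U(C,\pi)=\prod_{e}e^{-i\pi C_e}$, compute $e^{-i\pi C_e}=Z_uZ_v$ exactly, and regroup by vertex to get $\prod_v Z_v^{\degree(v)}=Z^{\otimes n}$; your operator-level argument is more elementary and makes the role of odd regularity transparent without any induction. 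For the mixer, the paper decomposes $U(B_{\vec{n}_j},\beta)=R_y(\theta)R_z(2\beta)R_y(-\theta)$ and tracks the qubit geometrically through a chain of Bloch-sphere rotations, while you use $B_{\vec{n}}^2=I$ to collapse $e^{-i(\pi/2)B_{\vec{n}}}$ to $-iB_{\vec{n}}$ and finish with a direct $2\times 2$ computation; the paper's version gives more geometric intuition, yours is shorter and self-contained. One small point: the per-qubit scalars you track are $-i$ for $b_v=0$ and $+i$ for $b_v=1$, so the accumulated phase $(-i)^{n_0}i^{n_1}$ does depend on $b$; it is nonetheless a single overall scalar on a product state, hence an unobservable global phase, and the paper's own proof likewise discards global phases (e.g.\ in writing $Z=R_z(\pi)$), so both arguments establish the stated equality only up to global phase, which is all the theorem needs.
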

\begin{proof}
        Let $\gamma = \pi$ and let $\beta = \pi/2$ and let $\ket{\psi} = U(B,\beta)U(C,\gamma)\ket{b_\theta}$ denote the state that is output from warm-started QAOA with warm-start $\ket{b_\theta}$. From Lemma \ref{thm:unitaryisPhaseFlip}, we have that $U(C,\gamma) = Z^{\otimes n}$, where $n$ is the number of vertices in the graph $G$. Following the notation of Section \ref{sec:constructionOfWarmStartedStates} and writing $\ket{b_\theta} = \bigotimes_{j=1}^n \ket{\vec{n}_j}$, the mixing unitary can also be written as a Kronecker product: $U(B,\beta) = \bigotimes_{j=1}^n U(B_{\vec{n_j}},\beta).$ We can thus write,
        $$\ket{\psi} =  U(B,\beta)U(C,\gamma)\ket{b_\theta} = \left(\bigotimes_{j=1}^n U(B_{\vec{n_j}},\beta)\right)\left(Z^{\otimes n}\right)\ket{b} = \bigotimes_{j=1}^n \left(U(B_{\vec{n_j}}, \beta)Z\ket{(b_j)_\theta}\right),$$
        or equivalently, $\ket{\psi} = \bigotimes_{j=1}^n \ket{\psi_j}$ where,
        $$\ket{\psi_j} = U(B_{\vec{n_j}}, \beta)Z\ket{(b_j)_\theta}.$$
        From here, it suffices to show that $U(B_{\vec{n_j}}, \beta) Z\ket{(b_j)_\theta} = \ket{(b_j)_{3\theta}}$ for all $j=1,\dots,n$. We now fix a $j$ and first consider the case where $\ket{(b_j)_\theta} = \ket{0_\theta}$. We can then write $U(B_{\vec{n_j}}, \beta)$ as a sequence of rotations about the $y$ and $z$ axis of the Bloch sphere:
        $$U(B_{\vec{n_j}}, \beta) = R_y(\theta)R_z(2\beta)R_y(-\theta),$$
        where
        $$R_y(\theta) = e^{-i\frac{\theta}{2}Y} = \cos(\theta/2)I - i\sin(\theta/2)Y,$$
        $$R_z(\theta) = e^{-i\frac{\theta}{2}Z} = \cos(\theta/2)I - i\sin(\theta/2)Z;$$
        we refer the reader to \cite{blochSphereRotations} for more details regarding the decomposition of the mixer in terms of Bloch sphere rotations. Modulo an (unobservable) global phase, $Z = R_z(\pi)$, and thus, a geometrical argument makes the following clear:
        \begin{align*}
            U(B_{\vec{n}_j},\beta)Z\ket{(b_j)_\theta} &= \Big(R_y(\theta)R_z(2\cdot \pi/2)R_y(-\theta)\Big)R_z(\pi)\ket{0_\theta}\\
            &= R_y(\theta)R_z(\pi)R_y(-\theta)\ket{0_{-\theta}}\\
            &= R_y(\theta)R_z(\pi)\ket{0_{-2\theta}}\\
            &= R_y(\theta)\ket{0_{2\theta}}\\
            &= \ket{0_{3\theta}},\\
        \end{align*}

        as desired.

        Now, suppose that $\ket{(b_j)_\theta} = \ket{1_\theta}$. Then,
        \begin{align*}
            U(B_{\vec{n}_j},\beta)Z\ket{(b_j)_\theta} &=  U(B_{\vec{n}_j},\beta)Z\ket{1_\theta}\\
            &=U(B_{\vec{n}_j},\beta)Z\ket{0_{\pi-\theta}}\\
            &= \ket{0_{3(\pi - \theta)}}\tag{previous calculations with $\pi-\theta$ in place of $\theta$}\\
            &= \ket{0_{\pi-3\theta}} \tag{$3(\pi-\theta) = \pi-3\theta \pmod{2\pi}$}\\
            &= \ket{1_{3\theta}},
        \end{align*}

        finishing the proof.
    \end{proof}

\begin{corollary}
    \label{thm:recoverCutAt60Degrees}
    Let $G$ be a (unit-weight) $k$-regular graph where $k$ is odd and let $b$ be a bitstring corresponding to a cut in $G$. Then there exists variational parameters $(\gamma,\beta)$ such that depth-1 warm-started QAOA with warm-start $\ket{b_\theta}$ with $\theta = 60^\circ$ yields an expected cut value of $\cut(b)$. More specifically, the state that is output from the QAOA circuit is $\ket{\psi_p(\gamma,\beta, \ket{b_\theta})} = \ket{\bar{b}}$ where $\bar{b}$ is the bitwise-negation of the bitstring $b$.
\end{corollary}

We next investigate the optimal values of the variables in the inner-minimization problems of these lower bounds. We found that the optimal values correspond to graphs with no triangles and with many colored-edge neighborhoods of the type $[g_{6,1}]$ and $[g_{6,6}]$ as seen in Conjecture \ref{thm:noTrianglesConjecture}.

\begin{conjecture}
\label{thm:noTrianglesConjecture}
    For the variables in the minimization problem defined by $\textbf{LB}_{\kappa}(\theta)$ and $\textbf{LB}'_{\kappa}(\theta)$ , for all initialization angles $\theta$, the optimal values satisfy the following:
    \begin{itemize}
        \item $\mathbf{r} = \mathbf{s} = \mathbf{0}$, i.e., $G$ is triangle-free, and 
        \item $r_1 + r_6 = 1 - \kappa$; together with the previous bullet, this implies that the constraint $s+t+s_{1,2}+r_1+r_6 \leq 1 - \kappa$ is tight.
    \end{itemize}
\end{conjecture}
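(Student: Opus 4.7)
The plan is to combine a Dinkelbach-style reduction of the inner minimization to a linear program with an explicit ``replacement'' argument for the resulting LP. Fix $(\gamma^*, \beta^*)$ that achieves the outer maximum in either $\mathbf{LB}_\kappa(\theta)$ or $\mathbf{LB}'_\kappa(\theta)$. Since $F_{\mathbf{r},\mathbf{s},\mathbf{t}}(\gamma^*,\beta^*,\theta)$ is linear in $(\mathbf{r},\mathbf{s},\mathbf{t})$ and the denominator $1-s-t$ is affine, the inner objective is a ratio of affine functions and its minimum over $\mathcal{P}(\kappa)$ (resp.\ $\mathcal{P}'(\kappa)$) can be computed by a parametric LP whose optimum is attained at a vertex. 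Proving the conjecture amounts to identifying that vertex.

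For the first claim that $\mathbf{s} = \mathbf{t} = \mathbf{0}$ (i.e., that $G$ is triangle-free at the optimum), I would proceed by local replacement: given any feasible point with $s_i > 0$ or $t_j > 0$, construct a new feasible point by deleting the offending colored crossed-square or colored triangle and redistributing its five or three core edges among the ``flat'' neighborhood types $[g_{6,1}]$ and $[g_{6,6}]$, while preserving the identity $5s+3t+r=1$ and the budgets $4s+3t\leq 2/3$, $r_1,r_6 \leq 1/5$, and $s+t+s_{1,2}+r_1+r_6\leq 1-\kappa$. For this move to weakly decrease the objective one must verify, for each colored graph structure in Figure \ref{fig:coloredGraphStructures}, that the sum of its per-edge $f$-values at $(\gamma^*,\beta^*)$ dominates the corresponding convex combination of $f_{g_{6,1}}$ and $f_{g_{6,6}}$, after accounting for the accompanying change in the denominator $1-s-t$ (which shrinks when triangle structures are reintroduced, so removing them is doubly helpful).

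For the second claim that $r_1 + r_6 = 1-\kappa$, once $\mathbf{s}=\mathbf{t}=\mathbf{0}$ is in hand, the inner problem reduces to minimizing $\sum_j r_j f_{g_{6,j}}(\gamma^*,\beta^*,\theta)$ subject to $r_1+r_6\leq 1-\kappa$, $r_1,r_6\leq 1/5$, $\sum_j r_j = 1$, and $\mathbf{r}\geq \mathbf{0}$. By Proposition \ref{thm:flatteningLandscape} and the landscapes in Figure \ref{fig:treeSubgraphLandscapes}, $f_{g_{6,1}}$ and $f_{g_{6,6}}$ are strictly smaller than the remaining $f_{g_{6,j}}$'s at the optimal $(\gamma^*,\beta^*)$, so the LP pushes as much weight as possible onto $r_1$ and $r_6$, forcing the budget constraint to equality.

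The main obstacle is that the per-edge comparison inequalities needed in the replacement step must hold uniformly at $(\gamma^*(\theta),\beta^*(\theta))$ for every $\theta \in [0,\pi/2]$, and these outer-optimal parameters are only accessible numerically. A fully rigorous proof likely requires either a closed-form description of $(\gamma^*(\theta),\beta^*(\theta))$ on subintervals of $[0,\pi/2]$, or interval-arithmetic certification of the replacement inequalities across a fine grid in $\theta$; this is probably why the statement is left as a conjecture. A sensible first step would be to establish the claim in the two analytically tractable limits $\theta \to 0$, where Proposition \ref{thm:flatteningLandscape} forces $f_{g_{6,1}}, f_{g_{6,6}} \to 0$ and the replacement inequalities collapse to an essentially combinatorial comparison, and $\theta=\pi/2$, where the setting reduces to the original Farhi et al.\ analysis recapitulated in Section \ref{sec:graphStructures}.
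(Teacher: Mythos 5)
The statement you are addressing is labeled a \emph{conjecture} in the paper: the authors offer no proof, only the remark that their numerical computations (the tables in Appendix~\ref{sec:AR_Tables}) are consistent with it for $\theta = 1^\circ,\dots,90^\circ$. So there is no argument of theirs to compare yours against, and what you have written is, as you yourself acknowledge, a proof \emph{plan} rather than a proof. The overall strategy --- reduce the inner problem to a (fractional) LP at the outer-optimal $(\gamma^*,\beta^*)$, argue the optimum sits at a vertex, and identify that vertex by a replacement argument --- is sensible and matches how the authors actually compute these quantities (via the Charnes--Cooper transformation in Appendix~\ref{sec:optimizationDetails}). One small point in your favor: you correctly read the first bullet as $\mathbf{s}=\mathbf{t}=\mathbf{0}$ (the paper's ``$\mathbf{r}=\mathbf{s}=\mathbf{0}$'' is evidently a typo, since triangle-freeness concerns the crossed-square and isolated-triangle counts, and Appendix~\ref{sec:optimizationDetails} confirms the intended reading).

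Two concrete gaps beyond the one you name. First, your replacement step as described breaks feasibility: deleting a crossed square frees five units of edge mass (via $5s+3t+r=1$), and dumping all of it into $r_1$ and $r_6$ changes $s+t+s_{1,2}+r_1+r_6$ by a net $+4s_0>0$ per unit of $s_0$ removed, which can violate both $s+t+s_{1,2}+r_1+r_6\le 1-\kappa$ and the caps $r_1,r_6\le 1/5$; the freed mass must be split between $\{r_1,r_6\}$ and the other $r_j$'s, and then the per-edge domination inequality you need is a genuinely different (and harder) comparison involving $f_{g_{6,2}},\dots,f_{g_{6,5}}$ as well. Second, the claim that $f_{g_{6,1}}$ and $f_{g_{6,6}}$ are \emph{strictly} smaller than the remaining $f_{g_{6,j}}$ at the outer-optimal parameters cannot hold uniformly: at $\theta=\pi/2$ all colored edge-neighborhoods with the same underlying graph collapse to the standard-QAOA value, the inner LP is degenerate, and indeed the tables report optima at $\theta=90^\circ$ with $r_1+r_6\ne 1-\kappa$. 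Any proof of the second bullet therefore has to be phrased as ``some optimal solution satisfies $r_1+r_6=1-\kappa$'' and must handle the degenerate regime separately. Your suggested first step --- settling the limits $\theta\to 0$ (where Proposition~\ref{thm:flatteningLandscape} makes the comparison combinatorial) and $\theta=\pi/2$ (where the problem reduces to Farhi et al.'s) --- is the right place to start, but be aware that even the $\theta=\pi/2$ endpoint only gives the conjecture in the weak ``exists an optimizer'' sense.
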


Our computations (see Section \ref{sec:optimizationDetails} and the data tables in the Supplementary Materials) are consistent with Conjecture \ref{thm:noTrianglesConjecture} for values of $\theta$ we tested ($\theta=1^\circ, 2^\circ, \dots, 90^\circ$).

In regards to tightness, we believe that the lower bound $\textbf{LB}_\text{2/3}(\theta)$ is \emph{nearly} tight with respect to $\alpha'_{{(\text{1-BLS}+\text{QAOA}})_\theta}$ (see Section \ref{sec:warmstarted_QAOA_AR} regarding this notation of approximation ratio); in particular, the following holds for all 3-regular graphs $G$ and all 1-BLS cuts $b$:
$$\textbf{LB}_{2/3}(\theta) \leq \alpha'_{{(\text{1-BLS}+\text{QAOA}})_\theta} \leq \alpha_{{(\text{WS-QAOA}})_\theta}(G,b),$$
which implies that,
$$|\alpha'_{{(\text{1-BLS}+\text{QAOA}})_\theta} - \textbf{LB}_{2/3}(\theta)| \leq |\alpha_{{(\text{WS-QAOA}})_\theta}(G,b) - \textbf{LB}_{2/3}(\theta)|.$$

We were able to find a graph $G$ and 1-BLS bitstring $b$ such that the right side of the above inequality is at most $0.03$ across all values of $\theta$ we tested, meaning that the lower bound $\textbf{LB}_{2/3}(\theta)$ is within $0.03$ of $\alpha'_{{(\text{1-BLS}+\text{QAOA}})_\theta}$. In fact there are infinitely many such graphs (and corresponding bitstrings); these are given by the 3-regular bipartite graphs defined in Figure \ref{fig:cubicBipartiteGraph} whose number of vertices is divisible by 4. Using the same vertex-labeling in Figure \ref{fig:cubicBipartiteGraph}, the corresponding bitstring $b$ is given by $b_{u_j} = b_{v_j} = 0$ if $j$ is odd and $b_{u_j} = b_{v_j} = 1$ if $j$ is even. It is not too difficult to verify that $b$ is locally optimal with respect to 1-BLS. An example of such a graph (on 12 nodes) and corresponding bitstring $b$ is given in Figure \ref{fig:coloredCubicBipartiteGraph}. For such graphs, we have the following in regards to their distribution of colored-edge neighborhoods:
$$\mathbf{s} = \mathbf{t} = 0,$$ $$r_1 = r_6 = 1/6,$$ $$r_4 = 2/3,$$ $$ r_2 = r_3 = r_5 = 0.$$
For many values of $\theta$, we found (see data tables in Supplementary Materials) that this distribution of colored-edge neighborhoods matches those found in the inner-minimization of $\textbf{LB}_{2/3}(\theta)$. For those $\theta$'s, one should note that the optimal $\gamma$ and $\beta$ for the outer-maximization of $\textbf{LB}_{2/3}$ is not necessarily the same as the optimal value of $\gamma$ and $\beta$ for the nearly-tight colored graphs described earlier; this is due to ordering of the inner and outer optimizations that define $\textbf{LB}_{2/3}$ (see Section \ref{sec:optimizationDetails}).

Lastly, for each of the above lower bounds, we look at the corresponding optimal values of $\gamma$ and $\beta$ that were found in the outer maximization defining each lower bound. Figure \ref{fig:gammaBetaScatterPlot} illustrates that the optimal choice of these variational parameters are heavily dependent on the choice of initialization angle. Moreover, as the initialization angle $\theta$ changes, the distant clusters of points in Figure \ref{fig:gammaBetaScatterPlot} suggests that the optimal $\gamma$ and $\beta$ parameters do not change in a continuous manner; this explains the sudden changes in behavior in the approximation ratio curves found in Figure \ref{fig:plotAR_varyingConstraints}. The numerical data in both Figures \ref{fig:plotAR_varyingConstraints} and \ref{fig:gammaBetaScatterPlot} are tabulated in various tables in the Supplementary Materials.

\begin{figure}
    \centering
    \includegraphics[scale=.95]{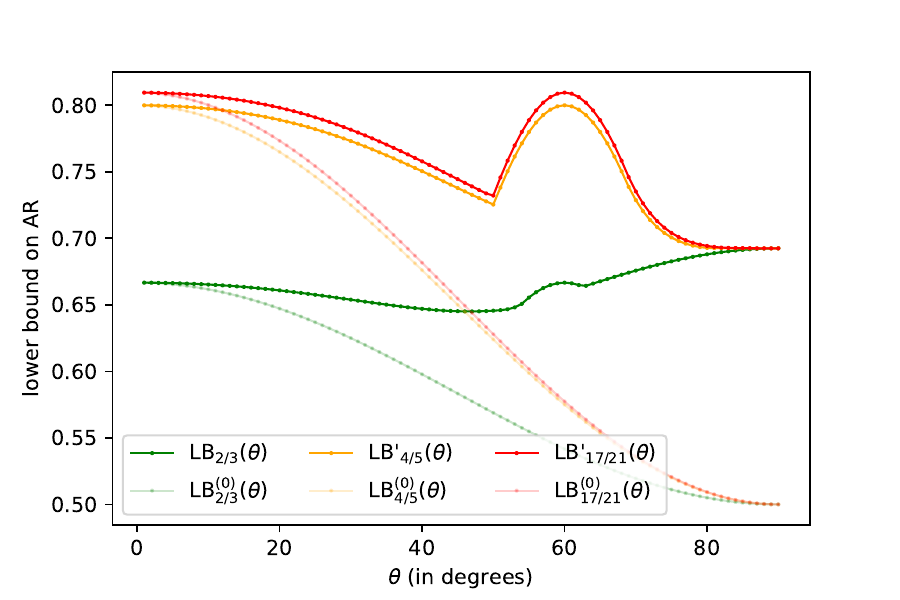}
\caption{\footnotesize\label{fig:plotAR_varyingConstraints} A plot demonstrating how the lower bounds $\textbf{LB}^{(0)}_\kappa(\theta), \textbf{LB}_\kappa(\theta), \textbf{LB}'_\kappa(\theta)$ on the approximation ratio of depth-0 and depth-1 warm-started QAOA (respectively) change as a function of the initialization angle $\theta$ for $\kappa \in \{2/3, 4/5, 17/21\}$.}
\end{figure}

\begin{figure}
    \centering
    \includegraphics[scale=0.5]{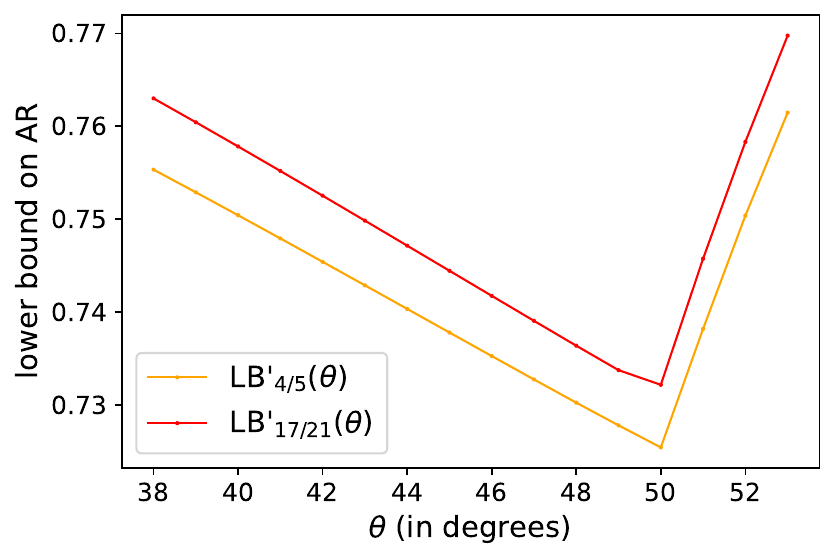}
    \includegraphics[scale=0.5]{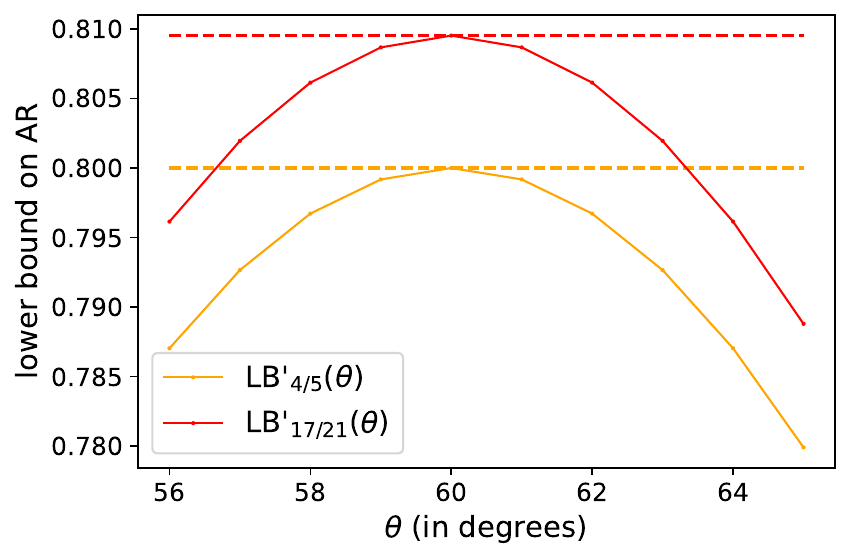}
    
    \includegraphics[scale=0.5]{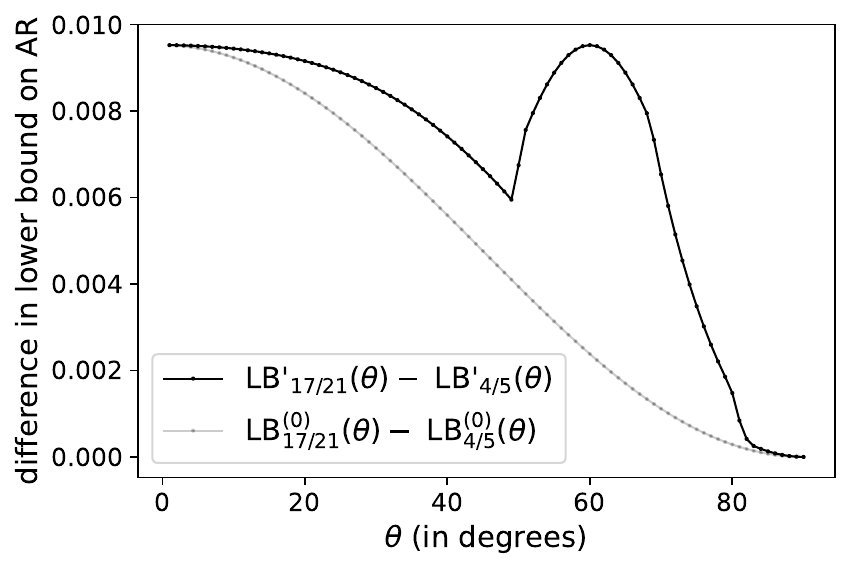}
    
    \caption{\label{fig:plotAR_zoom_4/5_17/21}\footnotesize In the top row, $\mathbf{LB}'_{\mathcal{A}_{4/5}}(\theta)$ and $\mathbf{LB}'_{\mathcal{A}_{17/21}}(\theta)$ are plotted around $\theta = 45^\circ$ (top-left) and $\theta = 60^\circ$ (top-right). The bottom plot depicts the differences $\mathbf{LB}'_{\mathcal{A}_{17/21}}(\theta)-\mathbf{LB}'_{\mathcal{A}_{4/5}}(\theta)$ and $\mathbf{LB}^{(0)}_{\mathcal{A}_{17/21}}(\theta)-\mathbf{LB}^{(0)}_{\mathcal{A}_{4/5}}(\theta)$. The horizontal lines in the top-right plot occur at $4/5$ and $17/21$, i.e., the approximation ratios guaranteed by classical algorithms of type $\mathcal{A}'_{4/5}$ and $\mathcal{A}'_{17/21}$.}
\end{figure}

\begin{figure}
    \centering
    \includegraphics[scale=0.5]{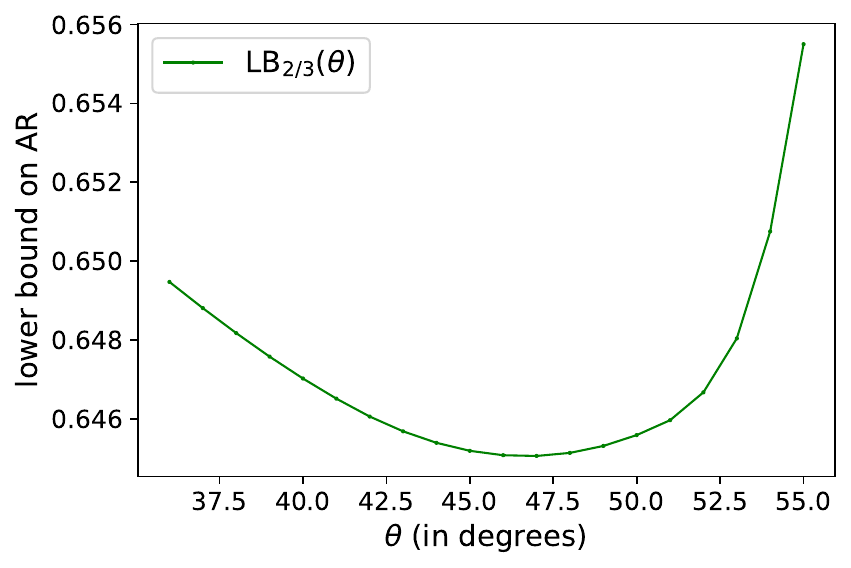}
    \includegraphics[scale=0.5]{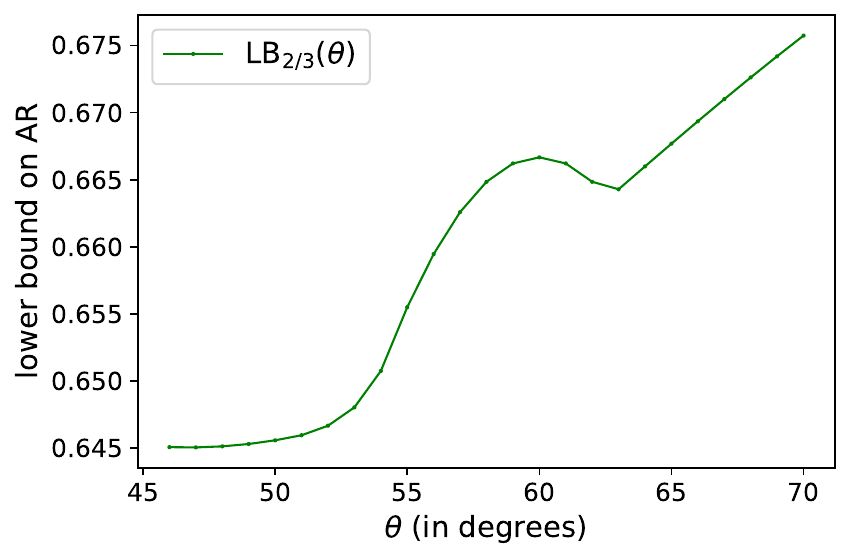}
    \caption{\label{fig:plotAR_zoom_1BLS}\footnotesize The lower bound $\mathbf{LB}_{2/3}(\theta)$ is plotted around $\theta = 45^\circ$ (left) and $\theta = 60^\circ$ (right). The horizontal line in the right plot occurs at $2/3$, i.e., the approximation ratios guaranteed by a classical algorithm of type $\mathcal{A}_{2/3}$.}
\end{figure}

\begin{figure}
    \centering
    \includegraphics[scale=0.5]{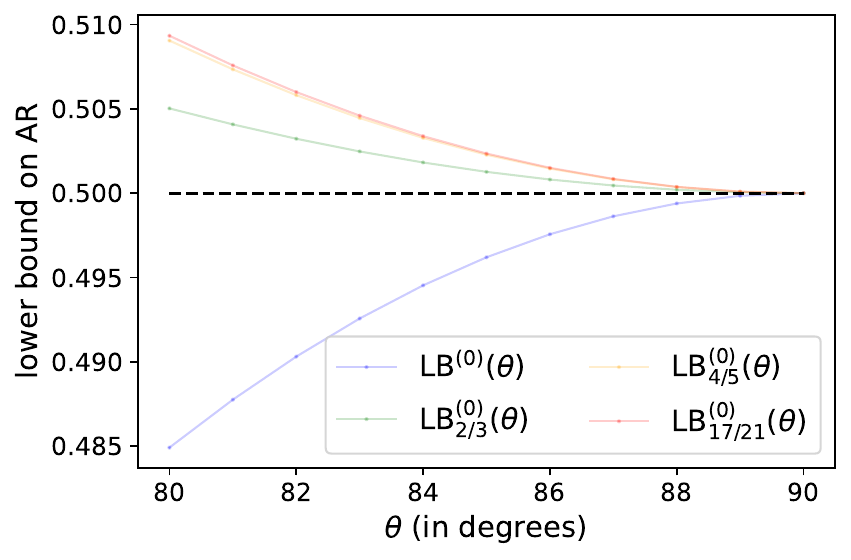}
    \includegraphics[scale=0.5]{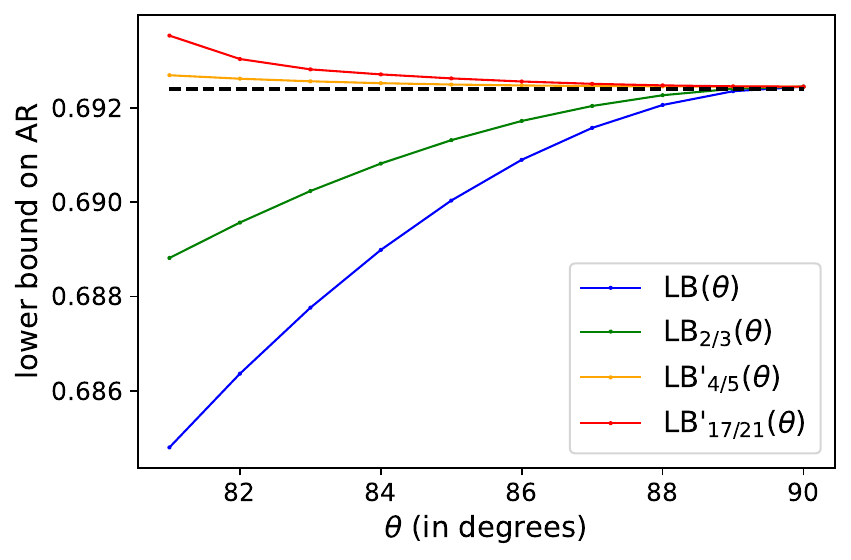}
    \caption{\label{fig:plotAR_zoom_90_degrees}\footnotesize Plots of lower bounds of the approximation ratio for various kinds of warm-started QAOA near $\theta = 90^\circ$. The left and right plot these bounds for depth-0 and depth-1 warm-started QAOA respectively.}
\end{figure}

\begin{figure}
    \centering
    \includegraphics[scale=0.3]{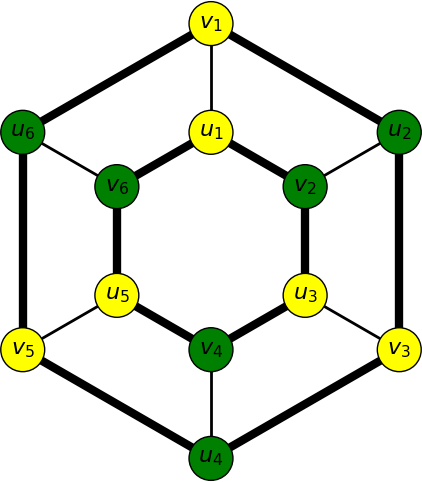}
    \caption{\label{fig:coloredCubicBipartiteGraph} \footnotesize An example of a 3-regular bipartite graph on 12-nodes and a bitstring coloring of the vertices that shows that $\textbf{LB}_\text{1-BLS}(\theta)$ is nearly tight.}
\end{figure}

\begin{figure}
    \label{fig:gammaBetaScatterPlot}
    \centering
    \includegraphics[scale=0.5]{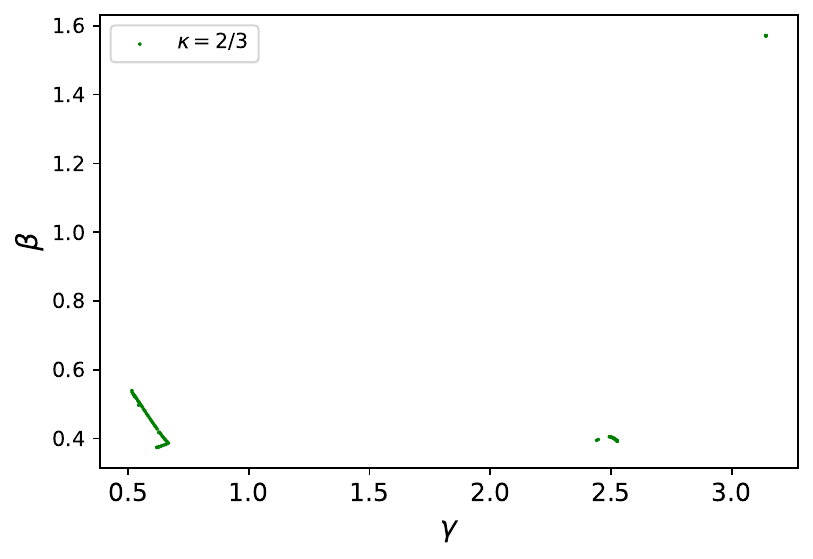}\includegraphics[scale=0.5]{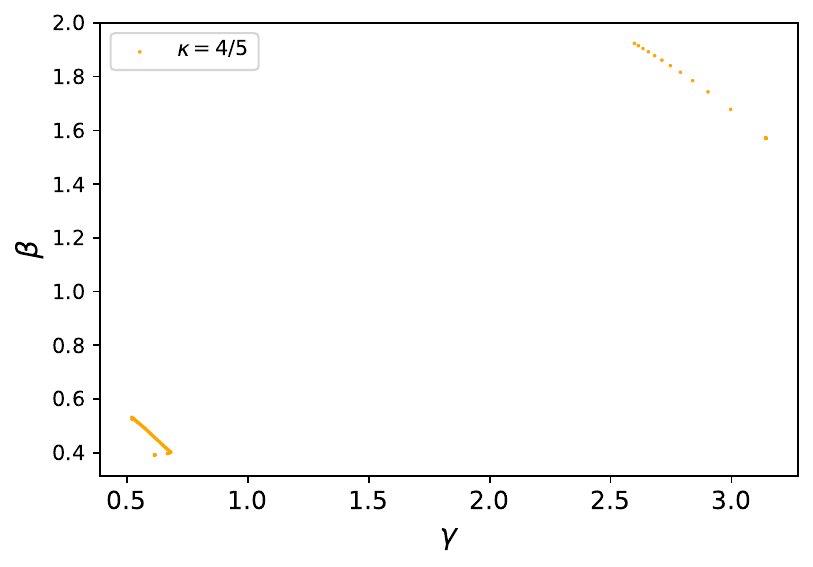}\\
    \includegraphics[scale=0.5]{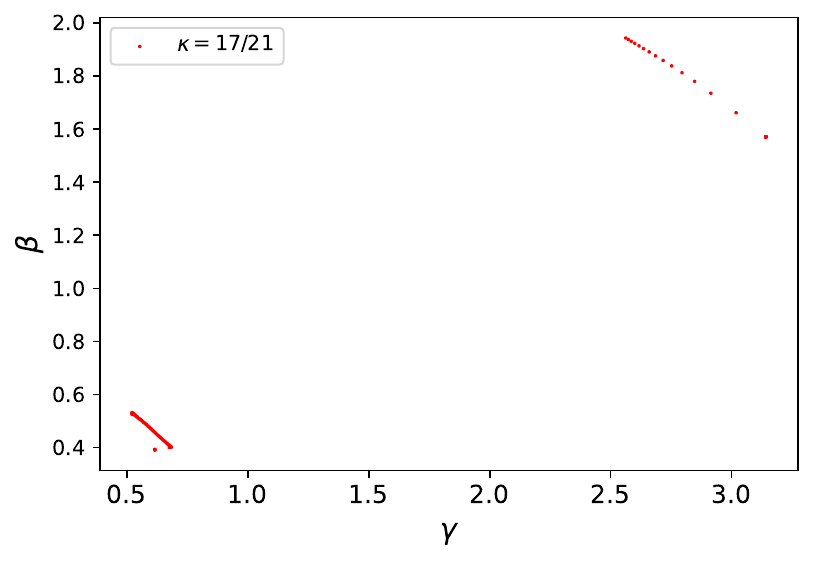}
    \caption{\footnotesize Scatter-plot of optimal variational parameters obtained as a result of numerically calculating various lower bounds on the approximation ratio for various values of $\theta$, the initialization angle.}
\end{figure}

\section{Implementation Details}
\subsection{Graph Construction Details}
\label{sec:subgraphConstructionDetails}

Consider a $k$-regular colored graph $(G,b)$ and consider a non-colored depth-$p$ edge neighborhood $G_e^{(p)}$. Borrowing some of the terminology regarding graph structures (Section \ref{sec:coloredGraphStructures}), the vertices of $G_e^{(p)}$ can be divided into two groups: the exterior vertices $V_\text{ext}$ which are adjacent to at least one other vertex in $G$ but outside of $G_e^{(p)}$ and the remaining core vertices $V_\text{core}$. When $G$ is a $k$-regular graph, it is easy to identify the exterior and core vertices of $G_e^{(p)}$  just by looking at $G_e^{(p)}$: the vertices with degree $k$ (with respect to $G_e^{(p)}$) are exactly the core vertices (since if they were adjacent to any other vertices in $G$, then $G$ would no longer be $k$-regular) and the remaining vertices must be the exterior vertices.

When coloring these edge-neighborhoods according to $b$, observe that if the bit/color corresponding to a core vertex is flipped, then it is easy to calculate the change in the cut value for the overall graph as seen in the example in Figure \ref{fig:ImprovementExample}. The same cannot be said if the exterior vertices are flipped since calculating the change in the overall cut in $G$ would require one to have knowledge of the colors of the vertices that are adjacent to the exterior vertices but outside of the colored-edge neighborhood.

Thus, in order to find all the colored-edge neighborhoods that are possible assuming the bitstring coloring $b$ for $G$ is 1-BLS optimal, one way to do this is to first consider all fixed sub-colorings of the exterior vertices and for each of those sub-colorings, consider all the ways to color the core vertices and remove any colorings that are not consistent with $b$ being 1-BLS with respect to $G$. This idea is formalized in Algorithm \ref{alg:coloredEdgeNeighborhoods}, which, given all the \emph{non-colored} edge-neighborhoods of a certain depth, finds all the ways to color them that is consistent with some overall $k$-regular graph $G$ having a coloring that is 1-BLS optimal.

\begin{algorithm}
\label{alg:coloredEdgeNeighborhoods}
    \caption{\footnotesize Finding colored edge-neighborhoods that are valid with respect to 1-BLS}
    \KwData{Graph regularity $k$, circuit depth $p$, and $\mathcal{G}^{(p)}_k$, a set of representatives of non-colored depth-$p$ edge-neighborhood types that are possible for $k$-regular graphs.}
    
    \BlankLine
    let $\mathcal{H}$ be an empty list\;
    
    \ForEach{marked graph $(g,e) \in \mathcal{G}_k^{(p)}$}
    {
     $V_\text{core} = \{v \in V(g) : \degree_g(v) = k\}$\;
     $V_\text{ext} = \{v \in V(g) : \degree_g(v) \neq k\}$\;
        \ForEach{subcoloring $b_\text{ext}$ of $V_\text{ext}$}
        {
            \ForEach{subcoloring $b_\text{core}$ of $V_\text{core}$}
            {
                let $b_\text{core}^{(v)}$ denote the bitstring $b_\text{core}$ with the bit corresponding to vertex $v$ being flipped\;
                \If{$\cut_g(b_\text{ext} \cup b_\text{core}) \geq \max\{\cut_g(b_\text{ext} \cup b_\text{core}^{(v)}) : v \in V_\text{core}\}$}
                {
                 append the colored marked graph $(g, e, b_\text{ext} \cup b_\text{core})$ to $\mathcal{H}$\;
                }
            }
        }
    }
   remove duplicate colored edge-neighborhoods that are of the same type in $\mathcal{H}$\;
   \Return{$\mathcal{H}$}
    
\end{algorithm}

Running the procedure above with $k=3$ and $p=1$ yields the colored edge-neighborhoods found in Figure \ref{fig:optimalLabeledSubgraphs}. In the above algorithm, we have unions of bitstrings of the form $b_1 \cup b_2$ where $b_1$ and $b_2$ act on \emph{disjoint} sets of vertices $V_1$ and $V_2$. Formally, $(b_1 \cup b_2): (V_1 \cup V_2) \to \{0,1\}$ such that $(b_1 \cup b_2)(v) = b_1(v)$ if $v \in V_1$ and $(b_1 \cup b_2)(v) = b_2(v)$ if $v \in V_2$. Also note that the function $\cut_g(\cdot)$ is with respect to the non-colored edge neighborhood $g$, not the overall graph $G$ that these neighborhoods may possibly live in; in the context of the overall graph, $\cut_g(\cdot)$ can be viewed as the \emph{local} Max-Cut of that neighborhood. There is room for further runtime optimization of Algorithm \ref{alg:coloredEdgeNeighborhoods}, i.e., due to bit-flip symmetry of \MC{} and due to the symmetry of the underlying non-colored edge-neighborhoods themselves, it is not necessary to run the entire procedure for each choice of subcoloring of the exterior vertices. Step 8 can also be sped up by caching the cut values each time we calculate them. At depth $p=1$ and graph regularity $k=3$, Algorithm \ref{alg:coloredEdgeNeighborhoods} runs relatively quickly so such optimizations are not necessary; however, one may find such optimizations useful for higher values of $p$ or $k$.

The procedure for finding all of the colored graph structures is similar. The only difference is that we use the (non-colored) graph structures in the input of Algorithm \ref{alg:coloredEdgeNeighborhoods} (instead of the non-colored edge neighborhoods) and there is no notion of a central or marked edge for the graph structures. Doing this at $p=1$ and $k=3$ yields the colored graph structures found in Figure \ref{fig:coloredGraphStructures}.

\subsection{Optimization Details}
\label{sec:optimizationDetails}
Throughout this work, we bound the approximation ratio by numerically calculating some optimization problem. These optimization problems are very non-trivial, in particular, they contain nested minimizations and maximizations. It is important to consider the order of these minimizations/maximizations since they have the possibility of yielding different optimal values; moreover, each choice of ordering corresponds to certain assumptions that are being made regarding the warm-started QAOA algorithm. Before continuing, we encourage the reader to review the notations and definitions regarding approximation ratio in Sections \ref{sec:approximationRatio} and \ref{sec:warmstarted_QAOA_AR}.

In general, for a function $f: X \times Y \to \mathbb{R}$, the following inequality is well-known:
\begin{equation}\label{eqn:minMaxInequality}\max_{x \in X}\min_{y\in Y} f(x,y) \leq \min_{y \in Y}\max_{x \in X} f(x,y).\end{equation}

For the above inequality, we will frequently refer to the left and right sides as a max-min and min-max problem respectively. There are various \emph{minimax} theorems that show that, in certain contexts \cite{v1928theorie,du1995minimax,sion1958general, parthasarathy1970games}, the above inequality actually becomes equality; however, equality is not guaranteed for general functions.

As a result of Equation \ref{eqn:minMaxInequality} and the earlier calculations in Section \ref{sec:graphStructures}, we have the following relationships for standard depth-1 QAOA:
\begin{equation}\label{eqn:minMaxInequalityStandardQAOA}\alpha_\text{QAOA} \geq \min_{(r,s,t) \in \mathcal{P}} \max_{\gamma,\beta} \frac{\frac{1}{m}F_{r,s,t}(\gamma,\beta)}{1-s-t} \geq  \max_{\gamma,\beta} \min_{(r,s,t) \in \mathcal{P}} \frac{\frac{1}{m}F_{r,s,t}(\gamma,\beta)}{1-s-t}.\end{equation}

We take a moment to interpret the meaning of the expression on the right-hand side of Equation \ref{eqn:minMaxInequalityStandardQAOA}. Let $\alpha^{\gamma,\beta}_\text{QAOA}$ denote the approximation-ratio that is possible by running the depth-1 standard QAOA circuit with $\gamma$ and $\beta$ and not running any classical outer loop; we let $\alpha^{\gamma,\beta}_\text{QAOA}(G)$ be the corresponding instance-specific approximation ratio on a specific instance $G$. We let $r(G), s(G), t(G)$ be the values of $r,s,t$ that correspond to the graph $G$, then, for each choice of $\gamma,\beta$, we have the following bound on the instance-specific approximation ratio:
$$\alpha^{\gamma,\beta}_\text{QAOA}(G) \geq \frac{\frac{1}{m}F_{r(G),s(G),t(G)}(\gamma,\beta)}{1-s(G)-t(G)}.$$

Taking a minimum over
all possibles graphs $G \in \mathcal{G}$ (where $\mathcal{G}$ is the set of 3-regular graphs in this case) yields the following:
$$\alpha^{\gamma,\beta}_\text{QAOA} = \min_{G \in \mathcal{G}}\alpha^{\gamma,\beta}_\text{QAOA}(G) \geq \min_{G \in \mathcal{G}} \frac{\frac{1}{m}F_{r(G),s(G),t(G)}(\gamma,\beta)}{1-s(G)-t(G)}  \geq \min_{(r,s,t) \in \mathcal{P}} \frac{\frac{1}{m}F_{r,s,t}(\gamma,\beta)}{1-s-t},$$
where the last inequality holds since points in the constraint polytope $\mathcal{P}$ may possibly correspond to values of $r,s,t$ that are not associated with any graph (on the other hand, $\mathcal{P}$ was specifically constructed in a way so that for any graph $G$, we have $(r(G), s(G), t(G)) \in \mathcal{P}$).

Taking a maximum on both sides of the above inequality yields:
\begin{equation}
\label{eqn:fixedParamsARMoreThanMaxMin}
\max_{\gamma,\beta} \alpha^{\gamma,\beta}_\text{QAOA} \geq \max_{\gamma,\beta} \min_{(r,s,t) \in \mathcal{P}} \frac{\frac{1}{m}F_{r,s,t}(\gamma,\beta)}{1-s-t},
\end{equation}
in other words, the max-min problem is a lower-bound for the best possible approximation ratio for standard QAOA with some choice of \emph{fixed} variational parameters. A priori, one should not assume that $\alpha_\text{QAOA} = \max_{\gamma,\beta} \alpha_\text{QAOA}(\gamma, \beta)$, since for a \emph{specific} problem instance, the best choice of $\gamma$ and $\beta$ may vary and there might not be a \emph{single} choice of $\gamma,\beta$ that does well across most instances. In fact, we have the following relation between $\alpha_\text{QAOA}$ and $\max_{\gamma,\beta}\alpha^{\gamma,\beta}_\text{QAOA}$:

\begin{prop}
\label{thm:approxRatioFixedParams}
The following holds: $\alpha_\text{QAOA} \geq \max_{\gamma,\beta} \alpha^{\gamma,\beta}_\text{QAOA}$.
\end{prop}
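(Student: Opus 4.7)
The plan is to prove the inequality via a straightforward application of the max-min inequality (Equation \ref{eqn:minMaxInequality}) applied to the instance-specific approximation ratio, unpacking the definitions from Section \ref{sec:approximationRatio}. Recall that $\alpha_\text{QAOA}$ corresponds to the worst-case approximation ratio where the variational parameters $(\gamma,\beta)$ are optimized per instance, whereas $\alpha^{\gamma,\beta}_\text{QAOA}$ corresponds to the worst-case approximation ratio under a single, graph-independent choice of $(\gamma,\beta)$. Intuitively, allowing instance-specific tuning can only help, so $\alpha_\text{QAOA}$ should dominate.

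First I would fix an arbitrary 3-regular graph $G \in \mathcal{G}$ and write out the instance-specific approximation ratios in their respective forms:
\begin{equation*}
\alpha_\text{QAOA}(G) = \max_{\gamma,\beta}\alpha^{\gamma,\beta}_\text{QAOA}(G), \qquad \alpha^{\gamma,\beta}_\text{QAOA} = \min_{G \in \mathcal{G}} \alpha^{\gamma,\beta}_\text{QAOA}(G).
\end{equation*}
The first equality holds by definition since $\alpha_\text{QAOA}(G) = M_1(G)/\mc(G)$ where $M_1(G)$ is the expected cut value at the optimal parameters for $G$. Next, for any fixed $(\gamma,\beta)$ and any $G$, we trivially have $\alpha_\text{QAOA}(G) \geq \alpha^{\gamma,\beta}_\text{QAOA}(G)$, since the left-hand side is a maximum that includes $(\gamma,\beta)$ as a candidate. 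Taking the minimum over $G \in \mathcal{G}$ on both sides preserves the inequality, yielding
\begin{equation*}
\alpha_\text{QAOA} = \min_{G \in \mathcal{G}} \alpha_\text{QAOA}(G) \geq \min_{G \in \mathcal{G}} \alpha^{\gamma,\beta}_\text{QAOA}(G) = \alpha^{\gamma,\beta}_\text{QAOA}.
\end{equation*}

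Finally, since the left-hand side no longer depends on $(\gamma,\beta)$ and the inequality holds for every choice of those parameters, I take the maximum over $(\gamma,\beta)$ on the right-hand side to conclude $\alpha_\text{QAOA} \geq \max_{\gamma,\beta}\alpha^{\gamma,\beta}_\text{QAOA}$, as desired. There is no substantive obstacle here; the statement is essentially a tautological consequence of the max-min inequality together with the definition of $\alpha_\text{QAOA}$ as allowing per-instance parameter optimization, and the only care needed is in keeping the quantifiers straight (in particular, remembering that the $\min$ over $G$ is the outer operation in both expressions, while the $\max$ over $(\gamma,\beta)$ is inner for $\alpha_\text{QAOA}$ and outer for $\max_{\gamma,\beta}\alpha^{\gamma,\beta}_\text{QAOA}$).
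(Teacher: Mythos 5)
Your proposal is correct and follows essentially the same route as the paper: the paper invokes the max-min inequality (Equation \ref{eqn:minMaxInequality}) directly on $\min_{G}\max_{\gamma,\beta}\frac{F_{R(G),S(G),T(G)}(\gamma,\beta)}{\mc(G)}$, whereas you unroll the standard proof of that inequality in this specific instance (fix $(\gamma,\beta)$, bound pointwise, take the min over $G$, then the max over $(\gamma,\beta)$). The two arguments are the same in substance, and your quantifier bookkeeping is sound.
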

\begin{proof}
Unpacking definitions and utilizing the Equation \ref{eqn:minMaxInequality} yields,
\begin{align*}\alpha_\text{QAOA} &= \min_{G \in \mathcal{G}} \alpha_\text{QAOA}(G) \\
&=  \min_{G \in \mathcal{G}}  \frac{\max_{\gamma,\beta} F_{R(G),S(G),T(G)}(\gamma,\beta)}{\mc(G)} \\
&\geq  \max_{\gamma,\beta} \min_{G \in \mathcal{G}} \frac{F_{R(G),S(G),T(G)}(\gamma,\beta)}{\mc(G)} \\
& = \max_{\gamma,\beta}\min_{G \in \mathcal{G}} \alpha^{\gamma,\beta}_\text{QAOA}(G)  \\
&= \max_{\gamma,\beta} \alpha^{\gamma,\beta}_\text{QAOA}.
\end{align*}
\end{proof}

We remark that for a fixed choice of $\gamma$ and $\beta$, the inner minimization $\min_{(r,s,t) \in \mathcal{P}} \frac{\frac{1}{m}F_G(\gamma,\beta)}{1-s-t}$ of the max-min problem is computationally easy to solve exactly. To see this, we first expand $\frac{1}{m}F_G(\gamma,\beta)$ using Equation \ref{eqn:expectedCutAsFunctionOfGraphStructures} to get the equivalent minimization:
$$\min_{(r,s,t) \in \mathcal{P}} \frac{sf_{g_4}(\gamma,\beta) + (4s+3t)f_{g_5}(\gamma,\beta) + rf_{g_6}(\gamma,\beta)}{1-s-t}.$$

Rearranging the terms yields:
$$\min_{(r,s,t) \in \mathcal{P}} \frac{\Big(f_{g_4}(\gamma,\beta) + 4f_{g_5}(\gamma,\beta)\Big)s + \Big(3f_{g_5}(\gamma,\beta)\Big)t + \Big(f_{g_6}(\gamma,\beta)\Big)r}{1-s-t},$$

in other words, for a fixed choice of $\gamma$ and $\beta$, the coefficients in front of $r,s,t$ in the numerator of the above equation are simply constants. The above formulation of the minimization is in the form of a linear-fractional program (LFP) as the numerator and denominator are both linear functions in the variables being optimized and the constraints amongst $r,s,t$ are linear. LFPs are a generalization of linear programming (LP). Following the standard Charnes-Cooper transformation \cite{charnesCooper62}, in order to solve the inner-minimization, we calculate $f_{g_4}(\gamma,\beta), f_{g_5}(\gamma,\beta), f_{g_6}(\gamma,\beta)$ to find the coefficients for numerator of the LFP, we then transform the above LFP into an LP, solve the LP exactly, and map the solutions of the LP back to the LFP.

Let $\mathbf{Alg}_\text{max-min}$ be the optimal value obtained from numerically calculating the max-min problem for QAOA, using a grid and/or local search for the outer maximization, and solving the inner minimization exactly via the Charnes-Cooper transformation. Assuming that $f_{g_4}(\gamma,\beta), f_{g_5}(\gamma,\beta), f_{g_6}(\gamma,\beta)$ can be calculated exactly for any choice of $\gamma,\beta$, we have the following inequality:
\begin{equation}\label{eqn:maxMinDiscretizationInequality}\mathbf{Alg}_\text{max-min} \leq \max_{\gamma,\beta} \min_{(r,s,t) \in \mathcal{P}} \frac{\frac{1}{m}F_G(\gamma,\beta)}{1-s-t},\end{equation}
which approaches equality as the grid gets increasingly fine.

The above inequality is important as it ensures that the result of our numerical optimization does not accidentally overestimate the true approximation ratio of QAOA; we formally state this in Theorem \ref{thm:underestimateApproxRatio} below.

\begin{theorem}
\label{thm:underestimateApproxRatio}
    The following inequalities hold:
    $$\alpha_\text{QAOA} \geq \max_{\gamma,\beta}\alpha^{\gamma,\beta}_\text{QAOA} \geq \max_{\gamma,\beta} \min_{(r,s,t) \in \mathcal{P}} \frac{\frac{1}{m}F_G(\gamma,\beta)}{1-s-t} \geq \mathbf{Alg}_\text{max-min}.$$
\end{theorem}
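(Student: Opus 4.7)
The plan is to observe that each link in the three-part chain has already been argued individually in the preceding text, so the proof reduces to assembling these components and pointing to where they came from.

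First, I would note that the inequality $\alpha_\text{QAOA} \geq \max_{\gamma,\beta}\alpha^{\gamma,\beta}_\text{QAOA}$ is exactly the content of Proposition~\ref{thm:approxRatioFixedParams}; its proof simply unpacks the definitions of $\alpha_\text{QAOA}$ and $\alpha^{\gamma,\beta}_\text{QAOA}$, writes the instance-specific ratios as $\max_{\gamma,\beta}F_{R(G),S(G),T(G)}(\gamma,\beta)/\mc(G)$, and swaps a minimum over $G$ past the outer $\max$ using the generic max-min inequality~\eqref{eqn:minMaxInequality}.

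Second, the middle inequality $\max_{\gamma,\beta}\alpha^{\gamma,\beta}_\text{QAOA} \geq \max_{\gamma,\beta} \min_{(r,s,t) \in \mathcal{P}} \frac{\frac{1}{m}F_{r,s,t}(\gamma,\beta)}{1-s-t}$ is the content of Equation~\eqref{eqn:fixedParamsARMoreThanMaxMin}. The key observation there is that for any 3-regular graph $G$, the triple $(r(G),s(G),t(G))$ lies in $\mathcal{P}$ but $\mathcal{P}$ is a \emph{relaxation} which may also contain triples that do not arise from any graph; hence minimizing $(r,s,t)$ over $\mathcal{P}$ is at worst as small as minimizing over actual graph triples, and this inequality is preserved when one takes $\max_{\gamma,\beta}$ of both sides.

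Third, the tail inequality $\max_{\gamma,\beta}\min_{(r,s,t) \in \mathcal{P}} \frac{\frac{1}{m}F_{r,s,t}(\gamma,\beta)}{1-s-t} \geq \mathbf{Alg}_\text{max-min}$ is Equation~\eqref{eqn:maxMinDiscretizationInequality}. This holds because $\mathbf{Alg}_\text{max-min}$ performs the outer maximization only over a discrete grid and/or via local search, both of which are necessarily sub-optimal compared with the true supremum over $(\gamma,\beta) \in \mathbb{R}^2$, while the inner minimization is solved to optimality via the Charnes--Cooper transformation and hence contributes no additional slack. Chaining the three inequalities yields the theorem. There is no real obstacle here: the result is essentially a bookkeeping statement confirming that our numerical pipeline does not accidentally overestimate the true approximation ratio of QAOA. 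The only point requiring any care is making explicit that the grid/local-search strategy produces a one-sided approximation to the outer $\max$, so that we can confidently interpret $\mathbf{Alg}_\text{max-min}$ as a certified lower bound on $\alpha_\text{QAOA}$.
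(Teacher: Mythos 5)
Your proof is correct and matches the paper's own argument exactly: the paper also establishes the chain by citing Proposition~\ref{thm:approxRatioFixedParams}, Equation~\ref{eqn:fixedParamsARMoreThanMaxMin}, and Equation~\ref{eqn:maxMinDiscretizationInequality} for the three inequalities in turn. Your added explanations of why each link holds are consistent with the derivations given earlier in the text.
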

\begin{proof}
    The sequence of inequalities follows from Proposition \ref{thm:approxRatioFixedParams}, Equation \ref{eqn:fixedParamsARMoreThanMaxMin}, and Equation \ref{eqn:maxMinDiscretizationInequality} respectively.
\end{proof}

While the min-max optimization has the potential to produce a better lower bound for QAOA (see Equation \ref{eqn:minMaxInequalityStandardQAOA}), an inequality such as what is seen in Theorem \ref{thm:underestimateApproxRatio} cannot as easily be obtained as the inner optimization problem can no longer be solved exactly. In particular, due to the simultaneous over- and under-estimation effects of finding approximately optimal values of the outer and inner optimization problems (respectively), one can not readily determine if the overall result of the numerical optimization is an underestimate or an overestimate of the true objective value of the min-max problem for QAOA.

For our numerical optimization of the max-min problem, for the outer maximization, we performed a uniform $20 \times 20$ grid-search over the parameter search space $(\gamma,\beta) \in [0, \pi]^2 \times [-\pi/4, \pi/4]$ (see \cite{zhou2020quantum} regarding the periodicity of the variational parameters on regular graphs); afterwards, we used the \texttt{optimize.fmin} function provided by \texttt{Scipy} \cite{2020SciPy-NMeth} initialized on the best grid point. We solve the inner minimization exactly using the Charnes-Cooper transformation described earlier. With this optimization procedure, we obtain $\mathbf{Alg}_\text{max-min} \geq 0.6924$ with optimal variational parameters $(\gamma,\beta) = (\gamma^*, \beta^*)$ with $\gamma^* = 0.616, \beta^* = 0.393$. This means that using these values of $\gamma^*,\beta^*$, for \emph{any} instance of a 3-regular graph $G$, standard QAOA will return a cut with at least $0.6924\cdot \mc(G)$ edges \emph{without having to further optimize $\gamma$ or $\beta$}.

Surprisingly, the value of $0.6924$ is \emph{also} the value obtained by Farhi et al. \cite{farhi2014quantum} for their numerical optimization of the min-max optimization problem. In particular, they numerically find that the min-max problem is minimized at $s=t=0$; if one assumes these are the true minimizers, then Farhi et al.'s result is consistent with standard QAOA's approximation ratio on \emph{triangle-free} regular graphs which is known to be $\frac{1}{2}+\frac{1}{3\sqrt{3}} \approx 0.6924$ due to Wang et al. \cite{wang2018quantum}. This seems to suggest that 

$$\min_{(r,s,t)\in \mathcal{P}} \max_{\gamma,\beta} \frac{\frac{1}{m}F_G(\gamma,\beta)}{1-s-t}  =  \max_{\gamma,\beta} \min_{(r,s,t) \in \mathcal{P}} \frac{\frac{1}{m}F_G(\gamma,\beta)}{1-s-t},$$

and possibly that,

$$\alpha_\text{QAOA} = \max_{\gamma,\beta} \alpha^{\gamma,\beta}_\text{QAOA};$$

however, we do not have an analytical proof of such a claim.

\sloppy
All of the above results also applies to warm-starts and the min-max and max-min problems of {$ \min_{(\mathbf{r},\mathbf{s},\mathbf{t}) \in \mathcal{P(\kappa)}} \max_{\gamma,\beta} \frac{\frac{1}{m}F_{\mathbf{r},\mathbf{s},\mathbf{t}}(\gamma,\beta,\theta)}{1-s-t}$} and $\max_{\gamma,\beta} \min_{(\mathbf{r},\mathbf{s},\mathbf{t}) \in \mathcal{P(\kappa)}}  \frac{\frac{1}{m}F_{\mathbf{r},\mathbf{s},\mathbf{t}}(\gamma,\beta,\theta)}{1-s-t}$ respectively (and also the corresponding min-max and max-min problems where $\mathcal{P}(\kappa)$ is replaced with $\mathcal{P}'(\kappa)$ and $\mathbf{s}$ and $\mathbf{t}$ are replaced with $\mathbf{0}$). In particular, letting $\mathbf{Alg}_\text{max-min}(\theta,\kappa)$ and $\mathbf{Alg}'_\text{max-min}(\theta,\kappa)$ denote the output of numerically optimizing $\max_{\gamma,\beta} \min_{(\mathbf{r},\mathbf{s},\mathbf{t}) \in \mathcal{P(\kappa)}}  \frac{\frac{1}{m}F_{\mathbf{r},\mathbf{s},\mathbf{t}}(\gamma,\beta,\theta)}{1-s-t}$ and $\max_{\gamma,\beta} \min_{\mathbf{r} \in \mathcal{P'(\kappa)}}  \frac{1}{m}F_{\mathbf{r},\mathbf{0},\mathbf{0}}(\gamma,\beta,\theta)$ where the inner minimization is solved exactly using the Charnes-Cooper transformation discussed earlier, then Theorem \ref{thm:underestimateApproxRatio} can be extended to warm-started QAOA as follows.

\begin{theorem}
\label{thm:underestimateApproxRatioWarm}
    If there exists an algorithm of type $\mathcal{A}_\kappa$, then for all $\theta\in \mathbb{R}$, the following inequalities hold:
    $$\alpha_{(\mathcal{A}_\kappa+\text{QAOA})_\theta} \geq \max_{\gamma,\beta}\alpha^{\gamma,\beta}_{(\mathcal{A}_\kappa+\text{QAOA})_\theta} \geq \max_{\gamma,\beta} \min_{(\mathbf{r},\mathbf{s},\mathbf{t}) \in \mathcal{P(\kappa)}}  \frac{\frac{1}{m}F_{\mathbf{r},\mathbf{s},\mathbf{t}}(\gamma,\beta,\theta)}{1-s-t} \geq \mathbf{Alg}_\text{max-min}(\theta,\kappa).$$
    Similarly, if there exists an algorithm of type $\mathcal{A}'_\kappa$, then for all $\theta\in \mathbb{R}$, the following inequalities hold:
    $$\alpha_{(\mathcal{A}'_\kappa+\text{QAOA})_\theta} \geq \max_{\gamma,\beta}\alpha^{\gamma,\beta}_{(\mathcal{A}'_\kappa+\text{QAOA})_\theta} \geq \max_{\gamma,\beta} \min_{\mathbf{r} \in \mathcal{P'(\kappa)}}  \frac{1}{m}F_{\mathbf{r},\mathbf{0},\mathbf{0}}(\gamma,\beta,\theta) \geq \mathbf{Alg}'_\text{max-min}(\theta,\kappa).$$
\end{theorem}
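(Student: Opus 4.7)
The plan is to mirror the three-step chain behind Theorem \ref{thm:underestimateApproxRatio}, adapting each piece to the warm-start setting. For the first inequality I would write
\[
\alpha_{(\mathcal{A}_\kappa+\text{QAOA})_\theta}(G) \;=\; \sum_{b} \Pr\nolimits_{\mathcal{A}_\kappa}(G,b)\,\frac{\max_{\gamma,\beta} F^{(1)}(\gamma,\beta,\ket{b_\theta})}{\mc(G)},
\]
noting that the parameter optimization lives \emph{inside} the expectation over the random classical bitstring $b$ (each $b$ is allowed its own optimized parameters). For any fixed $(\gamma^\ast,\beta^\ast)$, the inner maximum pointwise dominates $F^{(1)}(\gamma^\ast,\beta^\ast,\ket{b_\theta})$, so $\alpha_{(\mathcal{A}_\kappa+\text{QAOA})_\theta}(G) \geq \alpha^{\gamma^\ast,\beta^\ast}_{(\mathcal{A}_\kappa+\text{QAOA})_\theta}(G)$; taking $\max_{\gamma,\beta}$ on the right, minimizing over $G$, and then applying Equation \ref{eqn:minMaxInequality} to swap $\min_G$ and $\max_{\gamma,\beta}$ gives the first inequality, exactly in the style of Proposition \ref{thm:approxRatioFixedParams}.

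For the second inequality I would combine the colored-subgraph decomposition with a denominator bound. For any fixed $(\gamma,\beta)$ and any admissible pair $(G,b)$ (with $G$ cubic and $b$ a possible output of $\mathcal{A}_\kappa$), Theorem \ref{thm:coloredSubgraphDecomposition} gives $F^{(1)}(\gamma,\beta,\ket{b_\theta}) = F_{\mathbf{R}(G,b),\mathbf{S}(G,b),\mathbf{T}(G,b)}(\gamma,\beta,\theta)$, while the vertex-disjointness of triangle cores across distinct graph structures gives $\mc(G) \leq m - S(G,b) - T(G,b)$. Rescaling by $m$ yields the pointwise bound
\[
\frac{F^{(1)}(\gamma,\beta,\ket{b_\theta})}{\mc(G)} \;\geq\; \frac{(1/m)\,F_{\mathbf{r}(G,b),\mathbf{s}(G,b),\mathbf{t}(G,b)}(\gamma,\beta,\theta)}{1-s(G,b)-t(G,b)}.
\]
The crucial structural step is that every admissible scaled count vector $(\mathbf{r}(G,b),\mathbf{s}(G,b),\mathbf{t}(G,b))$ actually lies in $\mathcal{P}(\kappa)$: non-negativity is automatic, the identity $5s+3t+r=1$ and the bound $4s+3t\leq 2/3$ come from the combinatorial counting of Section \ref{sec:graphStructures}, $r_1,r_6\leq 1/5$ is Proposition \ref{thm:constraintIndividual}, and $s+t+s_{1,2}+r_1+r_6\leq 1-\kappa$ is Proposition \ref{thm:sumOfBadSubgraphCounts}. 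Using the worst-case-over-$b$ bound $\alpha^{\gamma,\beta}_{(\mathcal{A}_\kappa+\text{QAOA})_\theta}(G) \geq \min_{b} F^{(1)}(\gamma,\beta,\ket{b_\theta})/\mc(G)$ from Section \ref{sec:warmstarted_QAOA_AR}, then minimizing over $G$ and taking $\max_{\gamma,\beta}$ on both sides, yields the second inequality.

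The third inequality is essentially by construction: the computation of $\mathbf{Alg}_\text{max-min}(\theta,\kappa)$ performs its outer maximization on a finite grid augmented by a local search, which can only under-estimate the true supremum over $(\gamma,\beta)$, while its inner minimization is solved \emph{exactly}, since for each fixed $(\gamma,\beta)$ the objective is a linear-fractional program in $(\mathbf{r},\mathbf{s},\mathbf{t})$ over a polytope, handled via the Charnes--Cooper transformation. The chain for $\mathcal{A}'_\kappa$ is identical after replacing $\mathcal{P}(\kappa)$ with $\mathcal{P}'(\kappa)$ and using triangle-freeness to force $\mathbf{s}=\mathbf{t}=\mathbf{0}$, which collapses $1-s-t$ to $1$ and restricts the inner problem to $\mathbf{r}$ alone. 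The subtlest step is the first one: one must be careful that the expectation over the random output of $\mathcal{A}_\kappa$ interacts correctly with the parameter optimization, so that the max-min swap can be applied to a clean worst-case-over-$G$ formulation; this is ultimately why the chain naturally factors through the $\alpha \geq \alpha'$ reduction before the polytope relaxation is invoked.
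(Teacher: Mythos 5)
Your proposal is correct and follows essentially the same route as the paper, whose proof of this theorem simply defers to the argument for Theorem \ref{thm:underestimateApproxRatio} (the chain through Proposition \ref{thm:approxRatioFixedParams}, the polytope relaxation, and the exactness of the inner Charnes--Cooper minimization). You have merely spelled out the warm-start-specific details the paper leaves implicit, in particular the passage through the worst-case-over-$b$ bound and the verification that the admissible count vectors lie in $\mathcal{P}(\kappa)$ via Propositions \ref{thm:constraintIndividual} and \ref{thm:sumOfBadSubgraphCounts}.
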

\begin{proof}
    The same arguments used to prove Theorem \ref{thm:underestimateApproxRatio} can be used for this theorem.
\end{proof}

For various values of $\theta$ and $\kappa$, we obtain $\mathbf{Alg}_\text{max-min}(\theta,\kappa)$ using the similar method used to obtained $\mathbf{Alg}_\text{max-min}$ for standard QAOA described earlier, i.e., we performed a $20 \times 20$ grid-search over the parameter search space $(\gamma,\beta) \in [0, \pi]^2$ (see Supplementary Materials regarding periodicity of the variational parameters in QAOA) and afterwards, we used the previously mentioned \texttt{optimize.fmin} to further improve from all locally maximal grid points\footnote{For a 2-dimensional grid, we say that a grid-point is a \emph{locally maximal grid point} if the function value at that grid point is strictly higher than the function value at any of the 8 neighboring grid points. Effectively, given a sufficiently fine grid, these locally maximal grid points correspond to grid points that are near local maximums of the functions (off the grid).} if  and return the best set of parameters found. We solve the inner minimization exactly using the Charnes-Cooper transformation described earlier. We also compared the values of $\mathbf{Alg}_\text{max-min}(\theta,\kappa)$ and $\mathbf{Alg}'_\text{max-min}(\theta,\kappa)$ to numerically obtained values of the corresponding min-max problems for warm-started QAOA and for all values of $\theta$ and $\kappa$ that we tested, we found that these values were nearly equal to one another, suggesting that 

\begin{equation}\label{eqn:maxMinEqualsMinMaxWarm}\max_{\gamma,\beta} \min_{(\mathbf{r},\mathbf{s},\mathbf{t}) \in \mathcal{P(\kappa)}}  \frac{\frac{1}{m}F_{\mathbf{r},\mathbf{s},\mathbf{t}}(\gamma,\beta,\theta)}{1-s-t} \approx  \min_{(\mathbf{r},\mathbf{s},\mathbf{t}) \in \mathcal{P(\kappa)}}  \max_{\gamma,\beta} \frac{\frac{1}{m}F_{\mathbf{r},\mathbf{s},\mathbf{t}}(\gamma,\beta,\theta)}{1-s-t},\end{equation}
and
\begin{equation}\label{eqn:maxMinEqualsMinMaxWarm2}\max_{\gamma,\beta} \min_{\mathbf{r} \in \mathcal{P'(\kappa)}}  \frac{1}{m}F_{\mathbf{r},\mathbf{0},\mathbf{0}}(\gamma,\beta,\theta) \approx  \min_{\mathbf{r} \in \mathcal{P'(\kappa)}}  \max_{\gamma,\beta} \frac{1}{m}F_{\mathbf{r},\mathbf{0},\mathbf{0}}(\gamma,\beta,\theta);\end{equation}

however, we do not have an analytical proof of such claims. Note that for warm-started QAOA, solving the min-max problem is much more difficult since the inner optimization can no longer be solved exactly and the outer optimization requires searching a 17-dimensional space meaning that a fine grid search is computationally impractical. Thus, it may potentially be the case that the minimizers and maximizers that we numerically obtained for the min-max problem (for warm-started QAOA) are not the true minimizers and maximizers and thus, the reader should consider Equations \ref{eqn:maxMinEqualsMinMaxWarm} and \ref{eqn:maxMinEqualsMinMaxWarm2} with a grain of salt.

    \section{Discussion and Conclusion}
    \label{sec:discussionConclusion}
    In this work, we considered warm-started \mc{} QAOA on 3-regular graphs, parameterized by an initialization angle $\theta$, where the warm-start state is generated from a single (classically-obtained) cut that is locally optimal with respect to single bitflips. This restriction on the warm-start improves the approximation ratios that are possible (compared to warm-starts generated from arbitrary cuts) while also greatly simplifying the analysis due to the reduced number of possible colored edge neighborhoods and graph structures. Moreover, this work shows that if the classical algorithm used to generate the warm-start has guarantees on the fraction of the total number of edges across the cuts it produces (compared to the total number of edges in the graph), then QAOA warm-started with such cuts yields increasingly better approximation ratios as the fraction of edges cut increases. By adding restrictions to the 3-regular instances allowed, certain algorithms (such as those in \cite{bondy1986largest} and \cite{zhu2009bipartite}), cut a larger fraction of the total number of edges, yielding better approximation ratios for warm-started QAOA.

 %
Our lower bounds on the approximation ratio $\mathbf{LB}_\kappa(\theta)$ and $\mathbf{LB}'_\kappa(\theta)$ never exceeded that of standard QAOA or the corresponding classical algorithm; in other words, for the values of $\theta,\kappa$ considered in this work, we observed that $\mathbf{LB}_\kappa(\theta) \leq \max(\mathbf{LB}_\kappa(0), \mathbf{LB}_\kappa(\pi/2)).$ This suggests that, in the worst-case, warm-started \mc{} QAOA on 3-regular graphs has no provable advantage compared to standard QAOA or simply performing the classical algorithm used to obtain the warm-start, at least at depth $p=1$. It is still unclear if an advantage can be gained by considering higher depths.

    We would also like to remind the reader that the bounds on the approximation ratio were obtained by solving a max-min optimization problem (i.e. a nested optimization problem with an outside maximization and inside minimization). This is was done since (as is discussed in Section \ref{sec:optimizationDetails}) the max-min optimization is easier to solve in a sense (compared to the corresponding min-max problem) and any numerically obtained objective values are a lower bound on the true optimal objective value (meaning we do not accidentally overestimate the approximation ratio). Moreover, for each initialization angle $\theta$, solving the max-min problem yields optimal $\theta$-dependent variational parameters $\gamma^*(\theta), \beta^*(\theta)$. These variational parameters only need to be calculated \emph{once} (for each value of $\theta$); afterwards, for any instance, these same variational parameters can be used in the warm-started QAOA circuit and, \emph{without any further optimization of $\gamma$ and $\beta$}, the instance-specific approximation ratio obtained would be at least the bound determined by the max-min optimization. Theoretically, the min-max optimization problem would yield bounds on the approximation ratios that were at least as good as the corresponding max-min optimization; however, for various values of $\theta$ and $\kappa$, we found (numerically) that the bounds obtained from both the max-min and min-max optimizations were in fact equal. In future work, it would be interesting to see if this fact can be extended to higher depth circuits or to other ansatz or combinatorial optimization problems; moreover, an analytical proof of such a phenomena would be highly valued.

    We remark that the behavior of the approximation ratio curves with changing initialization angle $\theta$ as seen in Figure \ref{fig:plotAR_varyingConstraints} may be due to the structure of 3-regular graphs. In the work of Egger et al., they test the same kind of warm-start in this work on 10 weighted complete graphs with 30 nodes and for such graphs, the median (instance-specific) approximation ratio decreases nearly monotonically with increased initialization angle. Interestingly, the shape of our approximation ratio curves looks similar to another curve obtained by Egger et al. (Figure 7 of \cite{egger2021warm}) where a modified mixer is used on the weighted complete graphs previously mentioned; both curves have a peak\footnote{Instead of using a notation of initialization angle $\theta$ as is done in this work, Egger et al. \cite{egger2021warm} use a notion of regularization parameter $\varepsilon \in [0,0.5]$. The relationship between these is given by $\theta = 2\arcsin(\sqrt{\varepsilon}).$} at $\theta = 60^\circ$ due to both kinds of warm-started QAOA being able to recover the warm-start cut with a suitable choice of variational parameters.

    In general, the results of this work suggest $\theta = 60^\circ$ will usually be a suitable choice for warm-started QAOA since it is able to recover the cut (with a specific choice of variational parameters) and for some instances, there are potentially different choice of variational parameters could potentially yield a better cut; as long as the cut satisfies $\cut(b)/\mc(G) > 0.6924$, then warm-started QAOA at $\theta=60^\circ$ is guaranteed to yield a better approximation ratio than standard QAOA. For future work, it would be interesting to see if $\theta=60^\circ$ is a good choice for typical instances aside from the worst-case; more generally, it would be interesting to see how various graph properties affect the behavior of warm-started QAOA.

    Comparing the bounds on the approximation ratio for depth-0 and depth-1 warm-started QAOA, it appears that there is very little change when the initialization angle $\theta$ is near zero. While this work only considers warm-starts that are a fixed angle from the poles of the Bloch sphere, we anticipate that this phenomenon more generally applies to the case where each qubits can have varying qubit positions from one another but are still within some angle away from the poles of the Bloch sphere.

    It is still not clear how the analysis of the relatively-simple warm-starts considered in this work can be extended to complex warm-starts such as those proposed in \cite{tate2023bridging} and \cite{Tate2023warmstartedqaoa}. However, the techniques in this work, can be potentially extended to more general warm-starts in a few ways. First, instead of forcing the two initial qubit positions to be the same angle from the poles, we could consider other relative positions for the qubits. Second, we could consider $k>2$ initial qubit positions which would correspond to $k$-colorings of the edge neighborhoods; for example, with $k=3$ qubit positions, we could consider two qubit positions some angle from the poles with the third position being at the equator. Note that increasing $k$ causes the number of colored edge-neighborhoods to increase dramatically meaning the analysis required would be more computationally intensive.  Third, for each initial qubit position, the mixer is defined as a rotation about the axis defined by that position; however, we could instead consider rotation about some different axis (as long as the axis chosen is the same for qubits at the same initial position, otherwise more colors would be needed to color the edge neighborhoods). Such unaligned mixers will break QAOA's connection to quantum adiabatic computing (which guarantees that QAOA will reach the optimal solution with large enough circuit depth and correct choice of parameters), but perhaps for low-depth QAOA, better guarantees can be found by dropping such a connection (although there is evidence that suggests that it is best to use aligned mixers \cite{he2023alignment}). Warm-started QAOA with other mixers and phase separators, such as the those proposed in \cite{wang2020x,bartschi2020grover,golden2021threshold} may be interesting to consider; however, it is not immediately clear if a decomposition result similar to Theorem \ref{thm:coloredSubgraphDecomposition} can be obtained with such mixers since the usual commutativity arguments would no longer hold. Finally, instead of coloring vertices in the colored edge-neighborhoods, one could instead color the edges; doing so would provide a method of analyzing standard \mc{} QAOA on graphs whose edge weights come from a small set of possible weights.

    This work carefully studied depth-1 warm-started \mc{} QAOA on 3-regular graphs; however, the techniques used in this work could in principle be extended to any fixed depth and graphs with bounded degree. Unfortunately, an increase in the circuit depth and/or graph degree causes the number of colored edge-neighborhoods to explode, so an analysis with high circuit depth or large bounded degree would be difficult with the techniques used in this work.

\section{CRediT Authorship Contribution Statement}

\textbf{Reuben Tate:} Conceptualization, Methodology, Software, Formal analysis, Investigation, Writing - Original Draft, Writing - Review \& Editing, Visualization.   \textbf{Stephan Eidenbenz:} Supervision, Writing - Review \& Editing.

\bibliographystyle{elsarticle-num}
\bibliography{references}

\appendix

\section{Bit-Flip QAOA Symmetry}
    \label{sec:bitFlipSymmetry}

    We prove Theorem \ref{thm:bitflipSameExpectedCut}, which we restate here for convenience.
    \begin{theorem}
    Let $G=(V,E)$ be a graph. Let $b \in \{0,1\}^n$ be a bitstring, let $\bar{b}$ be its bitwise-negation, let $\theta \in \RR$, let $\gamma$ and $\beta$ be any choice of variational parameters, and let $p$ be any non-negative circuit depth. Then, for the \MC{} problem on $G$, and for both choices of initialization ($\ket{b_\theta}$ and $\ket{\bar{b}_\theta}$), we have that the expected cut size obtained by warm-started QAOA is the same, i.e., 
    $$F^{(p)}(\gamma,\beta, \ket{b_\theta}) = F^{(p)}(\gamma, \beta, \ket{\bar{b}_\theta}).$$

\end{theorem}
    Let $$\textbf{X} = \prod_{j=1}^n X_j.$$
    
    In order to prove bit-flip symmetry result in the main paper, we first make the following key observation (Lemma \ref{thm:bitflipPauliX}) regarding the relationship between $\ket{b_\theta}$ and $\ket{\bar{b}_\theta}$.
    \begin{lemma}
    \label{thm:bitflipPauliX}
    Let $b \in \{0,1\}^n$ be a bitstring and let $\theta \in \RR$. Then,
    $$\ket{b_\theta} = \textbf{X}\ket{\bar{b}_\theta} \quad \text{and} \quad \ket{\bar{b}_\theta} = \textbf{X}\ket{b_\theta}.$$
    \end{lemma}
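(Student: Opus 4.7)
The plan is to reduce the claim to a single-qubit identity and then tensor it up. First I would verify directly from the definitions of $\ket{0_\theta}$ and $\ket{1_\theta}$ that the Pauli $X$ operator swaps them, namely
\begin{equation}
X\ket{0_\theta} = X\bigl(\cos(\theta/2)\ket{0}+\sin(\theta/2)\ket{1}\bigr) = \sin(\theta/2)\ket{0}+\cos(\theta/2)\ket{1} = \ket{1_\theta},
\end{equation}
and symmetrically $X\ket{1_\theta} = \ket{0_\theta}$. Equivalently, for any bit $c \in \{0,1\}$, $X\ket{c_\theta} = \ket{(\bar{c})_\theta}$, where $\bar{c} = 1-c$.

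Next I would lift this to the $n$-qubit product state using the definition \eqref{eqn:bitstringWarmstartState}. Since $\mathbf{X} = \prod_{j=1}^n X_j$ acts as a single Pauli $X$ on each tensor factor, we get
\begin{equation}
\mathbf{X}\ket{\bar{b}_\theta} = \bigotimes_{j=1}^n X_j \ket{(\bar{b}_j)_\theta} = \bigotimes_{j=1}^n \ket{(b_j)_\theta} = \ket{b_\theta},
\end{equation}
which proves the first equality. The second equality follows by the same computation with $b$ and $\bar{b}$ interchanged (noting $\bar{\bar{b}} = b$), or alternatively by applying $\mathbf{X}$ to both sides of the first equality and using $\mathbf{X}^2 = I$.

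There is essentially no obstacle here beyond carefully unpacking the definitions; the entire content of the lemma is the single-qubit swap $X\ket{0_\theta} = \ket{1_\theta}$, which is immediate from the fact that $\ket{0_\theta}$ and $\ket{1_\theta}$ are defined as reflections of each other across the $x$-axis of the $xz$-plane of the Bloch sphere (and $X$ is precisely that reflection, acting as a $\pi$-rotation about the $x$-axis up to the usual global-phase convention). The tensor-product step is a one-line consequence of the product form of both $\ket{b_\theta}$ and $\mathbf{X}$.
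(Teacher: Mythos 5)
Your proof is correct and follows the same route as the paper: the paper likewise reduces the lemma to the single-qubit identities $X\ket{0_\theta} = \ket{1_\theta}$ and $X\ket{1_\theta} = \ket{0_\theta}$ and notes that the product structure does the rest. Your version simply spells out the one-line computation and the tensor-factor step that the paper leaves implicit.
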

    \begin{proof}
        This immediately follows from the fact that $X\ket{0_\theta} = \ket{1_\theta}$ and $X\ket{1_\theta} = \ket{0_\theta}$ (which can be easily checked using the definition of $\ket{0_\theta}$ and $\ket{1_\theta}$.
    \end{proof} 

    Next, in Lemma \ref{thm:bitflipRelationOnMixer}, we see how $\mathbf{X}$ influences the mixer $B_{\ket{b_\theta}}$.
    \begin{lemma}
    \label{thm:bitflipRelationOnMixer}
        Let $b \in \{0,1\}^n$ be a bitstring and let $\theta \in \RR$. Then,
        $$\mathbf{X}B_{\ket{b_\theta}} = B_{\ket{\bar{b}_\theta}}\mathbf{X} \quad \text{and} \quad \mathbf{X}B_{\ket{\bar{b}_\theta}} = B_{\ket{b_\theta}}\mathbf{X},$$
    \end{lemma}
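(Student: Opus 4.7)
The plan is to reduce this to a straightforward (anti)commutation computation using the explicit single-qubit form of $B_{\ket{b_\theta}}$ already derived earlier in the paper.

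First, I would recall from the proof of Lemma \ref{thm:commutatorRelation} that each qubit position $\vec{n}_j$ associated with $\ket{b_\theta}$ is $(\sin\theta, 0, (-1)^{b_j}\cos\theta)$, and hence
$$B_{\ket{b_\theta}} = \sum_{j=1}^n B_{\vec{n}_j,j} = \sum_{j=1}^n \Bigl(\sin(\theta)\,X_j + (-1)^{b_j}\cos(\theta)\,Z_j\Bigr).$$
Since bitwise negation flips the sign of the $Z_j$ coefficient, $B_{\ket{\bar{b}_\theta}}$ has exactly the same form with $(-1)^{b_j}$ replaced by $(-1)^{\bar b_j} = -(-1)^{b_j}$.

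Next, I would record the two key (anti)commutation facts for the global operator $\mathbf{X} = \prod_k X_k$. On one hand, $\mathbf{X}$ commutes with every single-qubit $X_j$, since distinct Pauli $X$'s commute. On the other hand, $\mathbf{X}$ anticommutes with every $Z_j$: in the product $\prod_k X_k$, each factor $X_k$ with $k \neq j$ commutes past $Z_j$, while the single factor $X_j$ anticommutes with $Z_j$, contributing one overall minus sign. Hence $\mathbf{X} X_j = X_j \mathbf{X}$ and $\mathbf{X} Z_j = -Z_j \mathbf{X}$ for every $j \in [n]$.

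Combining these, I would push $\mathbf{X}$ through $B_{\ket{b_\theta}}$ term by term:
$$\mathbf{X}\, B_{\ket{b_\theta}} \;=\; \sum_{j=1}^n \Bigl(\sin(\theta)\,X_j\mathbf{X} \;-\; (-1)^{b_j}\cos(\theta)\,Z_j\mathbf{X}\Bigr) \;=\; \Biggl(\sum_{j=1}^n \bigl(\sin(\theta)\,X_j + (-1)^{\bar b_j}\cos(\theta)\,Z_j\bigr)\Biggr)\mathbf{X} \;=\; B_{\ket{\bar{b}_\theta}}\,\mathbf{X}.$$
Swapping the roles of $b$ and $\bar b$ immediately yields $\mathbf{X} B_{\ket{\bar b_\theta}} = B_{\ket{b_\theta}}\mathbf{X}$, completing the proof.

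There is no real obstacle here; the only thing to be careful about is the sign bookkeeping, namely verifying that $-(-1)^{b_j} = (-1)^{\bar b_j}$ so that the conjugation by $\mathbf{X}$ exactly implements the bitwise negation at the level of mixers. Everything else is immediate from the explicit Pauli decomposition of $B_{\ket{b_\theta}}$ and the commutation relations of the Pauli group.
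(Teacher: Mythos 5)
Your proof is correct and follows essentially the same route as the paper's: both decompose $B_{\ket{b_\theta}}$ into its $X_j$ and $Z_j$ components, use that $\mathbf{X}$ commutes with each $X_j$ and anticommutes with each $Z_j$, and observe that the resulting sign flip on the $Z$ coefficients is exactly the passage from $b$ to $\bar b$. The only cosmetic difference is that you substitute the explicit coordinates $(\sin\theta,0,(-1)^{b_j}\cos\theta)$ whereas the paper works with generic $(x_j,0,z_j)$ and notes that negation sends $z_j \mapsto -z_j$; the computation is identical.
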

    \begin{proof}
       Let $(x_j, y_j, z_j)$ be the Cartesian coordinates of the $j$th qubit's position in $\ket{b_\theta}$. By construction, we have that $y_j = 0$ for all $j$. Moreover, also by construction, we have that the Cartesian coordinates of the $j$th qubit's position in $\ket{\bar{b}_\theta}$ is given by $(x_j, 0, -z_j)$. Thus,
       \begin{align*}
           \mathbf{X}B_{\ket{b_\theta}} & = \mathbf{X}\left(\sum_{j=1}^n x_jX_j+z_jZ_j\right) \\
            &=\sum_{j=1}^n x_j\mathbf{X}X_j+z_j\mathbf{X}Z_j \\
            &=\sum_{j=1}^n x_jX_j\mathbf{X}-z_jZ_j\mathbf{X} \\
            &= \left(\sum_{j=1}^n x_jX_j-z_jZ_j\right)\mathbf{X} \\
            &= B_{\ket{\bar{b}_\theta}}\mathbf{X}.
       \end{align*}

    The second equality $\mathbf{X}B_{\ket{\bar{b}_\theta}} = B_{\ket{b_\theta}}\mathbf{X}$ follows similarly.
    \end{proof}

    Next, if $C$ is the \MC{} Hamiltonian, the relation in Lemma \ref{thm:bitflipRelationOnCostHam} is well-known (we refer the reader to \cite{shaydulin2021classical} for a more general overview of the effects of symmetry on QAOA). For the sake of completion, we provide a proof.
    \begin{lemma}
    \label{thm:bitflipRelationOnCostHam}
    Let $G=(V,E)$ be a graph and let $C$ be its corresponding cost Hamiltonian for \MC. Then,
    $$\mathbf{X}C = C\mathbf{X}.$$
    \end{lemma}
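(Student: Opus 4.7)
The plan is to prove the commutation relation $\mathbf{X}C = C\mathbf{X}$ edge-by-edge, using the decomposition $C = \sum_{e \in E} C_e$ with $C_e = \frac{1}{2}(I - Z_i Z_j)$ for $e = \{i,j\}$ (Equation \ref{eqn: costSingleEdge}). By linearity, it suffices to show that $\mathbf{X}$ commutes with each $C_e$, which in turn reduces to showing that $\mathbf{X}$ commutes with $Z_i Z_j$ for every pair $\{i,j\} \in E$.

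The key single-qubit facts are the standard Pauli relations: $X_k Z_k = -Z_k X_k$ for each $k$, and $X_k Z_\ell = Z_\ell X_k$ whenever $k \neq \ell$ (since the operators act on distinct qubits). First I would isolate the two relevant factors of $\mathbf{X}$, namely $X_i$ and $X_j$, and push them through $Z_i Z_j$: each anticommutation introduces a minus sign, so $X_i X_j (Z_i Z_j) = (X_i Z_i)(X_j Z_j) = (-1)^2 (Z_i X_i)(Z_j X_j) = Z_i Z_j (X_i X_j)$. All other factors $X_k$ (with $k \notin \{i,j\}$) commute with $Z_i Z_j$ trivially, so they can be slid past without any sign change. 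Consequently $\mathbf{X}(Z_i Z_j) = (Z_i Z_j)\mathbf{X}$, and since $\mathbf{X}$ commutes with $I$, we conclude $\mathbf{X} C_e = C_e \mathbf{X}$.

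Summing over $e \in E$ yields $\mathbf{X} C = C \mathbf{X}$, as desired. There is no real obstacle here — the entire argument rests on the observation that each edge term contains an \emph{even} number of $Z$'s, so the two sign flips from commuting the corresponding $X$'s past them cancel. This is exactly the bit-flip symmetry of Max-Cut lifted to the operator level: flipping all bits leaves every edge's cut indicator unchanged.
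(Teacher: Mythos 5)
Your proof is correct and follows essentially the same route as the paper: reduce to the single-edge terms $C_e$ by linearity, then observe that commuting $\mathbf{X}$ past $Z_iZ_j$ produces two sign flips (one for each of $X_i$, $X_j$) that cancel, while all other factors of $\mathbf{X}$ commute trivially. The paper carries out exactly this computation, just written as a chain of equalities on $\mathbf{X}(I - Z_uZ_v)$.
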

    \begin{proof}
        It suffices to prove that $\mathbf{X}C_e = C_e\mathbf{X}$ for all $e \in E$, the result then follows by linearity as $C = \sum_{e \in E}C_e$.

        Let $e = (u,v) \in E$. Then,
        \begin{align*}
            \mathbf{X}C_e &= \mathbf{X}(I-Z_uZ_v)\\
            &= \mathbf{X} - \mathbf{X}Z_uZ_v \\
            &= \mathbf{X} + Z_u\mathbf{X}Z_v \\
            &= \mathbf{X} - Z_uZ_v\mathbf{X}\\
            &= (I - Z_uZ_v)\mathbf{X}\\
            &= C_e\mathbf{X}.
        \end{align*}
    \end{proof}
    Lemma \ref{thm:bitflipRelationOnCostHam} intuitively makes sense classically since for a cut corresponding to $\ket{b}$ (where $b$ is a bitstring), we have that the cut given by $\ket{\bar{b}} = \mathbf{X}\ket{b}$ corresponds to simply swapping the partitions of the vertices (given by $b$) and hence $b$ and $\bar{b}$ cut the same number of edges.
    
    Lemma \ref{thm:extendToUnitaryCommutation} (below) together with Lemmas \ref{thm:bitflipRelationOnMixer} and \ref{thm:bitflipRelationOnCostHam}, give us Corollary \ref{thm:bitflipAppliedToUnitaries} which shows how $\mathbf{X}$ influences the corresponding unitary operations $U(C, \gamma)$ and $U(B_{\ket{b_\theta}}, \beta)$.
    \begin{lemma}
    \label{thm:extendToUnitaryCommutation}
        Let $A,B,B'$ be square matrices of the same size and let $\gamma \in \RR$. If $AB = B'A$, then $AU(B,\gamma) = U(B', \gamma)A$.
    \end{lemma}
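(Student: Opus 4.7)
The plan is to leverage the power series definition of the matrix exponential. Specifically, I would expand $U(B, \gamma) = e^{-i\gamma B} = \sum_{k=0}^\infty \frac{(-i\gamma)^k}{k!} B^k$ and push the factor of $A$ through the series term by term. The key supporting claim is that the hypothesis $AB = B'A$ extends to $AB^k = (B')^k A$ for every non-negative integer $k$.

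I would establish this extension by induction on $k$. The base case $k = 0$ is immediate since $A\cdot I = I \cdot A$. For the inductive step, assuming $AB^k = (B')^k A$, I would compute
\begin{equation*}
AB^{k+1} = (AB^k)B = (B')^k A B = (B')^k (B' A) = (B')^{k+1} A,
\end{equation*}
where the first use of the inductive hypothesis moves $A$ across $B^k$ and the second application of the hypothesis $AB = B'A$ moves $A$ across the remaining $B$.

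With the identity $AB^k = (B')^k A$ in hand, the result follows directly:
\begin{equation*}
A \, U(B, \gamma) = A \sum_{k=0}^\infty \frac{(-i\gamma)^k}{k!} B^k = \sum_{k=0}^\infty \frac{(-i\gamma)^k}{k!} AB^k = \sum_{k=0}^\infty \frac{(-i\gamma)^k}{k!} (B')^k A = U(B', \gamma) \, A.
\end{equation*}
Interchanging $A$ with the infinite sum is justified because $A$ is a fixed finite-dimensional matrix and the exponential series converges absolutely in any matrix norm, so the partial sums converge and $A$ (being a bounded linear operator) commutes with the limit.

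Honestly, there is no substantive obstacle here; the statement is a routine lifting of a linear intertwining relation to an analytic one. The only point requiring a bit of care is to keep the order of multiplication straight throughout the induction, since we are not assuming that $A$, $B$, or $B'$ commute with one another in any other way than the stated relation $AB = B'A$.
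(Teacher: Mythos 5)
Your proof is correct and follows essentially the same route as the paper: expand $U(B,\gamma)$ as its exponential power series, use $AB^k=(B')^kA$ (which the paper asserts and you prove by induction), and push $A$ through the series. Your version is slightly more careful in spelling out the induction and the justification for interchanging $A$ with the infinite sum, but the argument is the same.
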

    \begin{proof}
        First, observe that if $AB = B'A$, then for any non-negative integer $k$, we have that $AB^k = (B')^kA$. Thus,
        $$AU(B, \gamma) = A \sum_{k=1}^\infty \frac{(-i\gamma B)^k}{k!} =  \sum_{k=1}^\infty \frac{(-i\gamma)^k AB^k}{k!} $$$$= \sum_{k=1}^\infty \frac{(-i\gamma)^k (B')^kA}{k!} = \left(\sum_{k=1}^\infty \frac{(-i\gamma B')^k}{k!}\right)A = U(B',\gamma)A.$$
    \end{proof}
    \begin{corollary}
    \label{thm:bitflipAppliedToUnitaries}
     Let $G=(V,E)$ be a graph and let $C$ be its corresponding cost Hamiltonian for \MC. Let $b \in \{0,1\}^n$ be a bitstring and let $\gamma \in \RR$. Then
        $$\mathbf{X}U(C,\gamma) = U(C, \gamma)\mathbf{X}$$
        and
        $$\mathbf{X}U(B_{\ket{b_\theta}},\beta) = U(B_{\ket{\bar{b}_\theta}},\beta)\mathbf{X} \quad  \text{and} \quad \mathbf{X}U(B_{\ket{\bar{b}_\theta}},\beta) = U(B_{\ket{b_\theta}},\beta)\mathbf{X}.$$
    \end{corollary}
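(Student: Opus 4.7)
The plan is to apply Lemma \ref{thm:extendToUnitaryCommutation} three times, using the commutation relations already proved in Lemmas \ref{thm:bitflipRelationOnCostHam} and \ref{thm:bitflipRelationOnMixer} as inputs. All three statements have the same shape $\mathbf{X} U(M, t) = U(M', t)\mathbf{X}$ for some Hermitian $M, M'$, which is exactly the conclusion of Lemma \ref{thm:extendToUnitaryCommutation} with $A = \mathbf{X}$, $B = M$, $B' = M'$. So the entire content of the corollary is to verify the hypothesis $AB = B'A$ in each case.

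For the first identity, I would take $M = M' = C$; then the needed hypothesis $\mathbf{X}C = C\mathbf{X}$ is exactly Lemma \ref{thm:bitflipRelationOnCostHam}, so Lemma \ref{thm:extendToUnitaryCommutation} gives $\mathbf{X}U(C,\gamma) = U(C,\gamma)\mathbf{X}$. For the second identity, I would take $M = B_{\ket{b_\theta}}$ and $M' = B_{\ket{\bar{b}_\theta}}$; the needed relation $\mathbf{X} B_{\ket{b_\theta}} = B_{\ket{\bar{b}_\theta}} \mathbf{X}$ is the first half of Lemma \ref{thm:bitflipRelationOnMixer}. For the third identity, swap the roles of $b$ and $\bar b$ and invoke the second half of Lemma \ref{thm:bitflipRelationOnMixer}.

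There is no real obstacle here — the corollary is essentially a mechanical packaging of the previous three lemmas. The only subtle point to emphasize is that Lemma \ref{thm:extendToUnitaryCommutation} does not require $B$ and $B'$ to commute with each other, nor does it require $A$ to be invertible; it only needs the single intertwining relation $AB = B'A$, which is exactly what the preceding lemmas supply. Thus the proof is a three-line invocation with no further computation needed.
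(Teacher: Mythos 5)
Your proposal is correct and matches the paper's own proof, which likewise cites Lemmas \ref{thm:bitflipRelationOnMixer}, \ref{thm:bitflipRelationOnCostHam}, and \ref{thm:extendToUnitaryCommutation} and notes the result follows immediately. Your version simply spells out the three instantiations of $A$, $B$, $B'$ explicitly, which is a fine (and slightly more complete) presentation of the same argument.
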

    \begin{proof}
        This follows immediately from Lemmas \ref{thm:bitflipRelationOnMixer}, \ref{thm:bitflipRelationOnCostHam}, and \ref{thm:extendToUnitaryCommutation}. 
    \end{proof}

    Next, in Proposition \ref{thm:bitflitQAOAOutputStates}, we show that the output state of QAOA initialized with $\ket{b_\theta}$ and $\ket{\bar{b}_\theta}$ (and using aligned mixers) are nearly the same, just with all the bits being flipped.

    \begin{prop}
    \label{thm:bitflitQAOAOutputStates}
        Let $G=(V,E)$ be a graph with corresponding cost Hamiltonian $C$ for \MC. Let $b \in \{0,1\}^n$ be a bitstring, let $\theta \in \RR$, let $p$ be a circuit depth, and let $(\gamma,\beta)$ be any choice of variational parameters. Then,
        $$\ket{\psi_p(\gamma,\beta, \ket{b_\theta})} = \mathbf{X}\ket{\psi_p(\gamma,\beta, \ket{\bar{b}_\theta})}.$$
    \end{prop}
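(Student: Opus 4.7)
The plan is to push the global bitflip operator $\mathbf{X}$ from the left of $\ket{\psi_p(\gamma,\beta,\ket{\bar{b}_\theta})}$ all the way through the QAOA circuit and into the warm-start, using the commutation relations already established. All of the heavy lifting has been done in Lemmas \ref{thm:bitflipPauliX}, \ref{thm:bitflipRelationOnMixer}, \ref{thm:bitflipRelationOnCostHam}, and their consolidation in Corollary \ref{thm:bitflipAppliedToUnitaries}, so the proof reduces to a bookkeeping argument (equivalently, a short induction on $p$).

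First I would expand $\mathbf{X}\ket{\psi_p(\gamma,\beta,\ket{\bar{b}_\theta})}$ using the definition of the warm-started QAOA waveform from Equation \ref{eqn:warmStartedWaveform}, obtaining
$$\mathbf{X}\,U(B_{\ket{\bar{b}_\theta}},\beta_p)\,U(C,\gamma_p)\cdots U(B_{\ket{\bar{b}_\theta}},\beta_1)\,U(C,\gamma_1)\,\ket{\bar{b}_\theta}.$$
Then I would repeatedly apply Corollary \ref{thm:bitflipAppliedToUnitaries} to commute $\mathbf{X}$ rightward through each of the $2p$ unitaries one at a time. Each time $\mathbf{X}$ crosses a phase-separator unitary $U(C,\gamma_k)$, it passes through unchanged since $\mathbf{X}$ commutes with $U(C,\gamma)$. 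Each time $\mathbf{X}$ crosses a mixer unitary $U(B_{\ket{\bar{b}_\theta}},\beta_k)$, the mixer label toggles: the relation $\mathbf{X}\,U(B_{\ket{\bar{b}_\theta}},\beta_k) = U(B_{\ket{b_\theta}},\beta_k)\,\mathbf{X}$ turns it into the mixer aligned with $\ket{b_\theta}$.

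After $\mathbf{X}$ has been commuted past all $2p$ unitaries, the expression becomes
$$U(B_{\ket{b_\theta}},\beta_p)\,U(C,\gamma_p)\cdots U(B_{\ket{b_\theta}},\beta_1)\,U(C,\gamma_1)\,\mathbf{X}\ket{\bar{b}_\theta}.$$
Finally, Lemma \ref{thm:bitflipPauliX} gives $\mathbf{X}\ket{\bar{b}_\theta} = \ket{b_\theta}$, and the resulting product is precisely $\ket{\psi_p(\gamma,\beta,\ket{b_\theta})}$ by the same definition (Equation \ref{eqn:warmStartedWaveform}), completing the proof.

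There is no substantive obstacle: the only care required is tracking the alternating mixer/phase-separator structure so that the mixer label correctly updates from $\bar{b}$ to $b$ at each of the $p$ mixer layers, while the phase-separator layers are left untouched. A clean way to present this without writing out $2p$ substitutions is to give a one-line inductive step on $p$: assume the identity holds at depth $p-1$, prepend one extra $U(C,\gamma_p)$ followed by one extra $U(B_{\ket{s}},\beta_p)$ to both sides, and apply Corollary \ref{thm:bitflipAppliedToUnitaries} twice to move $\mathbf{X}$ across the two new unitaries; the base case $p=0$ is just Lemma \ref{thm:bitflipPauliX}.
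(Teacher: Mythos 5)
Your proposal is correct and follows essentially the same route as the paper: an induction on $p$ whose base case is Lemma \ref{thm:bitflipPauliX} and whose inductive step commutes $\mathbf{X}$ past one phase-separator layer and one mixer layer via Corollary \ref{thm:bitflipAppliedToUnitaries}, toggling the mixer label from $\bar{b}$ to $b$. No gaps; the unrolled ``push $\mathbf{X}$ through all $2p$ unitaries'' phrasing is just the same induction written out.
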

    \begin{proof}
        We proceed by induction on the circuit depth $p$. When $p=0$, we have,
        $$\ket{\psi_0(\gamma,\beta, \ket{b_\theta})} = \ket{b_\theta} = \mathbf{X}\ket{\bar{b}_\theta} =\mathbf{X}\ket{\psi_0(\gamma,\beta, \ket{\bar{b}_\theta})},$$
        where the middle equality holds as a result of Lemma \ref{thm:bitflipPauliX}. Let $\beta' = (\beta_1, \dots, \beta_{p-1})$ and let $\gamma' = (\gamma_1, \dots, \gamma_{p-1})$. Then for $p>0$,
        \begin{align*}
            \ket{\psi_p(\gamma,\beta, \ket{b_\theta})} &= U(B_{\ket{b_\theta}}, \beta_p)U(C, \gamma_p)\ket{\psi_{p-1}(\gamma',\beta', \ket{b_\theta})} \\
            &= U(B_{\ket{b_\theta}}, \beta_p)U(C, \gamma_p)\Big(\mathbf{X}\ket{\psi_{p-1}(\gamma',\beta', \ket{\bar{b}_\theta})}\Big) \tag{induction} \\
            &= U(B_{\ket{b_\theta}}, \beta_p)\mathbf{X}U(C, \gamma_p)\ket{\psi_{p-1}(\gamma',\beta', \ket{\bar{b}_\theta})} \tag{Corollary \ref{thm:bitflipAppliedToUnitaries}} \\
            &= \mathbf{X} U(B_{\ket{\bar{b}_\theta}}, \beta_p)U(C, \gamma_p)\ket{\psi_{p-1}(\gamma',\beta', \ket{\bar{b}_\theta})} \tag{Corollary \ref{thm:bitflipAppliedToUnitaries}} \\
            &= \mathbf{X}\ket{\psi_p(\gamma,\beta, \ket{\bar{b}_\theta})}.
        \end{align*}
    \end{proof}

    We now have everything needed to prove the result.
    \begin{proof}
        Observe,
        \begin{align*}
            F^{(p)}(\gamma,\beta, \ket{b_\theta}) &= \bra{\psi_p(\gamma,\beta, \ket{b_\theta})}C\ket{\psi_p(\gamma,\beta, \ket{b_\theta})} \\
            &=\bra{\psi_p(\gamma,\beta, \ket{b_\theta})}\mathbf{X}^\dagger C\mathbf{X}\ket{\psi_p(\gamma,\beta, \ket{\bar{b}_\theta})} \tag{Proposition \ref{thm:bitflitQAOAOutputStates}} \\
            &=\bra{\psi_p(\gamma,\beta, \ket{b_\theta})}\mathbf{X}^\dagger\mathbf{X} C\ket{\psi_p(\gamma,\beta, \ket{\bar{b}_\theta})} \tag{Lemma \ref{thm:bitflipRelationOnCostHam}} \\
            &= F^{(p)}(\gamma, \beta, \ket{\bar{b}_\theta}). \tag{$\mathbf{X}^\dagger \mathbf{X} = I$}
        \end{align*}
    \end{proof}

\section{Proof of Lemma on Regularity and Cost Hamiltonian}
\label{sec:proofOfRecoverCutTheorem}
We first prove a few other useful lemmas.

\begin{lemma}
    \label{thm:kRegularParityOdd}
    Let $G$ be a $k$-regular graph with $k$ odd and let $b$ be a bistring corresponding to a cut in $G$. Let $H(b)$ be the Hamming-weight of $b$. Then,
    $$H(b) = \cut(b) \pmod{2}.$$
\end{lemma}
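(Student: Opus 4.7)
The plan is to use a standard double-counting argument on the edges incident to the set $S = \{v \in V(G) : b_v = 1\}$. By definition, $H(b) = |S|$. First I would sum the degrees of the vertices in $S$, which gives exactly $k|S| = kH(b)$ since $G$ is $k$-regular. Each edge with both endpoints in $S$ is counted twice in this sum, while each edge with exactly one endpoint in $S$, i.e., each cut edge, is counted exactly once.

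Writing $E_S$ for the number of edges with both endpoints in $S$, this gives the identity
\begin{equation*}
kH(b) = 2 E_S + \cut(b).
\end{equation*}
Reducing modulo $2$ kills the $2 E_S$ term, yielding $kH(b) \equiv \cut(b) \pmod{2}$. Using the hypothesis that $k$ is odd, we have $k \equiv 1 \pmod{2}$, and therefore $H(b) \equiv \cut(b) \pmod{2}$, which is the desired conclusion.

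There is no real obstacle here; the argument is a one-line handshake lemma plus the parity assumption on $k$. The only subtlety worth flagging in the write-up is that the identity above is symmetric in $S$ and $V \setminus S$, so the same conclusion would follow from summing degrees over $V \setminus S$, which is consistent with the bit-flip symmetry $\cut(b) = \cut(\bar b)$ used elsewhere in the paper (compare Theorem \ref{thm:bitflipSameExpectedCut}).
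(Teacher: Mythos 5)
Your argument is correct. The identity $kH(b) = 2E_S + \cut(b)$ is exactly the degree-sum (handshake) count over $S = \{v : b_v = 1\}$: each internal edge of $S$ is counted twice and each cut edge once, so reducing modulo $2$ and using that $k$ is odd gives the claim immediately. This is a genuinely different route from the paper, which instead proceeds by induction on the Hamming weight $H(b)$: starting from $b = 0\cdots 0$, it flips one bit at a time and uses $\delta_{\text{in}} + \delta_{\text{out}} = k$ to show that a single flip changes $\cut(b)$ by $\delta_{\text{in}} - \delta_{\text{out}} \equiv k \equiv 1 \pmod 2$, matching the change of $1$ in $H(b)$. Your version is shorter and arguably more transparent; it also proves a stronger exact identity rather than just a parity statement, and as a bonus it yields the paper's companion lemma for even $k$ (that $\cut(b) \equiv 0 \pmod 2$) for free, since $kH(b) \equiv 0 \pmod 2$ in that case. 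The paper's induction has the mild advantage of reusing the same local "move one vertex across the cut" bookkeeping that appears elsewhere in its arguments, but nothing is lost by your approach.
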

\begin{proof}
    We proceed by induction on $H(b)$. When $H(b) = 0$, then $b = 0\dots 0$ and it is clear that $\cut(b) = 0$ and the result follows.

    Now suppose that $H(b) > 0$, then there exists an index $j$ such that $b_j = 1$. Let $b'$ be the same bitstring as $b$ with the exception that $b'_j = 0$. Going from $b$ to $b'$ corresponds to moving vertex $j$ from one side of the cut to the other. For the cut corresponding to $b$, let $\delta_\text{out}$ and $\delta_\text{in}$ be the number of neighbors that vertex $j$ has on the other side and same side of the cut respectively. Since $G$ is $k$-regular, it must be that $\delta_\text{out} + \delta_\text{in} = k$. Going from $b$ to $b'$, i.e., moving vertex $j$ to the other side of the cut, the cut value decreases by $\delta_\text{out}$ and increases by $\delta_\text{in}$, thus, working in modulo 2:
    \begin{align*}
    \cut(b') &= \cut(b) + \delta_\text{in} - \delta_\text{out}\\
    &= \cut(b) + \delta_\text{in} - (k - \delta_\text{in}) \tag{as $\delta_\text{out} + \delta_\text{in} = k$}\\
    &= H(b) + 2\delta_\text{in} - k \tag{induction}\\
    &= H(b)-1 \tag{working in modulo 2, $k$ odd}\\
    &= (H(b')+1)-1 \tag{$b,b'$ differ by a bitflip}\\
    &= H(b').
    \end{align*}

    \end{proof}

    \begin{lemma}
    Let $G$ be a $k$-regular graph with $k$ even and let $b$ be a bistring corresponding to a cut in $G$.  Then,
    $$\cut(b) = 0 \pmod{2}.$$
\end{lemma}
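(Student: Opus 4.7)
The plan is to mirror the inductive argument of the preceding lemma, while exploiting the parity of $k$ in a slightly different way. I would induct on the Hamming weight $H(b)$. For the base case $H(b)=0$, the bitstring is $b=0\cdots 0$, so every vertex lies on the same side of the cut, giving $\cut(b)=0\equiv 0\pmod{2}$.

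For the inductive step, pick any index $j$ with $b_j=1$ and let $b'$ be obtained from $b$ by flipping $b_j$ to $0$; then $H(b')=H(b)-1$, and the induction hypothesis yields $\cut(b')\equiv 0\pmod 2$. Let $\delta_{\text{out}}$ and $\delta_{\text{in}}$ denote the number of neighbors of vertex $j$ lying on the opposite and same side of the cut determined by $b$, respectively. By $k$-regularity, $\delta_{\text{out}}+\delta_{\text{in}}=k$, and the identical bookkeeping used in the preceding lemma (flipping the bit at $j$ turns each of the $\delta_{\text{out}}$ cut edges incident to $j$ into an uncut edge and vice versa) produces
\[
\cut(b) \;=\; \cut(b') + \delta_{\text{out}} - \delta_{\text{in}} \;=\; \cut(b') + \bigl(k - 2\delta_{\text{in}}\bigr).
\]
Since $k$ is even, $k-2\delta_{\text{in}}$ is even, so $\cut(b)\equiv \cut(b')\equiv 0\pmod{2}$, closing the induction.

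An even shorter non-inductive alternative would set $S=\{j : b_j=1\}$ and double-count edge endpoints inside $S$: in a $k$-regular graph, $\sum_{v\in S}\degree(v)=k|S|$, and this sum also equals $2e(S)+\cut(b)$ where $e(S)$ is the number of edges with both endpoints in $S$. Rearranging gives $\cut(b)=k|S|-2e(S)$, which is even as soon as $k$ is even. There is essentially no real obstacle in either route; the only point of care is the sign bookkeeping when flipping a single bit, and this is already settled in the odd-$k$ lemma above.
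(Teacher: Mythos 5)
Your inductive argument is correct and is essentially the paper's own proof: the same induction on Hamming weight, the same $\delta_{\text{in}}+\delta_{\text{out}}=k$ bookkeeping, with the parity of $k$ doing the work in the even case. The non-inductive double-counting identity $\cut(b)=k|S|-2e(S)$ is also valid and arguably cleaner, but the main route matches the paper.
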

\begin{proof}
    We proceed by induction on the Hamming weight $H(b)$. When $H(b) = 0$, it is clear that $\cut(b) = 0$ and the result follows.

    Now suppose that $H(b) > 0$, we define $b', \delta_\text{in}, \delta_\text{out}$ in the same way as is done in the proof of Lemma \ref{thm:kRegularParityOdd} and using the same argument, working in modulo 2, we have that:
    \begin{align*}
    \cut(b') &= \cut(b) + \delta_\text{in} - \delta_\text{out}\\
    &= \cut(b) + \delta_\text{in} - (k - \delta_\text{in}) \tag{as $\delta_\text{out} + \delta_\text{in} = k$}\\
    &= \cut(b) \tag{working in modulo 2, $k$ even}\\
    &= 0. \tag{induction, $H(b') = H(b)-1$}
    \end{align*}

    \end{proof}

    We are now ready to prove Lemma \ref{thm:unitaryisPhaseFlip}, which we restate below.
    \begin{lemma}
        Let $G$ be a $k$-regular graph with $n$ vertices and let $C$ be the \mc{} Hamiltonian for $G$ and let $\gamma = \pi$. Then, the cost unitary $U(C, \gamma)$ of QAOA is equal to the following:
        $$U(C, \gamma) = \begin{cases} Z^{\otimes n}, & \text{$k$ odd}\\ I, & \text{$k$ even}\end{cases}.$$
    \end{lemma}
    \begin{proof}
        Let us first consider the case where $k$ is odd. It suffices to show that $U(C,\gamma)\ket{b} = (Z^{\otimes n})\ket{b}$ for all $b \in \{0,1\}^n$. As $Z\ket{0} = \ket{0}$ and $Z\ket{1} = -\ket{1}$, it follows that $$(Z^{\otimes n})\ket{b} = \bigotimes_{j=1}^n (Z\ket{b_j}) = (-1)^{H(b)}\ket{b},$$ where $H(b)$ is the Hamming weight of $b$.

        For the cost unitary, we have,
        \begin{equation}
        \label{eqn:costUnitaryPhaseCutParity}
        U(C, \gamma)\ket{b} = e^{-i\gamma \cdot \cut(b)}\ket{b}
        = e^{-i\pi \cdot \cut(b)}\ket{b}
        = (-1)^{\cut(b)}\ket{b},
        \end{equation}
        where the last equality follows as $\cut(b)$ is an integer. For $k$ odd, we have that $\cut(b)$ and $H(b)$ have the same parity (Lemma \ref{thm:kRegularParityOdd}), and thus,
        $$U(C,\gamma)\ket{b} = (-1)^{\cut(b)}\ket{b} = (-1)^{H(b)}\ket{b} = (Z^{\otimes n})\ket{b},$$
        as desired.

        For $k$ even, it suffices to show that $U(C,\gamma)\ket{b} = \ket{b}$ for all $b \in \{0,1\}^n$. We have that $\cut(b) = 0 \pmod{2}$, and thus, from Equation \ref{eqn:costUnitaryPhaseCutParity}, we have that,
        $$U(C,\gamma)\ket{b} = (-1)^{\cut(b)}\ket{b} = (-1)^0 \ket{b} = \ket{b},$$
        as desired.
    \end{proof}

\section{Periodicity in Variational Parameters}
\label{sec:periodicityInVariationalParameters}
For depth-$p$ standard QAOA on regular graphs, it is known that it suffices to restrict $\beta_i \in [-\pi/4, \pi/4]$ and $\gamma_i \in [-\pi/2, \pi/2]$ (for all $i=1,2, \dots, p$) in order to thoroughly explore the QAOA landscape generated by $F^{(p)}(\gamma,\beta)$. \cite{zhou2020quantum}. However, the arguments used in \cite{zhou2020quantum} no longer work due to the change in the mixer. In this appendix, we prove that for depth-1 warm-started QAOA with fixed initialization angle, it suffices to check $(\gamma, \beta) \in [0, \pi]^{2}$ in order to thoroughly explore the landscape. (A similar analysis can be used to extend these results to depth-$p$ warm-started QAOA.)

\begin{prop}
\label{thm:betaPeriod}
    Let $\ket{s} = \bigotimes_{j=1}^n \ket{\vec{n}_j}$ be an arbitrary product state where each $\vec{n_j}$ is a unit-vector corresponding to a point on the Bloch sphere and let $C$ be any cost Hamiltonian for QAOA ($C$ does not necessarily need to correspond to \mc{}).  Then, the final state of QAOA is $\pi$-periodic in $\beta$, i.e., $\ket{\psi(\gamma,\beta, \ket{s})} = \ket{\psi(\gamma,\beta + \pi, \ket{s})}$  for all $\gamma,\beta \in \mathbb{R}$. Consequently, the expectations are also $\pi$-periodic, i.e., $F(\gamma,\beta,\ket{s}) = F(\gamma,\beta+\pi,\ket{s})$.
\end{prop}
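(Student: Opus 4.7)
The plan is to reduce the mixing unitary to a tensor product of single-qubit rotations and exploit the fact that each single-qubit generator squares to the identity. Concretely, because the summands $B_{\vec{n}_j, j}$ in $B_{\ket{s}}$ act on pairwise disjoint qubits, they commute, so
\begin{equation*}
U(B_{\ket{s}}, \beta) \;=\; \bigotimes_{j=1}^n U(B_{\vec{n}_j}, \beta),
\end{equation*}
and the task reduces to analyzing each factor individually.

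Next, I would observe that for any unit vector $\vec{n}_j = (x_j, y_j, z_j)$, the single-qubit operator $B_{\vec{n}_j} = x_j X + y_j Y + z_j Z$ satisfies $B_{\vec{n}_j}^2 = (x_j^2 + y_j^2 + z_j^2)\,I = I$ by the Pauli anticommutation relations. Substituting into the Taylor series for the matrix exponential yields the standard closed form
\begin{equation*}
U(B_{\vec{n}_j}, \beta) \;=\; \cos(\beta)\,I \;-\; i\sin(\beta)\,B_{\vec{n}_j}.
\end{equation*}
Replacing $\beta$ by $\beta + \pi$ flips the sign of both $\cos(\beta)$ and $\sin(\beta)$, so $U(B_{\vec{n}_j}, \beta+\pi) = -\,U(B_{\vec{n}_j}, \beta)$ for each $j$. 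Taking the tensor product over all $n$ qubits then gives
\begin{equation*}
U(B_{\ket{s}}, \beta+\pi) \;=\; (-1)^n\, U(B_{\ket{s}}, \beta).
\end{equation*}

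Finally, since $\ket{\psi(\gamma,\beta,\ket{s})} = U(B_{\ket{s}}, \beta) U(C,\gamma)\ket{s}$ and the cost unitary $U(C,\gamma)$ does not depend on $\beta$, the two output states $\ket{\psi(\gamma,\beta,\ket{s})}$ and $\ket{\psi(\gamma,\beta+\pi,\ket{s})}$ differ only by the global phase $(-1)^n$, hence are equal as quantum states. The consequence for expectations follows immediately: the global phase cancels in $\bra{\psi}C\ket{\psi}$, so $F(\gamma,\beta,\ket{s}) = F(\gamma,\beta+\pi,\ket{s})$. There is no real obstacle here; the only point deserving explicit comment is that the equality $\ket{\psi(\gamma,\beta,\ket{s})} = \ket{\psi(\gamma,\beta+\pi,\ket{s})}$ should be read modulo the physically irrelevant sign $(-1)^n$, which is the standard convention when identifying quantum states as rays.
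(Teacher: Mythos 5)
Your proof is correct and follows the same route as the paper: commute the single-qubit mixer terms, reduce to a tensor product of single-qubit rotations, and argue each factor is $\pi$-periodic in $\beta$. The one substantive difference is that you are more careful than the paper about the spinor sign. The paper asserts that each $U(B_{\vec{n}_j,j},\beta)$ is literally $\pi$-periodic because ``a rotation by $2\beta=2\pi$ leaves the qubit unchanged,'' but at the level of unitaries a $2\pi$ rotation about any axis gives $-I$, not $I$; your closed form $U(B_{\vec{n}_j},\beta)=\cos(\beta)I-i\sin(\beta)B_{\vec{n}_j}$ makes this explicit and shows $U(B_{\ket{s}},\beta+\pi)=(-1)^nU(B_{\ket{s}},\beta)$. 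Consequently the stated equality $\ket{\psi(\gamma,\beta,\ket{s})}=\ket{\psi(\gamma,\beta+\pi,\ket{s})}$ holds only up to the global phase $(-1)^n$ (exactly when $n$ is even, otherwise up to a sign), which is harmless for the expectation claim $F(\gamma,\beta,\ket{s})=F(\gamma,\beta+\pi,\ket{s})$ and for every use of the proposition in the paper, but your remark that the state equality should be read modulo this phase is a genuine (minor) correction to the paper's statement rather than a gap in your own argument.
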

\begin{proof}
Recall that the warm-started mixer $B_{\ket{s}}$ is given by $B_{\ket{s}} = \sum_{j=1}^n B_{\vec{n}_j,j}$ where $\vec{n}_j$ is the unit vector corresponding to the initial position of the $j$th qubit on the Bloch sphere. Since each $B_{\vec{n},j}$ acts on a different qubit, then each term in the sum commutes with one another, and thus, for all $\beta \in \mathbb{R}$, $U(B_{\ket{s}}, \beta) = \prod_{j=1}^n U(B_{\vec{n},j}, \beta)$.  Recall that the operation $U(B_{\vec{n}_j,j}, \beta)$ geometrically performs a rotation of the $j$th qubit by angle $2\beta$ about the axis that points in the $\vec{n}_j$ direction; thus $U(B_{\vec{n}_j,j}, \beta)$ is $\pi$-periodic in $\beta$ since a rotation by $2\beta = 2\pi$ leaves the qubit unchanged. Since $U(B_{\ket{s}}, \beta)$ is a product of $\pi$-periodic functions in $\beta$, then $U(B_{\ket{s}}, \beta)$ is also $\pi$-periodic in $\beta$. It is then clear that $\ket{\psi(\gamma,\beta, \ket{s})} = U(B_{\ket{s}}, \beta)U(C, \gamma)\ket{s}$ is $\pi$-periodic in $\beta$. As a consequence, $F(\gamma,\beta,\ket{s})$ is also $\pi$-periodic $\beta$.
\end{proof}

The next proposition is widely known; however, we include its proof for the sake of completion.
\begin{prop}
\label{thm:gammaPeriod}
    Let $C$ be a cost Hamiltonian corresponding to classical objective function $c: \{0,1\}^n \to \mathbb{Z}$, i.e., $c$ outputs integer values. Then $U(C, \gamma)$ is $2\pi$-periodic.
\end{prop}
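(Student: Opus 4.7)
The plan is straightforward: exploit the fact that $C$ is diagonal in the computational basis with integer eigenvalues, so exponentiating produces phases that are $2\pi$-periodic.

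First, I would observe that by definition of the cost Hamiltonian, $C\ket{b} = c(b)\ket{b}$ for every computational basis state $\ket{b} \in \{0,1\}^n$, so $C$ is diagonal in this basis with eigenvalues $\{c(b)\}_{b \in \{0,1\}^n} \subseteq \mathbb{Z}$. Consequently, the unitary $U(C,\gamma) = e^{-i\gamma C}$ acts diagonally as well: for each basis state $\ket{b}$,
\[
U(C,\gamma)\ket{b} = e^{-i\gamma c(b)}\ket{b}.
\]

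Next, I would replace $\gamma$ by $\gamma + 2\pi$ and compute, for each basis state,
\[
U(C,\gamma+2\pi)\ket{b} = e^{-i(\gamma+2\pi)c(b)}\ket{b} = e^{-i\gamma c(b)} e^{-2\pi i c(b)}\ket{b} = e^{-i\gamma c(b)}\ket{b},
\]
where the last equality uses the hypothesis $c(b) \in \mathbb{Z}$, which gives $e^{-2\pi i c(b)} = 1$. Thus $U(C,\gamma+2\pi)$ and $U(C,\gamma)$ agree on every element of the computational basis, and since that basis spans the whole Hilbert space, the two unitaries are equal as operators. This is precisely the $2\pi$-periodicity claim.

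There is no real obstacle here; the statement is essentially an immediate consequence of the integrality of the spectrum of $C$. The only thing to be careful about is the quantifier order: periodicity must be an \emph{operator} identity $U(C,\gamma+2\pi) = U(C,\gamma)$, which is why I verify the equality on every basis vector (not just on one particular input state) before invoking linearity to conclude on all of $\mathbb{C}^{2^n}$.
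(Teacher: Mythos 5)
Your proof is correct and follows essentially the same route as the paper's: both diagonalize $C$ in the computational basis and use $c(b)\in\mathbb{Z}$ to conclude $e^{-2\pi i c(b)}=1$ on each basis state, the only cosmetic difference being that the paper first establishes $e^{-2i\pi C}=I$ as a separate operator identity and then multiplies, whereas you split the phase directly inside each basis-state computation.
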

\begin{proof}
    First, we show that $e^{-2i\pi C} = I$. To show this, it suffices to show that $e^{-2i\pi C}\ket{b}$ for all $\ket{b}$ in some choice of basis of the full Hilbert space; we consider the basis $\{\ket{b} : b \in \{0,1\}^n\}$. Now observe,
    $$e^{-2i\pi C}\ket{b} = e^{-2i\pi c(b)}\ket{b} = \ket{b},$$
    where $e^{-2i\pi c(b)} = 1$ since $c(b)$ is an integer. This shows that $e^{-2i\pi C} = I$ and hence: $$U(C, \gamma + 2\pi) = e^{-i(\gamma+2\pi) C} = e^{-i\gamma C}e^{-2i\pi C} = e^{-i\gamma C},$$
    as desired.
\end{proof}

\begin{prop}
\label{thm:negateParameters}
    Negating the variational parameters has no effect, i.e., for  $F(\gamma,\beta, \ket{s}) = F(-\gamma, -\beta, \ket{s})$. This holds for any cost Hamiltonian $C$ that is Hermitian and the result still holds if the mixer $B_{\ket{s}}$ is replaced with any mixing Hamiltonian $B$ that is Hermitian.
\end{prop}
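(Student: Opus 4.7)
The plan is to exploit that $F(\gamma,\beta,\ket{s})$, as the expectation of a Hermitian observable, is a real number (so it equals its own complex conjugate), and then to show that entrywise complex conjugation of the final QAOA state is equivalent to negating every variational angle. In the setup of this paper everything involved is real in the computational basis: $C$ is diagonal for \mc{} (so $C^{*}=C$); $\ket{s}=\ket{b_\theta}$ has coefficients $\cos(\theta/2)$ and $\sin(\theta/2)$ only; and $B_{\ket{s}}$, being built from qubit axes in the $xz$-plane of the Bloch sphere, contains only $X$ and $Z$ Paulis (as already used in the proof of Lemma \ref{thm:commutatorRelation}). Reality of these objects is exactly what will let the conjugation trick go through.

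Concretely, let $K$ denote entrywise complex conjugation in the computational basis, so $KMK=M^{*}$ for any operator $M$ and $K\ket{\phi}=\ket{\phi^{*}}$. The first step is the identity
\[
K\,U(M,t)\,K \;=\; K\,e^{-itM}\,K \;=\; e^{itM^{*}} \;=\; e^{itM} \;=\; U(M,-t),
\]
valid whenever $M$ is real in the computational basis. Inserting $K^{2}=I$ between every pair of unitaries in the waveform of Equation \ref{eqn:warmStartedWaveform} and using $K\ket{s}=\ket{s}$ then gives
\[
K\,\ket{\psi_p(\gamma,\beta,\ket{s})} \;=\; U(B,-\beta_p)U(C,-\gamma_p)\cdots U(B,-\beta_1)U(C,-\gamma_1)\ket{s} \;=\; \ket{\psi_p(-\gamma,-\beta,\ket{s})}.
\]

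To finish, I would compute the complex conjugate of $F$, which must equal $F$ itself because $F$ is real, and use $C^{*}=C$ together with the display above to obtain
\[
F(\gamma,\beta,\ket{s}) \;=\; F(\gamma,\beta,\ket{s})^{*} \;=\; \bra{\psi_p(\gamma,\beta,\ket{s})^{*}}\,C^{*}\,\ket{\psi_p(\gamma,\beta,\ket{s})^{*}} \;=\; F(-\gamma,-\beta,\ket{s}).
\]

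The point I expect to be delicate is the blanket phrasing \emph{``any Hermitian $B$''}: the conjugation argument only survives when $B$ is also real in the computational basis. For instance, taking $B=Y$, $\ket{s}=\ket{+}$ and $C=Z$ yields $F=-\sin(2\beta)\cos(2\gamma)$, which is manifestly not invariant under $(\gamma,\beta)\mapsto(-\gamma,-\beta)$. So the proof must rely on the specific structure of the warm-start mixers introduced in Section \ref{sec:constructionOfWarmStartedStates} (the $xz$-plane placement of each qubit, hence no $Y$-component in $B_{\ket{s}}$) rather than on bare Hermiticity of $B$; with that structural hypothesis in hand, the conjugation argument above is essentially the whole proof.
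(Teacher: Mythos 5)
Your proof is correct, and it is more substantive than what the paper provides: the paper's ``proof'' of Proposition~\ref{thm:negateParameters} is only a pointer to \cite{lee2023depth} with the remark that the argument is insensitive to the initial state. Your time-reversal/conjugation argument is the standard route and every step checks out for the paper's setting: $C$ is real diagonal, $\ket{b_\theta}$ has real amplitudes, and $B_{\ket{b_\theta}}$ contains only $X$ and $Z$ terms, so $K U(M,t) K = U(M,-t)$ applies to every factor of the waveform and reality of $F$ closes the argument. Your criticism of the final sentence of the proposition is also well taken: bare Hermiticity of $B$ is not enough, and your counterexample is valid (with $B=Y$, $C=Z$, $\ket{s}=\ket{+}$ one indeed gets $F=-\sin(2\beta)\cos(2\gamma)$, which is odd under $(\gamma,\beta)\mapsto(-\gamma,-\beta)$). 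The correct hypothesis is that $C$, $B$, and $\ket{s}$ are simultaneously real in some basis, which is exactly what the warm-start construction guarantees.

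One small extension is worth adding, because the paper invokes this proposition in Appendix~\ref{sec:periodicityInVariationalParameters} for an \emph{arbitrary} product state $\ket{s}=\bigotimes_j\ket{\vec{n}_j}$, whose qubits need not lie in the $xz$-plane; there $B_{\ket{s}}$ acquires $Y$-components and $\ket{s}$ has complex amplitudes, so your argument does not apply verbatim. It does, however, survive for \emph{aligned} mixers: conjugating each qubit by a $z$-axis rotation $V$ maps $\vec{n}_j$ into $\mathbf{Arc}$, fixes the diagonal $C$, and sends $(\ket{s},B_{\ket{s}})$ to $(V\ket{s},B_{V\ket{s}})$, after which your real-conjugation step goes through. (This is the same reduction the paper cites from \cite{Tate2023warmstartedqaoa} in the footnote about $\mathbf{Arc}$, and it is consistent with your $Y$-mixer example becoming symmetric once the initial state is taken to be the $Y$-eigenstate $\ket{+i}$.) With that one remark added, your proof covers every use the paper makes of the proposition, while correctly narrowing its overly broad statement.
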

\begin{proof}
    An elegant proof of this is given in \cite{lee2023depth} in the case that $\ket{s} = \ket{+}^{\otimes n}$; however, the same proof still works regardless of the choice of initial state (as long as it is independent of $\gamma$ and $\beta$). 
\end{proof}

From Propositions \ref{thm:betaPeriod} and \ref{thm:gammaPeriod}, it is clear that in order to explore the possible values of the landscape generated by $F(\gamma,\beta, \ket{s})$ (with fixed $\ket{s}$), it suffices to only consider $\gamma \in [0, 2\pi]$ and $\beta \in [0, \pi]$. We next show that we can further restrict $\gamma$ to $[0, \pi]$; to see this, consider $\gamma \in (\pi, 2\pi]$ and $\beta \in [0, \pi]$, then:
\begin{align*}
    F(\gamma, \beta, \ket{s}) &= F(-\gamma, -\beta, \ket{s}) \tag{Proposition \ref{thm:negateParameters}} \\
    &= F(2\pi-\gamma, \pi-\beta, \ket{s}) \tag{Propositions  \ref{thm:betaPeriod} and \ref{thm:gammaPeriod}}\\
    &= F(\gamma', \beta', \ket{s}),
\end{align*}
where $\gamma' = 2\pi-\gamma \in [0, \pi]$ and $\beta' = \pi - \beta \in [0, \pi]$.

\section{Tables of Values}
\label{sec:AR_Tables}
In this appendix, in Table \ref{tab:LB} Table \ref{tab:LB_2/3}, Table \ref{tab:LB_4/5}, and  Table \ref{tab:LB_17/21}, we provide numerically obtained values for $\mathbf{LB}(\theta), \mathbf{LB}_{2/3}(\theta), \mathbf{LB}'_{4/5}(\theta), \mathbf{LB}'_{17/21}(\theta)$ (respectively) for varying values of $\theta$. Due to challenges with the numerical optimization, the values reported for the various lower bounds may possibly be lower than the true values; see the section on optimization details in the main paper regarding how the numerical optimization was performed.

Each calculation of the lower bound involves calculating a max-min optimization, i.e., a nested optimization with an outer maximization (over $\gamma$ and $\beta$) and an inner minimization (over $\mathbf{r}, \mathbf{s}, \mathbf{t}$). For each $\theta$ and each lower bound, we also report the optimal values found for these outer and inner optimizations. For all of the lower bounds and for all values of $\theta$ we consider, our optimization procedure found that the optimal values for $\mathbf{s}$ and $\mathbf{t}$ are always $\mathbf{s} = \mathbf{t} = \mathbf{0}$ and hence we omit these values from the table.

As mentioned in the main text, Theorem \ref{thm:bitflipSameExpectedCut} can be extended to show that $f_{g_{6,1}}(\gamma,\beta,\theta) = f_{g_{6,6}}(\gamma,\beta,\theta)$ and $f_{g_{6,2}}(\gamma,\beta,\theta) = f_{g_{6,5}}(\gamma,\beta,\theta)$ for any choice of $\gamma,\beta,\theta$. Looking at the expected cut value of warm-started QAOA as a function of $\mathbf{r},\mathbf{s},\mathbf{t}$, if $r_1+r_6 = \xi$ for some constant $\xi$, then we can replace $r_1$ with $r_1'$ and $r_6$ with $r_6'$ satisfying $r_1'+r_6' = \xi$ and the expected cut value would remain the same. For this reason, we report the sum $r_1+r_6$ instead of reporting possible individual values of $r_1$ and $r_6$. Similarly, for the same reason, we report the sum $r_2+r_5$ instead of the possible individual values of $r_2$ and $r_5$.

\begin{table}
    \footnotesize
    \resizebox{1\textwidth}{!}{
    \begin{tabular}{|c|c|c|c|c|c|c|c|}
\hline $\theta$ & $\mathbf{LB}(\theta)$ & $\gamma$ & $\beta$ & $r_1+r_6$ & $r_2+r_5$ & $r_3$ & $r_4$ \\ \hline
1 & 0.0014 & 3.1361 & 1.5642 & 1.0000 & 0.0000 & 0.0000 & 0.0000 \\ \hline
2 & 0.0055 & 3.1463 & 1.5662 & 1.0000 & 0.0000 & 0.0000 & 0.0000 \\ \hline
3 & 0.0122 & 3.1431 & 1.5701 & 1.0000 & 0.0000 & 0.0000 & 0.0000 \\ \hline
4 & 0.0216 & 3.1435 & 1.5711 & 1.0000 & 0.0000 & 0.0000 & 0.0000 \\ \hline
5 & 0.0335 & 3.1438 & 1.5711 & 1.0000 & 0.0000 & 0.0000 & 0.0000 \\ \hline
6 & 0.0477 & 3.1435 & 1.5711 & 1.0000 & 0.0000 & 0.0000 & 0.0000 \\ \hline
7 & 0.0642 & 3.1424 & 1.5717 & 1.0000 & 0.0000 & 0.0000 & 0.0000 \\ \hline
8 & 0.0827 & 3.1442 & 1.5705 & 1.0000 & 0.0000 & 0.0000 & 0.0000 \\ \hline
9 & 0.1031 & 3.1442 & 1.5705 & 1.0000 & 0.0000 & 0.0000 & 0.0000 \\ \hline
10 & 0.1250 & 3.1442 & 1.5705 & 1.0000 & 0.0000 & 0.0000 & 0.0000 \\ \hline
11 & 0.1483 & 3.1436 & 1.5707 & 1.0000 & 0.0000 & 0.0000 & 0.0000 \\ \hline
12 & 0.1727 & 3.1434 & 1.5705 & 1.0000 & 0.0000 & 0.0000 & 0.0000 \\ \hline
13 & 0.1980 & 3.1398 & 1.5710 & 1.0000 & 0.0000 & 0.0000 & 0.0000 \\ \hline
14 & 0.2239 & 3.1410 & 1.5705 & 1.0000 & 0.0000 & 0.0000 & 0.0000 \\ \hline
15 & 0.2500 & 3.1422 & 1.5701 & 1.0000 & 0.0000 & 0.0000 & 0.0000 \\ \hline
16 & 0.2761 & 3.1454 & 1.5701 & 1.0000 & 0.0000 & 0.0000 & 0.0000 \\ \hline
17 & 0.3020 & 2.9850 & 1.5961 & 1.0000 & 0.0000 & 0.0000 & 0.0000 \\ \hline
18 & 0.3277 & 2.8494 & 1.6181 & 1.0000 & 0.0000 & 0.0000 & 0.0000 \\ \hline
19 & 0.3531 & 2.7622 & 1.6316 & 1.0000 & 0.0000 & 0.0000 & 0.0000 \\ \hline
20 & 0.3779 & 2.7002 & 1.6419 & 1.0000 & 0.0000 & 0.0000 & 0.0000 \\ \hline
21 & 0.4021 & 2.6469 & 1.6511 & 1.0000 & 0.0000 & 0.0000 & 0.0000 \\ \hline
22 & 0.4253 & 2.6030 & 1.6586 & 1.0000 & 0.0000 & 0.0000 & 0.0000 \\ \hline
23 & 0.4475 & 2.5638 & 1.6661 & 1.0000 & 0.0000 & 0.0000 & 0.0000 \\ \hline
24 & 0.4685 & 2.5293 & 1.6727 & 1.0000 & 0.0000 & 0.0000 & 0.0000 \\ \hline
25 & 0.4881 & 2.4975 & 1.6800 & 1.0000 & 0.0000 & 0.0000 & 0.0000 \\ \hline
26 & 0.5061 & 2.4672 & 1.6879 & 1.0000 & 0.0000 & 0.0000 & 0.0000 \\ \hline
27 & 0.5225 & 2.4396 & 1.6962 & 1.0000 & 0.0000 & 0.0000 & 0.0000 \\ \hline
28 & 0.5372 & 2.4099 & 1.6986 & 0.9999 & 0.0000 & 0.0000 & 0.0001 \\ \hline
29 & 0.5464 & 2.3894 & 1.5986 & 0.9987 & 0.0000 & 0.0000 & 0.0013 \\ \hline
30 & 0.5525 & 2.3925 & 1.5260 & 0.6667 & 0.0000 & 0.0000 & 0.3333 \\ \hline
31 & 0.5574 & 2.4007 & 1.4693 & 0.0021 & 0.0000 & 0.0000 & 0.9979 \\ \hline
32 & 0.5616 & 2.4100 & 1.4240 & 0.5005 & 0.0000 & 0.0000 & 0.4995 \\ \hline
33 & 0.5653 & 3.8632 & 1.7540 & 0.0008 & 0.0000 & 0.0000 & 0.9992 \\ \hline
34 & 0.5686 & 2.4305 & 1.3590 & 0.6667 & 0.0000 & 0.0000 & 0.3333 \\ \hline
35 & 0.5715 & 2.4660 & 1.3196 & 0.0000 & 0.0000 & 0.0000 & 1.0000 \\ \hline
36 & 0.5712 & 3.7468 & 1.9068 & 0.6664 & 0.3336 & 0.0000 & 0.0000 \\ \hline
37 & 0.5709 & 2.5651 & 1.1805 & 0.6662 & 0.3338 & 0.0000 & 0.0000 \\ \hline
38 & 0.5708 & 2.5886 & 1.1344 & 0.6655 & 0.3345 & 0.0000 & 0.0000 \\ \hline
39 & 0.5708 & 3.6749 & 2.0473 & 0.6659 & 0.3341 & 0.0000 & 0.0000 \\ \hline
40 & 0.5709 & 3.6583 & 2.0826 & 0.6661 & 0.3339 & 0.0000 & 0.0000 \\ \hline
41 & 0.5710 & 2.6400 & 1.0271 & 0.9998 & 0.0002 & 0.0000 & 0.0000 \\ \hline
42 & 0.5713 & 2.6529 & 0.9984 & 0.9998 & 0.0002 & 0.0000 & 0.0000 \\ \hline
43 & 0.5716 & 3.6185 & 2.1695 & 0.9998 & 0.0002 & 0.0000 & 0.0000 \\ \hline
44 & 0.5719 & 3.6085 & 2.1933 & 0.9998 & 0.0002 & 0.0000 & 0.0000 \\ \hline
45 & 0.5722 & 3.5993 & 2.2153 & 1.0000 & 0.0000 & 0.0000 & 0.0000 \\ \hline
\end{tabular} \begin{tabular}{|c|c|c|c|c|c|c|c|}
\hline $\theta$ & $\mathbf{LB}(\theta)$ & $\gamma$ & $\beta$ & $r_1+r_6$ & $r_2+r_5$ & $r_3$ & $r_4$ \\ \hline
46 & 0.5726 & 2.6923 & 0.9059 & 1.0000 & 0.0000 & 0.0000 & 0.0000 \\ \hline
47 & 0.5731 & 2.7002 & 0.8869 & 1.0000 & 0.0000 & 0.0000 & 0.0000 \\ \hline
48 & 0.5735 & 3.5763 & 2.2722 & 1.0000 & 0.0000 & 0.0000 & 0.0000 \\ \hline
49 & 0.5740 & 2.7130 & 0.8497 & 1.0000 & 0.0000 & 0.0000 & 0.0000 \\ \hline
50 & 0.5746 & 3.5648 & 2.3125 & 1.0000 & 0.0000 & 0.0000 & 0.0000 \\ \hline
51 & 0.5752 & 3.5608 & 2.3321 & 1.0000 & 0.0000 & 0.0000 & 0.0000 \\ \hline
52 & 0.5760 & 2.7258 & 0.7907 & 1.0000 & 0.0000 & 0.0000 & 0.0000 \\ \hline
53 & 0.5790 & 0.7493 & 1.6506 & 0.6667 & 0.3333 & 0.0000 & 0.0000 \\ \hline
54 & 0.5859 & 0.7292 & 1.6667 & 0.5000 & 0.5000 & 0.0000 & 0.0000 \\ \hline
55 & 0.5923 & 0.7102 & 1.6830 & 0.0076 & 0.9924 & 0.0000 & 0.0000 \\ \hline
56 & 0.5983 & 0.6921 & 1.6996 & 0.5024 & 0.4976 & 0.0000 & 0.0000 \\ \hline
57 & 0.6039 & 0.6751 & 1.7160 & 0.0059 & 0.9941 & 0.0000 & 0.0000 \\ \hline
58 & 0.6092 & 0.6592 & 1.7317 & 0.4970 & 0.5030 & 0.0000 & 0.0000 \\ \hline
59 & 0.6141 & 0.6442 & 1.7471 & 0.0002 & 0.9998 & 0.0000 & 0.0000 \\ \hline
60 & 0.6188 & 0.6302 & 1.7619 & 0.5000 & 0.5000 & 0.0000 & 0.0000 \\ \hline
61 & 0.6232 & 0.6174 & 1.7754 & 0.0003 & 0.9997 & 0.0000 & 0.0000 \\ \hline
62 & 0.6274 & 0.6051 & 1.7888 & 0.0000 & 1.0000 & 0.0000 & 0.0000 \\ \hline
63 & 0.6314 & 0.5936 & 1.8019 & 0.0000 & 1.0000 & 0.0000 & 0.0000 \\ \hline
64 & 0.6352 & 0.5844 & 1.8138 & 0.0000 & 1.0000 & 0.0000 & 0.0000 \\ \hline
65 & 0.6390 & 0.6164 & 1.8233 & 0.0000 & 1.0000 & 0.0000 & 0.0000 \\ \hline
66 & 0.6429 & 0.6264 & 1.8382 & 0.0000 & 1.0000 & 0.0000 & 0.0000 \\ \hline
67 & 0.6468 & 0.6290 & 1.8538 & 0.0000 & 1.0000 & 0.0000 & 0.0000 \\ \hline
68 & 0.6506 & 0.6292 & 1.8677 & 0.0000 & 0.9981 & 0.0000 & 0.0019 \\ \hline
69 & 0.6542 & 0.6285 & 1.8804 & 0.0000 & 0.5018 & 0.0000 & 0.4982 \\ \hline
70 & 0.6576 & 0.6270 & 1.8921 & 0.0000 & 0.9980 & 0.0000 & 0.0020 \\ \hline
71 & 0.6609 & 0.6247 & 1.9027 & 0.0000 & 1.0000 & 0.0000 & 0.0000 \\ \hline
72 & 0.6640 & 0.6224 & 1.9126 & 0.0000 & 0.5008 & 0.0000 & 0.4992 \\ \hline
73 & 0.6670 & 0.6197 & 1.9215 & 0.0000 & 0.9998 & 0.0000 & 0.0002 \\ \hline
74 & 0.6698 & 0.6171 & 1.9298 & 0.0000 & 0.5010 & 0.0000 & 0.4990 \\ \hline
75 & 0.6724 & 0.6137 & 1.9372 & 0.0000 & 0.5005 & 0.0000 & 0.4995 \\ \hline
76 & 0.6748 & 0.6116 & 1.9443 & 0.0000 & 0.0001 & 0.0000 & 0.9999 \\ \hline
77 & 0.6771 & 0.6090 & 1.9505 & 0.0000 & 0.0000 & 0.0000 & 1.0000 \\ \hline
78 & 0.6792 & 0.6100 & 1.9528 & 0.0000 & 0.0000 & 0.0000 & 1.0000 \\ \hline
79 & 0.6812 & 0.6117 & 1.9546 & 0.0000 & 0.0000 & 0.0000 & 1.0000 \\ \hline
80 & 0.6831 & 0.6126 & 1.9560 & 0.0000 & 0.0000 & 0.0000 & 1.0000 \\ \hline
81 & 0.6848 & 0.6134 & 1.9575 & 0.0000 & 0.0000 & 0.0000 & 1.0000 \\ \hline
82 & 0.6864 & 0.6143 & 1.9589 & 0.0000 & 0.0000 & 0.0000 & 1.0000 \\ \hline
83 & 0.6878 & 0.6147 & 1.9596 & 0.0000 & 0.0000 & 0.0000 & 1.0000 \\ \hline
84 & 0.6890 & 0.6149 & 1.9609 & 0.0000 & 0.0000 & 0.0000 & 1.0000 \\ \hline
85 & 0.6900 & 0.6153 & 1.9616 & 0.0000 & 0.0000 & 0.0000 & 1.0000 \\ \hline
86 & 0.6909 & 0.6153 & 1.9623 & 0.0000 & 0.0000 & 0.0000 & 1.0000 \\ \hline
87 & 0.6916 & 0.6153 & 1.9628 & 0.0000 & 0.0000 & 0.0000 & 1.0000 \\ \hline
88 & 0.6921 & 0.6154 & 1.9634 & 0.0000 & 0.0000 & 0.0000 & 1.0000 \\ \hline
89 & 0.6924 & 0.6155 & 1.9634 & 0.0000 & 0.0000 & 0.0000 & 1.0000 \\ \hline
90 & 0.6924 & 0.6156 & 1.9636 & 0.3333 & 0.3333 & 0.1667 & 0.1667 \\ \hline
\end{tabular}
}
    \caption{\footnotesize Table of values of $\mathbf{LB}(\theta)$ for various $\theta$ including the optimal values of the minimizers and maximizers found in the inner and outer optimizations respectively.}
    \label{tab:LB}
\end{table}

\begin{table}
    \centering
    \footnotesize
    \resizebox{1\textwidth}{!}{
    \begin{tabular}{|c|c|c|c|c|c|c|c|}
\hline $\theta$ & $\mathbf{LB}_{2/3}(\theta)$ & $\gamma$ & $\beta$ & $r_1+r_6$ & $r_2+r_5$ & $r_3$ & $r_4$ \\ \hline
1 & 0.6667 & 0.5168 & 0.5341 & 0.3333 & 0.0000 & 0.0000 & 0.6666 \\ \hline
2 & 0.6666 & 0.5158 & 0.5396 & 0.3333 & 0.6666 & 0.0000 & 0.0001 \\ \hline
3 & 0.6665 & 0.5207 & 0.5307 & 0.3333 & 0.3333 & 0.0000 & 0.3333 \\ \hline
4 & 0.6664 & 0.5226 & 0.5276 & 0.3333 & 0.0001 & 0.0000 & 0.6665 \\ \hline
5 & 0.6663 & 0.5242 & 0.5259 & 0.3333 & 0.0000 & 0.0000 & 0.6666 \\ \hline
6 & 0.6662 & 0.5269 & 0.5221 & 0.3333 & 0.4444 & 0.0000 & 0.2222 \\ \hline
7 & 0.6660 & 0.5263 & 0.5252 & 0.3333 & 0.3334 & 0.0000 & 0.3333 \\ \hline
8 & 0.6658 & 0.5283 & 0.5233 & 0.3333 & 0.0010 & 0.0000 & 0.6657 \\ \hline
9 & 0.6655 & 0.5311 & 0.5200 & 0.3333 & 0.0000 & 0.0000 & 0.6667 \\ \hline
10 & 0.6653 & 0.5434 & 0.4974 & 0.3333 & 0.0000 & 0.0000 & 0.6667 \\ \hline
11 & 0.6650 & 0.5358 & 0.5161 & 0.3333 & 0.0000 & 0.0000 & 0.6666 \\ \hline
12 & 0.6646 & 0.5399 & 0.5110 & 0.3333 & 0.3334 & 0.0000 & 0.3333 \\ \hline
13 & 0.6643 & 0.5433 & 0.5073 & 0.3333 & 0.0000 & 0.0000 & 0.6667 \\ \hline
14 & 0.6639 & 0.5466 & 0.5042 & 0.3333 & 0.0000 & 0.0000 & 0.6667 \\ \hline
15 & 0.6635 & 0.5503 & 0.5006 & 0.3333 & 0.3333 & 0.0000 & 0.3333 \\ \hline
16 & 0.6630 & 0.5544 & 0.4963 & 0.3333 & 0.6666 & 0.0000 & 0.0001 \\ \hline
17 & 0.6625 & 0.5583 & 0.4925 & 0.3333 & 0.0000 & 0.0000 & 0.6667 \\ \hline
18 & 0.6620 & 0.5634 & 0.4861 & 0.3333 & 0.6667 & 0.0000 & 0.0000 \\ \hline
19 & 0.6614 & 0.5678 & 0.4816 & 0.3333 & 0.0000 & 0.0000 & 0.6667 \\ \hline
20 & 0.6609 & 0.5721 & 0.4776 & 0.3333 & 0.0001 & 0.0000 & 0.6666 \\ \hline
21 & 0.6603 & 0.5773 & 0.4714 & 0.3333 & 0.0000 & 0.0000 & 0.6667 \\ \hline
22 & 0.6596 & 0.5817 & 0.4672 & 0.3333 & 0.0000 & 0.0000 & 0.6666 \\ \hline
23 & 0.6590 & 0.5865 & 0.4622 & 0.3333 & 0.0000 & 0.0000 & 0.6667 \\ \hline
24 & 0.6583 & 0.5915 & 0.4568 & 0.3333 & 0.0000 & 0.0000 & 0.6667 \\ \hline
25 & 0.6576 & 0.5963 & 0.4519 & 0.3333 & 0.0001 & 0.0000 & 0.6666 \\ \hline
26 & 0.6569 & 0.6012 & 0.4468 & 0.3333 & 0.0000 & 0.0000 & 0.6667 \\ \hline
27 & 0.6561 & 0.6058 & 0.4425 & 0.3333 & 0.0000 & 0.0000 & 0.6667 \\ \hline
28 & 0.6554 & 0.6107 & 0.4374 & 0.3333 & 0.0000 & 0.0000 & 0.6666 \\ \hline
29 & 0.6546 & 0.6152 & 0.4335 & 0.3333 & 0.0002 & 0.0000 & 0.6665 \\ \hline
30 & 0.6539 & 0.6202 & 0.4278 & 0.3333 & 0.0000 & 0.0000 & 0.6666 \\ \hline
31 & 0.6531 & 0.6267 & 0.4178 & 0.3333 & 0.0000 & 0.0000 & 0.6667 \\ \hline
32 & 0.6524 & 0.6295 & 0.4186 & 0.3333 & 0.0000 & 0.0000 & 0.6666 \\ \hline
33 & 0.6516 & 0.6339 & 0.4146 & 0.3333 & 0.0000 & 0.0000 & 0.6667 \\ \hline
34 & 0.6509 & 0.6384 & 0.4102 & 0.3333 & 0.0000 & 0.0000 & 0.6666 \\ \hline
35 & 0.6502 & 0.6428 & 0.4060 & 0.3333 & 0.0000 & 0.0000 & 0.6667 \\ \hline
36 & 0.6495 & 0.6468 & 0.4028 & 0.3333 & 0.0000 & 0.0000 & 0.6667 \\ \hline
37 & 0.6488 & 0.6510 & 0.3991 & 0.3333 & 0.0000 & 0.0000 & 0.6667 \\ \hline
38 & 0.6482 & 0.6549 & 0.3960 & 0.3333 & 0.0000 & 0.0000 & 0.6667 \\ \hline
39 & 0.6476 & 0.6588 & 0.3929 & 0.3333 & 0.0000 & 0.0000 & 0.6667 \\ \hline
40 & 0.6470 & 0.6625 & 0.3901 & 0.3333 & 0.0000 & 0.0000 & 0.6667 \\ \hline
41 & 0.6465 & 0.6663 & 0.3870 & 0.3333 & 0.0000 & 0.0000 & 0.6667 \\ \hline
42 & 0.6461 & 0.6621 & 0.3850 & 0.3333 & 0.0000 & 0.0000 & 0.6667 \\ \hline
43 & 0.6457 & 0.6550 & 0.3830 & 0.3333 & 0.0000 & 0.0000 & 0.6667 \\ \hline
44 & 0.6454 & 0.6486 & 0.3813 & 0.3333 & 0.0000 & 0.0000 & 0.6667 \\ \hline
45 & 0.6452 & 0.6423 & 0.3799 & 0.3333 & 0.0000 & 0.0000 & 0.6667 \\ \hline
\end{tabular} \begin{tabular}{|c|c|c|c|c|c|c|c|}
\hline $\theta$ & $\mathbf{LB}_{2/3}(\theta)$ & $\gamma$ & $\beta$ & $r_1+r_6$ & $r_2+r_5$ & $r_3$ & $r_4$ \\ \hline
46 & 0.6451 & 0.6369 & 0.3785 & 0.3333 & 0.0000 & 0.0000 & 0.6667 \\ \hline
47 & 0.6451 & 0.6332 & 0.3771 & 0.3333 & 0.0000 & 0.0000 & 0.6667 \\ \hline
48 & 0.6451 & 0.6288 & 0.3763 & 0.3333 & 0.0000 & 0.0000 & 0.6667 \\ \hline
49 & 0.6453 & 0.6250 & 0.3754 & 0.3333 & 0.0000 & 0.0000 & 0.6667 \\ \hline
50 & 0.6456 & 0.6211 & 0.3746 & 0.3333 & 0.0000 & 0.0000 & 0.6667 \\ \hline
51 & 0.6460 & 0.6185 & 0.3744 & 0.3333 & 0.0000 & 0.0000 & 0.6667 \\ \hline
52 & 0.6467 & 2.4403 & 0.3950 & 0.3333 & 0.0000 & 0.0000 & 0.6667 \\ \hline
53 & 0.6480 & 2.4474 & 0.3974 & 0.3333 & 0.0000 & 0.0000 & 0.6667 \\ \hline
54 & 0.6508 & 3.1417 & 1.5708 & 0.3333 & 0.2223 & 0.2222 & 0.2222 \\ \hline
55 & 0.6555 & 3.1414 & 1.5709 & 0.3333 & 0.2223 & 0.2222 & 0.2222 \\ \hline
56 & 0.6595 & 3.1418 & 1.5707 & 0.3333 & 0.2227 & 0.2220 & 0.2220 \\ \hline
57 & 0.6626 & 3.1420 & 1.5706 & 0.3333 & 0.6667 & 0.0000 & 0.0000 \\ \hline
58 & 0.6648 & 3.1414 & 1.5713 & 0.3333 & 0.6667 & 0.0000 & 0.0000 \\ \hline
59 & 0.6662 & 3.1417 & 1.5707 & 0.3333 & 0.4440 & 0.2220 & 0.0006 \\ \hline
60 & 0.6667 & 3.1413 & 1.5710 & 0.3333 & 0.2224 & 0.2222 & 0.2222 \\ \hline
61 & 0.6662 & 3.1418 & 1.5706 & 0.3333 & 0.2223 & 0.2222 & 0.2222 \\ \hline
62 & 0.6648 & 3.1415 & 1.5710 & 0.3333 & 0.6667 & 0.0000 & 0.0000 \\ \hline
63 & 0.6643 & 2.4927 & 0.4054 & 0.3333 & 0.0000 & 0.0000 & 0.6667 \\ \hline
64 & 0.6660 & 2.4956 & 0.4053 & 0.3333 & 0.0000 & 0.0000 & 0.6667 \\ \hline
65 & 0.6677 & 2.4985 & 0.4047 & 0.3333 & 0.0000 & 0.0000 & 0.6667 \\ \hline
66 & 0.6694 & 2.5006 & 0.4043 & 0.3333 & 0.0000 & 0.0000 & 0.6667 \\ \hline
67 & 0.6710 & 2.5033 & 0.4037 & 0.3333 & 0.0000 & 0.0000 & 0.6667 \\ \hline
68 & 0.6726 & 2.5052 & 0.4030 & 0.3333 & 0.0000 & 0.0000 & 0.6667 \\ \hline
69 & 0.6742 & 2.5076 & 0.4026 & 0.3333 & 0.0000 & 0.0000 & 0.6667 \\ \hline
70 & 0.6757 & 2.5096 & 0.4017 & 0.3333 & 0.0000 & 0.0000 & 0.6667 \\ \hline
71 & 0.6772 & 2.5112 & 0.4012 & 0.3333 & 0.0000 & 0.0000 & 0.6667 \\ \hline
72 & 0.6787 & 2.5130 & 0.4004 & 0.3333 & 0.0000 & 0.0000 & 0.6667 \\ \hline
73 & 0.6801 & 2.5144 & 0.3996 & 0.3333 & 0.0000 & 0.0000 & 0.6667 \\ \hline
74 & 0.6814 & 2.5157 & 0.3990 & 0.3333 & 0.0000 & 0.0000 & 0.6667 \\ \hline
75 & 0.6827 & 2.5173 & 0.3982 & 0.3333 & 0.0000 & 0.0000 & 0.6667 \\ \hline
76 & 0.6839 & 2.5184 & 0.3974 & 0.3333 & 0.0000 & 0.0000 & 0.6667 \\ \hline
77 & 0.6850 & 2.5196 & 0.3970 & 0.3333 & 0.0000 & 0.0000 & 0.6667 \\ \hline
78 & 0.6861 & 2.5203 & 0.3964 & 0.3333 & 0.0000 & 0.0000 & 0.6667 \\ \hline
79 & 0.6871 & 2.5214 & 0.3959 & 0.3333 & 0.0000 & 0.0000 & 0.6667 \\ \hline
80 & 0.6880 & 2.5224 & 0.3954 & 0.3333 & 0.0000 & 0.0000 & 0.6667 \\ \hline
81 & 0.6888 & 2.5228 & 0.3948 & 0.3333 & 0.0000 & 0.0000 & 0.6667 \\ \hline
82 & 0.6896 & 2.5238 & 0.3945 & 0.3333 & 0.0000 & 0.0000 & 0.6667 \\ \hline
83 & 0.6902 & 2.5242 & 0.3939 & 0.3333 & 0.0000 & 0.0000 & 0.6667 \\ \hline
84 & 0.6908 & 2.5247 & 0.3935 & 0.3333 & 0.0000 & 0.0000 & 0.6667 \\ \hline
85 & 0.6913 & 2.5251 & 0.3934 & 0.3333 & 0.0000 & 0.0000 & 0.6667 \\ \hline
86 & 0.6917 & 2.5256 & 0.3932 & 0.3333 & 0.0000 & 0.0000 & 0.6667 \\ \hline
87 & 0.6920 & 2.5256 & 0.3931 & 0.3333 & 0.0000 & 0.0000 & 0.6667 \\ \hline
88 & 0.6923 & 2.5259 & 0.3925 & 0.3333 & 0.0000 & 0.0000 & 0.6667 \\ \hline
89 & 0.6924 & 2.5264 & 0.3926 & 0.3333 & 0.0000 & 0.0000 & 0.6667 \\ \hline
90 & 0.6924 & 0.6156 & 1.9636 & 0.1994 & 0.4003 & 0.2001 & 0.2001 \\ \hline
\end{tabular}

    }
    \caption{\footnotesize Table of values of $\mathbf{LB}_{2/3}(\theta)$ for various $\theta$ including the optimal values of the minimizers and maximizers found in the inner and outer optimizations respectively.}
    \label{tab:LB_2/3}
\end{table}

\begin{table}
    \centering
    \footnotesize
    \resizebox{1\textwidth}{!}{
    \begin{tabular}{|c|c|c|c|c|c|c|c|}
\hline $\theta$ & $\mathbf{LB}'_{4/5}(\theta)$ & $\gamma$ & $\beta$ & $r_1+r_6$ & $r_2+r_5$ & $r_3$ & $r_4$ \\ \hline
1 & 0.8000 & 0.5231 & 0.5235 & 0.2000 & 0.7993 & 0.0000 & 0.0007 \\ \hline
2 & 0.7999 & 0.5222 & 0.5264 & 0.2000 & 0.0000 & 0.0000 & 0.8000 \\ \hline
3 & 0.7998 & 0.5232 & 0.5255 & 0.2000 & 0.0001 & 0.0000 & 0.7999 \\ \hline
4 & 0.7996 & 0.5235 & 0.5261 & 0.2000 & 0.0015 & 0.0000 & 0.7985 \\ \hline
5 & 0.7994 & 0.5211 & 0.5322 & 0.2000 & 0.8000 & 0.0000 & 0.0000 \\ \hline
6 & 0.7991 & 0.5243 & 0.5275 & 0.2000 & 0.7998 & 0.0000 & 0.0002 \\ \hline
7 & 0.7988 & 0.5243 & 0.5293 & 0.2000 & 0.0001 & 0.0000 & 0.7999 \\ \hline
8 & 0.7984 & 0.5273 & 0.5252 & 0.2000 & 0.0000 & 0.0000 & 0.8000 \\ \hline
9 & 0.7980 & 0.5284 & 0.5255 & 0.2000 & 0.0002 & 0.0000 & 0.7998 \\ \hline
10 & 0.7975 & 0.5295 & 0.5262 & 0.2000 & 0.0000 & 0.0000 & 0.8000 \\ \hline
11 & 0.7969 & 0.5319 & 0.5241 & 0.2000 & 0.0001 & 0.0000 & 0.7999 \\ \hline
12 & 0.7963 & 0.5351 & 0.5208 & 0.2000 & 0.0000 & 0.0000 & 0.8000 \\ \hline
13 & 0.7956 & 0.5375 & 0.5196 & 0.2000 & 0.8000 & 0.0000 & 0.0000 \\ \hline
14 & 0.7949 & 0.5409 & 0.5163 & 0.2000 & 0.4000 & 0.0000 & 0.4000 \\ \hline
15 & 0.7941 & 0.5439 & 0.5144 & 0.2000 & 0.0001 & 0.0000 & 0.7999 \\ \hline
16 & 0.7932 & 0.5469 & 0.5123 & 0.2000 & 0.0000 & 0.0000 & 0.8000 \\ \hline
17 & 0.7923 & 0.5502 & 0.5101 & 0.2000 & 0.0000 & 0.0000 & 0.8000 \\ \hline
18 & 0.7913 & 0.5544 & 0.5063 & 0.2000 & 0.7991 & 0.0000 & 0.0009 \\ \hline
19 & 0.7902 & 0.5580 & 0.5037 & 0.2000 & 0.4000 & 0.0000 & 0.4000 \\ \hline
20 & 0.7890 & 0.5620 & 0.5006 & 0.2000 & 0.7997 & 0.0000 & 0.0003 \\ \hline
21 & 0.7878 & 0.5658 & 0.4981 & 0.2000 & 0.0087 & 0.0000 & 0.7913 \\ \hline
22 & 0.7864 & 0.5702 & 0.4943 & 0.2000 & 0.4000 & 0.0000 & 0.4000 \\ \hline
23 & 0.7850 & 0.5745 & 0.4911 & 0.2000 & 0.7998 & 0.0000 & 0.0002 \\ \hline
24 & 0.7836 & 0.5786 & 0.4883 & 0.2000 & 0.4000 & 0.0000 & 0.4000 \\ \hline
25 & 0.7820 & 0.5835 & 0.4841 & 0.2000 & 0.5333 & 0.0000 & 0.2667 \\ \hline
26 & 0.7804 & 0.5885 & 0.4794 & 0.2000 & 0.7970 & 0.0000 & 0.0030 \\ \hline
27 & 0.7787 & 0.5932 & 0.4755 & 0.2000 & 0.0000 & 0.0000 & 0.8000 \\ \hline
28 & 0.7769 & 0.5981 & 0.4713 & 0.2000 & 0.7999 & 0.0000 & 0.0001 \\ \hline
29 & 0.7750 & 0.6028 & 0.4676 & 0.2000 & 0.0001 & 0.0000 & 0.7999 \\ \hline
30 & 0.7731 & 0.6080 & 0.4625 & 0.2000 & 0.5333 & 0.0000 & 0.2667 \\ \hline
31 & 0.7711 & 0.6125 & 0.4593 & 0.2000 & 0.0000 & 0.0000 & 0.8000 \\ \hline
32 & 0.7690 & 0.6175 & 0.4550 & 0.2000 & 0.0000 & 0.0000 & 0.8000 \\ \hline
33 & 0.7669 & 0.6224 & 0.4505 & 0.2000 & 0.0000 & 0.0000 & 0.8000 \\ \hline
34 & 0.7647 & 0.6269 & 0.4474 & 0.2000 & 0.0000 & 0.0000 & 0.8000 \\ \hline
35 & 0.7624 & 0.6318 & 0.4429 & 0.2000 & 0.0000 & 0.0000 & 0.8000 \\ \hline
36 & 0.7601 & 0.6362 & 0.4399 & 0.2000 & 0.0000 & 0.0000 & 0.8000 \\ \hline
37 & 0.7577 & 0.6411 & 0.4350 & 0.2000 & 0.0000 & 0.0000 & 0.8000 \\ \hline
38 & 0.7553 & 0.6456 & 0.4314 & 0.2000 & 0.0000 & 0.0000 & 0.8000 \\ \hline
39 & 0.7529 & 0.6501 & 0.4277 & 0.2000 & 0.0000 & 0.0000 & 0.8000 \\ \hline
40 & 0.7504 & 0.6545 & 0.4237 & 0.2000 & 0.0000 & 0.0000 & 0.8000 \\ \hline
41 & 0.7479 & 0.6588 & 0.4205 & 0.2000 & 0.0000 & 0.0000 & 0.8000 \\ \hline
42 & 0.7454 & 0.6629 & 0.4173 & 0.2000 & 0.0000 & 0.0000 & 0.8000 \\ \hline
43 & 0.7429 & 0.6671 & 0.4137 & 0.2000 & 0.0000 & 0.0000 & 0.8000 \\ \hline
44 & 0.7403 & 0.6710 & 0.4108 & 0.2000 & 0.0000 & 0.0000 & 0.8000 \\ \hline
45 & 0.7378 & 0.6748 & 0.4083 & 0.2000 & 0.0000 & 0.0000 & 0.8000 \\ \hline
\end{tabular} \begin{tabular}{|c|c|c|c|c|c|c|c|}
\hline $\theta$ & $\mathbf{LB}'_{4/5}(\theta)$ & $\gamma$ & $\beta$ & $r_1+r_6$ & $r_2+r_5$ & $r_3$ & $r_4$ \\ \hline
46 & 0.7352 & 0.6789 & 0.4032 & 0.2000 & 0.0000 & 0.0000 & 0.8000 \\ \hline
47 & 0.7327 & 0.6822 & 0.4028 & 0.2000 & 0.0000 & 0.0000 & 0.8000 \\ \hline
48 & 0.7302 & 0.6796 & 0.4005 & 0.2000 & 0.0000 & 0.0000 & 0.8000 \\ \hline
49 & 0.7278 & 0.6735 & 0.3985 & 0.2000 & 0.0000 & 0.0000 & 0.8000 \\ \hline
50 & 0.7254 & 0.6678 & 0.3972 & 0.2000 & 0.0000 & 0.0000 & 0.8000 \\ \hline
51 & 0.7382 & 3.1419 & 1.5707 & 0.2000 & 0.7997 & 0.0002 & 0.0001 \\ \hline
52 & 0.7504 & 3.1417 & 1.5708 & 0.2000 & 0.4000 & 0.4000 & 0.0000 \\ \hline
53 & 0.7615 & 3.1414 & 1.5709 & 0.2000 & 0.4000 & 0.2000 & 0.2000 \\ \hline
54 & 0.7714 & 3.1415 & 1.5708 & 0.2000 & 0.4000 & 0.0000 & 0.4000 \\ \hline
55 & 0.7799 & 3.1417 & 1.5707 & 0.2000 & 0.4000 & 0.4000 & 0.0000 \\ \hline
56 & 0.7870 & 3.1419 & 1.5707 & 0.2000 & 0.0008 & 0.7991 & 0.0001 \\ \hline
57 & 0.7927 & 3.1420 & 1.5706 & 0.2000 & 0.8000 & 0.0000 & 0.0000 \\ \hline
58 & 0.7967 & 3.1420 & 1.5707 & 0.2000 & 0.2669 & 0.2665 & 0.2665 \\ \hline
59 & 0.7992 & 3.1415 & 1.5708 & 0.2000 & 0.4000 & 0.4000 & 0.0001 \\ \hline
60 & 0.8000 & 3.1417 & 1.5708 & 0.2000 & 0.4000 & 0.3997 & 0.0003 \\ \hline
61 & 0.7992 & 3.1414 & 1.5709 & 0.2000 & 0.0000 & 0.4000 & 0.4000 \\ \hline
62 & 0.7967 & 3.1414 & 1.5710 & 0.2000 & 0.5320 & 0.0020 & 0.2660 \\ \hline
63 & 0.7927 & 3.1419 & 1.5708 & 0.2000 & 0.7999 & 0.0000 & 0.0000 \\ \hline
64 & 0.7870 & 3.1419 & 1.5707 & 0.2000 & 0.7999 & 0.0000 & 0.0000 \\ \hline
65 & 0.7799 & 3.1424 & 1.5703 & 0.2000 & 0.4007 & 0.0003 & 0.3990 \\ \hline
66 & 0.7714 & 3.1415 & 1.5710 & 0.2000 & 0.7985 & 0.0003 & 0.0012 \\ \hline
67 & 0.7615 & 3.1420 & 1.5703 & 0.2000 & 0.7998 & 0.0001 & 0.0001 \\ \hline
68 & 0.7504 & 3.1407 & 1.5716 & 0.2000 & 0.0011 & 0.0000 & 0.7989 \\ \hline
69 & 0.7387 & 2.9964 & 1.6782 & 0.2000 & 0.0000 & 0.0000 & 0.8000 \\ \hline
70 & 0.7287 & 2.9027 & 1.7433 & 0.2000 & 0.0000 & 0.0000 & 0.8000 \\ \hline
71 & 0.7205 & 2.8389 & 1.7853 & 0.2000 & 0.0000 & 0.0000 & 0.8000 \\ \hline
72 & 0.7138 & 2.7888 & 1.8163 & 0.2000 & 0.0000 & 0.0000 & 0.8000 \\ \hline
73 & 0.7084 & 2.7471 & 1.8410 & 0.2000 & 0.0000 & 0.0000 & 0.8000 \\ \hline
74 & 0.7040 & 2.7120 & 1.8614 & 0.2000 & 0.0000 & 0.0000 & 0.8000 \\ \hline
75 & 0.7006 & 2.6821 & 1.8782 & 0.2000 & 0.0000 & 0.0000 & 0.8000 \\ \hline
76 & 0.6980 & 2.6561 & 1.8925 & 0.2000 & 0.0000 & 0.0000 & 0.8000 \\ \hline
77 & 0.6960 & 2.6340 & 1.9046 & 0.2000 & 0.0000 & 0.0000 & 0.8000 \\ \hline
78 & 0.6945 & 2.6148 & 1.9150 & 0.2000 & 0.0000 & 0.0000 & 0.8000 \\ \hline
79 & 0.6934 & 2.5986 & 1.9238 & 0.2000 & 0.0000 & 0.0000 & 0.8000 \\ \hline
80 & 0.6928 & 0.6152 & 0.3911 & 0.2000 & 0.0000 & 0.0000 & 0.8000 \\ \hline
81 & 0.6927 & 0.6151 & 0.3913 & 0.2000 & 0.0000 & 0.0000 & 0.8000 \\ \hline
82 & 0.6926 & 0.6148 & 0.3916 & 0.2000 & 0.0000 & 0.0000 & 0.8000 \\ \hline
83 & 0.6926 & 0.6153 & 0.3920 & 0.2000 & 0.0000 & 0.0000 & 0.8000 \\ \hline
84 & 0.6925 & 0.6154 & 0.3921 & 0.2000 & 0.0000 & 0.0000 & 0.8000 \\ \hline
85 & 0.6925 & 0.6154 & 0.3923 & 0.2000 & 0.0000 & 0.0000 & 0.8000 \\ \hline
86 & 0.6925 & 0.6153 & 0.3923 & 0.2000 & 0.0000 & 0.0000 & 0.8000 \\ \hline
87 & 0.6925 & 0.6154 & 0.3926 & 0.2000 & 0.0000 & 0.0000 & 0.8000 \\ \hline
88 & 0.6925 & 0.6157 & 0.3927 & 0.2000 & 0.0000 & 0.0000 & 0.8000 \\ \hline
89 & 0.6925 & 0.6155 & 0.3928 & 0.2000 & 0.0000 & 0.0000 & 0.8000 \\ \hline
90 & 0.6924 & 0.6156 & 1.9636 & 0.1513 & 0.4244 & 0.2122 & 0.2122 \\ \hline
\end{tabular}

    }
    \caption{\footnotesize Table of values of $\mathbf{LB}_{4/5}'(\theta)$ for various $\theta$ including the optimal values of the minimizers and maximizers found in the inner and outer optimizations respectively.}
    \label{tab:LB_4/5}
\end{table}

\begin{table}
    \centering
    \footnotesize
    \resizebox{1\textwidth}{!}{
    \begin{tabular}{|c|c|c|c|c|c|c|c|}
\hline $\theta$ & $\mathbf{LB}'_{17/21}(\theta)$ & $\gamma$ & $\beta$ & $r_1+r_6$ & $r_2+r_5$ & $r_3$ & $r_4$ \\ \hline
1 & 0.8095 & 0.5226 & 0.5250 & 0.1905 & 0.0001 & 0.0000 & 0.8094 \\ \hline
2 & 0.8094 & 0.5222 & 0.5264 & 0.1905 & 0.0000 & 0.0000 & 0.8095 \\ \hline
3 & 0.8093 & 0.5232 & 0.5255 & 0.1905 & 0.0001 & 0.0000 & 0.8094 \\ \hline
4 & 0.8091 & 0.5235 & 0.5261 & 0.1905 & 0.0017 & 0.0000 & 0.8079 \\ \hline
5 & 0.8089 & 0.5211 & 0.5322 & 0.1905 & 0.8095 & 0.0000 & 0.0000 \\ \hline
6 & 0.8086 & 0.5225 & 0.5310 & 0.1905 & 0.4048 & 0.0000 & 0.4047 \\ \hline
7 & 0.8083 & 0.5243 & 0.5293 & 0.1905 & 0.0001 & 0.0000 & 0.8094 \\ \hline
8 & 0.8079 & 0.5256 & 0.5289 & 0.1905 & 0.4052 & 0.0000 & 0.4044 \\ \hline
9 & 0.8074 & 0.5273 & 0.5278 & 0.1905 & 0.0000 & 0.0000 & 0.8095 \\ \hline
10 & 0.8069 & 0.5291 & 0.5268 & 0.1905 & 0.0000 & 0.0000 & 0.8095 \\ \hline
11 & 0.8063 & 0.5319 & 0.5241 & 0.1905 & 0.0001 & 0.0000 & 0.8094 \\ \hline
12 & 0.8057 & 0.5351 & 0.5208 & 0.1905 & 0.0000 & 0.0000 & 0.8095 \\ \hline
13 & 0.8050 & 0.5375 & 0.5196 & 0.1905 & 0.8095 & 0.0000 & 0.0000 \\ \hline
14 & 0.8042 & 0.5409 & 0.5164 & 0.1905 & 0.0001 & 0.0000 & 0.8094 \\ \hline
15 & 0.8034 & 0.5431 & 0.5160 & 0.1905 & 0.0002 & 0.0000 & 0.8093 \\ \hline
16 & 0.8025 & 0.5471 & 0.5120 & 0.1905 & 0.0000 & 0.0000 & 0.8095 \\ \hline
17 & 0.8015 & 0.5506 & 0.5094 & 0.1905 & 0.0003 & 0.0000 & 0.8093 \\ \hline
18 & 0.8005 & 0.5538 & 0.5075 & 0.1905 & 0.0000 & 0.0000 & 0.8095 \\ \hline
19 & 0.7994 & 0.5573 & 0.5052 & 0.1905 & 0.4048 & 0.0000 & 0.4047 \\ \hline
20 & 0.7982 & 0.5611 & 0.5027 & 0.1905 & 0.0000 & 0.0000 & 0.8095 \\ \hline
21 & 0.7969 & 0.5656 & 0.4985 & 0.1905 & 0.0000 & 0.0000 & 0.8095 \\ \hline
22 & 0.7955 & 0.5700 & 0.4949 & 0.1905 & 0.4048 & 0.0000 & 0.4048 \\ \hline
23 & 0.7941 & 0.5739 & 0.4926 & 0.1905 & 0.0001 & 0.0000 & 0.8094 \\ \hline
24 & 0.7925 & 0.5786 & 0.4885 & 0.1905 & 0.0002 & 0.0000 & 0.8093 \\ \hline
25 & 0.7909 & 0.5828 & 0.4857 & 0.1905 & 0.0000 & 0.0000 & 0.8095 \\ \hline
26 & 0.7892 & 0.5880 & 0.4805 & 0.1905 & 0.0000 & 0.0000 & 0.8095 \\ \hline
27 & 0.7874 & 0.5923 & 0.4777 & 0.1905 & 0.0000 & 0.0000 & 0.8095 \\ \hline
28 & 0.7856 & 0.5974 & 0.4732 & 0.1905 & 0.4048 & 0.0000 & 0.4047 \\ \hline
29 & 0.7836 & 0.6019 & 0.4699 & 0.1905 & 0.0000 & 0.0000 & 0.8095 \\ \hline
30 & 0.7816 & 0.6074 & 0.4643 & 0.1905 & 0.0000 & 0.0000 & 0.8095 \\ \hline
31 & 0.7795 & 0.6120 & 0.4610 & 0.1905 & 0.0000 & 0.0000 & 0.8095 \\ \hline
32 & 0.7774 & 0.6169 & 0.4568 & 0.1905 & 0.0000 & 0.0000 & 0.8095 \\ \hline
33 & 0.7751 & 0.6217 & 0.4526 & 0.1905 & 0.8079 & 0.0000 & 0.0016 \\ \hline
34 & 0.7728 & 0.6267 & 0.4480 & 0.1905 & 0.0000 & 0.0000 & 0.8095 \\ \hline
35 & 0.7704 & 0.6314 & 0.4443 & 0.1905 & 0.0004 & 0.0000 & 0.8091 \\ \hline
36 & 0.7680 & 0.6360 & 0.4405 & 0.1905 & 0.0000 & 0.0000 & 0.8095 \\ \hline
37 & 0.7655 & 0.6406 & 0.4370 & 0.1905 & 0.0001 & 0.0000 & 0.8094 \\ \hline
38 & 0.7630 & 0.6451 & 0.4333 & 0.1905 & 0.0000 & 0.0000 & 0.8095 \\ \hline
39 & 0.7604 & 0.6497 & 0.4290 & 0.1905 & 0.0001 & 0.0000 & 0.8095 \\ \hline
40 & 0.7578 & 0.6540 & 0.4260 & 0.1905 & 0.8055 & 0.0000 & 0.0040 \\ \hline
41 & 0.7552 & 0.6582 & 0.4229 & 0.1905 & 0.8095 & 0.0000 & 0.0001 \\ \hline
42 & 0.7525 & 0.6626 & 0.4189 & 0.1905 & 0.0000 & 0.0000 & 0.8095 \\ \hline
43 & 0.7498 & 0.6667 & 0.4158 & 0.1905 & 0.0001 & 0.0000 & 0.8094 \\ \hline
44 & 0.7471 & 0.6708 & 0.4118 & 0.1905 & 0.0000 & 0.0000 & 0.8095 \\ \hline
45 & 0.7444 & 0.6746 & 0.4094 & 0.1905 & 0.0000 & 0.0000 & 0.8095 \\ \hline
\end{tabular} \begin{tabular}{|c|c|c|c|c|c|c|c|}
\hline $\theta$ & $\mathbf{LB}'_{17/21}(\theta)$ & $\gamma$ & $\beta$ & $r_1+r_6$ & $r_2+r_5$ & $r_3$ & $r_4$ \\ \hline
46 & 0.7417 & 0.6784 & 0.4066 & 0.1905 & 0.0000 & 0.0000 & 0.8095 \\ \hline
47 & 0.7390 & 0.6820 & 0.4047 & 0.1905 & 0.0000 & 0.0000 & 0.8095 \\ \hline
48 & 0.7364 & 0.6835 & 0.4016 & 0.1905 & 0.0000 & 0.0000 & 0.8095 \\ \hline
49 & 0.7337 & 0.6757 & 0.4005 & 0.1905 & 0.0000 & 0.0000 & 0.8095 \\ \hline
50 & 0.7321 & 3.1415 & 1.5708 & 0.1905 & 0.4048 & 0.4048 & 0.0000 \\ \hline
51 & 0.7457 & 3.1419 & 1.5707 & 0.1905 & 0.8092 & 0.0002 & 0.0001 \\ \hline
52 & 0.7583 & 3.1417 & 1.5708 & 0.1905 & 0.4048 & 0.4047 & 0.0000 \\ \hline
53 & 0.7698 & 3.1414 & 1.5709 & 0.1905 & 0.4048 & 0.2024 & 0.2024 \\ \hline
54 & 0.7800 & 3.1415 & 1.5708 & 0.1905 & 0.4048 & 0.0000 & 0.4047 \\ \hline
55 & 0.7888 & 3.1417 & 1.5707 & 0.1905 & 0.4048 & 0.4047 & 0.0000 \\ \hline
56 & 0.7961 & 3.1419 & 1.5707 & 0.1905 & 0.0009 & 0.8086 & 0.0001 \\ \hline
57 & 0.8019 & 3.1420 & 1.5706 & 0.1905 & 0.8095 & 0.0000 & 0.0000 \\ \hline
58 & 0.8061 & 3.1420 & 1.5707 & 0.1905 & 0.2701 & 0.2697 & 0.2697 \\ \hline
59 & 0.8087 & 3.1415 & 1.5708 & 0.1905 & 0.4047 & 0.4047 & 0.0001 \\ \hline
60 & 0.8095 & 3.1417 & 1.5708 & 0.1905 & 0.4048 & 0.4044 & 0.0003 \\ \hline
61 & 0.8087 & 3.1414 & 1.5709 & 0.1905 & 0.0000 & 0.4047 & 0.4047 \\ \hline
62 & 0.8061 & 3.1415 & 1.5706 & 0.1905 & 0.5397 & 0.2698 & 0.0000 \\ \hline
63 & 0.8019 & 3.1419 & 1.5708 & 0.1905 & 0.8095 & 0.0000 & 0.0000 \\ \hline
64 & 0.7961 & 3.1419 & 1.5707 & 0.1905 & 0.8095 & 0.0001 & 0.0000 \\ \hline
65 & 0.7888 & 3.1424 & 1.5703 & 0.1905 & 0.4054 & 0.0003 & 0.4038 \\ \hline
66 & 0.7800 & 3.1419 & 1.5703 & 0.1905 & 0.8090 & 0.0003 & 0.0003 \\ \hline
67 & 0.7698 & 3.1412 & 1.5709 & 0.1905 & 0.8087 & 0.0007 & 0.0001 \\ \hline
68 & 0.7583 & 3.1420 & 1.5707 & 0.1905 & 0.8086 & 0.0009 & 0.0000 \\ \hline
69 & 0.7460 & 3.0188 & 1.6616 & 0.1905 & 0.0000 & 0.0000 & 0.8095 \\ \hline
70 & 0.7352 & 2.9146 & 1.7351 & 0.1905 & 0.0000 & 0.0000 & 0.8095 \\ \hline
71 & 0.7263 & 2.8472 & 1.7796 & 0.1905 & 0.0000 & 0.0000 & 0.8095 \\ \hline
72 & 0.7189 & 2.7951 & 1.8124 & 0.1905 & 0.0000 & 0.0000 & 0.8095 \\ \hline
73 & 0.7129 & 2.7523 & 1.8382 & 0.1905 & 0.0000 & 0.0000 & 0.8095 \\ \hline
74 & 0.7080 & 2.7168 & 1.8586 & 0.1905 & 0.0000 & 0.0000 & 0.8095 \\ \hline
75 & 0.7041 & 2.6857 & 1.8760 & 0.1905 & 0.0000 & 0.0000 & 0.8095 \\ \hline
76 & 0.7010 & 2.6594 & 1.8907 & 0.1905 & 0.0000 & 0.0000 & 0.8095 \\ \hline
77 & 0.6986 & 2.6367 & 1.9031 & 0.1905 & 0.0000 & 0.0000 & 0.8095 \\ \hline
78 & 0.6967 & 2.6174 & 1.9136 & 0.1905 & 0.0000 & 0.0000 & 0.8095 \\ \hline
79 & 0.6953 & 2.5999 & 1.9230 & 0.1905 & 0.0000 & 0.0000 & 0.8095 \\ \hline
80 & 0.6943 & 2.5854 & 1.9309 & 0.1905 & 0.0000 & 0.0000 & 0.8095 \\ \hline
81 & 0.6935 & 2.5729 & 1.9375 & 0.1905 & 0.0000 & 0.0000 & 0.8095 \\ \hline
82 & 0.6930 & 2.5623 & 1.9432 & 0.1905 & 0.0000 & 0.0000 & 0.8095 \\ \hline
83 & 0.6928 & 0.6154 & 0.3920 & 0.1905 & 0.0000 & 0.0000 & 0.8095 \\ \hline
84 & 0.6927 & 0.6155 & 0.3919 & 0.1905 & 0.0000 & 0.0000 & 0.8095 \\ \hline
85 & 0.6926 & 0.6154 & 0.3922 & 0.1905 & 0.0000 & 0.0000 & 0.8095 \\ \hline
86 & 0.6926 & 0.6153 & 0.3927 & 0.1905 & 0.0000 & 0.0000 & 0.8095 \\ \hline
87 & 0.6925 & 0.6154 & 0.3926 & 0.1905 & 0.0000 & 0.0000 & 0.8095 \\ \hline
88 & 0.6925 & 0.6155 & 0.3928 & 0.1905 & 0.0000 & 0.0000 & 0.8095 \\ \hline
89 & 0.6925 & 0.6155 & 0.3928 & 0.1905 & 0.0000 & 0.0000 & 0.8095 \\ \hline
90 & 0.6924 & 0.6156 & 1.9636 & 0.1450 & 0.4275 & 0.2138 & 0.2138 \\ \hline
\end{tabular}

    }
    \caption{\footnotesize Table of values of $\mathbf{LB}_{17/21}'(\theta)$ for various $\theta$ including the optimal values of the minimizers and maximizers found in the inner and outer optimizations respectively.}
    \label{tab:LB_17/21}
\end{table}

\end{document}